\newif\ifcomments
\definecolor{ForestGreen}{rgb}{0.1333,0.5451,0.1333}
\definecolor{DarkRed}{rgb}{0.65,0,0}
\definecolor{Red}{rgb}{1,0,0}
\declaretheorem[numberwithin=section]{theorem}
\declaretheorem[numberlike=theorem]{lemma}
\declaretheorem[numberlike=theorem]{corollary}
\declaretheorem[numberlike=theorem]{conjecture}
\declaretheorem[numberlike=theorem]{claim}
\declaretheorem[numberlike=theorem,style=definition]{definition}
\declaretheorem[numberlike=theorem,style=definition]{remark}
\global\long\def\poly{\mathrm{poly}}
\newcommand{\flow}{\text{\normalfont flow}}
\newcommand{\fmc}
{\text{\normalfont FMC}}
\newcommand{\conn}{\lambda}
\newcommand{\is}{\text{\normalfont ImpCut}}
\newcommand{\ssc}{\text{\normalfont SSCP}}
\newcommand{\ignore}[1]{}
\def\thatchaphol#1{\marginpar{$\leftarrow$\fbox{TS}}\footnote{$\Rightarrow$~{\sf\textcolor{purple}{#1 --Thatchaphol}}}}
\def\benyu#1{\marginpar{$\leftarrow$\fbox{BW}}\footnote{$\Rightarrow$~{\sf\textcolor{red}{#1 --Benyu}}}}
\def\gary#1{\marginpar{$\leftarrow$\fbox{GH}}\footnote{$\Rightarrow$~{\sf\textcolor{blue}{#1 --Gary}}}}
\newcommand{\benyu}[1]{}
\newcommand{\gary}[1]{}
\newcommand{\thatchaphol}[1]{}
\begin{document}

\title{Near-Optimal Fault-Tolerant Strong Connectivity Preservers}

\author{
  Gary Hoppenworth\thanks{\texttt{garytho@umich.edu}. Supported in part by NSF grant NSF:AF 2153680.}
  \and Thatchaphol Saranurak\thanks{\texttt{thsa@umich.edu}. Supported by NSF Grant CCF-2238138.}
  \and Benyu Wang\thanks{\texttt{benyuw@umich.edu}.}
}

\date{}
\maketitle

\begin{abstract}

A \emph{$k$-fault-tolerant connectivity preserver} of a directed $n$-vertex graph $G$ is a subgraph $H$ such that, for any edge set $F \subseteq E(G)$ of size $|F| \le k$, the strongly connected components of $G - F$ and $H - F$ are the same. While some graphs require a preserver with $\Omega(2^{k}n)$ edges \cite{baswana2018fault}, the best-known upper bound is $\tilde{O}(k2^{k}n^{2-1/k})$ edges~\cite{scc}, leaving a significant gap of $\Omega(n^{1-1/k})$. In contrast, there is no gap in \emph{undirected} graphs; the optimal bound of $\Theta(kn)$ has been well-established since the 90s \cite{nagamochi1992linear}.

We nearly close the gap for directed graphs; we prove that there exists a $k$-fault-tolerant connectivity preserver with $O(k4^{k}n\log n)$ edges, and we can construct one with $O(8^{k}n\log^{5/2}n)$ edges in $\poly(2^{k}n)$ time.

Our results also improve the state-of-the-art for a closely related object; a \emph{$k$-connectivity preserver} of $G$ is a subgraph $H$ where, for all $i \le k$, the strongly $i$-connected components of $G$ and $H$ agree. By a known reduction, we obtain a $k$-connectivity preserver with $O(k4^{k}n\log n)$ edges, improving the previous best bound of $\tilde{O}(k2^{k}n^{2-1/(k-1)})$ \cite{scc}. Therefore, for any constant $k$, our results are optimal to a $\log n$ factor for both problems.

Lastly, we show that the exponential dependency on $k$ is not inherent for $k$-connectivity preservers by presenting another construction with $O(n \sqrt{kn})$ edges.
\end{abstract}





\thispagestyle{empty}
\newpage

\tableofcontents
\thispagestyle{empty}
\newpage

\pagenumbering{gobble}
\pagebreak
 \pagenumbering{arabic}

\section{Introduction}

A $k$-fault-tolerant ($k$-FT) connectivity preserver of a \emph{directed} graph $G$ is a subgraph $H$ that maintains the strongly connected components of $G$ even after $k$ edge failures. Specifically, for any edge set $F$ with size $|F| \le k$, the strongly connected components of $G-F$ and $H-F$ are the same. 
In undirected graphs, a $k$-FT connectivity preserver simplifies to a subgraph that preserves connected components even after edge failures. 
We study the following question:

\begin{center}

\emph{Does every graph admit a sparse $k$-FT connectivity preserver?}

\par\end{center}

\paragraph{Undirected Graphs: Settled.}
This question is well understood in undirected graphs. Since the 90s, Nagamochi and Ibaraki \cite{nagamochi1992linear} developed a linear-time algorithm to compute a $k$-FT connectivity preserver with $O(nk)$ edges, which is optimal.\footnote{Strictly speaking, \cite{nagamochi1992linear} proved that their algorithm outputs a \emph{$k$-connectivity preserver}, which is slightly weaker and will be defined soon in \Cref{subsec:intro kconn}. However, it is well known in the community that their output is stronger. Benczúr and Karger \cite{benczur1996approximating} explicitly proved that it preserves all cuts of size at most $k$, implying it is a $k$-FT connectivity preserver. See also \cite{choudhary2021kft}.}

Since then, sparse $k$-FT connectivity preservers have been highly influential. They were first used to accelerate algorithms for minimum cuts and cut sparsifiers \cite{benczur1996approximating,benczur2015randomized}. Subsequently, they have been studied and applied in various computational models, including dynamic algorithms \cite{dynamic1,thorup2007fully,abraham2016fully}, parallel and distributed algorithms \cite{thurimella1995sub,daga2019distributed,parter2019small}, and streaming algorithms \cite{ahn2012graph,guha2015vertex,assadi2023tight}. Another line of work strengthens the definition to address stronger types of failures, such as both edge and vertex failures \cite{cheriyan1993scan,frank1993sparse,even1995mixed,fulop2005sparse,berg2005sparse,nagamochi2006sparse}, bounded-degree edge failures \cite{boundeddegree1}, and color failures \cite{color0,color2}. Even in these generalized settings, near-optimal bounds of $\tilde{O}(nk)$ have been established.\footnote{$\tilde{O}(\cdot)$ hides logarithmic factors}

In the past decade, extensive research has also focused on other fault-tolerant subgraphs beyond connectivity, such as approximate distances \cite{spanner1,spanner2,vertexfault1,spanner4,vertexfault2,vertexfault4,boundeddegree0,color0,boundeddegree1,color2,spanner3}, exact distances \cite{exact_undirected_2,exact_undirected_1,exact_directed_1}, diameter \cite{diameter}, and spectral properties \cite{spectral}. Unfortunately, despite success in undirected graphs, progress in directed graphs has been very limited.

\paragraph{State-of-the-Art in Directed Graphs.}

Directed graphs are fundamentally harder to compress. Even without failures, a standard information-theoretic argument shows that directed graphs cannot be compressed into any $o(n^{2})$-sized data structure that answers reachability queries, i.e., whether vertex $a$ has a path to $b$. This presents a significant barrier for any attempt to compress \emph{asymmetric} connectivity information.

This barrier motivates the study of $k$-FT connectivity preservers because a strongly connected component is a symmetric notion. In a breakthrough result \cite{baswana2018fault}, Baswana, Choudhary, and Roditty completely solved the special case of $k$-FT \emph{single-source} connectivity preservers, which preserve the strongly connected component containing a \emph{single} vertex $s$ instead of all strongly connected components.\footnote{This is equivalent, up to a constant factor, to the \emph{$k$-FT single-source reachability preserver} as stated in \cite{baswana2018fault}.} Their preserver contains $O(2^{k}n)$ edges, and this bound is tight even in this special case.

However, for the general case, a significant gap remains. The only non-trivial $k$-FT connectivity preservers were shown by Chakraborty and Choudhary \cite{scc} and require $\tilde{O}(k2^{k}n^{2-1/k})$ edges. In particular, even their $2$-FT connectivity preservers require $\Omega(n\sqrt{n})$ edges. They posed an open question whether the large $\tilde{\Omega}(n^{1-1/k})$-factor gap between upper and lower bounds can be closed.

\subsection{Main Result: Near-Optimal $k$-Fault-Tolerant Connectivity Preservers}

\label{subsec:mainresult}

We present $k$-FT connectivity preservers with $O(n\log n)$ edges for every constant $k$, reducing the gap from $\tilde{\Omega}(n^{1-1/k})$ to $O(\log n)$.

\begin{restatable}{theorem}{existential}

\label{thm:FT} For every positive integer $k$, every $n$-vertex directed graph admits a $k$-fault-tolerant connectivity preserver with $O(k4^{k}n\log n)$ edges. Moreover, we can construct one with $O(8^{k}n\log^{5/2}n)$ edges in $\poly(2^{k}n)$ time.

\end{restatable}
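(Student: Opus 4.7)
The plan is to leverage the Baswana–Choudhary–Roditty (BCR) single-source $k$-FT preserver of size $O(2^{k}n)$ as a black-box primitive. Write $P_s\subseteq G$ for the BCR preserver rooted at $s$. For any vertex set $S\subseteq V$, consider $H_S:=\bigcup_{s\in S}P_s$, which has $|S|\cdot O(2^{k}n)$ edges. The key observation is that if $C$ is a non-singleton SCC of $G-F$ for some $|F|\le k$ and $S\cap C\neq\emptyset$, then picking any $s\in S\cap C$ the preserver $P_s$ forces the SCC of $s$ in $H_S-F$ to equal $C$; combined with the trivial fact that every SCC of $H_S-F$ is contained in some SCC of $G-F$, this shows $H_S$ is a valid $k$-FT preserver. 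The theorem therefore reduces to building a \emph{universal SCC hitting set} $S$ of size $O(k\cdot 2^{k}\log n)$ meeting every non-singleton SCC of $G-F$ across all failure sets $F$ of size $\le k$, giving the existential bound $O(k\cdot 4^{k}n\log n)$.

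The hitting set is built in two parts. First, large SCCs (say of size $\ge n\cdot 2^{-O(k)}/\log n$) are handled by uniform random sampling: a single random sample of $O(k\cdot 2^{k}\log n)$ vertices meets every such SCC with high probability, after a union bound over the at most $\binom{|E|}{k}\cdot n \le \poly(n)$ realizable pairs of (failure set, SCC). Second, small SCCs are handled by recursion on the vertex count: a small SCC of $G-F$ lives inside an induced subgraph $G[V']$ with $|V'|\le n/2$ determined by $F$, and recursing the same hitting-set construction across $O(\log n)$ levels yields sample sizes that sum in a geometric series to the claimed total.

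The main obstacle, and the heart of the argument, is that the ``residual'' subgraph $G[V']$ on which we recurse depends on the unknown failure set $F$. The plan to address this is a structural lemma bounding the number of realizable residual subgraphs over all $|F|\le k$ by $\poly(n)$, so that a single random sample at each recursion level is correct uniformly across all failure scenarios. The lemma should ultimately rest on the fact that each small SCC of $G-F$ is certified by $O(k)$ cut edges of $G$, and there are only $\binom{|E|}{O(k)}\le\poly(n)$ such certificates; working this out carefully and aligning it with the level-by-level recursion is where the bulk of the technical work will live.

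For the algorithmic bound $O(8^{k}n\log^{5/2}n)$ in $\poly(2^{k}n)$ time, the plan is to derandomize the sampling by an iterative greedy procedure: repeatedly locate some failure set $F$ and SCC $C$ of $G-F$ not yet preserved in the current $H$, add a representative vertex of $C$ to $S$, and include its BCR preserver. The number of iterations is bounded by the same structural lemma, and the extra factors $2^{k}$ and $\log^{3/2}n$ over the existential bound absorb the overhead of explicitly searching for uncovered SCCs and of the deterministic hitting-set computation replacing the one-shot random sample.
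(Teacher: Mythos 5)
Your reduction is valid as far as it goes: if $S$ hits every non-singleton SCC of $G-F$ over all $|F|\le k$, then $\bigcup_{s\in S}\ssc(G,s,k)$ is a $k$-FT preserver. The paper uses exactly this observation in one place (the subgraph $J$ built from BCR preservers rooted at $O(q)$ vertices, \Cref{clm:large_scc}), and hitting the \emph{giant} SCC is easy there by a pigeonhole rather than by sampling. The gap in your plan is the part you yourself flag as ``the heart of the argument,'' namely the recursion on the $F$-dependent residual set $V'$, and the structural lemma you hope will underpin it is not correct as stated.

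Two concrete problems. First, the bound $\binom{|E|}{k}\cdot n\le\poly(n)$ is false for non-constant $k$; already for $k=\Theta(\log n)$ it is quasi-polynomial, which costs an extra $\log n$ in the union bound and in every recursive level, and there is no reason to expect the number of realizable residual subgraphs to be polynomial without a much stronger structural statement. Second, and more fundamentally, even if each individual residual $V'$ had size $\le n/2$, the family $\{V'(F)\}_{|F|\le k}$ is not a partition and the subgraphs overlap arbitrarily; ``summing a geometric series'' presupposes a fixed, $F$-independent decomposition of the vertex set, which is precisely what does not exist in your setup. Without that, the product $|S|\cdot O(2^k n)$ does not telescope. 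Also note that the hitting set $S$ genuinely needs to hit \emph{all} non-singleton SCCs, and simple examples (e.g.\ $\Theta(n)$ vertex-disjoint 2-cycles each attached to a hub by one in-edge and one out-edge, with $k=2$) force $|S|=\Omega(n)$; the preserver is still small there because the BCR preservers overlap, but your size accounting via $|S|\cdot O(2^k n)$ breaks.

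The paper solves exactly this difficulty by replacing the $F$-dependent recursion with a \emph{fixed} directed expander hierarchy (\Cref{lem:expander_hierarchy}): an $F$-independent $O(\log n)$-level partition into sets whose intersections with the residual SCCs at each level are $(O(k),k)$-unbreakable. Once the decomposition is fixed, hitting the giant SCC is deterministic (pigeonhole over $2q+1$ vertices of $U$), and the small-SCC edges are counted by two tools absent from your plan: the improved anti-isolation lemma (\Cref{lem:anti isolation}, bounding by $2^k$ the number of pairwise-isolated sink vertices) and the important-cut container theorem (\Cref{lem:important_separators}/\Cref{thm:ecut}, giving a single cut of size $2^{k-1}$ containing every important cut of size $\le k$). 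These are what make the charging argument close. The FPT-time statement is likewise carried not by a greedy search over all $F$, but by the algorithmic important-cut container plus a polynomial-time $O(\sqrt{\log n})$-approximate expander hierarchy. Your sketch does identify the right shape of the argument (BCR preservers over a hitting set, $O(\log n)$-level recursion, large/small SCC split), but the decomposition must be made $F$-independent and the small-SCC side needs the cut-theoretic machinery; as written, the proof does not go through.
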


For constant $k$, the $k$-FT connectivity preservers of $O(n)$ size prior to \Cref{thm:FT} are known only for undirected graphs \cite{nagamochi1992linear} or for the single-source version in directed graphs \cite{baswana2018fault}. \Cref{thm:FT} generalizes both results to directed graphs with only an $O(\log n)$-factor increase in the preserver size. Quantitatively, \Cref{thm:FT} also significantly improves the bound of $\tilde{O}(n^{2-1/k})$ \cite{scc} to $O(n\log n)$.

\Cref{thm:FT} directly implies a data structure of size $O(k4^{k}n\log n)$ that, given a query set $F$ of size $|F|\le k$, reports all strongly connected components of $G-F$ in $O(k4^{k}n\log n)$ time. This already significantly improves upon the space of the previous data structure in \cite{scc_data_structure} which required $O(2^{k}n^{2})$ space and $O(2^{k}n\log^{2}n)$ query time. It also generalizes the data structure of \cite{georgiadis2020strong} with optimal $O(n)$ space and $O(n)$ query time, but that works only for $k=1$.

\paragraph{Construction Time.}

A straightforward construction of our preservers would take $n^{O(k)}$ time. In the second statement of \Cref{thm:FT}, we show an FPT-time construction, at the cost of a slightly larger size bound. It is very interesting if near-linear time construction is possible. However, this remains open even for the single-source version; the state-of-the-art algorithm \cite{baswana2018fault} still requires $\Omega(mn)$ construction time for constant $k$.

\paragraph{Strong Failure Models.}

\Cref{thm:FT} can be extended to address \emph{vertex failures} using the standard vertex-splitting reduction.

Recently, two stronger types of failures, namely \emph{bounded-degree edge failures} \cite{boundeddegree0} and \emph{color failures} \cite{color0}, were introduced. In \cite{color0,color2,boundeddegree0,boundeddegree1},  fault-tolerant connectivity preservers (and even fault-tolerant spanners) in undirected graphs have been strengthened to handle these types of failures.

We show that this generalization is impossible for directed graphs. In particular, we show directed graphs undergoing just $1$-bounded-degree edge failures or $1$-color failures require a fault-tolerant connectivity preserver with $\tilde{\Omega}(n^{2})$ edges. This marks a fundamental distinction between undirected and directed graphs. We formally define these fault models and prove these lower bounds in \Cref{app:other fault models}.

\subsection{Side Result: $k$-Connectivity Preservers}

\label{subsec:intro kconn}

\Cref{thm:FT} also immediately advances the state-of-the-art of $k$-connectivity preservers. At a high level, $k$-connectivity preservers are  subgraphs that preserve higher connectivity but are not fault-tolerant. Before providing a formal definition, let us recall some basic concepts. Let $\flow(G,s,t)$ denote the maximum number of edge-disjoint paths from $s$ to $t$. We say that $s$ and $t$ are (strongly) $i$-connected if $\flow(G,s,t)\ge i$ and $\flow(G,t,s)\ge i$. For any $i$, the $i$-connected components uniquely partition the vertex set into parts such that two vertices are in the same part iff they are $i$-connected. The graph $G$ is $i$-connected if all pairs of vertices are $i$-connected.

A \emph{$k$-connectivity preserver} of a directed graph $G$ is a subgraph $H$ where, for all $i\le k$, the $i$-connected components of $G$ and $H$ are the same. Note that every $(k-1)$-FT connectivity preserver of $G$ must be a $k$-connectivity preserver, but the reverse is false; there exists a directed graph $G$ that admits a $k$-connectivity preserver with $O(n)$ edges, but requires $\Omega(n2^{k})$ edges to preserve strongly connected components under $k$ edge faults. See \Cref{thm:conn ft gap} for the proof.

\paragraph{State-of-the-Art in Directed Graphs.}

In undirected graphs, the $O(nk)$ optimal bounds for $k$-connectivity preservers have long been established \cite{nagamochi1992linear}, but they are much less understood in directed graphs. Currently, tight bounds are known only in special cases. Let $\lambda(G)$ denote the minimum cut size of $G$. Since the '70s, \cite{dalmazzo1977nombre,mader1985minimal} (cf.~\cite{berg2005minimally}) showed that every directed graph $G$ contains a $k$-connected subgraph $H$ with $O(nk)$ edges for all $k\le\lambda(G)$, and this is tight. This implies optimal $k$-connectivity preservers of size $\Theta(nk)$ when $k\le\lambda(G)$. However, this argument fails when $k > \lambda(G)$. Much later, in 2016, \cite{connectivity_certificate} showed a $2$-connectivity preserver of size $O(n)$. Very recently, \cite{bansal2024faulttolerantboundedflowpreservers} implies a \emph{single-source} version of $k$-connectivity preservers that only preserves, for all $i\le k$, the strongly $i$-connected component containing the fixed vertex $s$. Their preserver has an optimal size of $\Theta(nk)$.

In the general case, the best-known $k$-connectivity preservers prior to our work follow from the best-known construction of $k$-FT connectivity preservers by \cite{scc}, which implies the bound of $\tilde{O}(k2^{k}n^{2-1/(k-1)})$ edges. \Cref{thm:FT} directly provides a significant improvement upon this bound.

\begin{corollary}
\label{thm:kconn}For every positive integer $k$, every $n$-vertex directed graph admits a $k$-connectivity preserver with $O(k4^{k}n\log n)$ edges.

\end{corollary}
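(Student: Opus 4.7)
The plan is to derive this as an immediate corollary of \Cref{thm:FT} via the implication, already asserted in the excerpt, that every $(k-1)$-fault-tolerant connectivity preserver is a $k$-connectivity preserver. So the proof has two parts: first, justify this implication; second, plug in \Cref{thm:FT} with parameter $k-1$ and read off the bound.

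For the implication, I would argue by contrapositive with a short Menger-style argument. Suppose $H \subseteq G$ is a $(k-1)$-FT connectivity preserver but fails to be a $k$-connectivity preserver; then there exist $i \le k$ and vertices $u, v$ that lie in the same $i$-connected component of $G$ but not of $H$. Since $H \subseteq G$, we have $\flow(H,u,v) \le \flow(G,u,v)$, and similarly for the reverse direction, so WLOG $\flow(H,u,v) = j < i \le k$, hence $j \le k-1$. Take a minimum $u$-to-$v$ edge cut $F \subseteq E(H)$ of size $j$; then $u$ cannot reach $v$ in $H - F$. On the other hand, $\flow(G,u,v) \ge i > j$ and $\flow(G,v,u) \ge i > j$, so in $G - F$ the vertex $u$ still reaches $v$ and $v$ still reaches $u$. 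Thus $u$ and $v$ lie in the same strongly connected component of $G - F$ but in different components of $H - F$, contradicting the $(k-1)$-FT property since $|F| \le k-1$.

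Given the implication, the corollary follows by invoking \Cref{thm:FT} with parameter $k-1$ (for $k \ge 2$), producing a $(k-1)$-FT connectivity preserver with $O((k-1) 4^{k-1} n \log n) = O(k 4^{k} n \log n)$ edges, which by the implication is also a $k$-connectivity preserver. The base case $k=1$ is handled trivially: a $1$-connectivity preserver need only agree with $G$ on strongly connected components, and any subgraph consisting of two spanning arborescences (in and out of an arbitrary root) per SCC has $O(n)$ edges.

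I do not anticipate any obstacle: the only content is the Menger-style implication above, and once that is in hand the bound is immediate from \Cref{thm:FT}. The size saving over the previous $\tilde{O}(k 2^{k} n^{2-1/(k-1)})$ bound of \cite{scc} is entirely inherited from \Cref{thm:FT}.
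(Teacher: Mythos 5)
Your proposal is correct and takes essentially the same route as the paper: both reduce to \Cref{thm:FT} via the fact that a $(k-1)$-FT connectivity preserver is a $k$-connectivity preserver. The paper treats this implication as a known reduction and records a proof only in the appendix via the cut characterizations (\Cref{thm:ft-conn-pres} and \Cref{thm:k-conn-press}), whereas you give a direct Menger-style argument; the two proofs of the implication are equivalent in content (Menger duality versus the explicit minimal-symmetric-cut characterization), and your version is arguably a bit more elementary and self-contained. One small simplification worth noting: you can avoid the separate base case $k=1$ entirely by applying \Cref{thm:FT} with parameter $k$ rather than $k-1$ — a $k$-FT preserver is a fortiori a $(k-1)$-FT preserver, hence a $k$-connectivity preserver, and the size bound $O(k4^{k}n\log n)$ is unaffected.
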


\Cref{thm:kconn} is optimal up to an $O(\log n)$ factor when $k$ is constant. But, are there non-trivial $k$-connectivity preservers even when $k$ is large? We give a strong affirmative answer by showing the first $o(n^{2})$-sized $k$-connectivity preserver for any $k=o(n)$.

\begin{restatable}{theorem}{lambdaconn}
    \label{thm:lambda_conn}
    \label{thm:polysize}For every positive integer $k$, every $n$-vertex directed graph admits a $k$-connectivity preserver with $O(n\sqrt{nk})$ edges. The preserver can be constructed in polynomial time.
\end{restatable}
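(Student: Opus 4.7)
Plan. We use a threshold-based construction built on the single-source $k$-connectivity preserver of~\cite{bansal2024faulttolerantboundedflowpreservers}, which with $O(nk)$ edges preserves $\min(\flow_G(s,v),k)$ and $\min(\flow_G(v,s),k)$ for every $v$ from a fixed source $s$. Setting the threshold $\tau := \sqrt{nk}$, we partition the $k$-strongly-connected components of $G$ into \emph{large} (size $\ge \tau$) and \emph{small} (size $< \tau$) and treat each class differently.

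For large components, we pick one representative per component and include its single-source preserver. Since large components are disjoint, at most $n/\tau = \sqrt{n/k}$ of them exist, so this phase costs $O(n\sqrt{nk})$ edges. A pair $(u,v)$ lying in any $i$-component (for some $i \le k$) that contains a representative $r$ is then handled automatically: the preserver guarantees $r$-to-$u$ and $r$-to-$v$ flows of at least $i$ (in both directions), and concatenating edge-disjoint paths through $r$ gives $i$ edge-disjoint $u \to v$ paths (and vice versa) in $H$.

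For small components, we include $G[C]$ for each small $k$-component $C$; since $\sum_C |C|^2 \le \tau \sum_C |C| = \tau n = n\sqrt{nk}$, this fits within budget. Because $G[C]$ need not by itself be $k$-connected (the $k$ edge-disjoint $u$-$v$ paths in $G$ may route through $V\setminus C$), we additionally include a Steiner $k$-connectivity certificate for $C$: contract $V\setminus C$ into one super-vertex, apply a directed Mader-style sparsification on the resulting $(|C|+1)$-vertex multigraph where $C$ remains $k$-strongly-connected (since contraction does not decrease pairwise flow), and lift back to $O(k|C|)$ edges of $G$, summing to $O(kn)$ across all small components. Pairs in an $i$-component composed entirely of small $k$-components -- and thus containing no representative -- are handled by recursively applying the construction on the quotient graph that collapses each small $k$-component to a single vertex, with the recursion's overhead absorbed into the $O(n\sqrt{nk})$ bound.

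The main obstacle is the Steiner lift in Phase~2: when pulling the sparsification of the contracted multigraph back to $G$, we must also preserve flows that route through $V\setminus C$, so a naive lift (keep only edges incident to the super-vertex) does not suffice. Making the $O(k|C|)$ bound rigorous likely requires Mader's directed edge-splitting theorem to construct a sparse flow-preserving structure on $C$ and then realizing each virtual ``split'' edge by a concrete path through $V\setminus C$ in $G$, or a direct structural argument exploiting the in/out-degree profile at the contracted super-vertex so that few Steiner edges per component are used. The correctness in the converse direction -- that no spurious $i$-connectivity is introduced -- is automatic since $H \subseteq G$.
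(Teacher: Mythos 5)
Your Phase~1 (large $k$-components) is sound: since the $k$-components partition $V$ and each large one has size at least $\tau = \sqrt{nk}$, there are at most $\sqrt{n/k}$ representatives, each contributing $O(nk)$ edges, and the Menger-style relay through a representative $r$ correctly yields $\lambda_H(u,v) \ge i$ for any $u,v$ in an $i$-component containing $r$. The problem is Phase~2, where you have several genuine gaps that you partly acknowledge but that are not mere technicalities.

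First, the Steiner lift does not work as described. After contracting $V \setminus C$ to a super-vertex $z$ and sparsifying the resulting multigraph, a surviving flow path that visits $z$ corresponds in $G$ to an excursion through $V\setminus C$. Those excursion edges are not included in $H$, so the sparsified structure does not lift to actual edge-disjoint paths in $H$. Realizing each ``split edge'' at $z$ by a concrete path in $G - C$, as you suggest, is not free: the union of $O(k|C|)$ such paths over all small components has no a~priori bound close to $O(n\sqrt{nk})$, and distinct realized paths can share edges, breaking disjointness. Mader's directed splitting-off preserves local edge-connectivity inside the contracted multigraph, but it gives you no handle on how to embed the resulting virtual edges back into $G$ within budget. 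Second, the recursion on the quotient graph is unsound for a different reason: contraction only increases pairwise flow, so the quotient graph's connectivity is not the same as $G$'s, and a preserver of the quotient need not pull back to a preserver of $G$. Third, the recursion need not make progress: small $k$-components can be singletons, so the quotient can still have $\Theta(n)$ vertices, and you have neither a termination argument nor an accounting of the recursive overhead. Finally, Phase~2 only even attempts to handle pairs inside a single small $k$-component and pairs inside an $i$-component made of small $k$-components; connectivity that routes through vertices lying in neither the source's nor the target's component is not addressed by including $G[C]$ plus a per-component certificate.

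For comparison, the paper avoids all of this by first reducing to an edge-minimal (``$k$-critical'') graph $H$, in which \emph{every} edge lies on some cut of size at most $k$. It then builds a decomposition by repeatedly splitting a part along any size-$\le k$ cut that leaves $\ge q$ vertices of that part on each side. When this terminates, each part is $(q,k)$-unbreakable and there are at most $n/q$ parts and recorded cuts. Unbreakability plus the ``every edge is in a small cut'' property bounds intra-part edges by $O((q+k)|S|)$. For the recorded cuts $(L,R)$ with $|\delta^+(L)| \le k$, it uses a Gomory--Hu-style set of $O(n)$ demand pairs and an interleaving argument (any flow path alternates between $\delta^+(L)$ and $\delta^-(L)$) to bound $|\delta^-(L)| = O(kn)$, hence $O(kn^2/q)$ inter-part edges. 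Balancing with $q = \sqrt{nk}$ gives $O(n\sqrt{nk})$, and the decomposition is computable with polynomially many max-flows. The crucial structural ingredient your proposal lacks is the ``every edge is in a size-$\le k$ cut'' property of edge-minimal preservers, which is what makes the degree/interleaving charging go through.
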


\Cref{thm:polysize} establishes a strong separation between $k$-FT connectivity preservers and $k$-connectivity preservers in directed graphs. The former suffers from an exponential $\Omega(2^{k})$ factor (due to \cite{baswana2018fault}) and cannot admit an $o(n^{2})$ upper bound when $k \ge \log_{2}n$. In contrast, the latter can achieve the $o(n^{2})$ upper bound for any $k=o(n)$ by \Cref{thm:lambda_conn}. Notably, in undirected graphs, the optimal bounds of the two notions coincide at $\Theta(kn)$ \cite{nagamochi1992linear}.

\subsection{Mapping the Landscape of Connectivity Preservers in Directed Graphs}

The strong separation between $k$-connectivity and $k$-FT connectivity established by \Cref{thm:polysize} suggests a rich landscape of connectivity preserver problems in directed graphs. \Cref{tab:bounds} lists the best-known upper and lower bounds for all variants of $k$-FT connectivity preservers and $k$-connectivity preservers in the literature.

\begin{table}[ht]
\centering
\small{
\begin{tabular}{|c|cc|cc|} 
\hline
 & \multicolumn{2}{c|}{$k$-fault connectivity }    & \multicolumn{2}{c|}{$k$-connectivity }    \\ \hline
& \multicolumn{1}{c|}{Upper Bound} & Lower Bound & \multicolumn{1}{c|}{Upper Bound} & Lower Bound 
\\ \hline

$s$-$t$ & \multicolumn{1}{c|}{$O(n2^{k})$ [Thm \ref{thm:reductions}] } & $\Omega(n2^{k/2})$ [Thm \ref{thm:stlower}] & \multicolumn{1}{c|}{$O(n\sqrt{k})$  \cite{karger1998finding}} &  $\Omega(n\sqrt{k})$ \cite{karger1998finding}
\\ \hline
global & \multicolumn{1}{c|}{$O(n2^{k})$ [Thm \ref{thm:reductions}]} & $\Omega(n2^{k/2})$ [Thm \ref{thm:reductions}] & \multicolumn{1}{c|}{$O(nk)$ \cite{dalmazzo1977nombre,berg2005minimally}} & $\Omega(nk)$ \cite{dalmazzo1977nombre,berg2005minimally}
\\ \hline
single-source  & \multicolumn{1}{c|}{$O(n2^{k})$ \cite{baswana2018fault}} &  $\Omega(n2^{k})$ \cite{baswana2018fault} & \multicolumn{1}{c|}{$O(nk)$ \cite{bansal2024faulttolerantboundedflowpreservers} } & $\Omega(nk)$ [Thm \ref{thm:reductions}]
\\ \hline
all-pairs  & \multicolumn{1}{c|}{$\Tilde{O}(nk4^k)$ [Thm \ref{thm:FT}]} & $\Omega(n2^{k})$ \cite{scc} & \multicolumn{1}{c|}{\makecell{$O(n\sqrt{nk})$   [Thm \ref{thm:polysize}] \\ $\Tilde{O}(nk4^k)$ [Thm \ref{thm:kconn}]}} & $\Omega(nk)$ [Thm \ref{thm:reductions}] \\ \hline
\end{tabular}
}
\caption{Upper and lower bounds for connectivity preserver variants in directed graphs.}
\label{tab:bounds}
\end{table}

These variants include the $s$-$t$, global, single-source, and all-pairs versions. See \Cref{sec:reduc} for the formal definitions and reductions between them. For each variant,  $(k-1)$-FT connectivity preservers are also $k$-connectivity preservers. It is clear from the definitions that the $s$-$t$ version is a special case of the single-source version, which in turn is a special case of the all-pairs version. Interestingly, we can show reductions that place the global version between the $s$-$t$ and single-source versions, establishing a linear order of difficulty among all versions.

Our results (\Cref{thm:FT,thm:kconn,thm:polysize}) for the all-pairs version, along with the reduction in \Cref{thm:reductions}, and a lower bound for $k$-fault-tolerant $s$-$t$ connectivity preservers in \Cref{thm:stlower}, collectively provide a comprehensive understanding of the landscape of connectivity preservers in directed graphs. Notably, our reduction implies that the global version of $k$-fault-tolerant connectivity must exhibit an exponential dependency on $k$ in its size, which was unclear prior to our work.

Given our results, the bounds for all variants of $k$-FT connectivity preservers are nearly tight up to an $O(\log n)$ factor and a constant in the exponent of $2^{\Theta(k)}$.
The bounds for all variants of $k$-connectivity preservers are tight except for all-pairs $k$-connectivity preservers, where our upper bounds of $O(n\sqrt{nk})$ and $O(k4^k n\log n)$ do not yet match the $\Omega(nk)$ lower bound.


\section{Technical Overview}
\label{sec:techoverview}

In this section, we give an overview of the techniques and proof ideas in our work. We begin by describing a simple greedy strategy for constructing connectivity preservers that is standard in the area of graph spanners and preservers.

\paragraph{Greedy connectivity preservers.} Given any graph $G$, we can construct a connectivity preserver $H$ of $G$ using the following greedy algorithm. Initially, let $H \gets G$. While there exists an edge $e \in E(H)$ such that $H-e$ is a valid connectivity preserver of $H$, we let $H \gets H-e$. This algorithm terminates and outputs a subgraph $H \subseteq G$ when no such edge $e \in E(H)$ exists. By a transitivity argument, $H$ is a valid connectivity preserver of $G$.   
This algorithm easily extends to $k$-FT connectivity preservers and $k$-connectivity preservers as well.

This greedy algorithm is useful for analyzing $k$-FT connectivity preservers (respectively, $k$-connectivity preservers) because it allows us to assume wlog that the input graph $G$ is an edge-minimal $k$-FT connectivity preserver (respectively, $k$-connectivity preserver) of itself. For simplicity, throughout this section we will assume all  graphs we consider have this edge minimality property.


\subsection{Fault-Tolerant Connectivity Preservers}

We now discuss some of the proof ideas for  our construction of fault-tolerant connectivity preservers.

\paragraph{Unbreakability.} A key technical tool used in our fault-tolerant connectivity preserver construction is a directed expander hierarchy (formally defined in \Cref{lem:expander_hierarchy}). At a high level, using a directed expander hierarchy allows us to assume the graph has certain good expansion properties. The specific graph expansion property we use is called  \textit{unbreakability} \cite{cygan2014minimum,cygan2020randomized}. 

\begin{restatable}[unbreakable]{definition}{unbreakable}
    \label{def:unbreakable}
    Let $G$ be a directed graph, and let $U \subseteq V$. Let $q \geq k$ be positive integers. We say that $U$ is $(q, k)$-unbreakable in $G$ if for every cut $(S, V-S)$ in $G$ such that $|\delta^+_G(S)| \leq k$ or $|\delta^-_G(S)| \leq k$, we have either $|S \cap U| \leq q$ or $|U-S| \leq q$. 
\end{restatable}

Here $\delta_G^+(S)$ denotes the outgoing edges from $S$, and $\delta_G^-(S)$ denotes the incoming edges to $S$. We say that a graph $G$ is $(q, k)$-unbreakable if its vertex set $V(G)$ is $(q, k)$-unbreakable in $G$. To give some intuition for why unbreakability might be a useful concept here, we quickly show that unbreakable graphs have sparse $k$-FT connectivity preservers.

\paragraph{Sparse $k$-FT connectivity preserver for unbreakable graphs (\#1).} 
Let $G$ be a $(q, k+1)$-unbreakable directed graph that is an edge-minimal $k$-FT connectivity preserver of itself.  We will show that $G$ has at most $|E(G)| = O((q+k)n)$ edges. In the existential proof of
our $k$-FT connectivity preservers in \Cref{sec:conn_pres}, we let $q = O(k)$, so $|E(G)|$ will be appropriately sparse. 

Since graph $G$ is an edge-minimal $k$-FT connectivity preserver of itself, subgraph $G-e$ is not a $k$-FT connectivity preserver of $G$ for any $e \in E(G)$. Then for each edge $e = (u, v) \in E(G)$,  there exists vertices $s, t \in V(G)$ and a fault set $F \subseteq E(G)$ of size at most $|F| \leq k$ such that $s$ and $t$ are strongly connected in $G-F$ but not strongly connected in $(G - e) - F$. In particular, this implies that  there exists a $(u, v)$-cut $(S, V-S)$ in $G$ of size at most $|\delta^+(S)| \leq k+1$.

Now we observe that, for any edge $(u,w) \in E(G)$, if $(u, w) \notin \delta^+(S)$, then $w \in S$. Therefore, if $|\delta^+(u)| > q+k+1$, then there will be more than $q$ out-neighbors of $u$ in $S$ since $|\delta^+(S)| \leq k+1$. Similarly, if $|\delta^-(v)| > q+k+1$, then there will be more than $q$ vertices in $V-S$. By unbreakability, we know either $|\delta^+(u)| \leq q+k+1$ or $|\delta^-(v)|\leq q+k+1$. Therefore, we can charge edge $e=(u, v)$ to $u$ if $|\delta^+(u)| \leq q+k+1$ or to $v$ if $|\delta^-(v)| \leq q+k+1$. We observe that under this scheme, each vertex in $V(G)$ is charged $O(q+k)$ times and thus the total number of edges in $G$ is $|E(G)| = O((q+k)n)$. 
\vspace{1mm}

This simple argument shows that unbreakable graphs have sparse $k$-FT connectivity preservers. Frequently in graph algorithms and data structures, if we can solve a problem on graphs with good expansion,  then we can extend our solution to general graphs by decomposing the graph into subgraphs with good expansion (e.g., using expander decomposition). In our case, we use a specific decomposition which we  call the \textit{directed expander hierarchy} (formally defined in \Cref{lem:expander_hierarchy}). This directed expander hierarchy is a directed version of the (undirected) expander hierarchy proved in \cite{long2024connectivity}, and it follows from a similar proof. However, to successfully apply the directed expander hierarchy, we will actually need a different argument for the sparsity of $k$-FT connectivity preservers of  $(q, k)$-unbreakable graphs that is more adaptable to our purposes.

\paragraph{Sparse $k$-FT connectivity preserver for unbreakable graphs (\#2).} Let $G$ be a directed $(q, k)$-unbreakable  graph that is an edge-minimal $k$-FT connectivity preserver of itself.  We will give an informal argument that $G$ has at most $|E(G)| = O(2^kqn)$ edges. 
We need the following property of $(q, k)$-unbreakable sets, which we prove in \Cref{sec:omit_giant}. 
\begin{claim}[Giant component]
    Let $G$ be a directed graph, let $U \subseteq V(G)$ be a $(q, k)$-unbreakable set in $G$, and let  $F \subseteq E(G)$ be a fault set of size at most $|F| \leq k$. Then $G-F$ contains a strongly connected component $C$ containing at least $|C \cap U| \geq |U|-2q$ of the vertices in $U$.  We refer to $C$ as a giant component of $U$ in $G-F$. 
    \label{clm:giant_component_first}
\end{claim}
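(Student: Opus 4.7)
The plan is to reduce the claim to a simple monotonicity argument on the condensation DAG of $G-F$, using $(q,k)$-unbreakability on the prefixes of a topological order.

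First I would fix a topological order of the strongly connected components $C_1,\ldots,C_r$ of $G-F$, so that every edge of the condensation DAG goes from a lower-indexed SCC to a higher-indexed one. For each prefix $A_i := C_1 \cup \cdots \cup C_i$, any edge of $G-F$ with head in $A_i$ must also have its tail in $A_i$, since otherwise it would be a backward edge in the DAG. Hence $\delta^-_{G-F}(A_i) = \emptyset$, which implies $\delta^-_G(A_i) \subseteq F$ and so $|\delta^-_G(A_i)| \le |F| \le k$. This is the crucial step that connects the SCC structure of $G-F$ to the cuts controlled by the unbreakability hypothesis.

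Next I would apply $(q,k)$-unbreakability of $U$ to each cut $(A_i, V \setminus A_i)$. It gives, for every $i \in \{0,1,\ldots,r\}$, the dichotomy $|A_i \cap U| \le q$ or $|U \setminus A_i| \le q$. Setting $f(i) := |A_i \cap U|$, the function $f$ is non-decreasing in $i$ with $f(0)=0$ and $f(r)=|U|$, and by the dichotomy every value $f(i)$ lies in $[0,q] \cup [|U|-q,|U|]$. Assuming $|U| > 2q$ (otherwise the conclusion $|C \cap U| \ge |U|-2q$ is trivially satisfied by taking any SCC, even an empty intersection), these two intervals are disjoint, so there is a well-defined smallest index $i^\ast$ with $f(i^\ast) \ge |U|-q$, and necessarily $f(i^\ast-1) \le q$.

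Finally, since $A_{i^\ast} = A_{i^\ast - 1} \sqcup C_{i^\ast}$, we get
\[
|C_{i^\ast} \cap U| \;=\; f(i^\ast) - f(i^\ast-1) \;\ge\; (|U|-q) - q \;=\; |U|-2q,
\]
so $C := C_{i^\ast}$ is a strongly connected component of $G-F$ with the desired property. There is no real obstacle here; the only subtlety is remembering that $(q,k)$-unbreakability requires only one of $|\delta^+_G|$ or $|\delta^-_G|$ to be at most $k$, which lets us work purely with incoming cuts on prefixes of a single topological order rather than having to argue about both directions simultaneously.
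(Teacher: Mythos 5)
Your proof is correct and is essentially the same as the paper's: both order the SCCs of $G-F$ topologically, observe that each prefix $A_i$ has $\delta^-_{G-F}(A_i)=\emptyset$ and hence $|\delta^-_G(A_i)|\le k$, and then invoke unbreakability to force $|A_i \cap U|$ into $[0,q]\cup[|U|-q,|U|]$ for every prefix, with the giant component arising at the index where the quantity jumps across the gap. The only cosmetic difference is that the paper phrases the argument as a contradiction (assume all SCCs are small, exhibit a cut violating unbreakability at the largest prefix with $|A_i\cap U|\le q$), while you argue directly via the jump of the monotone function $f(i)=|A_i\cap U|$, which is a slightly cleaner packaging of the identical idea.
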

Since graph $G$ is $(q, k)$-unbreakable, \Cref{clm:giant_component_first} implies that after any $k$ edge faults $F$ in graph $G$, at least $n-2q$ vertices in $G$ are contained in the same strongly connected component (SCC) $S$ in $G-F$.

By the edge-minimality of $G$, for each edge $e = (u, v) \in E(G)$, there exists a fault set $F_e \subseteq E(G)$ of size at most $|F_e| \leq k$ such that $u$ and $v$ are strongly connected in $G-F_e$ but not strongly connected in $(G - e) - F_e$.\footnote{This corresponds to \Cref{claim:decremental_struct} in \Cref{subsec:sourcewise}.} Let $C_e \subseteq V(G)$ be the SCC in $G-F_e$ containing vertices $u$ and $v$. We will bound the number of edges $e \in E(G)$  in two cases,   based on whether or not set $C_e$ is a giant component in $G-F_e$. 

\begin{figure}[ht]
    \centering
    \includegraphics[width=0.5\textwidth]{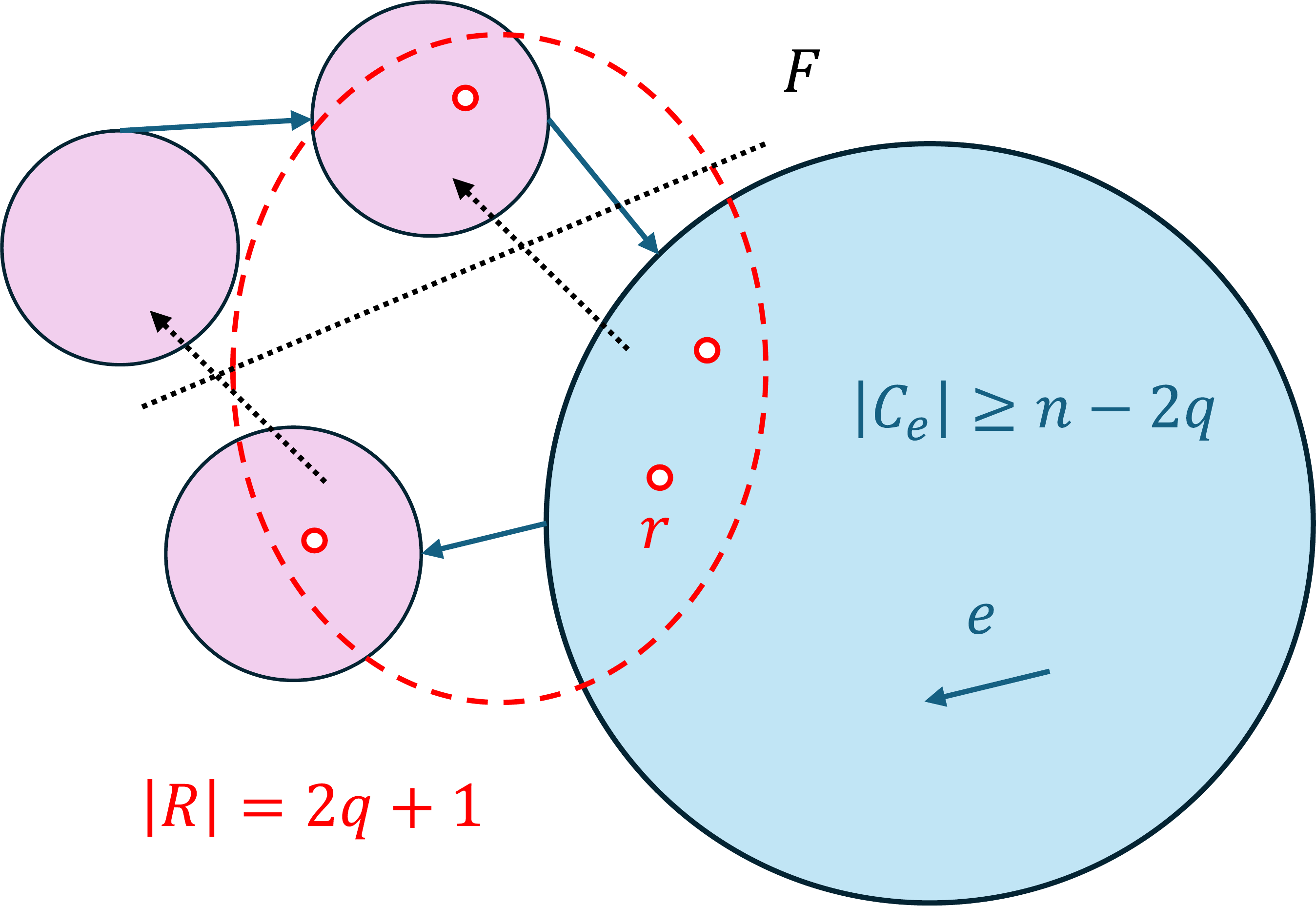}
    \caption{Illustration of the giant component analysis in Case 1.}
    \label{fig:giant-scc}
\end{figure}

\textbf{Case 1:} $C_e$ is a giant component in $G-F_e$. 
In this case, we will directly use the $k$-FT single-source connectivity preservers of \cite{baswana2018fault}.
Let $R \subseteq V(G)$ be an arbitrary set of  vertices of size $|R| = 2q+1$. Let $H \subseteq G$ be the graph obtained by including a $k$-FT single-source connectivity preserver of $G$ rooted at $r$ for all $r \in R$. 
Each of these $|R|$ preservers has size $O(2^kn)$ by \cite{baswana2018fault}, so graph $H$ has $|E(H)| = O(2^kqn)$ edges. 

We claim that graph $H$ contains every edge $e = (u, v)$ whose connected component $C_e$ in $G-F_e$ is a giant component.  Since $C_e$ is a giant component in $G-F_e$, we know that $|C_e| \geq n - 2q$, so $C_e \cap R \neq \emptyset$. Then subgraph $H$ contains a $k$-FT single-source connectivity preserver of $G$ rooted at some vertex $r \in C_e$, so $C_e$ is an SCC of $H - F_e$.

Finally, since $u$ and $v$ are strongly connected in $H-F_e$ but not strongly connected in $(H-e)-F_e$, we conclude that $e \in E(H)$. The argument in this case roughly corresponds to \Cref{clm:large_scc} in \Cref{sec:conn_pres}.

\begin{figure}[ht]
    \centering
    \includegraphics[width=0.4\textwidth]{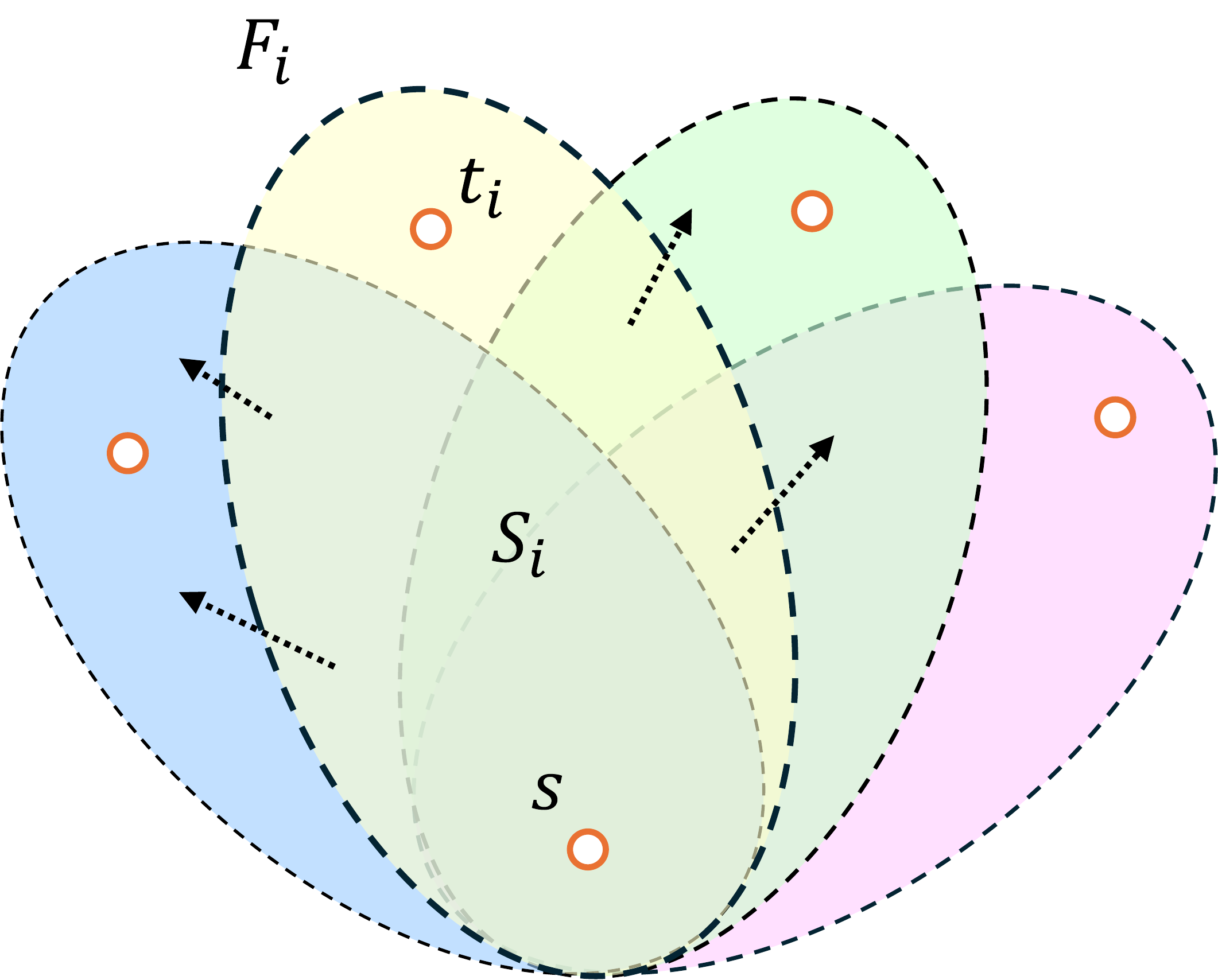}
    \caption{Illustration of the anti-isolation argument in Case 2.}
    \label{fig:small-scc}
\end{figure}

\textbf{Case 2:} $C_e$ is not a giant component in $G-F_e$. Let $E' \subseteq E(G)$ be the set of all edges $e \in E(G)$ such that $C_e$ is not a giant component. We will give an informal argument that $|E'| = O(2^kqn)$. By an averaging argument, there must exist vertices $s, t_1, \dots, t_r \in V(G)$ such that $(s, t_i) \in E'$ for $i \in [1, r]$ and $r = \Theta(|E'|/n)$. Let $e_i = (s, t_i) \in E'$ for $i \in [1, r]$. 
By \Cref{clm:giant_component_first}, the SCC $C_{e_i}$ in $G - F_{e_i}$ containing vertices $\{s, t_i\}$ has size at most $|C_{e_i}| \le 2q$ for each $i \in [1, r]$. Since the sets $C_{e_i}$ each have size $|C_{e_i}| = O(q)$, we can ensure that $t_j \not \in C_{e_i}$ for all $j \neq i$, using a greedy argument that decreases the size of our set of vertices $\{t_1, \dots, t_r\}$ by at most a factor of $q$ (see \Cref{clm:degeneracy} for details).

Next, we will apply the so-called \textit{anti-isolation lemma}, which was previously proved in \cite{pilipczuk2018directed} with the weaker bound of $r \le k4^k$. Below, we state an improved bound using insights from \cite{baswana2018fault}. See \Cref{subsec:isolation} for the formal proof.


\begin{restatable}[Improved anti-isolation]{lemma}{isolation} \label{lem:anti isolation}
Let $s, t_1, \dots, t_r$ be vertices in a directed graph $G$, and let $F_1, \dots, F_r \subseteq E(G)$ be edge sets each of size at most $k$ such that for all $i, j \in [1, r]$ there is an $(s, t_j)$-path in $G - F_i$ if and only if $i = j$. Then $r \leq 2^k$. 
\end{restatable}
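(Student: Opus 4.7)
I would prove \Cref{lem:anti isolation} by induction on $k$. In the base case $k = 0$, each $F_i$ is empty, so the biconditional ``there is an $(s, t_j)$-path in $G - F_i$ iff $i = j$'' reduces to ``there is an $(s, t_j)$-path in $G$ iff $i = j$''; the left-hand side does not depend on $i$, forcing $r \le 1 = 2^0$. For the inductive step, I would first put the fault sets into canonical form: let $R_i$ denote the set of vertices reachable from $s$ in $G - F_i$. Any edge leaving $R_i$ must belong to $F_i$ (otherwise its head would also be reachable), so $\delta^+(R_i) \subseteq F_i$, and we may replace $F_i$ with $\delta^+(R_i)$ without changing the reachable set. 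Hence we may assume $F_i = \delta^+(R_i)$, $|F_i| \le k$, $\{s, t_i\} \subseteq R_i$, and $R_i \cap \{t_j : j \ne i\} = \emptyset$.

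The main step is a branching argument. Assuming $r \ge 2$, we have $|F_1| \ge 1$; pick any edge $e \in F_1$ and split the indices into $A = \{i \in [1,r] : e \in F_i\}$ (which contains $1$) and $B = \{i \in [1,r] : e \notin F_i\}$. For $i \in A$, the identity $G - F_i = (G - e) - (F_i \setminus \{e\})$ shows that the hypothesis of \Cref{lem:anti isolation} holds for the targets $\{t_i\}_{i \in A}$ and the reduced fault sets $\{F_i \setminus \{e\}\}_{i \in A}$ in the graph $G - e$ with parameter $k - 1$; the inductive hypothesis then yields $|A| \le 2^{k-1}$.

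The main obstacle will be to bound $|B|$ by $2^{k-1}$. For $i \in B$ we have $|F_i| \le k$ and $e \notin F_i$, so naive edge deletion does not reduce the parameter. The BCR-inspired improvement (over the looser $4^k$ factor one would get from the important-separator framework of \cite{pilipczuk2018directed}) is to choose $e$ strategically from $F_1 = \delta^+(R_1)$, where $R_1$ is taken to be a farthest minimum cut separating $s$ from the remaining targets. Submodularity of $|\delta^+(\cdot)|$ applied to the pairs $(R_1, R_i)$ for $i \in B$, combined with the maximality of $R_1$, should let us identify a second edge $e'$ that lies in every $F_i$ with $i \in B$; applying the branching step once more with $e'$ then gives $|B| \le 2^{k-1}$, and we conclude $r = |A| + |B| \le 2^k$. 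Carrying out this common-edge extraction cleanly---which is precisely where the factor-of-two improvement over important separators comes from---is the delicate part of the argument and the main technical hurdle.
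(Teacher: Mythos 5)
Your base case, the canonical form $F_i = \delta^+(R_i)$, and the $A$-branch (delete $e$, recurse with parameter $k-1$) are all fine; the gap is exactly where you flag it, in the $B$-branch, and the proposed fix does not survive a small example. Take $V = \{s, a, b, t_1, t_2, t_3\}$ with edges $(s,a), (s,b), (a,t_1), (b,t_2), (a,t_3), (b,t_3)$, and $F_1 = \{(s,b), (a,t_3)\}$, $F_2 = \{(s,a), (b,t_3)\}$, $F_3 = \{(a,t_1), (b,t_2)\}$. Each $F_i$ equals $\delta^+(R_i)$ with $R_1=\{s,a,t_1\}$, $R_2=\{s,b,t_2\}$, $R_3=\{s,a,b,t_3\}$, the hypotheses hold with $k=2$, and $R_1$ is a farthest minimum $(s,\{t_2,t_3\})$-cut. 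Yet the three fault sets are pairwise disjoint, so whichever $e \in F_1$ you choose, $B$ contains two indices whose fault sets share no edge, and there is no common $e'$ to branch on. Note also that you cannot in general replace $F_1$ by the farthest minimum cut separating $s$ from the remaining targets, since that cut need not keep $t_1$ on the source side, breaking the hypothesis. Without the common-edge step, the induction only recovers a $4^k$-type bound of the usual important-separator flavor (the variant the paper attributes, with an extra factor of $k$, to \cite{pilipczuk2018directed}), not $2^k$.

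The paper's proof is not a branching argument. It adds an artificial sink $t^*$ with an edge $(t_i, t^*)$ from each target; each $R_i$ then becomes an out-reachable $(s,t^*)$-cut of size at most $k+1$ in the augmented graph. It then invokes the important-cut container lemma (\Cref{lem:important_separators}, a special case of \Cref{thm:ecut}): there is a \emph{single} $(s,t^*)$-cut $S^*$ with $|\delta^+(S^*)| \le 2^k$ containing every out-reachable $(s,t^*)$-cut of size at most $k+1$. Since $t_i \in R_i \subseteq S^*$ while $t^* \notin S^*$, each edge $(t_i, t^*)$ crosses $S^*$, so $r \le |\delta^+(S^*)| \le 2^k$. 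The factor $2^k$ (rather than $4^k$) is produced inside the proof of \Cref{thm:ecut} by iteratively recomputing farthest minimum cuts while augmenting the graph with edges from $s$ to the heads of the current cut edges, with each round at most doubling the max flow. That single-container viewpoint, not a branching tree with a common-edge extraction, is the BCR-style insight your plan is missing.
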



By our earlier discussion, we know that $s$ and $t_j$ are strongly connected in $G - F_{e_i}$ if and only if $i = j$ (see \Cref{fig:small-scc} for reference). For simplicity, we will further assume that $s$ can reach $t_j$ in $G-F_{e_i}$ if and only if $i = j$. Then the anti-isolation lemma directly implies that $\Theta(|E'|/(qn)) = r \leq 2^k$, so we conclude that $|E'| = O(2^kqn)$.  This argument roughly corresponds to \Cref{clm:L_bound}. 

Our two-case analysis shows that $(q, k)$-unbreakable graphs have $k$-FT connectivity preservers of size $O(2^kqn)$. Our proof in \Cref{subsec:existential} roughly follows this argument. We now build some intuition for how we combine these ideas with the directed expander hierarchy.

\paragraph{Directed expander hierarchy with two levels.} The two-level directed expander hierarchy has a particularly simple interpretation. It implies that given a directed graph $G$, there exists a set of vertices $U \subseteq V(G)$ such that $U$ is $(q, k)$-unbreakable in $G$ and every SCC $C$ of $G - U$ is $(q, k)$-unbreakable in $G-U$, for an appropriate choice of $q$. 

We briefly describe how we can use the ideas we have developed, along with this two-level hierarchy, to bound the number of edges we need to preserve in a $k$-FT connectivity preserver of $G$. As we have done throughout this section, we will assume that $G$ is an edge-minimal $k$-FT connectivity preserver of itself;  our goal then is to upper bound $|E(G)|$.

If $C$ is an SCC in $G - U$, then since $G[C]$ is $(q, k)$-unbreakable, we can upper bound the number of edges in $G[C]$ using the sparse $k$-FT connectivity preservers for unbreakable graphs that we developed earlier; the same holds for $G[U]$ as well. The remaining edges in $E(G)$ that we need to upper bound fall into two categories:
 1) edges between distinct SCC's $C_1, C_2$ in $G - U$, and
2) edges between $U$ and $V(G) - U$. 

Let $H_U \subseteq G$ be an edge-minimal subgraph of $G$ that preserves the $k$-FT connectivity between all pairs of vertices  $(u, v)$ in $U \times V(G)$. We say that $H_U$ is a $k$-FT \textit{sourcewise} connectivity preserver of $G$ with respect to set $U$.\footnote{We formally define $k$-FT sourcewise connectivity preservers in \Cref{def:sourcewise}.} It is not difficult to see that $H_U$ must contain all edges in both of the two categories of edges in $E(G)$ identified above. Then in order to upper bound $|E(G)|$ it is sufficient to upper bound $|E(H_U)|$.

As it turns out, the second sparse preserver argument we saw in this section can be easily extended to solve this source-restricted connectivity problem using only $O(4^kqn)$ edges ( \Cref{lem:unbreak_pres}).

\paragraph{Full directed expander hierarchy.} In our full directed expander hierarchy in \Cref{lem:expander_hierarchy}, we will have $\ell = O(\log n)$ levels, which allows us to set $q = O(k)$. The ideas discussed in the previous paragraph easily extend to this full expander hierarchy, leading to a final size bound of $O(\ell \cdot 4^kqn) = O(k4^k n \log n)$, as we prove in \Cref{subsubsec:expanderhierarch}.


\paragraph{Computing a $k$-FT  connectivity preserver in FPT-time.} So far, all of the arguments in this section use a greedy algorithm to construct edge-minimal $k$-FT connectivity preservers. Naively implementing this algorithm requires $n^{O(k)}$ time to check every fault set $F \subseteq E(G)$ of size $|F| \leq k$. We implement a modified version of the greedy algorithm in $\text{poly}(2^kn)$ time by adapting some of the ideas in our existential bound. Our FPT-time algorithm uses an algorithmic lemma about important cuts (\Cref{lem:important_separators}), which we prove in \Cref{sec:cut_theorem}.

\subsection{$k$-Connectivity Preservers}

Next, we describe the high-level idea how to show a $k$-connectivity preserver with $O(n\sqrt{nk})$ edges. Again, we start by assuming the graph $G$ is an edge-minimal $k$-connectivity preserver and argue that $G$ must have $O(n\sqrt{nk})$ edges. 


\paragraph{Decomposition.} Our argument is based on the following two-level decomposition. Set $q=\sqrt{nk}$. 
We maintain a partition $\mathcal{S}$ of vertex set $V$ and a collection of cuts $\mathcal{C}$. Initially, we let $\mathcal{S} = \{V\}$ and $\mathcal{C} = \emptyset$. While there exists a set $S \in \mathcal{S}$ and a cut $(L,R)$ of size  $|\delta^+(L)| \leq k$ such that $|S \cap L| \geq q$ and $|S \cap R|\geq q$, we replace $S$ using $S \cap L$ and $S \cap R$ in $\mathcal{S}$ and add the cut $(L, R)$ to $\mathcal{C}$. When the process terminates, we observe that:
\begin{itemize}
    \item Every set $S \in \mathcal{S}$ has size $|S| \geq q$ and is $(q,k)$-unbreakable.
    \item All edges between components in $\mathcal{S}$ form a subset of $\cup_{(L,R) \in \mathcal{C}} (\delta^+(L) \cup \delta^-(L))$.
\end{itemize}

By these two observations, we bound the number of edges inside some final component $S$, or between two final components, respectively. 

\paragraph{Inside components.} We bound the edges inside each component using the same strategy as we bound the size of a $k$-FT preserver in $(q,k+1)$-unbreakable graphs. By minimality, we can prove similarly that every edge $e \in G$ is in a small cut $(X,V-X)$ where $|\delta^+(X)| \leq k$. Let us consider any $S \in \mathcal{S}$ and any $e=(u,v)\in G[S]$. We use the same charging argument, with the only difference that, now we instead argue that either the out-degree of $u$ that is \emph{inside} $G[S]$ is at most $q+k$, or the in-degree of $v$ that is \emph{inside} $G[S]$ is at most $q+k$. Finally, we deduce that the total number of edges inside $G[S]$ is $O((q+k)|S|)$ for any final component $S \in \mathcal{S}$ and all components sum up to $O((q+k)n) = O(n\sqrt{nk})$.

\paragraph{Between components.} Since all of the edges between components in $\mathcal{S}$ are contained in the set $\cup_{(L,R) \in \mathcal{C}} (\delta^+(L) \cup \delta^-(L))$, we aim to bound the number of edges $|\delta^+(L) \cup \delta^-(L)|$ crossing each cut $(L,R)$ we operated on in the decomposition. By our decomposition, $|\delta^+(L)| \leq k$. We will now show that $|\delta^-(L)| \leq kn$.

Let us first think about a \emph{path} from any vertex $s$ to any vertex $t$. Since it will go through the edges $\delta^+(L)$ and $\delta^-(L)$ interleavingly, and $|\delta^+(L)|\leq k$, the number of edges in $\delta^-(L)$ it goes through can be at most $k+1$. Similarly, for a \emph{flow} with at most $k$ edge-disjoint paths from any vertex $s$ to any vertex $t$, since every flow path will still go through the edges $\delta^+(L)$ and $\delta^-(L)$ interleavingly, and all of the paths go through $\delta^+(L)$ at most $k$ times in total, the flow goes through $\delta^-(L)$ at most $k+k=2k$ times. Therefore, $2k$ edges in $\delta^-(L)$ will be sufficient to preserve the $k$-bounded connectivity between any vertex pair $(s,t)$.

Another observation is, to preserve $k$-connectivity for \emph{all pairs} of vertices $(s,t)$, it suffices that we preserve $k$-connectivity for only $O(n)$ \emph{demand pairs} of vertices $(s,t)$. These demand pairs can be viewed as all edges in some analog of Gomory-Hu trees \cite{gomory1961multi,cheng1991ancestor} for symmetric connectivity in directed graphs.\footnote{To be more specific, the trees lose some ``cut-equivalency" but still satisfy that the connectivity between any pair of vertices is the minimum connectivity on the tree path. See \Cref{lemma: demand-pairs} for details.} Therefore, to preserve $k$-connectivity for these $O(n)$ \emph{demand pairs}, we  only need $O(nk)$ edges in $\delta^-(L)$. This bound on the number of edges crossing cut $(L, R)$ in $\mathcal{C}$ is sufficient to show that there are at most $O(n \sqrt{nk})$ edges between components.


\section{Preliminaries}
\label{sec:prelim}

Let $G = (V, E)$ be a directed graph with $n$ vertices and $m$ edges.  For any edge set $F \subseteq V \times V$, we use $G-F$ to denote the graph formed by removing edges in $F$ from $G$, and we use $G + F$ to denote the graph formed by adding edges in $F$ to $G$. When $F=\{e\}$ is a singleton set, we  abbreviate this to $G-e$ or $G+e$. For a vertex set $S \subseteq V$, we use $G[S]$ to denote the induced subgraph of $S$ in $G$. 

For any vertex $v \in V$, we use $\delta_G^+(v)$ and $\delta_G^-(v)$ to denote the set of outgoing edges from $v$ and incoming edges to $v$ respectively. When $(u,v) \in E$, we say $v$ is an out-neighbor of $u$, and $u$ is an in-neighbor of $v$. We denote the set of out- and in- neighbors of $v\in V$ in $G$ by $N_G^+(v)$ and $N_G^-(v)$.

We define a cut as a partition $(S, V-S)$ of the vertex set $V$ into two disjoint subsets, where $S \subseteq V$. We use $\delta_G^+(S)$ to denote the set of outgoing edges from $S$, and $\delta_G^-(S)$ to denote the set of incoming edges to $S$. For two disjoint sets $X, Y \subseteq V$, we say that $(S, V-S)$ is an $(X, Y)$-cut if $X \subseteq S$ and $Y \subseteq V-S$. In particular, if $X = \{x\}$ and $Y = \{y\}$, we call $(S, V-S)$ an $(x, y)$-cut. 
We typically refer to the size of a cut $(S, V-S)$ as the number of outgoing edges from $S$, $|\delta^+_G(S)|$. In particular, we say that an  $(X, Y)$-cut $(S, V-S)$ is  a minimum $(X, Y)$-cut if $|\delta^+_G(S)| \leq |\delta^+_G(S')|$ for every $(X, Y)$-cut $(S', V-S')$. 

We define $\flow(G, s,t)$ to be the maximum number of pairwise edge-disjoint paths from $s$ to $t$ in $G$. Additionally, we define the \emph{symmetric} connectivity $\lambda_G(s,t)$ between $s$ and $t$ to be $\lambda_G(s, t) = \min(\flow(G, s,t),\flow(G, t,s))$. We define the $k$-bounded connectivity to be $\lambda_{G}^{k}(s,t)=\min(\lambda_{G}(s,t),k)$. For terminal sets $X,Y$, we define $\flow(G,X,Y)$ to be the maximum number of pairwise edge-disjoint $(X, Y)$-paths.


\subsection{Farthest minimum cuts and important cuts}

We need to define a class of well-structured $(X, Y)$-cuts, which we call reachable cuts.

\begin{definition}[Reachable Cuts]
        We say that an $(X, Y)$-cut $(S, V-S)$ is \textit{out-reachable} if for every $s \in S$ there exists an $x \in X$ such that $x$ can reach $s$ in $G - \delta_G^+(S)$.  Likewise, we say that $(S, V-S)$ is \textit{in-reachable} if for every $s \in S$ there exists an $x \in X$ such that $s$ can reach $x$ in $G - \delta^-_G(S)$.  
\end{definition}

In some sense, reachable cuts are without loss of generality. 

\begin{claim} \label{clm: in out wlog}
    Let $(S, V-S)$ be an $(X, Y)$-cut. Then there exists an out-reachable  $(X, Y)$-cut $(S', V-S')$ such that $\delta^+_G(S') \subseteq \delta^+_G(S)$. Likewise, there exists an in-reachable $(X, Y)$-cut $(S'', V-S'')$ such that  $\delta^-_G(S'') \subseteq \delta^-_G(S)$.
\end{claim}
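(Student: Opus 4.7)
The plan is to construct the desired cuts $(S', V-S')$ and $(S'', V-S'')$ by taking reachability-closures inside $S$ with respect to appropriately modified graphs. For the out-reachable statement, I would define $S' := \{v \in V : X \text{ reaches } v \text{ in } G - \delta^+_G(S)\}$. Since every edge leaving $S$ lies in $\delta^+_G(S)$, any walk from $X \subseteq S$ in $G - \delta^+_G(S)$ is trapped inside $S$, giving $X \subseteq S' \subseteq S$; this immediately yields the $(X,Y)$-cut property (because $Y \cap S = \emptyset$) and out-reachability holds by construction.

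The key step, and the one I expect to require the most care, is the edge-containment $\delta^+_G(S') \subseteq \delta^+_G(S)$. The argument I have in mind is by contradiction: for an edge $(u,v) \in \delta^+_G(S')$, certainly $u \in S' \subseteq S$, so it suffices to rule out $v \in S$; but if $v \in S$, then $(u,v) \notin \delta^+_G(S)$, so the edge survives in $G - \delta^+_G(S)$, allowing $v$ to inherit $u$'s reachability from $X$, forcing $v \in S'$ and contradicting $(u,v) \in \delta^+_G(S')$.

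For the in-reachable statement I plan to dualize the construction: let $S'' := \{v \in V : v \text{ reaches } X \text{ in } G - \delta^-_G(S)\}$. A symmetric argument shows $S'' \subseteq S$ (otherwise a $v \notin S$ reaching $X \subseteq S$ would be forced to cross a removed edge of $\delta^-_G(S)$), so $(S'', V-S'')$ is a valid $(X,Y)$-cut, and the same contradiction-style argument gives $\delta^-_G(S'') \subseteq \delta^-_G(S)$. The one delicate point, which I flag as the main subtlety of the claim, is that in-reachability is defined relative to $\delta^-_G(S'')$ rather than $\delta^-_G(S)$; however, once $\delta^-_G(S'') \subseteq \delta^-_G(S)$ is established, the graph $G - \delta^-_G(S'')$ retains every edge of $G - \delta^-_G(S)$, so reachability from each $s \in S''$ to $X$ in the latter graph transfers directly to the former, completing the verification.
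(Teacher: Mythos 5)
Your proposal is correct and takes essentially the same approach as the paper: both define $S'$ as the set of vertices reachable from $X$ in $G - \delta^+_G(S)$ and derive the claimed properties from there. One small note on the write-up: the out-reachability of $(S', V-S')$ does not quite hold ``by construction,'' since reachability is certified in $G - \delta^+_G(S)$ rather than $G - \delta^+_G(S')$; it requires the containment $\delta^+_G(S') \subseteq \delta^+_G(S)$ that you prove afterward, exactly the subtlety you correctly identify and resolve in the in-reachable case, so the same remark should also be applied to the out-reachable case.
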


\begin{proof}
    Consider the out-reachable case. For any $(X, Y)$-cut $(S, V-S)$, we let $S'$ be the set that $X$ can reach in $G-\delta^+_G(S)$. It follows that $S'$ is out-reachable and $\delta^+_G(S') \subseteq \delta^+_G(S)$. The in-reachable case can be proved in the same way.
\end{proof}

A classic result of Ford-Fulkerson implies the existence of a unique min-cut that is ``farther'' than every other min-cut. 

\begin{lemma}[Farthest Minimum Cuts \cite{ford2015flows}]
Given a directed graph $G$, we say that an $(X, Y)$-cut $(S, V-S)$ is a farthest minimum $(X, Y)$-cut in $G$ if $(S, V-S)$ is an out-reachable, minimum $(X, Y)$-cut, and there does not exist an out-reachable, minimum $(X, Y)$-cut $(S', V-S')$ such that $S \subset S'$. 
For disjoint vertex sets $X,Y \subseteq V(G)$, the unique farthest minimum $(X, Y)$-cut always exists and can be computed in max-flow time.
    \label{lemma:FMC}
\end{lemma}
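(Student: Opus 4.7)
The plan is to split the proof into two parts: existence/uniqueness of the farthest minimum cut, established via an algebraic union-closure property, and then an efficient construction from a single max-flow computation.

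First, I would establish that out-reachable minimum $(X,Y)$-cuts are closed under union. Let $(S_1, V-S_1)$ and $(S_2, V-S_2)$ be two such cuts of common value $\mu$. Submodularity of $|\delta^+(\cdot)|$ gives
\[
|\delta^+(S_1)| + |\delta^+(S_2)| \;\ge\; |\delta^+(S_1 \cup S_2)| + |\delta^+(S_1 \cap S_2)|.
\]
Both $S_1 \cap S_2$ and $S_1 \cup S_2$ contain $X$ and are disjoint from $Y$, so each is an $(X,Y)$-cut of size at least $\mu$; together with $|\delta^+(S_1)|=|\delta^+(S_2)|=\mu$, this forces $|\delta^+(S_1 \cup S_2)| = \mu$. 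To see $S_1 \cup S_2$ is out-reachable, pick $v \in S_1 \cup S_2$, say $v \in S_1$; by out-reachability of $S_1$, some $x \in X$ reaches $v$ via a path $P$ in $G - \delta^+(S_1)$, and $P$ must stay inside $S_1 \subseteq S_1 \cup S_2$ (it cannot leave $S_1$ without using $\delta^+(S_1)$). Thus $P$ uses no edge of $\delta^+(S_1 \cup S_2)$, certifying out-reachability. The union of all out-reachable minimum $(X,Y)$-cuts is therefore itself an out-reachable minimum $(X,Y)$-cut, and is the unique farthest one.

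For the construction, I would compute a maximum $(X,Y)$-flow $f$ (introducing a super-source/super-sink as needed) and form the residual graph $G_f$. Let $T$ be the set of vertices that reach $Y$ in $G_f$, set $S_0 := V - T$, and then set $S := \{v : X \text{ reaches } v \text{ in } G - \delta^+(S_0)\}$. I would first verify that $S_0$ is a minimum $(X,Y)$-cut of value $\mu$: $X \subseteq S_0$ by max-flow, $Y \subseteq T$ trivially, every edge $(u,v)$ with $u \in S_0, v \in T$ is saturated (else $(u,v) \in G_f$ would propagate reachability of $Y$ back to $u$), and no flow crosses from $T$ to $S_0$ (else a reverse residual edge would do the same), so $|\delta^+(S_0)|$ equals the value of $f$. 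The restriction to $S$ is necessary because $S_0$ itself need not be out-reachable (e.g., $S_0$ can include an isolated vertex unreachable from $X$). No edge of $G$ can leave $S$ into $S_0 \setminus S$, since such an edge lies in $G - \delta^+(S_0)$ and would extend reachability from $X$ to its head, so $\delta^+(S) \subseteq \delta^+(S_0)$; combined with $S$ being an $(X,Y)$-cut, minimality forces $\delta^+(S) = \delta^+(S_0)$, and $S$ is out-reachable by construction.

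Finally, I would verify farthest-ness: for any out-reachable minimum $(X,Y)$-cut $(S', V-S')$, saturation of $\delta^+(S')$ and zero backflow imply $G_f$ has no edges from $S'$ to $V - S'$, so $S' \subseteq S_0$; each out-reachability path witnessing $v \in S'$ stays inside $S'\subseteq S_0$, avoids $\delta^+(S_0)$, and therefore places $v$ in $S$, giving $S' \subseteq S$. The total runtime is one max-flow computation plus two reachability searches, i.e., max-flow time. The main obstacle is precisely this out-reachability subtlety: the natural residual-complement $S_0$ can contain vertices disconnected from $X$, and one must carefully argue that restricting to $S$ preserves both minimality and maximality under inclusion.
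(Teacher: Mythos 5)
The paper does not give its own proof of \Cref{lemma:FMC}; it cites the result to Ford--Fulkerson as a known fact, so there is no ``paper proof'' for your argument to track. Your reconstruction is nonetheless correct and is the standard one: the submodularity inequality $|\delta^+(S_1)| + |\delta^+(S_2)| \ge |\delta^+(S_1\cup S_2)| + |\delta^+(S_1\cap S_2)|$ together with the containment argument showing out-reachability passes to unions gives existence and uniqueness; and computing the complement of the vertices that can reach $Y$ in the residual graph, then trimming to the $X$-reachable part of $G-\delta^+(S_0)$, gives the construction in one max-flow plus two graph searches. You are right that the trimming step is the non-obvious subtlety --- the raw sink-side min cut $S_0$ can contain vertices disconnected from $X$ --- and you handle it correctly: $\delta^+(S)\subseteq\delta^+(S_0)$ forces $|\delta^+(S)|=\mu$ and hence $\delta^+(S)=\delta^+(S_0)$, and every out-reachable min cut $S'$ has no residual edges leaving it, so $S'\subseteq S_0$ and the out-reachability witnesses for $S'$ survive in $G-\delta^+(S_0)$, giving $S'\subseteq S$. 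There is a slight redundancy in proving uniqueness twice (once algebraically via union-closure, once via the residual construction dominating all out-reachable min cuts), but that is a style choice, not an error.
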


We denote the farthest minimum $(X, Y)$-cut in $G$ by $\fmc(G, X, Y)$. Important cuts generalize the notion of farthest minimum cuts to cuts that are not min-cuts.


\begin{definition}[Important Cuts \cite{marx2011important}]
We say that an $(X, Y)$-cut $(S, V-S)$ is an important $(X, Y)$-cut if $(S, V-S)$ is out-reachable, and there does not exist an out-reachable $(X, Y)$-cut $(S', V-S')$ with size at most $|\delta^+_G(S')| \leq |\delta^+_G(S)|$ such that $S \subset S'$.  
\end{definition}

The definition of important cuts is well-known in the parametrized complexity community. We also state the following claim, which is clear from definition:

\begin{claim} \label{clm:impcut farthest}
    If $(S, V-S)$ is an out-reachable $(X, Y)$-cut of size $|\delta^+_G(S)| \leq k$, then there exists an important $(X, Y)$-cut $(S', V-S')$ of size $|\delta^+_G(S')| \leq k$ such that $S \subseteq S'$. 
\end{claim}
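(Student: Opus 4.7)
The plan is to prove the claim by a simple maximality argument: grow $S$ to a maximal out-reachable $(X,Y)$-cut of size at most $k$, and verify that any such maximal element is automatically important.

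First I would introduce the family
\[
\mathcal{F} \;=\; \bigl\{\, T \subseteq V \;:\; (T, V-T)\text{ is an out-reachable }(X,Y)\text{-cut with }S \subseteq T \text{ and } |\delta_G^+(T)| \le k \,\bigr\}.
\]
The family $\mathcal{F}$ is nonempty, since $S \in \mathcal{F}$ by hypothesis, and it is finite because $V$ is finite. Hence there exists an element $S' \in \mathcal{F}$ that is maximal with respect to set inclusion.

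Next I would argue that $(S', V-S')$ is an important $(X,Y)$-cut. By construction it is an out-reachable $(X,Y)$-cut. Suppose, toward contradiction, that it is not important; then there exists an out-reachable $(X,Y)$-cut $(S'', V-S'')$ with $S' \subsetneq S''$ and $|\delta_G^+(S'')| \le |\delta_G^+(S')|$. Since $|\delta_G^+(S')| \le k$, we get $|\delta_G^+(S'')| \le k$, and since $S \subseteq S' \subsetneq S''$, we get $S \subseteq S''$. Thus $S'' \in \mathcal{F}$ and $S'' \supsetneq S'$, contradicting the maximality of $S'$ in $\mathcal{F}$. Therefore $(S', V-S')$ is important, and it satisfies $|\delta_G^+(S')| \le k$ and $S \subseteq S'$, as required.

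I do not foresee any real obstacle: the only thing to double-check is that out-reachability really is the only structural condition in the definition of an important cut, so no additional constraint interferes with picking a maximal element of $\mathcal{F}$. Since out-reachability is explicitly how $\mathcal{F}$ was defined and the size condition $|\delta_G^+(\cdot)| \le k$ transfers to any strictly larger candidate by the contradiction hypothesis, the argument is essentially one line after the family is set up.
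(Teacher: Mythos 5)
Your proof is correct and is essentially the same argument as the paper's: the paper grows $S$ iteratively (``set $S = S'$ and repeat until important''), while you take a maximal element of the family $\mathcal{F}$ directly and verify it is important by contradiction; both rely on the same observation that extensions preserve out-reachability and the bound $|\delta^+_G(\cdot)| \le k$, and on finiteness of $V$ to guarantee termination or the existence of a maximal element.
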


\begin{proof}
    If $(S, V-S)$ is important, then we let $S'=S$ and the claim holds. Otherwise, since $(S, V-S)$ is not important, we can find $(S',V-S')$ that $|\delta^+_G(S')| \leq |\delta^+_G(S)|$ and $S \subset S'$. We can set $S=S'$ and repeat this procedure until $(S,V-S)$ is important.
\end{proof}

\subsection{Unbreakable sets and directed expander hierarchies}
\label{subsec:expander_hierarchy}

We begin this section by defining $(q, k)$-unbreakable sets. 

\unbreakable*

Unbreakable sets have the useful property that if we remove a small number of edges from the graph, then almost all of the vertices in the unbreakable set will lie in the same strongly connected component in the resulting graph. We defer its proof to \Cref{sec:omit_giant}.

\begin{claim}[Giant component]
    Let $G$ be a directed graph, let $U \subseteq V(G)$ be a $(q, k)$-unbreakable terminal set in $G$, and let  $F \subseteq E(G)$ be a set of edges with size at most $k$. Then $G-F$ contains a strongly connected component $C$ containing at least $|C \cap U| \geq |U|-2q$ of the nodes in $U$.  We refer to $C$ as a giant component of $U$ in $G-F$. 
    \label{clm:giant_component}
\end{claim}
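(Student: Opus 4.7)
The plan is to exploit a topological ordering of the SCCs of $G-F$ together with the unbreakability of $U$.

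First, let $C_1, C_2, \ldots, C_r$ be the strongly connected components of $G-F$ listed in a topological order of the condensation, so that every edge of $G-F$ between distinct SCCs goes from some $C_i$ to some $C_j$ with $i < j$. Define the prefix cuts $S_0 = \emptyset$ and $S_i = C_1 \cup C_2 \cup \cdots \cup C_i$ for $i \in [1,r]$, so $S_r = V(G)$ and $|S_i \cap U|$ grows monotonically from $0$ to $|U|$.

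The key observation is that every edge in $\delta^-_G(S_i)$ (i.e., every edge of $G$ entering $S_i$ from $V-S_i$) must belong to $F$: otherwise, this edge would also exist in $G-F$ and would run from some $C_j$ with $j>i$ to some $C_\ell$ with $\ell \le i$, contradicting the topological ordering. Hence $|\delta^-_G(S_i)| \leq |F| \leq k$ for every $i$, so the $(q,k)$-unbreakability of $U$ yields that for each $i$, either $|S_i \cap U| \leq q$ or $|U - S_i| \leq q$.

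Now let $i^*$ be the smallest index such that $|S_{i^*} \cap U| > q$ (this exists unless $|U| \le q$, in which case the conclusion is trivial). By minimality, $|S_{i^*-1} \cap U| \leq q$. By the dichotomy applied at $i^*$, since $|S_{i^*} \cap U| > q$, we must have $|U - S_{i^*}| \leq q$, i.e., $|S_{i^*} \cap U| \geq |U| - q$. Setting $C = C_{i^*}$ and using $C_{i^*} = S_{i^*} \setminus S_{i^*-1}$ gives
\[
|C \cap U| = |S_{i^*} \cap U| - |S_{i^*-1} \cap U| \geq (|U|-q) - q = |U| - 2q,
\]
as required. The only mild subtlety is ensuring the cut we feed into the unbreakability definition has small incoming (not outgoing) boundary; this is handled cleanly by the topological-order argument above, and a symmetric argument using $|\delta^+_G(V - S_i)| \leq k$ would work equally well.
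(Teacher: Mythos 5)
Your proof is correct and takes essentially the same approach as the paper: topologically order the SCCs of $G-F$, observe that each prefix set $S_i$ has $\delta^-_G(S_i) \subseteq F$ so $|\delta^-_G(S_i)| \le k$, apply unbreakability, and locate the index where $|S_i \cap U|$ jumps past $q$. The paper phrases this as a proof by contradiction (assuming every SCC has fewer than $|U|-2q$ terminals), while you argue directly at the transition index $i^*$, but the underlying decomposition and key lemma are identical.
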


Our upper bounds for fault-tolerant symmetric connectivity preservers will rely on the following directed expander hierarchy decomposition for unbreakable sets.

\begin{lemma}[Directed Expander Hierarchy for Unbreakable Sets]
Let $G$ be an $n$-node directed graph, and let $q, k$ be positive integers with $q \geq k/\phi$, where $0 < \phi \leq 1$. Then there exists a partition $\{V_1, V_2, \dots, V_{\ell}\}$ of  $V(G)$ with the following properties:
\begin{enumerate}
    \item Fix an $i \in [1, \ell]$.   Let $V_{\leq i} = V_1 \cup \dots \cup V_i$, and let $C \subseteq V(G)$ be a strong component in $G[V_{\leq i}]$. Then the set $V_i \cap C$ is $(q, k)$-unbreakable in $G[C]$.
    \item The number of sets in the partition is at most $\ell = O(\log n)$.
\end{enumerate}

Moreover, when $\phi = 1/2$, this partition can be computed in exponential time, and when $\phi = O(1/\sqrt{\log n})$, this partition can be computed in polynomial time.  
    \label{lem:expander_hierarchy}
\end{lemma}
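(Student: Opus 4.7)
The plan is to construct the partition top-down by induction on $n$. The key subroutine is: given a strongly connected directed graph $H$ on at most $n$ vertices, produce a subset $U \subseteq V(H)$ with $|U| \ge |V(H)|/2$ that is $(q,k)$-unbreakable in $H$. Given this subroutine, I would apply it to every strong component $C$ of the input $G$ to obtain $U_C$, set $V_\ell := \bigcup_C U_C$, and then recursively apply the whole construction to $G[V(G) - V_\ell]$ to obtain $V_1, \dots, V_{\ell-1}$. The top-level property is immediate from how $V_\ell$ is chosen, and for any lower level $i < \ell$ the property is inherited from the recursive call on $G[V(G) - V_\ell]$, noting that the strong components of $G[V_{\le i}]$ coincide with the strong components of the recursive subgraph restricted to $V_{\le i}$. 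Since $|V(G) - V_\ell| \le |V(G)|/2$, the recursion has depth $\ell = O(\log n)$ as required.

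For the subroutine, the idea is to use a directed $\phi$-expander decomposition of $H$: a partition $\{P_1, \dots, P_r\}$ of $V(H)$ such that each induced subgraph $H[P_j]$ has (directed) conductance at least $\phi$, and the total number of inter-part edges is small. The crucial calculation is that if $(S, V(H) - S)$ is any cut in $H$ with $|\delta^+_H(S)| \le k$ or $|\delta^-_H(S)| \le k$, then restricting to $P_j$ yields a cut of $H[P_j]$ of size at most $k$, so by $\phi$-conductance the smaller side satisfies $\min(|S \cap P_j|, |P_j - S|) \le k/\phi \le q$. Thus every single part $P_j$ is $(q,k)$-unbreakable in $H$. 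To obtain a set of size at least $|V(H)|/2$, I would apply the expander decomposition with a slightly stronger conductance bound (say $\phi \ge 2k/q$, so each part is $(q/2,k)$-unbreakable in $H$) and then greedily merge parts to form $U$: while the current $U$ has size below $|V(H)|/2$, add one more part to $U$, using the fact that merging two $(q/2,k)$-unbreakable sets yields a $(q,k)$-unbreakable set. A volumetric charging argument against inter-part edges bounds the number of vertices that end up outside $U$.

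For the computational claims, the two regimes correspond to different primitives for computing the directed expander decomposition. For the existential statement with $\phi = 1/2$, brute-force enumeration of all cuts with $\le k$ outgoing or incoming edges suffices in exponential time. For the polynomial-time statement, the directed expander decomposition of combmaxflow achieves $\phi = \Theta(1/\sqrt{\log n})$ in polynomial time, which is compatible with the stated assumption $q \ge k/\phi$. Everything else in the construction (computing strong components, picking the largest union of parts) is polynomial time, and the recursion only adds an $O(\log n)$ multiplicative overhead.

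The main obstacle I expect to wrestle with is showing that the chosen $U$ really does capture at least half of $V(H)$ while remaining $(q,k)$-unbreakable in the ambient graph $H$ (not merely in $H[U]$). A single expander part could be small, and an unrestricted union of parts need not be $(q,k)$-unbreakable because the ``small sides'' from different cuts can align adversarially. Getting a clean halving bound therefore requires care: either using a boundary-linked variant of the decomposition as in the undirected expander hierarchy of \cite{long2024connectivity}, or a direct iterative argument that either (i) returns a large single part as $U$, or (ii) identifies a balanced sparse cut, recurses on both sides, and then combines the unbreakable sets obtained from each side via a merging lemma whose constant-factor loss is absorbed by the slack $q \ge k/\phi$. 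Adapting this directed analog of the undirected hierarchy construction is, I expect, where the bulk of the technical work lies.
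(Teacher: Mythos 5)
Your plan hinges on a subroutine that cannot exist: given a strongly connected $H$, you ask for $U \subseteq V(H)$ with $|U| \geq |V(H)|/2$ that is $(q,k)$-unbreakable in $H$. Consider $H$ a bidirected $n$-cycle with $k \geq 2$ and $q$ small compared to $n$ (which is the regime of interest, since the paper eventually sets $q = O(k)$ or $q = O(2^k\sqrt{\log n})$). Any subset $U$ with more than $2q$ vertices can be split by a size-$2$ cut of the cycle so that both sides retain more than $q$ vertices of $U$, so no $(q,k)$-unbreakable $U$ of size $\geq n/2$ exists. The merging lemma you invoke (``merging two $(q/2,k)$-unbreakable sets yields a $(q,k)$-unbreakable set'') is also false for the reason you yourself flag: the two small sides can lie on opposite sides of the cut, so the union's small side is not bounded. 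You gesture at a fix via boundary-linked decompositions or an iterative argument, but this is precisely where the proof has to change, not a detail to be filled in later.

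The paper sidesteps both problems by not requiring $V_\ell$ to be large. It starts from $U := V$, and whenever $U$ fails to be $\phi$-expanding as a terminal set in the \emph{ambient} graph $G$ (not merely within $G[U]$), it finds a sparse cut $(S, V-S)$, keeps the out-boundary vertices $L$ of $S$ inside $U$, and deletes the smaller of $S$ or $V-S$ from $U$. The size of $U$ strictly decreases (the boundary added is at most half of what is removed), so this terminates with a terminal-expanding (hence unbreakable) $V_\ell = U$, which may be tiny. The $O(\log n)$ depth comes from a different invariant: every strong component of $G[V - U]$ has size at most $|V|/2$, which is preserved across updates because retaining $L$ in $U$ guarantees no path from $S$ to $V-S$ survives in $G[V-U']$, so each residual SCC lies entirely on one side of the cut and inherits the bound. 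That is the key idea your proposal is missing: the halving is on the residual components, not on $U$. Once you have that, the recursion on strong components of $G[V - V_\ell]$ works exactly as you outlined, and the two computational regimes are as you say (brute-force cut enumeration for $\phi = 1/2$, an $O(\sqrt{\log n})$-approximation to sparsest cut for $\phi = O(1/\sqrt{\log n})$).
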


A version of expander hierarchies in \emph{undirected} graphs was first introduced in the context of oblivious routing \cite{racke2002minimizing,bienkowski2003practical,harrelson2003polynomial} and is called \emph{tree flow sparsifiers} or \emph{R\"{a}cke tree}. It later finds many more applications for devising almost-linear time flow algorithms \cite{racke2014computing,abraham2016fully,goranci2021expander,van2024almost,li2025congestion}.

In 2007, a weaker variant of the expander hierarchy was introduced by Patrascu and Thorup \cite{patrascu2007planning} for connectivity oracles under edge faults.
It turns out, in contrast to tree flow sparsifiers, that this weaker variant naturally generalizes from the usual edge expansion to vertex expansion \cite{long2022near,long2024connectivity}. 
Recently, \cite{bernstein2024maximum} introduced a directed expander hierarchy as a further generalization to directed expansion. They also presented a fast construction algorithm that is highly complicated.     

Here, to prove \Cref{lem:expander_hierarchy}, we give a very simple construction of the directed expander hierarchy. The proof follows closely and generalizes the construction in undirected graphs of  \cite{long2024connectivity}. For the sake of completeness, we defer its proof to \Cref{sec:expander_hierarchy}.

\section{Fault-Tolerant Connectivity Preservers}
\label{sec:conn_pres}

We now present our new results for fault-tolerant connectivity preservers. In \cref{subsec:existential}, we prove our existential upper bound for fault-tolerant connectivity preservers (\Cref{thm:existentialtwo}). In  \cref{subsec:fpt}, we show how to modify our construction to run in FPT-time (Theorem \ref{thm:fpt}).

Throughout this section, we will repeatedly use the following lemma about important cuts, which is proved in full generality in  \Cref{thm:ecut} of \Cref{sec:cut_theorem}.

\begin{restatable}[Important cut container]{lemma}{impcutlem} \label{lem:important_separators} 
    Let $k$ be a positive integer, let $G$ be an $n$-vertex directed graph, and let $X, Y \subseteq V(G)$ be disjoint sets. There exists an $(X, Y)$-cut $(S, V-S)$ of size at most $|\delta^+(S)| \leq 2^{k-1}$ such that every important $(X, Y)$-cut $(S', V-S')$ of size at most $|\delta^+(S')| \leq k$ satisfies $S' \subseteq S$. Cut $(S, V-S)$ can be computed in $O(k2^km+k^24^k)$ time.  
\end{restatable}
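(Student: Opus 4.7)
My plan is to prove the lemma by induction on $k$ via a recursive branching on the farthest minimum $(X, Y)$-cut, mirroring the classical enumeration of important cuts. Compute $(T, V - T) := \fmc(G, X, Y)$ with $\lambda := |\delta^+(T)|$ using \Cref{lemma:FMC}. A standard submodular uncrossing argument together with the out-reachability observation of \Cref{clm: in out wlog} shows that whenever $\lambda \le k$, every important $(X, Y)$-cut of size $\le k$ contains $T$. In the base case $k = 1$: if $\lambda \le 1$ then $T$ itself is the unique important $(X, Y)$-cut of size $\le 1$ with $|\delta^+(T)| \le 1 = 2^0$; if $\lambda > 1$ the containment is vacuous and $T$ still serves as the witness.

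For the inductive step, pick any edge $e = (u, v) \in \delta^+(T)$ and split the important cuts $S'$ of size $\le k$ into two classes: (A) $e \in \delta^+(S')$ (equivalently $v \notin S'$), in which case $S'$ is an $(X, Y)$-cut of size $\le k - 1$ in $G - e$; and (B) $v \in S'$, in which case $S'$ is an $(X \cup \{v\}, Y)$-cut of size $\le k$ in $G$. Applying \Cref{clm:impcut farthest} in each class, $S'$ is contained in an important cut of the corresponding restricted problem. Recursion yields $S_1 := \text{Container}(G - e, X, Y, k - 1)$ and $S_2 := \text{Container}(G, X \cup \{v\}, Y, k)$, and I output $S := S_1 \cup S_2$. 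Termination is by the measure $\mu := 2k - \lambda$: case (A) reduces $k$ by one while $\lambda$ drops by exactly one (since $e \in \delta^+(T)$), and case (B) strictly increases $\lambda$ by the standard farthest-min-cut property, which also gives $T \subseteq T' := \fmc(G, X \cup \{v\}, Y)$ by uncrossing.

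The crux is bounding $|\delta^+(S)| \le 2^{k-1}$. The main structural lever I would exploit is that $T \subseteq S_1 \cap S_2$: both containers inherit the farthest min-cut of their respective subproblems, and these farthest min-cuts both contain $T$ (for $S_1$, since $T$ remains the farthest min-cut in $G - e$; for $S_2$, by the uncrossing argument above). Submodularity then gives $|\delta^+(S)| \le |\delta^+(S_1)| + |\delta^+(S_2)| - |\delta^+(S_1 \cap S_2)| \le |\delta^+(S_1)| + |\delta^+(S_2)| - \lambda$. The principal obstacle is that plugging raw inductive upper bounds into this inequality does not quite give $2^{k-1}$ when $\lambda$ is small; I expect the proof either (i) strengthens the inductive invariant to couple $|\delta^+(S)|$ with $\lambda$ and the structure of the farthest min-cut chain so that shared boundary edges telescope and the $\lambda$-savings compound across recursion levels, or (ii) selects the branching edge $e$ more carefully so that $v \in S_1$, saving the ``$+1$'' one would otherwise incur when lifting $|\delta^+_{G-e}(S_1)|$ to $|\delta^+_G(S_1)|$. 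The runtime follows from the recursion structure: depth $O(k)$, binary branching, one farthest min-cut per node in $O(km)$ time via \Cref{lemma:FMC}, plus an additive $O(k^2 4^k)$ term to bookkeep the $O(4^k)$ enumerated important cuts, totaling $O(k 2^k m + k^2 4^k)$.
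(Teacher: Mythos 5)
Your recursion is a genuinely different route from the paper's, but it has a real gap at the crux of the lemma: the size bound $|\delta^+(S)|\le 2^{k-1}$, which you correctly flag but do not close. With the natural inductive invariant $g(k,\lambda)=\lambda 2^{k-\lambda}$ (which telescopes to $2^{k-1}$ since $\lambda\le 2^{\lambda-1}$), your submodularity estimate gives
$|\delta^+(S)|\le g(k-1,\lambda-1)+g(k,\lambda+1)-\lambda=2^{k-\lambda-1}(3\lambda-1)-\lambda$
even after saving the ``$+1$'' by arranging $v\in S_1$, and this exceeds $g(k,\lambda)=2\lambda\,2^{k-\lambda-1}$ precisely when $2^{k-\lambda-1}(\lambda-1)>\lambda$, i.e.\ roughly whenever $k\gtrsim\lambda+2$. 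Neither of your two speculative fixes is carried out, and a stronger invariant seems hard to find: the fundamental problem is that the union $S_1\cup S_2$ of two recursive containers can have boundary overlapping far less than $\lambda$ in the regions away from the shared farthest min cut, so submodularity alone doesn't save enough. (Your runtime bookkeeping also overshoots: a binary recursion of depth $\Theta(k)$ gives $\Theta(4^k)$ nodes and $O(k4^km)$ time, not the claimed $O(k2^km+k^24^k)$.)

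The paper avoids the whole issue by producing a single nested chain rather than a union of overlapping containers. It builds $G_0=G$ and iteratively forms $G_{i+1}$ from $G_i$ by adding an edge $(x,v)$ for \emph{every} head $v$ of an edge in $\delta^+_{G_i}(S_i)$ where $S_i=\fmc(G_i,x,y)$---i.e.\ it processes all of your case-(B) branches at once rather than one at a time, and never removes an edge. Monotonicity of the farthest min cut under source-side edge additions (the analogue of your uncrossing step, \Cref{clm:fmc_further}) gives $S_0\subseteq S_1\subseteq\cdots\subseteq S_k$, and the flow at most doubles per step, so $|\delta^+_{G_k}(S_k)|\le\lambda 2^k$. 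The containment proof is then an auxiliary flow argument (\Cref{lem:vcut}): from any purported $v\in S\setminus S_k$ in an important cut $(S,V-S)$ of size $\le\lambda+k$, one takes the out-reachable path $P$ to $v$, picks its crossings of each $\delta^+(S_i)$, and adds the corresponding $(x,v_i)$ edges to drive the $(x,y)$-flow above $\lambda+k$, contradicting that $\delta^+_G(S)$ is an $(x,y)$-cut. Because there is a single set $S_k$ throughout, no double-counting arises. If you tried to repair your recursion by tracking the nestedness of the farthest min cuts across both branches, you would essentially reconstruct this iterative argument.
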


We define some new notation that we will use when we apply \Cref{lem:important_separators}. 
\begin{itemize}
\item We will let $\is(G, X, Y, k, +)$ denote an $(X, Y)$-cut $(S, V-S)$ in $G$ of size at most $|\delta^+(S)| \leq 2^{k-1}$ such that every out-reachable $(X, Y)$-cut $(S', V-S')$ of size $|\delta^+(S')|\leq k$ satisfies $S' \subseteq S$. 
\item We will let $\is(G, X, Y, k, -)$ denote an $(X, Y)$-cut $(S, V-S)$ in $G$ of size at most $|\delta^-(S)| \leq 2^{k-1}$ such that every in-reachable $(X, Y)$-cut $(S', V-S')$ of size $|\delta^-(S')| \leq k$ satisfies $S' \subseteq S$.
\end{itemize}
\noindent 
By Lemma \ref{lem:important_separators} and \Cref{clm:impcut farthest}, $\is(G, X, Y, k, +)$ and $\is(G, X, Y, k, -)$ are well-defined and can be computed in  $O(k2^km+k^24^k)$ time.

Additionally, throughout this section we will construct $k$-FT single-source connectivity preservers as a subroutine in our algorithms. As mentioned in \Cref{def:k-fault preservers}, $k$-FT single-source connectivity preservers maintain the strong connectivity between a source node $s$ and the rest of the vertex set $V(G)$ under $k$ edge faults. We use $\ssc(G, s, k)$ to denote a $k$-FT single-source connectivity preserver of $G$ rooted at source vertex $s$ with $O(2^k|V(G)|)$ edges. Subgraph $\ssc(G, s, k)$ always exists  by \cite{baswana2018fault}.

\subsection{Existential upper bound for fault-tolerant connectivity preservers}
\label{subsec:existential}

The goal of this section is to present our existential upper bound for fault-tolerant connectivity preservers. 

\begin{restatable}{theorem}{existentialtwo} \label{thm:existentialtwo}
For every positive integer $k$, every $n$-vertex directed graph admits a $k$-fault-tolerant connectivity preserver with $O(k4^kn \log n)$ edges.
\end{restatable}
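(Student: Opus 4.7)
The plan is to reduce the problem to computing a sourcewise $k$-FT connectivity preserver in a $(q,k)$-unbreakable graph, and then to iterate this reduction along a directed expander hierarchy with $q = O(k)$. Throughout, I assume by the greedy argument from the technical overview that the input graph $G$ is itself an edge-minimal $k$-FT connectivity preserver, so the task reduces to upper-bounding $|E(G)|$.

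\emph{Step 1: Sparse sourcewise preservers for unbreakable graphs.} The first step is to prove a lemma of the form: if $U \subseteq V(G)$ is $(q,k)$-unbreakable in $G$ and $H \subseteq G$ is an edge-minimal subgraph preserving the $k$-FT strong connectivity of every pair in $U \times V(G)$, then $|E(H)| = O(4^k q n)$. Following the two-case scheme sketched in the overview, for every edge $e=(u,v) \in E(H)$ pick a witness fault set $F_e$ of size $\le k$ and the strong component $C_e$ of $G-F_e$ containing $u$ and $v$, which must intersect $U$. By \Cref{clm:giant_component}, $G-F_e$ has a giant component of $U$ of size $\ge |U|-2q$. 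In Case 1, $C_e$ \emph{is} this giant component: then $C_e$ meets any fixed set $R \subseteq U$ of size $2q+1$, so the union $\bigcup_{r \in R} \ssc(G,r,k)$ of $O(q)$ single-source preservers of \cite{baswana2018fault} already contains $e$, contributing $O(2^k q n)$ edges. In Case 2, $|C_e| \le 2q$. A simple degeneracy/averaging argument over the incidence of such edges produces, for any vertex $s$, a list of out-neighbors $t_1,\dots,t_r$ with pairwise disjoint companions, so that $s$ is strongly connected to $t_j$ in $G-F_{e_i}$ iff $i=j$; the improved anti-isolation \Cref{lem:anti isolation} then gives $r \le 2^k$ in the reachability direction, and an additional factor of $2^k$ for the reverse direction, yielding $|E(H)| = O(4^k q n)$ overall.

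\emph{Step 2: Apply the directed expander hierarchy.} Invoke \Cref{lem:expander_hierarchy} with $\phi = 1/2$ and $q = 2k$, obtaining a partition $V_1,\dots,V_{\ell}$ of $V(G)$ with $\ell = O(\log n)$ such that for every $i$ and every strong component $C$ of $G[V_{\le i}]$, the set $V_i \cap C$ is $(q,k)$-unbreakable in $G[C]$. Every edge of $G$ lies inside some $G[C]$ at a unique smallest level $i$: namely, the level $i$ at which the endpoints first become strongly connected in $G[V_{\le i}]$; in that case both endpoints lie in $C$, and at least one endpoint lies in $V_i \cap C$, so the edge is incident to the unbreakable ``terminal'' set of that layer. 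Hence inside each $G[C]$ at level $i$, it suffices to preserve the $k$-FT connectivity between $V_i \cap C$ and $V(C)$. The edge-minimality of $G$ then forces $G[C]$ itself to be an edge-minimal such sourcewise preserver, and Step 1 bounds $|E(G[C])| = O(4^k q \cdot |C|)$.

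\emph{Step 3: Summation.} Summing over the strong components of $G[V_{\le i}]$ (which are vertex-disjoint) gives $O(4^k q n)$ edges per level; summing over the $\ell = O(\log n)$ levels yields $|E(G)| = O(\ell \cdot 4^k q n) = O(k 4^k n \log n)$, as claimed.

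\emph{Main obstacle.} The routine ingredients are the greedy minimality reduction and the expander-hierarchy summation; the technical heart is Step 1. The delicate part there is not Case 1, which is a clean application of \cite{baswana2018fault}, but Case 2: one must select the neighborhood list $\{t_1,\dots,t_r\}$ so that the $2q$-sized witness components $C_{e_i}$ isolate the $t_j$ pairwise (losing only a factor $q$ via the degeneracy argument, cf.\ \Cref{clm:degeneracy}) and then separately handle the two reachability directions so that the anti-isolation bound of $2^k$ applies cleanly to each. Getting the base of the exponent down to $4^k$ (rather than $k4^k$ as in the weaker form of anti-isolation used in \cite{pilipczuk2018directed}) is exactly what makes the final size $O(k 4^k n \log n)$ rather than worse.
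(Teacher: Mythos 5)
Your outline follows the paper's own structure exactly---Lemma~\ref{lem:unbreak_pres} is your Step~1 and the proof of \Cref{thm:existentialtwo} combines it with \Cref{lem:expander_hierarchy}---but two of your steps hide genuine gaps. In Case~2 of Step~1 you cannot apply \Cref{lem:anti isolation} to the $t_j$ directly: knowing only that $s$ and $t_j$ are not \emph{strongly} connected in $G-F_{e_i}$ gives, for each pair $(i,j)$ with $i\neq j$, a disjunction ($s$ cannot reach $t_j$, \emph{or} $t_j$ cannot reach $s$), and ``handle the two reachability directions separately'' is not enough, since a naive Ramsey cleanup of this two-colored matrix is far too lossy to yield $2^{O(k)}$. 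The paper's fix (Claim~\ref{clm:L_bound}) does not feed the witness fault sets $F_u$ to anti-isolation at all: it replaces them by cuts $W_u\supseteq C_u$ with $|W_u\cap U|\le 4q$ and $\delta^+_{H-F_u}(W_u)=\emptyset$ or $\delta^-_{H-F_u}(W_u)=\emptyset$, which makes the closure direction a property of $u$ alone, so a factor-two pigeonhole suffices to extract a uniformly-directed subfamily on which \Cref{lem:anti isolation} applies cleanly. Relatedly, the final $4^k$ is not ``two directions $\times\,2^k$'': one $2^k$ bounds the number of labels per vertex via anti-isolation, and the other bounds the number of edges sharing a label via \Cref{lem:important_separators} (Claim~\ref{clm:H_vs_L_bound}).

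Your Step~2 charging is also flawed. An edge $(u,v)$ whose endpoints first become strongly connected in $G[V_{\le i}]$ need not have an endpoint in $V_i$: both could lie in lower layers and become strongly connected only through fresh $V_i$-vertices, so you cannot assign each edge of $G$ to the terminal set of its level. Likewise, ``minimality of $G$ forces $G[C]$ to be a minimal sourcewise preserver'' is unjustified---an edge in $G[C]$ may be essential for a pair $(s,t,F)$ with $s\notin C$ yet be removable as far as sourcewise preservation inside $G[C]$ is concerned. The paper sidesteps both issues by never partitioning edges: it simply unions, over all levels $i$ and all SCCs $C_i^j$ of $G[V_{\le i}]$, minimal sourcewise preservers $H_i^j\subseteq G[C_i^j]$ with respect to $U_i^j=C_i^j\cap V_i$, bounds $|E(H)|\le\sum_{i,j}|E(H_i^j)|=O(k4^kn\log n)$, and verifies that $H$ is a valid $k$-FT preserver for any $(s,t,F)$ by picking the \emph{largest} level $i$ that intersects the SCC $C$ of $s,t$ in $G-F$ and routing through a terminal $x\in V_i\cap C$.
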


We will prove \Cref{thm:existentialtwo} in two steps. In the first step, we will show how to construct sparse fault-tolerant \textit{sourcewise}  connectivity preservers between a well-connected set of source vertices $U$ and the rest of the graph (Lemma \ref{lem:unbreak_pres}). In the second step, we combine our sourcewise  preservers with a directed expander hierarchy (Lemma \ref{lem:expander_hierarchy}) to construct a sparse fault-tolerant connectivity preserver of the input graph.  

\subsubsection{Fault-tolerant sourcewise connectivity preservers for unbreakable sets} \label{subsec:sourcewise}
We begin with a formal definition of fault-tolerant sourcewise connectivity preservers.

\begin{definition}[Fault-tolerant sourcewise connectivity preservers] \label{def:sourcewise}
    Let $G$ be an $n$-vertex directed graph, and let $S \subseteq V(G)$ be a set of source vertices. We say that a subgraph $H \subseteq G$ is a $k$-FT  sourcewise connectivity preserver of $G$ with respect to set $S$ if for any pair of vertices $(s, v) \in S \times V(G)$ and any edge fault set $F \subseteq E(G)$ of size at most $|F| \leq k$, $s$ and $v$ are strongly connected in $G-F$ if and only if $s$ and $v$ are strongly connected in $H-F$. 
\end{definition}

\noindent Our  fault-tolerant sourcewise connectivity preservers can be summarized with the following lemma. 

\begin{lemma} 
    Let $G$ be an $n$-vertex directed graph, and let  $U \subseteq V(G)$ be a $(q, k)$-unbreakable set in $G$.
    There exists a $k$-FT  sourcewise connectivity preserver $H$ of graph $G$ with respect to source vertices $U$ of size $|E(H)| = O(4^kq n)$.
\label{lem:unbreak_pres}
\end{lemma}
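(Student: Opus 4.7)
The plan is to reduce to an edge-minimal $k$-FT sourcewise preserver $H$ via the greedy algorithm and then bound $|E(H)|$ by a two-case charging argument driven by the giant component claim (\Cref{clm:giant_component}). Edge-minimality yields, for every $e \in E(H)$, a witness triple $(s_e, t_e, F_e)$ with $s_e \in U$, $t_e \in V(G)$, and $|F_e| \leq k$, such that $s_e$ and $t_e$ are strongly connected in $H - F_e$ but not in $(H - e) - F_e$; consequently both endpoints of $e$ lie in the SCC $C_e$ of $s_e$ in $H - F_e$, which coincides with the SCC of $s_e$ in $G - F_e$ since $H$ preserves sourcewise connectivity. Applying \Cref{clm:giant_component} to $U$ in $G - F_e$ produces a giant SCC $D_e$ with $|D_e \cap U| \geq |U| - 2q$, and I split $E(H) = E_1 \sqcup E_2$ according to whether $C_e = D_e$ (Case 1) or $|C_e \cap U| \leq 2q$ (Case 2).

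For Case 1, I follow the ``sparse preserver for unbreakable graphs'' template from the overview: fix $R \subseteq U$ of size $\min(|U|, 2q + 1)$ and let $H^{*} = \bigcup_{r \in R} \ssc(G, r, k)$, which has $O(2^k q n)$ edges by \cite{baswana2018fault}. Since $|U \setminus D_e| \leq 2q < |R|$, each giant SCC meets $R$ in some $r$; the single-source preserver rooted at this $r \in D_e$ keeps $D_e$ intact as the SCC of $r$ in $H^{*} - F_e$, so both endpoints of $e$ remain strongly connected there. A standard contradiction argument using edge-minimality (any Case 1 edge outside $H^{*}$ would be deletable from $H \cup H^{*}$ while preserving the sourcewise property) then forces every Case 1 edge into $H^{*}$, giving $|E_1| \leq |E(H^{*})| = O(2^k q n)$.

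For Case 2, I adapt the anti-isolation argument from the overview. After two rounds of averaging --- over the witness source $s \in U$ and over a shared endpoint of the Case 2 edges --- a greedy pruning that restricts the anti-isolation targets to $U \cap C_{e_i}$, relying on $|C_{e_i} \cap U| \leq 2q$ rather than on $|C_{e_i}|$ itself, isolates a sub-collection whose witness SCCs have pairwise disjoint intersections with $U$; this costs an $O(q)$ factor. I then invoke the improved anti-isolation lemma (\Cref{lem:anti isolation}) separately for each of the two reachability directions that can certify strong-connectivity failure upon deleting $e$, contributing an extra factor of $2^k$, which yields $|E_2| = O(4^k q n)$ and hence $|E(H)| = O(4^k q n)$ overall.

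The principal obstacle is the Case 2 bound: the analogous step in the overview's non-sourcewise version uses $|C_e| \leq 2q$, a bound that fails here, where $|C_e|$ may be $\Theta(n)$ while only $|C_e \cap U| \leq 2q$ is controlled. The remedy is to draw anti-isolation targets from $U \cap C_e$ so that the greedy step stays within a set of size $\leq 2q$, and to apply the one-directional anti-isolation lemma twice --- once per reachability direction, since strong connectivity is two-sided --- which is where the extra $2^k$ factor (turning $2^k q n$ into $4^k q n$) originates.
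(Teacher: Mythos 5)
Your Case 1 argument is essentially correct (it parallels the paper's construction of $J = \bigcup_{u \in U'} \ssc(H, u, k)$ and \Cref{clm:large_scc}), and you have correctly identified the key difficulty in Case 2: in the sourcewise setting $|C_e|$ can be $\Theta(n)$, so only $|C_e \cap U| \le 2q$ can be used, and the anti-isolation targets must therefore be drawn from $U$. However, your Case 2 accounting has a genuine gap, and I do not believe it closes.

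The flaw is in where you say the extra factor of $2^k$ (turning $2^kqn$ into $4^kqn$) comes from. You attribute it to ``applying anti-isolation once per reachability direction,'' but applying anti-isolation separately for out-reachable and in-reachable cuts costs a multiplicative factor of $2$, not $2^k$: you partition the witness cuts into those with $\delta^+_{H-F}(W) = \emptyset$ and those with $\delta^-_{H-F}(W) = \emptyset$, at least half fall in one class, and a single anti-isolation call then bounds that class by $2^k$. This gives $O(2^kqn)$, not $O(4^kqn)$ --- and the true bound does require $4^k$, so something is being undercounted.

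What is undercounted is this: after you have fixed a shared endpoint $v$ and used anti-isolation (plus a degeneracy-style pruning) to bound the number of \emph{distinct witness sources} $u \in U$ for edges incident to $v$, there can still be many edges incident to $v$ sharing the \emph{same} witness source $u$. Anti-isolation bounds witnesses, not edges, and this per-witness multiplicity is not bounded by anything in your proposal. The paper handles it with a separate argument (\Cref{clm:H_vs_L_bound}): for a fixed label $(u,v)$, every edge with that label lies in $\delta^+_H(S')$ for the important-cut container $(S', V - S') = \is(H, u, v, k+1, +)$ (or the in-reachable analogue), and \Cref{lem:important_separators} caps $|\delta^+_H(S')|$ at $2^k$. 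That is where the second $2^k$ factor actually originates. Your ``two rounds of averaging --- over the witness source $s \in U$ and over a shared endpoint'' also cannot substitute for this step: averaging over the witness source simultaneously with the endpoint would lose a factor of $|U| \cdot n$, which is far too much, and if instead you fix $s$ after averaging over $v$ then you have already fixed the anti-isolation targets and there is nothing left for anti-isolation to count. So \Cref{lem:important_separators} (and the label-based accounting of \Cref{clm:H_vs_L_bound}) is a missing idea in your proposal, not merely a bookkeeping detail, and without it the edges-per-label multiplicity is unbounded.
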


\noindent 
We will construct our  fault-tolerant sourcewise connectivity preserver using a decremental greedy strategy that is standard in the area of graph spanners and preservers.

\paragraph{Greedy Algorithm.} Let $G = (V, E)$ be an $n$-vertex directed graph, and let $U \subseteq V$ be a $(q, k)$-unbreakable set in $G$. We may assume $G$ is strongly connected, without loss of generality.  We will initialize our fault-tolerant sourcewise connectivity preserver $H$ to be the input graph, so $H \leftarrow G$. 
While there exists an edge $e \in E(H)$ such that $H - e$ is a $k$-FT sourcewise connectivity preserver of $G$ with respect to set $U$, we will remove edge $e$ from $H$, so that $H \leftarrow H - e$. This algorithm  terminates and outputs $H$ when no such edge $e \in E(H)$ exists.

\paragraph{Correctness.} 

By construction, $H$ is a $k$-FT sourcewise connectivity preserver of $G$ with respect to set $U$. Specifically, $H$ is an \textit{edge-minimal} $k$-FT sourcewise connectivity preserver of $G$ with respect to $U$, by the terminating condition of the greedy algorithm.

\paragraph{Time Complexity.} There are $n^{O(k)}$ fault sets $F \subseteq E(G)$ of size $|F| \leq k$. Consequently, we can check if a subgraph $H \subseteq G$ is a $k$-FT sourcewise connectivity preserver of  $G$ with respect to  $U$ in $n^{O(k)}$ time. We conclude that the greedy algorithm can be implemented in $n^{O(k)}$ time.

\paragraph{Size Analysis.} Before we can bound the size of $E(H)$, we need to define a subgraph $J \subseteq H$ that is useful for our analysis. Let $U'$ be an arbitrary subset $U' \subseteq U$ of size $|U'| = \min(2q+1, |U|)$. Let $$J = \bigcup_{u \in U'} \ssc(H, u, k)$$
be the subgraph of $H$ obtained by unioning $|U'|$ many $k$-FT single-source connectivity preservers, each rooted at a distinct vertex in $U'$. 
The following claim states that subgraph $J$ preserves strong connectivity between pairs of vertices $(u, v) \in U \times V(G)$ for a certain category of fault sets. 

\begin{claim}[Large SCC fault sets]
 Let $F \subseteq E(G)$ be a fault set of size $|F| \leq k$, and let $C \subseteq V(G)$ be an SCC of $G-F$. If $|C \cap U| > 2q$, then vertices in $C$ are pairwise strongly connected  in $J-F$.
\label{clm:large_scc}
\end{claim}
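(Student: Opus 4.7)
The plan is to pinpoint a single ``anchor'' vertex $u^{*} \in U'$ that lies in $C$, and then push the pairwise strong connectivity in $C$ through a chain of three preservers: the fault-tolerant sourcewise preserver $H$, the single-source preserver $\ssc(H,u^{*},k)$, and finally $J$ which contains the latter as a subgraph.

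The first step is to show $C$ must be the giant SCC of $U$ in $G-F$. By the giant component claim (\Cref{clm:giant_component}), since $U$ is $(q,k)$-unbreakable in $G$ and $|F|\le k$, there is an SCC $C^{*}$ of $G-F$ with $|C^{*}\cap U|\ge |U|-2q$. Every other SCC of $G-F$ then contains at most $2q$ vertices of $U$, so the hypothesis $|C\cap U| > 2q$ forces $C = C^{*}$. In particular $|U\setminus C|\le 2q$, and because this also gives $|U|\ge |C\cap U| > 2q$, the definition of $U'$ yields $|U'| = 2q+1$. Since $U'\subseteq U$, pigeonhole gives $|U'\cap C|\ge |U'|-|U\setminus C|\ge (2q+1)-2q = 1$, so there exists $u^{*}\in U'\cap C$.

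The second step chains the preserver guarantees at $u^{*}$. Fix any $v\in C$. Since $u^{*},v\in C$, they are strongly connected in $G-F$. Because $u^{*}\in U$ and $H$ is a $k$-FT sourcewise connectivity preserver of $G$ with respect to $U$, the pair $u^{*},v$ is also strongly connected in $H-F$ (only $F\cap E(H)$ acts on $H$, and it still has size at most $k$). Next, since $u^{*}\in U'$, the single-source preserver $\ssc(H,u^{*},k)$ is a subgraph of $J$, and by its defining property the SCC of $u^{*}$ in $H - F$ equals the SCC of $u^{*}$ in $\ssc(H,u^{*},k) - F$ (again restricting $F$ to preserver edges). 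Hence $u^{*},v$ are strongly connected in $\ssc(H,u^{*},k)-F$, and adding the remaining edges of $J$ back in cannot break this, so $u^{*},v$ are strongly connected in $J-F$.

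Finally, since the above holds for every $v\in C$, the whole set $C$ lies in a single SCC of $J-F$ (namely the one containing $u^{*}$), which is exactly the claim. The main conceptual obstacle is the very first step: one must argue that the hypothesis $|C\cap U|>2q$ is strong enough to force $C$ to coincide with the unique giant component guaranteed by unbreakability, because only then is $|U\setminus C|\le 2q$ small enough to guarantee that the arbitrarily-chosen set $U'$ of size $2q+1$ meets $C$. Everything after that is a routine chaining of preserver guarantees.
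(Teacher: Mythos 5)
Your proof is correct and follows essentially the same route as the paper: use the giant component claim to force $C$ to be the unique giant SCC and extract an anchor vertex in $U'\cap C$, then chain $G\!-\!F \Rightarrow H\!-\!F \Rightarrow \ssc(H,u^*,k)\!-\!F \subseteq J\!-\!F$. The only difference is that you spell out the uniqueness of the giant component and the pigeonhole count a bit more explicitly than the paper does, which is fine.
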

\begin{proof}
By \Cref{clm:giant_component},  if $|C \cap U| > 2q$, then $|C \cap U| \ge |U| - 2q$. Since $|U'| = \min(2q+1, |U|)$, it follows that $C \cap U' \ne \emptyset$.  Let $x \in  C \cap U'$ be a vertex in the nonempty intersection of $U'$ and $C$.

Now we make the following sequence of observations. First, vertices in $C$ are pairwise strongly connected in $G-F$. Second, since $H$ is a $k$-FT sourcewise connectivity preserver of $G$ with respect to $U$, and $C \cap U \ne \emptyset$, it follows that vertices in $C$ are pairwise strongly connected in $H-F$. Finally, subgraph $J$ contains $\ssc(H, x, k)$, where $x \in C$, so we conclude that vertices in $C$ are pairwise strongly connected in $J-F$ as well.
\end{proof}

Since $J$ is the union of $O(q)$ many $k$-FT single-source connectivity preservers, $|E(J)| = O(2^kqn)$ by  Theorem 1.1 of \cite{baswana2018fault}.  Then to prove $|E(H)| = O(4^kqn)$, it will be sufficient to prove that $|E(H) - E(J)| = O(4^kqn)$. Before proceeding with the proof, we will need the following structural claim about the edges in $H$.

\begin{claim}
    For each edge $(s, t) \in E(H)$, there exists a vertex $u \in U$ and a fault set $F \subseteq E(H)$, $|F| \leq k$, such that:
    \begin{enumerate}
        \item Vertices $s$, $t$, and  $u$ are  strongly connected  in $H - F$, and  
        \item $H-F$ contains either 
    \begin{itemize}
        \item  an out-reachable $(u, t)$-cut $(S, V-S)$ such that $\delta^+_{H-F}(S) = \{(s, t)\}$, or
        \item  an in-reachable $(u, s)$-cut $(S, V-S)$ such that $\delta^-_{H-F}(S) =\{(s, t)\}$.
    \end{itemize}
    \end{enumerate}
    \label{claim:decremental_struct}
\end{claim}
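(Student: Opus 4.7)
The plan is to use the edge-minimality of $H$ to locate, for each edge $e = (s, t) \in E(H)$, a ``witness'' source $u \in U$, a target $v \in V(G)$, and a fault set $F$ of size at most $k$ such that $u$ and $v$ are strongly connected in $H - F$ but not in $H - F - e$; the required cut will then be extracted as a natural reachability set in $H - F - e$. To set up the witness, I would use the terminating condition of the greedy procedure: since $H - e$ is not a $k$-FT sourcewise connectivity preserver of $G$ with respect to $U$, there exist $u \in U$, $v \in V(G)$, and $F' \subseteq E(G)$ with $|F'| \leq k$ such that $u$ and $v$ are strongly connected in $G - F'$ but not in $(H - e) - F'$. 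Replacing $F'$ by $F := F' \cap E(H)$ (which does not change $H - F'$) gives $F \subseteq E(H)$, and because $H$ itself is a sourcewise preserver of $G$, $u$ and $v$ remain strongly connected in $H - F$. Thus removing $e$ from $H - F$ kills at least one of the two reachability directions between $u$ and $v$.

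Next I would split into two symmetric cases. In Case~1, $u$ no longer reaches $v$ in $H - F - e$. I would define $S$ to be the set of vertices reachable from $u$ in $H - F - e$ and consider any simple $u \to v$ path in $H - F$, which must use $e$ and therefore has the form $u \to \cdots \to s \to t \to \cdots \to v$. By simplicity of the path, the prefix from $u$ to $s$ and the suffix from $t$ to $v$ both avoid $e$; the prefix witnesses $s \in S$, while the suffix shows that $t \in S$ would force $u$ to reach $v$ in $H - F - e$, a contradiction, so $t \notin S$. By the defining property of $S$ as a reachable set in $H - F - e$, there are no edges from $S$ to $V - S$ in $H - F - e$, so $\delta^+_{H-F}(S) = \{(s, t)\}$, and $(S, V-S)$ is out-reachable from $\{u\}$ by construction, giving the required out-reachable $(u, t)$-cut. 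Case~2 is symmetric: if $v$ no longer reaches $u$ in $H - F - e$, I would instead let $S$ be the set of vertices that reach $u$ in $H - F - e$ and argue, via a simple $v \to u$ path through $e$ of the form $v \to \cdots \to s \to t \to \cdots \to u$, that $t \in S$ (via the suffix) and $s \notin S$ (otherwise the prefix lets $v$ reach $u$ in $H - F - e$), yielding an in-reachable $(u, s)$-cut with $\delta^-_{H-F}(S) = \{(s, t)\}$.

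Finally, the strong-connectivity requirement on $\{u, s, t\}$ in $H - F$ is obtained for free: in either case, concatenating the $u \to v$ (or $v \to u$) path through $e$ with the surviving reverse direction between $u$ and $v$ gives a closed walk in $H - F$ visiting all three of $u$, $s$, and $t$, so they are pairwise strongly connected. The only delicate step in the plan is the simple-path bookkeeping used to guarantee that the prefix and suffix of a path through $e$ do not themselves reuse $e$; once that is pinned down, the remainder is a direct unpacking of the definitions of out-reachable and in-reachable cuts.
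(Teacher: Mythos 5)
Your proof is correct and follows essentially the same route as the paper: use edge-minimality of the greedy output to find $u \in U$, $v \in V$, and a small fault set witnessing that $(H-e)-F$ breaks the strong connectivity of $u,v$ preserved in $H-F$, then extract a size-one cut in $H-F$ whose only crossing edge is $e$. The paper reaches the reachable cut by invoking \Cref{clm: in out wlog} and gets strong connectivity of $s,t,u$ by observing $e$ must lie inside the SCC of $u,v$ in $H-F$; you instead construct the reachable set explicitly and argue strong connectivity via a closed walk, and you are slightly more careful than the paper in replacing $F'$ by $F' \cap E(H)$ to ensure $F \subseteq E(H)$ — but these are presentational variants of the same argument.
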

\begin{proof}
If $(s, t) \in E(H)$, then there exists a fault set $F$, with $|F| \leq k$, such that $(H - (s, t)) - F$ does not preserve the strong connectivity of a pair of vertices $u, v$ in $H-F$, where $u \in U$ and $v \in V$. Specifically, vertices $u, v$ are strongly connected in $H - F$, but not strongly connected in $(H - (s, t)) - F$. This implies that edge $(s, t)$ is inside the SCC containing $u$ and $v$ in $H-F$, so vertices $s, t, $ and $u$ are all strongly connected in $H-F$, as claimed.

Since $u$ and $v$ are strongly connected in $H-F$ and not strongly connected in $(H-(s, t)) - F$, there must  exist a $(u, v)$-cut $(S, V-S)$ in $H$ such that $\delta_{H-F}^+(S) = \{(s, t)\}$ or $\delta_{H-F}^-(S) = \{(s, t)\}$. 
If $\delta_{H-F}^+(S) = \{(s, t)\}$, then by \Cref{clm: in out wlog} there exists an out-reachable $(u, t)$-cut $(S', V-S')$ such that $\delta^+_{H-F}(S') = \{(s, t)\}$. Else if $\delta_{H-F}^-(S) = \{(s, t)\}$, then by \Cref{clm: in out wlog} there exists an in-reachable $(u, s)$-cut $(S', V-S')$ such that $\delta^-_{H-F}(S') = \{(s, t)\}$.  
\end{proof}

    

\noindent 
For each edge $e = (s, t) \in E(H) - E(J)$ we will associate a label $\ell(e) \in U \times V$ as follows.  By \Cref{claim:decremental_struct}, there exist a vertex $u \in U$ and a fault set $F \subseteq E(H)$, $|F| \leq k$, such that vertices $s$, $t$, and $u$ are all strongly connected in $H-F$, but not all strongly connected in $(H-(s, t))-F$. 
\begin{itemize}
    \item If $H$ has an out-reachable $(u, t)$-cut $(S, V-S)$ such that $\delta^+_{H}(S) \subseteq F \cup \{(s, t)\}$, then we let $\ell(e) = (u, t).$
    \item  If $H$ has an in-reachable $(u, s)$-cut $(S, V-S)$ such that $\delta^-_{H}(S) \subseteq F \cup \{(s, t)\}$,    then we let $\ell(e) = (u, s)$. 
\end{itemize}
\noindent 
By Claim \ref{claim:decremental_struct}, $\ell(e)$ is well-defined for all $e \in E(H)$.\footnote{If there are multiple valid choices of a label $\ell(e)$, then we select one arbitrarily.}
Let $$\mathcal{L} = \{\ell(e) \mid e \in E(H) - E(J)\}.$$ 
The set of labels $\mathcal{L}$ is useful because we can use it to upper bound the size of $E(H)$. 

 \begin{claim}
     $|E(H)| \leq 2^{k+1} \cdot |\mathcal{L}| + O(2^{k}qn)$
     \label{clm:H_vs_L_bound}
 \end{claim}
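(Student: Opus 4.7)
The plan is to bound $|E(H)|$ by combining the trivial decomposition $|E(H)| = |E(J)| + |E(H) \setminus E(J)|$ with the observation that each label in $\mathcal{L}$ can be the label of at most $2^{k+1}$ edges. Since $J$ is the union of $|U'| = O(q)$ many $k$-FT single-source connectivity preservers, we already have $|E(J)| = O(2^{k}qn)$ by \cite{baswana2018fault}. So the entire task reduces to showing that, for each label $(u,v) \in \mathcal{L}$, at most $2^{k+1}$ edges $e \in E(H) \setminus E(J)$ satisfy $\ell(e) = (u,v)$.

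To do this, I would partition the edges labeled $(u,v)$ into those assigned via the \emph{out-case} (edges of the form $(s,v)$) and those assigned via the \emph{in-case} (edges of the form $(v,t)$), and bound each class by $2^{k}$ using the important-cut container of \Cref{lem:important_separators}. Consider first the out-case. If $e = (s,v)$ has label $\ell(e) = (u,v)$, then by definition there is a fault set $F$ of size at most $k$ and an out-reachable $(u,v)$-cut $(S, V-S)$ with $\delta^{+}_{H}(S) \subseteq F \cup \{e\}$, hence $|\delta^{+}_{H}(S)| \le k+1$. By \Cref{clm:impcut farthest}, $(S,V-S)$ can be extended to an important $(u,v)$-cut $(S^{*}, V-S^{*})$ of size at most $k+1$ with $S \subseteq S^{*}$. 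Letting $(T, V-T) = \is(H, \{u\}, \{v\}, k+1, +)$, we have $S^{*} \subseteq T$ and $|\delta^{+}_{H}(T)| \le 2^{k}$. Since $s \in S \subseteq T$ and $v \notin T$, the edge $e = (s,v)$ lies in $\delta^{+}_{H}(T)$, so there are at most $2^{k}$ out-case edges sharing the label $(u,v)$.

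The in-case is completely symmetric: if $e = (v,t)$ has label $\ell(e) = (u,v)$, then there is an in-reachable $(u,v)$-cut $(S, V-S)$ with $|\delta^{-}_{H}(S)| \le k+1$ and $e \in \delta^{-}_{H}(S)$. Applying the in-direction analogue of \Cref{clm:impcut farthest} and then the container $(T', V-T') = \is(H, \{u\}, \{v\}, k+1, -)$ with $|\delta^{-}_{H}(T')| \le 2^{k}$, we conclude that $e \in \delta^{-}_{H}(T')$, yielding at most $2^{k}$ in-case edges per label. Combining the two cases, the total number of edges mapping to a given label is at most $2^{k} + 2^{k} = 2^{k+1}$, so
\[
|E(H) \setminus E(J)| \;\le\; 2^{k+1}\,|\mathcal{L}|,
\]
and adding $|E(J)| = O(2^{k}qn)$ gives the claimed inequality.

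The only subtle step is the quantitative bookkeeping: we must apply \Cref{lem:important_separators} with parameter $k+1$ (since each relevant cut has size at most $k+1$, not $k$), which is exactly what produces the $2^{(k+1)-1} = 2^{k}$ factor per direction and hence the $2^{k+1}$ overall. The rest is a routine application of \Cref{clm: in out wlog} and \Cref{clm:impcut farthest} to move from the \emph{some} small cut furnished by \Cref{claim:decremental_struct} to an \emph{important} cut contained in the container. I do not expect any real obstacle here; the main care is simply to keep the orientation straight and to verify that the label endpoints $u,v$ genuinely lie on opposite sides of the container cut so that $e$ indeed appears in its boundary.
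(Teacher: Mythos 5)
Your proof is correct and follows essentially the same approach as the paper: bound $|E(J)|$ by the single-source preserver size, then bound the number of edges per label by $2^{k+1}$ using the important-cut container of \Cref{lem:important_separators} applied once in each direction. The only cosmetic difference is that you argue directly while the paper phrases it as a proof by contradiction, and you pass explicitly through \Cref{clm:impcut farthest} where the paper has already folded that step into the definition of $\is(\cdot)$.
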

 \begin{proof}
Since $|E(J)| = O(2^kqn)$ by an earlier discussion, it will be sufficient to show that $|E(H)-E(J)| \leq 2^{k+1} \cdot |\mathcal{L}|$. Suppose towards contradiction that $|E(H) - E(J)| > 2^{k+1} \cdot |\mathcal{L}|$. 
Then there exists a vertex pair $(u, v) \in U \times V$ such that 
$
|\{e \in E(H)-E(J) \mid \ell(e) = (u, v) \}| > 2^{k+1}.
$
Note that every edge $e \in E(H)$ with label $\ell(e) = (u, v)$ is of the form $e = (w, v)$ or $e = (v, w)$ for some vertex $w \in V$. Let $E_1 = \{ (w, v) \in E(H)-E(J) \mid \ell((w, v)) = (u, v) \}$, and let $E_2 = \{ (v, w) \in E(H)-E(J) \mid \ell((v, w)) = (u, v) \}$. Either $|E_1| > 2^{k}$ or $|E_2| > 2^{k}$.

Suppose that $|E_1| > 2^k$. 
By the definition of $\ell(e)$, for each edge $(w, v) \in E_1$, there exists an out-reachable $(u, v)$-cut $(S_w, V-S_w)$ of size $|\delta^+_H(S_w)| \leq k+1$ such that $(w, v) \in \delta^+_H(S_w)$. Let $(S', V-S') = \is(H, u, v, k+1, +)$. Then by Lemma \ref{lem:important_separators},  $(S', V-S')$ is a $(u, v)$-cut such that $S_w \subseteq S'$, so $(w, v) \in \delta_H^+(S')$. This implies that $E_1 \subseteq \delta_H^+(S')$. However, by the important cut lemma (Lemma \ref{lem:important_separators}), $|\delta_H^+(S')| \leq 2^{k}$, a contradiction.

If instead $|E_2| > 2^k$, then we let $(S', V-S') = \is(H, u, v, k+1, -)$. By a symmetric argument as before, we can argue that $E_2 \subseteq \delta_H^-(S')$. However, by Lemma \ref{lem:important_separators}, $|\delta_H^-(S')| \leq 2^{k}$, a contradiction.
 \end{proof}

By \Cref{clm:H_vs_L_bound},  it is sufficient to upper bound $|\mathcal{L}|$ in order to upper bound $|E(H)|$. We prove that $|\mathcal{L}| = O(2^kqn)$ in \Cref{clm:L_bound} by appealing to our important cut lemma (Lemma \ref{lem:important_separators}). The proof of \Cref{clm:L_bound} uses the following structural claim about the edges in $E(H) - E(J)$.

\begin{claim}
Let $(s, t) \in E(H) - E(J)$. By Claim \ref{claim:decremental_struct}, there exists a vertex $u \in U$ and a fault set $F \subseteq E(H)$, $|F| \leq k$, such that vertices $s$, $t$, and $u$ are strongly connected in $H-F$, but not strongly connected in $(H-(s, t))-F$. Let $C$ be the SCC in $H-F$ containing vertices $s$, $t$, and $u$. Then $|C \cap U| \leq 2q$. 
\label{clm:small_scc_struct}
\end{claim}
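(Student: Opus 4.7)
The plan is to argue by contradiction: suppose $|C \cap U| > 2q$ and derive that $(s,t)$ could have been removed from $H$ without violating the sourcewise preserver property, contradicting the fact that $(s,t) \in E(H)$ survived the greedy deletion.

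First, I would invoke \Cref{clm:large_scc} with the fault set $F$. Since $|C \cap U| > 2q$ by assumption, \Cref{clm:large_scc} gives that all vertices in $C$ are pairwise strongly connected in $J - F$. In particular, $s$, $t$, and $u$ all lie in the same strongly connected component of $J - F$.

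Next, I would exploit the key fact that $(s,t) \in E(H) - E(J)$, i.e., $(s,t) \notin E(J)$. This means $J \subseteq H - (s,t)$, and hence $J - F$ is a subgraph of $(H-(s,t)) - F$. Combining with the previous step, $s$, $t$, and $u$ are pairwise strongly connected in $(H-(s,t)) - F$.

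Finally, I would contradict \Cref{claim:decremental_struct}. By that claim, the fault set $F$ witnesses that $H - F$ contains either an out-reachable $(u,t)$-cut whose only forward edge is $(s,t)$, or an in-reachable $(u,s)$-cut whose only incoming edge is $(s,t)$. In the first case, deleting $(s,t)$ from $H - F$ disconnects $u$ from $t$; in the second case, it disconnects $s$ from $u$. Either way, $u$, $s$, $t$ cannot all be pairwise strongly connected in $(H-(s,t)) - F$, contradicting what we just established. The only obstacle I anticipate is keeping the two cases from \Cref{claim:decremental_struct} cleanly aligned with the conclusion of \Cref{clm:large_scc}, but both cases collapse to the same contradiction once we observe that $u, s, t \in C$. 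Hence $|C \cap U| \leq 2q$, as required.
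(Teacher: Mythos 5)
Your proposal follows the same contradiction structure as the paper and reaches the same conclusion, but there is one small gap. You invoke \Cref{clm:large_scc} directly with the SCC $C$ in $H - F$, whereas \Cref{clm:large_scc} is stated for an SCC $C$ of $G - F$ (and its proof relies on \Cref{clm:giant_component}, which requires $U$ to be $(q,k)$-unbreakable in $G$, not in the subgraph $H$). The paper bridges this by introducing $C'$, the SCC in $G - F$ containing $s, t, u$, and observing that $C = C'$ because $H$ is a $k$-FT sourcewise connectivity preserver of $G$ with respect to $U$ and $u \in U$. Without that one-line bridge, the application of \Cref{clm:large_scc} is not licensed, so you should add it.

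Two minor observations on the rest: first, your Step 2 — using $(s,t) \notin E(J)$ to conclude $J - F \subseteq (H - (s,t)) - F$ and hence that $s,t,u$ remain pairwise strongly connected in $(H-(s,t)) - F$ — is a clean and valid route to the contradiction, and is just a contrapositive rephrasing of the paper's argument (the paper instead deduces $(s,t) \in E(J)$ and contradicts $(s,t) \in E(H) - E(J)$). Second, your Step 3 is superfluous: the hypothesis of the claim already states that $s,t,u$ are \emph{not} all strongly connected in $(H-(s,t))-F$, so there is no need to re-derive this via the cut dichotomy from \Cref{claim:decremental_struct}; Step 2 already produces the contradiction directly.
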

\begin{proof}
Suppose, towards a contradiction, that $|C \cap U| > 2q$.  Let $C'$ be the SCC in $G-F$ containing vertices $s$, $t$, and $u$. Since $H$ is a $k$-FT sourcewise connectivity preserver of $G$ with respect to $U$ and $|F| \leq k$, it must follow that $C = C'$. 

Then  $|C' \cap U| > 2q$, and by \Cref{clm:large_scc},  vertices $s$, $t$, and $u$ are strongly connected in $J-F$. Because $J \subseteq H$ and $s$ and $t$ are not strongly connected in $(H-(s, t))-F$, it follows that $s$ and $t$ are not strongly connected in $(J - (s, t)) - F$ as well. Since $s$ and $t$ are strongly connected in $J-F$ but not strongly connected in $(J-(s, t)) - F$, we conclude that $(s, t) \in E(J)$.  This contradicts our assumption that $(s, t) \in E(H) - E(J)$. 
\end{proof}

Additionally, our proof of \Cref{clm:L_bound} will need the following general claim about set systems.

\begin{claim}
    Let $\mathcal{S} = \{S_u\}_{u \in U}$ be a family of sets indexed by a set $U$ such that $u \in S_u$ and $|S_u \cap U| \leq q$ for all $S_u \in \mathcal{S}$. 
Then there exists a subset $\mathcal{S}_1 \subseteq \mathcal{S}$ of size $|\mathcal{S}_1| \geq |\mathcal{S}|/(2q)$ such that for all distinct sets $S_u, S_{u'} \in \mathcal{S}_1$, $u' \notin S_u$ and $u \notin S_{u'}$. 
    \label{clm:degeneracy}
\end{claim}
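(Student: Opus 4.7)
The plan is to reduce the claim to finding a large independent set in an auxiliary graph with small average degree. First, I would set up an auxiliary directed graph $D$ on the vertex set $U$ in which an arc $u \to u'$ is placed whenever $u' \in S_u \setminus \{u\}$. By the assumptions $u \in S_u$ and $|S_u \cap U| \le q$, each vertex $u$ has out-degree at most $q-1$ in $D$, so the total number of arcs is at most $|\mathcal{S}|(q-1)$. Let $H$ be the underlying undirected graph of $D$ on vertex set $U$; two distinct vertices $u, u'$ are non-adjacent in $H$ if and only if $u' \notin S_u$ and $u \notin S_{u'}$, which is exactly the condition we want to enforce on the indices of sets in $\mathcal{S}_1$.

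Next, I would bound the average degree of $H$. Since $|E(H)| \le |\mathcal{S}|(q-1)$ and $|V(H)| = |U| = |\mathcal{S}|$, the average degree $\bar d$ of $H$ satisfies $\bar d \le 2(q-1) = 2q-2$. Now I would invoke the Caro--Wei bound, which guarantees an independent set $I \subseteq U$ in $H$ of size
\[
|I| \;\ge\; \sum_{u \in U} \frac{1}{d_H(u)+1} \;\ge\; \frac{|U|}{\bar d+1} \;\ge\; \frac{|\mathcal{S}|}{2q-1} \;\ge\; \frac{|\mathcal{S}|}{2q},
\]
where the second inequality is Jensen's inequality applied to the convex function $x \mapsto 1/(x+1)$. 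Setting $\mathcal{S}_1 = \{S_u : u \in I\}$ then yields a subfamily of the desired size with the desired pairwise property.

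The argument is entirely standard and I do not anticipate any serious obstacle. The only mildly subtle point is that only the \emph{out}-degrees in $D$ are directly bounded by $q-1$ (the in-degrees could be arbitrarily large); nevertheless, this still controls the total edge count of $H$, which is all that Caro--Wei requires. If one prefers to avoid quoting Caro--Wei, the same bound follows by choosing a uniformly random ordering of $U$ and keeping each $u$ that precedes all of its $H$-neighbors in the ordering: the expected size of the resulting independent set is $\sum_u 1/(d_H(u)+1)$, and a standard derandomization (or simply the existence of an ordering achieving the expectation) yields the claimed subfamily $\mathcal{S}_1$.
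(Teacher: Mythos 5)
Your proof is correct, and it builds the same auxiliary graph $D$ (and its underlying undirected graph $H$) as the paper does; the only difference is the final combinatorial step. The paper observes that since every vertex of $D$ has out-degree at most $q-1$, every subgraph of $H$ has average degree at most $2q-2$, so $H$ is $(2q-2)$-degenerate and therefore admits a proper $(2q-1)$-coloring by the standard greedy algorithm, whose largest color class is the desired independent set. You instead invoke the Caro--Wei bound $\sum_u 1/(d_H(u)+1)$ together with Jensen, which yields the identical quantitative conclusion $|\mathcal{S}_1| \ge |\mathcal{S}|/(2q-1)$. Both routes are standard, both correctly handle the subtlety you flag (only out-degrees in $D$ are bounded, but that suffices to bound $|E(H)|$), and neither has any advantage over the other here; your random-permutation fallback is also a perfectly valid self-contained substitute for citing Caro--Wei.
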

\begin{proof}
We define a directed graph $G_\mathcal{S}$ with vertex set $V(G_\mathcal{S}) = \mathcal{S}$. We add the directed edge $(S_u,S_w) \in \mathcal{S} \times \mathcal{S}$ to $G_{\mathcal{S}}$ if $w \in S_u$. Since $|S_u \cap U| \leq q$, each vertex $S_u \in \mathcal{S}$ has out degree at most $q-1$ in graph $G_{\mathcal{S}}$. 

We say that a (possibly directed) graph is $\alpha$-degenerate if every subgraph of the graph contains a vertex of total degree at most $\alpha$. Since every vertex of $G_{\mathcal{S}}$ has outdegree at most $q-1$, every subgraph of $G_{\mathcal{S}}$ has average degree at most $2q-2$. Consequently, graph $G_{\mathcal{S}}$ is $(2q-2)$-degenerate. By the standard greedy vertex coloring algorithm for bounded-degeneracy graphs, graph $G_{\mathcal{S}}$ has a proper vertex coloring with at most $2q-1$ colors. 

In particular, this implies that graph $G_{\mathcal{S}}$ has an independent set $\mathcal{S}_1 \subseteq \mathcal{S}$ of size at least $|\mathcal{S}| / (2q-1) \geq |\mathcal{S}|/(2q)$. 
Since set $\mathcal{S}_1 \subseteq \mathcal{S}$ is an independent set in $G_{\mathcal{S}}$, it follows that for all distinct sets $S_u, S_{u'} \in \mathcal{S}_1$, $u' \not \in S_u$ and $u \not \in S_{u'}$, as claimed. 
\end{proof}

We are now ready to prove \Cref{clm:L_bound}.

 \begin{claim}
     $|\mathcal{L}| \leq 2^{k+4}qn$. 
\label{clm:L_bound}
 \end{claim}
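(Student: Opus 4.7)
The plan is a three-stage argument: split $\mathcal{L}$ by label direction, fix one coordinate and use \Cref{clm:degeneracy} to thin the other, then apply \Cref{lem:anti isolation}.

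\emph{Direction split.} Partition $\mathcal{L} = \mathcal{L}^+ \sqcup \mathcal{L}^-$ according to whether $\ell(e) = (u,v)$ was assigned via an out-reachable or an in-reachable cut in $H$ (the two bullets preceding the claim). Reversing every edge of $H$ swaps the two cases, so the same argument bounds both halves; it therefore suffices to prove $|\mathcal{L}^+| \leq 2^{k+3} qn$.

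\emph{Fix $v$ and thin $R_v$.} Fix $v \in V(G)$ and set $R_v = \{u \in U : (u,v) \in \mathcal{L}^+\} \subseteq U$. For each $u \in R_v$, \Cref{claim:decremental_struct} together with the definition of $\ell$ provides a fault set $F_{u,v}$ with $|F_{u,v}| \leq k$ in whose removal $u$ and $v$ lie in a common strongly connected component $C_{u,v}$. By \Cref{clm:small_scc_struct}, $|C_{u,v} \cap U| \leq 2q$, so $|C_{u,v} \cap R_v| \leq 2q$ since $R_v \subseteq U$. The family $\{C_{u,v} \cap R_v\}_{u \in R_v}$ thus meets the hypotheses of \Cref{clm:degeneracy} with ground set $R_v$, producing $T \subseteq R_v$ with $|T| \geq |R_v|/(2q)$ such that $u' \notin C_{u,v}$ for all distinct $u,u' \in T$.

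\emph{Anti-isolation.} For each distinct $u,u' \in T$, since $u' \notin C_{u,v}$ and $v \in C_{u,v}$, the vertices $u'$ and $v$ lie in different strongly connected components of $H - F_{u,v}$, so at least one of the $(v,u')$- or $(u',v)$-paths is missing in $H - F_{u,v}$. A case-analysis pigeonhole extracts $T' \subseteq T$ with $|T'| \geq |T|/4$ on which a single one of these two directions is uniformly absent for every distinct pair. \Cref{lem:anti isolation} applied to $T'$ with $s = v$, $\{t_u = u\}_{u \in T'}$, and $\{F_{u,v}\}_{u \in T'}$ (or its symmetric form on the edge-reversal of $H$) then gives $|T'| \leq 2^k$, so $|T| \leq 2^{k+2}$, $|R_v| \leq 2q \cdot 2^{k+2}$, and summing over $v$ and the two label classes yields $|\mathcal{L}| \leq 2^{k+4} qn$.

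The delicate step is the pigeonhole at the start of the third stage: we must reduce the symmetric ``not strongly connected'' hypothesis to the one-directional premise of \Cref{lem:anti isolation}. I plan to classify each $u \in T$ by which direction witnesses the majority of its pairs, globally pigeonhole on the classification, and then further restrict within the selected class until the chosen direction is uniformly missing, incurring a combined constant-factor loss of at most $4$ that is already absorbed in the $2^{k+4}$ bound.
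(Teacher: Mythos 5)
Your proposal diverges from the paper at a crucial step, and the divergence creates a genuine gap.

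The paper does not apply the anti-isolation lemma to the original fault sets $F_{u}$. Instead, it constructs a one-sided cut $(W_u, V-W_u)$ with $C_u \subseteq W_u$, $|W_u \cap U| \leq 4q$, and $\delta^+_{H-F_u}(W_u) = \emptyset$ (or the $\delta^-$ analogue). It then partitions the collection $\{W_u\}$ by which side of the cut is empty, runs degeneracy on the $W_u$'s (not on the SCCs $C_u$), and applies anti-isolation with the new fault sets $\delta^+_H(W_u)$. The point of this maneuver is exactly the one you flag as ``delicate'': once you remove all of $\delta^+_H(W_u)$, \emph{no} $u' \notin W_u$ is reachable from $v$, so the direction of non-reachability becomes a property of $u$ alone rather than of the ordered pair $(u,u')$.

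Your version retains the original $F_{u,v}$, and at the pigeonhole step you need to upgrade ``for each ordered pair $(u,u')$, at least one of the two directions fails in $H-F_{u,v}$'' to ``on a constant fraction of $T$, one fixed direction fails for every ordered pair.'' This is not a constant-factor pigeonhole; it is a Ramsey/Gallai-type problem. For a fixed $u$, whether $v \not\to u'$ or $u' \not\to v$ in $H-F_{u,v}$ depends on where $u'$ sits relative to $C_{u,v}$ in the condensation DAG of $H-F_{u,v}$, so the missing direction can genuinely vary with $u'$. A concrete adversarial instance: linearly order $T$ and arrange that for each $u$ the ``out'' direction fails exactly for $u' > u$ and ``in'' fails exactly for $u' < u$. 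Then every $u$ is an even split between the two classes, and any $T'$ on which ``out'' fails for all ordered pairs must be a singleton. No majority-and-refine scheme recovers a set of size $|T|/4$; at best you get $\Theta(\log|T|)$, which is far too weak to conclude $|T| \leq 2^{k+2}$.

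To repair this along the paper's lines you must, after obtaining $C_{u,v}$, grow it to a $W_u$ whose out- or in-boundary in $H-F_{u,v}$ is empty (this exists because the giant component is on one side and condensation is a DAG), do the direction split on the $W_u$'s rather than on the original labels, run degeneracy on $\{W_u\}$ (costing an extra factor because $|W_u \cap U| \leq 4q$ rather than $2q$), and feed $\delta^+_H(W_u)$ (size $\leq k$) to the anti-isolation lemma. Everything else in your outline is sound.
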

  \begin{proof}
     Suppose towards contradiction that $|\mathcal{L}| > 2^{k+4}qn$. Then there exists a vertex $v \in V$ such that $|\mathcal{L} \cap (U \times \{v\})| >  2^{k+4}q$.  Let $\mathcal{L}_v = \mathcal{L} \cap (U \times \{v\})$. For each $(u, v) \in \mathcal{L}_v$, let $e_{u} = (s_u, t_u) \in E(H) - E(J)$ be an edge such that $\ell(e_{u}) = (u, v)$.  By \Cref{claim:decremental_struct} and the definition of $\ell(e_u)$, there exists a fault set $F_{u} \subseteq E(H)$, $|F_u| \leq k$, such that vertices $v$ and $u$ are  strongly connected in $H-F_u$, but not   strongly connected in $(H-e_u)-F_u$. Let $C_u$ denote the SCC in $H-F_u$ containing vertices $v$  and $u$. By Claim \ref{clm:small_scc_struct}, $|C_u \cap U| \leq 2q$.

     We claim that for each $(u, v) \in \mathcal{L}_v$, there exists a cut $(W_u, V-W_u)$ in $H$ such that
     \begin{enumerate}
         \item $C_u \subseteq W_u$,
         \item  $|W_u \cap U| \leq 4q$, and 
         \item either  $|\delta^+_H(W_u)| \leq k$ or $|\delta^-_H(W_u)| \leq k$.
     \end{enumerate}
 By Claim \ref{clm:giant_component}, there is an SCC $C$ in $G - F_u$ such that $|C \cap U| \geq |U| - 2q$. Since $H$ is a $k$-FT sourcewise connectivity preserver of $G$ with respect to $U$, set $C$ is an SCC in $H-F_u$ as well. If $C_u = C$, then $|U| \leq |C_u \cap U| + 2q \leq 4q$, so we can let $W_u = V$. Otherwise, $C_u$ and $C$ are distinct SCC's in $H-F_u$. Then there exists an $(C_u, C)$-cut $(W_u, V-W_u)$ in $H-F_u$ such that either $\delta^+_{H-F_u}(W_u) = \emptyset$ or $\delta^-_{H-F_u}(W_u) = \emptyset$, so $|\delta^+_H(W_u)| \leq k$ or $|\delta^-_H(W_u)| \leq k$. Moreover, $|W_u \cap U| \leq 2q$, since $|S \cap U| \geq |U| - 2q$. We conclude that cut $(W_u, V-W_u)$ satisfies the claim.

    Let $\mathcal{W} = \{W_u \mid (u, v) \in \mathcal{L}_v\}$. Let $\mathcal{W}^+ = \{W_u \in \mathcal{W} \mid \delta^+_{H-F_u}(W_u) = \emptyset \}$, and let $\mathcal{W}^- = \{W_u  \in \mathcal{W} \mid \delta^-_{H-F_u}(W_u) = \emptyset \}$. Observe that $|\mathcal{W}^+| + |\mathcal{W}^-| \geq |\mathcal{L}_v|$. We now split our analysis into two cases.
\begin{itemize}
    \item \textbf{Case 1: $|\mathcal{W}^+| \geq |\mathcal{L}_v|/2 >  2^{k+3}q$.} By Claim \ref{clm:degeneracy}, there exists a subset $\mathcal{W}_1^+ \subseteq \mathcal{W}^+$ of size $|\mathcal{W}_1^+| \geq |\mathcal{W}^+|/(8q) > 2^{k}$ such that for all distinct $W_{u}, W_{u'} \in \mathcal{W}_1^+$, $u \not \in W_{u'}$ and $u' \not \in W_u$.
    Let $U_1^+ = \{u \in U \mid W_u \in \mathcal{W}_1^+\}$, and note that $|U_1^+| = |\mathcal{W}_1^+| > 2^k$. For each $u \in U_1^+$, we observe the following:
    \begin{enumerate}
        \item $|\delta^+_H(W_u)| \leq k$,
        \item $v$ can reach $u$ in $H-\delta^+_H(W_u)$, since $u, v \in C_u \subseteq W_u$, and
        \item $v$ cannot reach any vertex $u' \in U_1^+ - \{u\}$  in $H-\delta^+_H(W_u)$, since $u' \not \in W_u$ if $u \neq u'$.
    \end{enumerate}
See \Cref{fig:small-scc} for a visualization. We can apply the anti-isolation lemma (\Cref{lem:anti isolation}) with respect to source vertex $v$,  sink vertices $U_1^+$, and edge sets $\{\delta^+_H(W_u) \mid u \in U_1^+\}$ to argue that $|U_1^+| \le 2^k$. However, this contradicts our  earlier claim that $|U_1^+| > 2^{k}$. We conclude that $|\mathcal{L}| \leq 2^{k+3}qn$.

    \item \textbf{Case 2: $|\mathcal{W}^-| \geq |\mathcal{L}_v|/2 >  2^{k+3}q$.} This case is symmetric to Case 1. Specifically, we can obtain a contradiction in this case by considering the reverse graph of $H$ and applying an identical argument as in Case 1.  \qedhere
\end{itemize}
 \end{proof}

We  finish our size analysis of $H$ with the following claim. 

 \begin{claim}
     $|E(H)| = O(4^k qn)$.
 \end{claim}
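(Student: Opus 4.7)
The plan is to simply combine the two preceding claims. From Claim \ref{clm:H_vs_L_bound} we have the bound $|E(H)| \leq 2^{k+1}\cdot|\mathcal{L}| + O(2^k q n)$, and from Claim \ref{clm:L_bound} we have $|\mathcal{L}| \leq 2^{k+4}qn$. Substituting the latter into the former gives
\[
|E(H)| \leq 2^{k+1}\cdot 2^{k+4} q n + O(2^k q n) = O(4^k q n),
\]
which is the claimed bound.

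So the proof proposal is essentially a one-liner: cite Claim \ref{clm:H_vs_L_bound}, plug in the bound on $|\mathcal{L}|$ from Claim \ref{clm:L_bound}, and simplify. There is no obstacle here, since all the real work has already been done: the separation of edges into those in $J$ (handled via the greedy single-source preservers of \cite{baswana2018fault} for ``large SCC'' fault sets) and those outside $J$ (handled via the labeling argument combined with the important-cut lemma for the ``small SCC'' case, which caps $|\mathcal{L}|$ via the anti-isolation lemma and the degeneracy-based independent set argument). The final step is purely arithmetic.

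With this, the proof of Lemma \ref{lem:unbreak_pres} is complete, since $H$ is an edge-minimal $k$-FT sourcewise connectivity preserver of $G$ with respect to $U$ by the greedy construction, and we have just bounded its edge count by $O(4^k q n)$.
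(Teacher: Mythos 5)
Your proposal is correct and matches the paper's proof exactly: both simply substitute the bound $|\mathcal{L}| \le 2^{k+4}qn$ from Claim \ref{clm:L_bound} into the inequality $|E(H)| \le 2^{k+1}|\mathcal{L}| + O(2^k qn)$ from Claim \ref{clm:H_vs_L_bound} and simplify. No gap.
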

 \begin{proof}
     $
     |E(H)| \leq 2^{k+1}\cdot |\mathcal{L}| + O(2^kqn)  \leq  2^{k+1}\cdot 2^{k+4}qn + O(2^kqn) = O(4^kqn),
     $ where the first inequality follows from Claim \ref{clm:H_vs_L_bound} and the second inequality follows from Claim \ref{clm:L_bound}.
 \end{proof}

\subsubsection{Applying the directed expander hierarchy}
\label{subsubsec:expanderhierarch}
We will combine our fault-tolerant sourcewise connectivity preservers from Lemma \ref{lem:unbreak_pres} with the existential directed expander hierarchy of Lemma \ref{lem:expander_hierarchy} to complete the construction of our fault-tolerant connectivity preserver claimed in \Cref{thm:existentialtwo}.

\existentialtwo*
\begin{proof}
    Let $G$ be an $n$-vertex directed graph, and let $k$ be a positive integer. 
    Let $\{V_1, \dots, V_{\ell}\}$ be the directed expander hierarchy for $(2k, k)$-unbreakable sets specified in Lemma \ref{lem:expander_hierarchy}. For each $i \in [1, \ell]$, let $V_{\leq i} = V_1 \cup \dots \cup V_i$, and let $C_i^1, \dots, C_i^{j_i} \subseteq V_{\leq i}$ be the SCCs of $G[V_{\leq i}]$, for some $j_i \in [1, n]$. Additionally, for each $i \in [1, \ell]$ and $j \in [1, j_i]$, we let $U_i^j = C_i^j \cap V_i$. 

    For each $i \in [1, \ell]$ and $j \in [1, j_i]$ we build a $k$-FT sourcewise connectivity preserver $H_i^j$ of graph $G[C_i^j]$ with respect to source vertices $U_i^j$ using Lemma \ref{lem:unbreak_pres}. We let $H_i = \cup_{j \in [1, j_i]}H_i^j$, and let $H = \cup_{i \in [1, \ell]} H_i$. We claim that $H$ is a $k$-FT connectivity preserver of $G$ with size $|E(H)| = O(k4^k \cdot n \log n)$. 

    First, we verify that $|E(H)| = O(k4^k \cdot n \log n)$. By Lemma \ref{lem:expander_hierarchy}, set $U_i^j$ is $(2k, k)$-unbreakable in $G[C_i^j]$ for all $i \in [1, \ell]$ and $j \in [1, j_i]$. Then  by Lemma \ref{lem:unbreak_pres}, $|E(H_i^j)| = O(4^kk|C_i^j|)$. Then we can upper bound the number of edges in $H$ as follows:
    $$
    |E(H_i)| \leq \sum_{j \in [1, j_i]} |E(H_i^j)| \leq O(k4^k) \cdot \sum_{[1, j_i]} |C_i^j| = O(k4^kn).
    $$
Since there are at most $\ell = O(\log n)$ levels in our directed expander hierarchy, we conclude that 
$$
|E(H)| \leq \sum_{i \in [1, \ell]} |E(H_i)| \leq O(k4^k n \log n). 
$$

We now verify that $H$ is a $k$-FT connectivity preserver of $G$. Fix a pair of vertices $s, t \in V$ and a fault set $F \subseteq E(G)$ of size $|F| \leq k$ such that $s$ and $t$ are strongly connected in $G-F$. Let $C \subseteq V(G)$ be the SCC in $G-F$ containing $s$ and $t$. Let $i \in [1, \ell]$ be the largest index such that $V_i \cap C \neq \emptyset$, and let $x \in V_i \cap C$.  Since $C \subseteq V_{\leq i}$, set $C$ is contained in some SCC $C_i^j$ in $G[V_{\leq i}]$, where $j \in [1, j_i]$. 
It follows that $x \in  U_i^j$, since $V_i \cap C \subseteq V_i \cap C_i^j = U_i^j$. 
Recall that subgraph $H_i^j$ is a sourcewise connectivity preserver of $G[C_i^j]$ with respect to source vertices $U_i^j$. Then since $x \in U_i^j$ is strongly connected to $s, t \in C_i^j$ in $G[C_i^j] - F$, it follows that $x$ is strongly connected to $s$ and $t$ in $H_i^j - F$. We conclude that $s$ and $t$ are strongly connected in $H_i^j - F \subseteq H-F$, as claimed. \end{proof}

\subsection{Constructing fault-tolerant connectivity  preservers in FPT time}
\label{subsec:fpt}

The goal of this section is to present our FPT-time construction of fault-tolerant connectivity preservers. 

\begin{restatable}{theorem}{fpt}
For every positive integer $k$, every $n$-vertex directed graph admits a $k$-FT connectivity preserver with $O(8^k  n\log^{5/2}n)$ edges that can be computed in $2^{O(k)}\text{\normalfont poly}(n)$ time, with high probability. 
\label{thm:fpt}
\end{restatable}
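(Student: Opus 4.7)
The plan is to FPT-ize the existential proof of \Cref{thm:existentialtwo} by replacing its two non-polynomial-time ingredients: the exponential-time directed expander hierarchy, and the $n^{O(k)}$-time greedy construction of sourcewise preservers (\Cref{lem:unbreak_pres}). For the hierarchy, I invoke the polynomial-time branch of \Cref{lem:expander_hierarchy} with $\phi = O(1/\sqrt{\log n})$, obtaining an $\ell = O(\log n)$-level hierarchy whose slices are $(q, k)$-unbreakable with $q = O(k \sqrt{\log n})$; this costs a $\sqrt{\log n}$ factor relative to the $q = O(k)$ of the existential proof and is the source of the ``with high probability'' qualifier.

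For the sourcewise preserver in each $(q, k)$-unbreakable component $C$ with source set $U$, I replace the decremental greedy algorithm with a direct, additive construction driven by important cut containers. The preserver $H$ is the union of two parts. Part (a): $k$-FT single-source preservers $\ssc(G[C], u, k)$ rooted at each $u$ in an arbitrary subset $U' \subseteq U$ of size $\min(2q+1, |U|)$, computed using the polynomial-time algorithm of \cite{baswana2018fault}; these handle the giant-SCC case exactly as in \Cref{clm:large_scc} and contribute $O(2^k q \, |C|)$ edges. Part (b): for every pair $(u, v) \in U \times C$, every scale $i \in [1, k+1]$, and both directions $\pm$, all edges of the important cut containers $\is(G[C], u, v, i, +)$ and $\is(G[C], u, v, i, -)$ from \Cref{lem:important_separators}. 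Each container has at most $2^{i-1} \le 2^k$ edges and is computable in $O(i 2^i m + i^2 4^i)$ time, so part (b) runs in $\poly(2^k, n)$ time.

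For correctness I mirror the analysis of \Cref{lem:unbreak_pres}: if $(u, v) \in U \times C$ is strongly connected in $G[C]-F$ for some $|F| \le k$ with SCC $C'$, then either $|C' \cap U| > 2q$, in which case part (a) preserves their strong connectivity in $H-F$ via \Cref{clm:giant_component} and the argument of \Cref{clm:large_scc}, or $|C' \cap U| \le 2q$, in which case any separating $(u,v)$- or $(v,u)$-cut of size $\le k$ in $H-F$ lifts, via \Cref{clm: in out wlog} and \Cref{clm:impcut farthest}, to an out-reachable (resp.\ in-reachable) important cut of size $\le k+1$ in $G[C]$ whose edges are all included by part (b), a contradiction. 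The size accounting for part (b) follows the anti-isolation pattern of \Cref{clm:L_bound} and \Cref{lem:anti isolation}: fixing a target $v$ and a direction, the important cuts active at $v$ can be charged so that only $O(2^k)$ per pair survive the anti-isolation bound, which combined with the additional $O(2^k)$ FPT overhead from storing full containers rather than the single-edge witnesses used in the greedy proof gives $O(8^k q \, |C|)$ edges per slice. Summing over $O(\log n)$ levels with $q = O(k \sqrt{\log n})$ yields $O(8^k k n \log^{3/2} n)$; under the WLOG assumption $k \le \log n$ (otherwise $8^k n \ge n^2$ and the bound is trivial), this simplifies to $O(8^k n \log^{5/2} n)$.

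The main obstacle will be the correctness and size accounting for part (b): without the edge-minimality property that drove the existential analysis of \Cref{claim:decremental_struct}--\Cref{clm:L_bound}, I must re-verify that the union of important cut containers across all $O(qn)$ pairs, $O(k)$ scales, and both directions is jointly sufficient to cover every separation witness under $k$ faults, and must re-derive the per-slice $O(8^k q \, |C|)$ bound by an anti-isolation-style charging applied directly to the cuts returned by \Cref{lem:important_separators} rather than to edges of an (unavailable) edge-minimal preserver. All other components---the polynomial-time hierarchy, the polynomial-time single-source preservers, and the FPT important-cut routine---plug in off the shelf.
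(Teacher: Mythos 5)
Your proposal replaces the paper's randomized source-filtering step with a brute-force ``union all important cut containers for all pairs $(u,v)\in U\times C$'' construction, and this is where it breaks. Consider the total size of your part (b): you compute $\is(G[C],u,v,i,\pm)$ for every pair $(u,v)\in U\times C$, every scale $i$, and both directions. If the hierarchy degenerates to one or two levels---e.g.\ when $G$ itself is $(q,k)$-unbreakable and $U$ has size $\Theta(n)$---there are $\Theta(n^2)$ source--sink pairs, and each container can contribute up to $2^k$ distinct edges. Nothing in your argument prevents these containers from being pairwise disjoint (and one can construct broom-like instances where the farthest/important cuts genuinely vary across pairs), so part (b) is $\Theta(2^k n^2)$ edges before any claimed bound kicks in. Your invocation of anti-isolation cannot rescue this: both \Cref{clm:L_bound} and \Cref{lem:anti isolation} bound the number of \emph{pairs} that can simultaneously exhibit a specific exclusive-reachability structure; they say nothing about the size of a union of containers taken over \emph{all} pairs unconditionally. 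In the existential proof this structure was furnished for free by edge-minimality (only critical edges get labels), and you correctly identify that you lose it, but ``re-derive the per-slice bound by an anti-isolation-style charging applied directly to the cuts'' is exactly the missing step---there is no obvious way to do this charging without first shrinking the set of pairs you enumerate.

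The paper's actual proof of \Cref{lem:fpt_swise} solves this with a pre-filtering step that your proposal has no analogue of: for each target $v_i$, it draws $O(\log n)$ random terminal samples $Q_j\in\binom{U}{q}$, computes $\is(G,v_i,Q_j,k,\pm)$, and defines $U_i$ as the set of sources landing inside these containers. Unbreakability with parameter $2^k$ (not $k$---note the paper works with $(q,2^k)$-unbreakable sets, not $(q,k)$-unbreakable as you write) forces $|U_i|=O(q\log n)$, and only the pairs $(u,v_i)$ with $u\in U_i$ get containers in $E_i$. Correctness then hinges on the argument of \Cref{clm:fpt_whp} and \Cref{clm:u_e_in_U_i}: if $e=(v_i,v_j)$ is $k$-fault critical via some $u_e$, then $u_e$ lies in a set $Y_e$ with $|Y_e\cap U|\le 2q$, and with high probability some sample $Q_j$ misses $Y_e$, which forces $u_e$ into $U_i\cap U_j$. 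This is also where the high-probability qualifier genuinely comes from; attributing it to the $\phi=O(1/\sqrt{\log n})$ hierarchy is a second, smaller misstatement---\Cref{lem:expander_hierarchy} gives a deterministic polynomial-time construction. Without the $Q_j$-based filter (or some deterministic replacement achieving $\sum_i |U_i| = \tilde O(qn)$), the additive construction you describe does not achieve the claimed edge bound.
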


We use a decremental greedy strategy to construct our $k$-FT connectivity preserver $H$. Initially, we  let $H \gets G$. Then we repeatedly identify an edge $e \in E(H)$ such that $H-e$ is a $k$-FT connectivity preserver of $G$. Once we find such an edge $e$, we can update $H$ so that $H \gets H - e$.  Our algorithm terminates when $|E(H)| = O(8^k n \log^{5/2} n)$. 

Under this framework, the main algorithmic subroutine of our construction is to find an edge in $H$ that is safe to remove. If an edge is not safe to remove from $H$, then we say it is \textit{$k$-fault critical.}

\begin{definition}[$k$-fault critical]
Let $G$ be an $n$-vertex directed graph. We say that an edge $e \in E(G)$ is $k$-fault critical in $G$ if there exists a pair of vertices $s, t \in V(G)$ and a fault set $F \subseteq E(G)$ of size $|F| \leq k$ such that $s$ and $t$ are strongly connected in $G-F$, but $s$ and $t$ are not strongly connected in $(G-e)-F$. 
\end{definition}

We are able to adapt the tools from our proof of Theorem \ref{thm:FT} to efficiently compute a small set of edges $E' \subseteq E(G)$ that contains all $k$-fault critical edges in $G$.

\begin{restatable}{lemma}{edgeremoval}
Let $G$ be an $n$-vertex directed graph. For every positive integer $k$, there exists an edge set $E' \subseteq E(G)$ of size $|E'| = O(8^k n \log^{5/2} n)$ that contains every edge in $G$ that is $k$-fault critical.  Moreover,  $E'$ can be computed with high probability in time $2^{O(k)}\cdot \text{\normalfont poly}(n)$.  
\label{lem:dec_fpt}
\end{restatable}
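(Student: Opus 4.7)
The plan is to realize algorithmically the existential construction of \Cref{thm:existentialtwo}, replacing the $n^{O(k)}$-time edge-minimality loop inside \Cref{lem:unbreak_pres} with a direct enumeration of candidate critical edges driven by the important cut container lemma (\Cref{lem:important_separators}). First, I would compute a polynomial-time directed expander hierarchy $V_1, \ldots, V_\ell$ via \Cref{lem:expander_hierarchy} with $\phi = \Theta(1/\sqrt{\log n})$, yielding $\ell = O(\log n)$ and $q = O(k\sqrt{\log n})$ such that each $U_i^j = V_i \cap C_i^j$ is $(q,k)$-unbreakable in $G[C_i^j]$ for every SCC $C_i^j$ of $G[V_{\le i}]$. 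As in the proof of \Cref{thm:existentialtwo}, it then suffices to produce, for each such $C = C_i^j$ with source set $U = U_i^j$, a set $E(C) \subseteq E(G[C])$ containing every edge that is $k$-fault critical for some pair in $U \times V(G[C])$; the final output is $E' = \bigcup_{i,j} E(C_i^j)$.

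I would construct $E(C)$ in two parts mirroring the case analysis of \Cref{lem:unbreak_pres}. Sample a set $U' \subseteq U$ uniformly at random with $|U'| = \Theta(q \log n)$. For the \emph{giant-component case}, include $J = \bigcup_{u \in U'} \ssc(G[C], u, k)$, computed in polynomial time via \cite{baswana2018fault}. By \Cref{clm:giant_component} and a union bound over fault sets of size at most $k$, $U'$ intersects the giant SCC of every $G[C] - F$ with high probability, so as in \Cref{clm:large_scc} every critical edge whose witnessing SCC is giant lies in $J$. For the \emph{small-SCC case}, iterate over pairs $(u, v) \in U' \times V(G[C])$ and, for each pair, enumerate all important $(\{u\},\{v\})$-cuts and $(\{v\},\{u\})$-cuts of size at most $k+1$ in $G[C]$ (of which there are $O(4^k)$ in each direction, enumerable in $2^{O(k)}\poly(n)$ time by the standard branching algorithm) and add all of their crossing edges. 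Mirroring \Cref{claim:decremental_struct}, every critical edge of this case appears as a crossing edge of such a cut.

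The size bound follows by direct accounting. Part one contributes $O(2^k |U'| |C|) = O(2^k q |C| \log n)$ edges per SCC, and part two contributes $O(k\cdot 4^k \cdot |U'| \cdot |C|) = O(k\cdot 4^k q |C| \log n)$. Summing over SCCs gives $O(k\cdot 4^k qn\log n)$ edges per level, and over $\ell = O(\log n)$ levels gives $O(k\cdot 4^k qn\log^2 n) = O(k^2 \cdot 4^k n\log^{5/2} n) \le O(8^k n \log^{5/2} n)$. Each important-cut enumeration and each single-source preserver computation runs in $2^{O(k)}\poly(n)$ time, and there are $\poly(n)$ of them in total, so the total runtime is $2^{O(k)}\poly(n)$.

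The main technical obstacle is justifying coverage in the small-SCC case: since a witnessing SCC has only $O(q)$ vertices and $|U|$ can be as large as $|C|$, a naive union bound does not show that $|U'| = \Theta(q \log n)$ hits one valid witness $u \in U$ for every critical edge. The resolution combines the bound $|\mathcal L_v| = O(2^k q)$ on the number of distinct labels at a given vertex $v$ from \Cref{clm:L_bound} with a union bound over the $\poly(n) \cdot 2^{O(k)}$ distinct ``small SCC'' witnesses that can arise across all fault sets, adapted to the random sample $U'$. Making this density argument fully rigorous, and verifying that the important-cut enumeration indeed captures each critical edge (including when its endpoint lies on either side of the witness cut), constitutes the crux of the FPT-time construction.
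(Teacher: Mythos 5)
Your proposal follows the same high-level shape as the paper (expander hierarchy with $\phi = \Theta(1/\sqrt{\log n})$, then a per-component sourcewise construction with a giant-component part and a small-SCC part), but the core of the FPT construction is the small-SCC case, and there your approach has a genuine gap that the paper resolves by an entirely different mechanism.

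You propose sampling a random $U' \subseteq U$ of size $\Theta(q\log n)$ and, for each $(u,v) \in U' \times V(C)$, enumerating important cuts. You correctly flag the problem yourself: the witnessing SCC $X_e$ for a small-SCC critical edge contains only $O(q)$ vertices of $U$, while $|U|$ can be $\Theta(|C|)$, so a random $U'$ of size $\tilde O(q)$ hits $X_e \cap U$ only with probability $\tilde O(q^2/|C|)$. Your proposed fix — appealing to $|\mathcal L_v| = O(2^k q)$ and union-bounding over ``the $\poly(n)\cdot 2^{O(k)}$ distinct small-SCC witnesses'' — does not work: there can be $n^{\Theta(k)}$ distinct fault sets and hence far more distinct small SCCs than $\poly(n)\cdot 2^{O(k)}$, and the $\mathcal L_v$ bound counts labels for a fixed edge-minimal subgraph, which is not available to the algorithm. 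The paper avoids this entirely: rather than trying to sample the right source $u$, it samples random terminal sets $Q_1,\ldots,Q_\lambda \subseteq U$ of size $q$ and, for each vertex $v_i$, computes the important-cut containers $\is(G, v_i, Q_j, k, \pm)$. The key point is that if $Q_j$ avoids $Y_e$ (the small side of the cut isolating $X_e$, with $|Y_e \cap U| \le 2q$), then the container $\is(G, v_i, Q_j, k, +)$ necessarily contains $Y_e$ and hence contains every valid source $u_e$. This turns the problem of ``hitting a small set with a sample'' into ``avoiding a small set with a sample,'' which succeeds with constant probability per $Q_j$ once $|U| \ge 5q^2$, and gives a deterministic fallback ($U' = U$) when $|U| < 5q^2$. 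The candidate source set $U_i$ is then bounded via unbreakability.

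This also exposes a second discrepancy: the paper's \Cref{clm:small_U_i} needs $U$ to be $(q, 2^k)$-unbreakable (not $(q,k)$-unbreakable) because the containers $\is(\cdot,\cdot,\cdot,k,\pm)$ cross $2^{k-1}$ edges, so the hierarchy is built for $(\Theta(2^k\sqrt{\log n}), 2^k)$-unbreakable sets, not $(O(k\sqrt{\log n}),k)$-unbreakable ones. Your choice of $q = O(k\sqrt{\log n})$ would not support the required bound on the number of candidate sources per vertex even after fixing the coverage argument, and it is also why your size accounting lands at $O(k^2 4^k n\log^{5/2}n)$ rather than the paper's $O(8^k n\log^{5/2}n)$ (which is not automatically dominated by the former for small $k$). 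Finally, for the giant-component case no randomness or union bound over fault sets is needed at all: since the giant component misses at most $2q$ vertices of $U$, any deterministic $U'$ with $|U'| > 2q$ intersects it by pigeonhole, which is the argument the paper uses with $|U'| = \min(5q^2,|U|)$.
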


Lemma \ref{lem:dec_fpt} directly implies  Theorem \ref{thm:fpt}.

\begin{proof}[Proof of Theorem \ref{thm:fpt}]
We initialize  our $k$-FT connectivity preserver $H$ of $G$  to be $H \gets G$. If $|E(H)| = O(8^k n \log^{5/2} n)$, then we are done. Otherwise, we can find  an edge $e \in E(H)$ that is not $k$-fault critical in $H$, with high probability in time $2^{O(k)} \cdot \text{poly}(n)$, by Lemma \ref{lem:dec_fpt}. Since $e \in E(H)$ is not $k$-fault critical in $H$, subgraph $H-e$ is a $k$-fault connectivity preserver of $H$. Moreover, since $H$ is a $k$-FT connectivity preserver of $G$, it follows that $H-e$ is a $k$-FT connectivity preserver of $G$ as well. 
We update $H$ so that $H \gets H-e$.

In general, given a subgraph $H$ of $G$ that is a $k$-FT connectivity preserver of $G$, either $|E(H)| = O(8^k n \log^{5/2} n)$ or  we can apply \Cref{lem:dec_fpt} on $H$ to find an edge $e \in E(H)$ such that $H-e$ is a $k$-FT connectivity preserver of $G$ as well.
Our algorithm terminates and outputs $H$ when $|E(H)| = O(8^k n \log^{5/2} n)$. 
This algorithm outputs a $k$-FT connectivity preserver $H$ of $G$ of size $|E(H)| = O(8^k n \log^{5/2} n)$ with high probability in time $2^{O(k)} \cdot \text{poly}(n)$. 
\end{proof}

For the remainder of this section, we focus on proving \Cref{lem:dec_fpt}. 



\subsubsection{$k$-fault critical edges for sourcewise connectivity}
\label{subsubsec:fpt_swise}

Similar to the proof of Theorem \ref{thm:existentialtwo}, we will begin our proof of Lemma \ref{lem:dec_fpt} by first considering fault-tolerant \textit{sourcewise} connectivity. We will need the following definition.

\begin{definition}[$k$-fault critical with respect to $P$]
    Let $G = (V, E)$ be an $n$-vertex directed graph. We say that an edge $e \in E$ is $k$-fault critical in $G$ with respect to vertex pairs $P \subseteq V \times V$ if there exists $(s, t) \in P$ and a fault set $F \subseteq E$ of size $|F| \leq k$ such that $s$ and $t$ are strongly connected in $G-F$, but $s$ and $t$ are not strongly connected in $(G-e)-F$. 
\end{definition}

We will prove the following algorithmic lemma about $k$-fault critical edges. 


\begin{lemma}
    Let $G$ be an $n$-vertex, $m$-edge directed graph, and let  $U \subseteq V(G)$ be a $(q, 2^k)$-unbreakable set in $G$. There exists an edge set $E' \subseteq E(G)$ of size $|E'| = O(2^kq^2n + 2^kq n \log n)$ that contains every edge $e \in E(G)$ that is $k$-fault critical with respect to $U \times V(G)$.  
    Moreover,  $E'$ can be computed with high probability in time $O(q^22^kmn + qn \log(n) \cdot (k2^km+k^24^k))$. 
\label{lem:fpt_swise}
\end{lemma}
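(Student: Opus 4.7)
The plan is to turn the existential proof of Lemma~\ref{lem:unbreak_pres} into a constructive algorithm. The output $E'$ will have two parts: a ``giant-SCC'' piece $J$ built from $O(q)$ fault-tolerant single-source preservers, and a ``small-SCC'' piece $E_L$ built from $O(qn\log n)$ important-cut computations.

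For the giant-SCC piece, pick any $U' \subseteq U$ with $|U'| = \min(|U|, 2q+1)$ and set $J = \bigcup_{u \in U'} \ssc(G,u,k)$. The argument of Claim~\ref{clm:large_scc} applies verbatim with $H = G$: if an edge $e=(s,t)$ is $k$-fault critical via a fault set $F$ and witness $u \in U$ such that the SCC $C$ of $G-F$ containing $\{s,t,u\}$ satisfies $|C \cap U| > 2q$, then $e \in E(J)$. This step adds $|E(J)| = O(2^k q n)$ edges to $E'$ and uses $O(q)$ invocations of the construction of \cite{baswana2018fault}, contributing the $O(q^2 2^k m n)$ term in the runtime.

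For the small-SCC piece, Claim~\ref{claim:decremental_struct} guarantees that every remaining critical edge $(s,t)$ has a label $(u,v) \in U \times V$ with $(s,t) \in \is(G, u, v, k{+}1, +)$ if $v=t$ or $(s,t) \in \is(G, u, v, k{+}1, -)$ if $v=s$. The crucial observation is that the label $u$ is not unique: if $C$ is the SCC of $G-F$ containing $\{s,t,u\}$ with $|C \cap U| \leq 2q$, then \emph{any} $u'' \in C \cap U$ is an equally valid label, because $u''$ is strongly connected to $s,t$ in $G-F$ and the same cut of size $\leq k+1$ separates $v$ from $u''$. We therefore iterate over $v \in V$ and, for each $v$, iteratively maintain a ``residual'' unexplained set of critical edges; in each round we sample a subset $R_v \subseteq U$ of size $\Theta(q \log n)$, compute $\is(G, u, v, k{+}1, +)$ and $\is(G, u, v, k{+}1, -)$ for each $u \in R_v$, and add their edges to $E'$. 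Using the refined anti-isolation bound of Claim~\ref{clm:L_bound}, the number of ``essentially distinct'' residual witnesses per $v$ is at most $O(2^k q)$, and each witness has $|C \cap U| \geq 1$; with the adaptive sampling strategy (removing covered witnesses between rounds) the process terminates after $O(\log n)$ rounds per vertex with high probability. Each important-cut call costs $O(k2^k m + k^2 4^k)$ by Lemma~\ref{lem:important_separators}, giving $O(qn\log n \cdot (k2^k m + k^2 4^k))$ total time and $|E_L| = O(2^k q n \log n)$ edges.

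Summing the two pieces yields $|E'| = O(2^k q n) + O(2^k q n \log n) \le O(2^k q^2 n + 2^k q n \log n)$ and the claimed runtime. The main obstacle is precisely the sampling argument in the small-SCC step: a naive one-shot random sample of $U$ cannot reliably hit a set $C \cap U$ of size $1$, so the adaptive/iterative refinement and the per-vertex $O(2^k q)$ witness bound are what make the union-bound close. The key technical input enabling this is the improved anti-isolation lemma (Lemma~\ref{lem:anti isolation}) applied within Claim~\ref{clm:L_bound}'s reasoning, now reinterpreted as a certificate that the adaptive sampling cannot need too many rounds.
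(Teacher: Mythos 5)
The high-level split into a ``giant-SCC'' piece $J$ and a ``small-SCC'' piece matches the paper, but your small-SCC step has several genuine gaps, and the adaptive sampling scheme that is supposed to patch the one-shot sampling problem does not actually work.

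\paragraph{The two-stage filtering is missing.} The paper's construction for the small-SCC case is a two-stage procedure: for each vertex $v_i$, it first samples $\lambda = \Theta(\log n)$ random size-$q$ subsets $Q_j \subseteq U$ and computes the important-cut containers $\is(G, v_i, Q_j, k, +)$ and $\is(G, v_i, Q_j, k, -)$, collecting into $U_i$ the vertices of $U$ that lie inside these containers (a set of size $O(q\log n)$). Only then, in a second stage, does it compute $\is(G, u, v_i, k+1, \pm)$ for each $u \in U_i$. The point of the first stage is to \emph{locate} the small set $Y_e$ (containing the witness $u_e$) that can be cut off from the giant SCC with a one-sided $\le k$ cut: with high probability some $Q_j$ avoids $Y_e$, so $\is(G, v_i, Q_j, k, \pm)$ catches all of $Y_e$, hence catches $u_e$. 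Your proposal skips the first stage entirely and tries to sample $u$'s directly from $U$; since $u_e$ could be a single vertex and $|U|$ may be far larger than $q\log n$, sampling $\Theta(q\log n)$ vertices of $U$ per round cannot hit it with any useful probability.

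\paragraph{``Any $u'' \in C\cap U$ is an equally valid label'' is false.} The cut $(S,V-S)$ with $\delta^+_{G-F}(S)=\{e\}$ is only a $(u,t)$-cut; $C$ straddles it, so $u''$ may lie in $V-S$, in which case the cut does not separate $u''$ from $t$. Concretely, take $C=\{u,s,t,u''\}$ with residual edges $u\to s$, $s\to t=e$, $t\to u$, $u''\to t$, $t\to u''$: after removing $e$, $u''$ and $t$ remain strongly connected, so $(u'',t)$ does not certify criticality of $e$ for this fault set, and there is no guarantee that $\is(G,u'',t,k+1,+)$ contains $e$. The non-uniqueness you rely on simply does not hold, which is exactly why the paper must pin down $u_e$ itself via the $Q_j$ filtering step.

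\paragraph{The adaptive sampling is not implementable.} You propose to maintain a ``residual set of unexplained critical edges'' and to ``remove covered witnesses between rounds.'' But the algorithm cannot tell which edges are $k$-fault critical or which witnesses explain them without enumerating fault sets, which is precisely the $n^{O(k)}$ cost this lemma is designed to avoid. The paper's sampling of $Q_1,\dots,Q_\lambda$ is non-adaptive and never requires the algorithm to identify critical edges or witnesses; correctness follows from a union bound over the (analysis-only) family of cuts $Y_e$. Your round-by-round scheme conflates what the analysis knows with what the algorithm can observe, so its termination guarantee has no basis. Finally, a smaller point: the paper takes $|U'|=\min(|U|,5q^2)$ rather than $2q+1$, specifically so it may assume $|U|>5q^2$ in the failure-probability calculation (giving $4q^2/|U|\le 4/5$); with $|U'|=2q+1$ you only get to assume $|U|>2q+1$, which is not enough for that bound.
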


We will use Lemma \ref{lem:fpt_swise}, combined with a directed expander hierarchy, to prove Lemma \ref{lem:dec_fpt} and complete the proof of Theorem \ref{thm:fpt}. We first present the construction of $E'$, and then we analyze the time complexity and correctness of our construction.

\paragraph{Construction of $E'$.}

Let $G = (V, E)$ be an $n$-vertex directed graph, and let $U \subseteq V$ be a $(q, 2^k)$-unbreakable set in $G$. Let $U' \subseteq U$ be an arbitrary subset of $U$ of size $|U'| = \min(|U|, 5q^2)$. Let $J \subseteq G$ be the subgraph of $G$ obtained by unioning $|U'|$ many $k$-FT single-source connectivity preservers of $G$, each rooted at a distinct vertex in $U'$. Formally, let $J = \bigcup_{u \in U'} \ssc(G, u, k)$. Subgraph $E(J)$ will be part of our final edge set $E'$. 

We define a collection of $\lambda = 50 \log n$ subsets of $U$ as follows. For $j \in [1, \lambda]$, let $Q_j \subseteq U$ be an element of $\binom{U}{q}$ sampled uniformly at random. We will use the collection of sets $\{Q_j\}_{j \in [1, \lambda]}$ to construct our final edge set $E'$.

Let $V = \{v_1, \dots, v_n\}$. For each $i \in [1, n]$, we will define a set $U_i$ and a collection of edges $E_i \subseteq E$ as follows. For each $j \in [1, \lambda]$, if $v_i \not \in Q_j$, then let $(U_i^j, V-U_i^j) = \is(G, v_i, Q_j, k, +)$ and let $(W_i^j, V-W_i^j) = \is(G, v_i, Q_j, k, -)$.
Otherwise, if $v_i \in Q_j$, then let $U_i^j = W_i^j = \emptyset$. 
Let $S_i = \bigcup_{j \in [1, \lambda]} U_i^j \cup W_i^j$, and let $U_i = S_i \cap U$.

With set $U_i \subseteq U$ defined, we are ready to define edge set $E_{i} \subseteq E$.  For each $u \in U_i$,  let $(S_i^u, V-S_i^u) = \is(G, u, v_i, k+1, +)$ and let $(T_i^u, V-T_i^u) = \is(G, u, v_i, k+1, -)$. 
We define edge set $E_i$ to be $E_i = \cup_{u \in U_i } \delta^+_{G}(S_i^u) \cup \delta^-_{G}(T_i^u)$.

We finish our construction of $E'$ by letting $E' =  E(J) \cup_{i \in [1, n]} E_i$. Our construction is summarized in Algorithm \ref{alg:cap}.

\begin{algorithm}[ht!]
\caption{Construction of edge set $E'$ from Lemma \ref{lem:fpt_swise}} \label{alg:cap}
\begin{algorithmic}
\Require  Directed graph $G = (V, E)$ and $(q, 2^k)$-unbreakable set $U \subseteq V$.


\State $J \gets (V, \emptyset)$
\State Let $U' \subseteq U$ be a subset of $U$ of size $|U'|=\min(5q^2, |U|)$. 
\State $J \gets \cup_{u \in U'}  \ssc(G, u, k)$
\State Let $\lambda = 50\log n$
\For{$j \in [1, \lambda]$}
\State Let $Q_j \subseteq U$ be an element of $\binom{U}{q}$ sampled uniformly at random. 
\EndFor
\State Let $V = \{v_1, \dots, v_n\}$. 
\For{$i \in [1, n]$} 
\For{$j \in [1, \lambda]$}
\If{$v_i \not \in Q_j$} 
\State $(U_i^j, V - U_i^j) \gets \is(G, v_i, Q_j, k, +)$.
\State $(W_i^j, V - W_i^j) \gets \is(G, v_i, Q_j, k, -)$.
\Else{}
\State $U_i^j \gets \emptyset$, $W_i^j \gets \emptyset$
\EndIf

\EndFor 
\State $S_i \gets \bigcup_{j \in [1, \lambda]} U_i^j \cup W_i^j$
\State $U_i \gets S_i \cap U$

\For{$u \in U_i$}
\State $(S_i^u, V-S_i^u) \gets \is(G, u, v_i, k+1,  +)$.
\State $(T_i^u, V-T_i^u) \gets \is(G, u, v_i, k+1, -)$.
\EndFor
\State $E_i \gets \cup_{u \in U_i} \delta_G^+(S_i^u) \cup \delta_G^-(T_i^u)$
\EndFor
\State $E' \gets E(J) \cup_{i \in [1, n]} E_{i}$
\end{algorithmic}
\label{alg:swise_scc}
\end{algorithm}

\paragraph{Time Complexity.} We now verify that Algorithm \ref{alg:cap} has the time complexity claimed in Lemma \ref{lem:fpt_swise}. We will first need the following result bounding the size of $U_i$. 

\begin{claim}
For each $i \in [1, n]$, $|U_i| \leq 2\lambda q$.
\label{clm:small_U_i}
\end{claim}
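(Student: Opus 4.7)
The plan is a direct union-bound argument: bound $|U_i|$ by summing the sizes of the $2\lambda$ individual intersections $U_i^j \cap U$ and $W_i^j \cap U$, and show that each of these has size at most $q$ via the unbreakability of $U$. Unfolding the definitions, $U_i = S_i \cap U = \bigcup_{j=1}^{\lambda} \bigl((U_i^j \cap U) \cup (W_i^j \cap U)\bigr)$, so
\[
|U_i| \;\le\; \sum_{j=1}^{\lambda}\bigl(|U_i^j \cap U| + |W_i^j \cap U|\bigr).
\]
It therefore suffices to prove $|U_i^j \cap U| \le q$ and $|W_i^j \cap U| \le q$ for each $j \in [1,\lambda]$ (the argument is vacuous when $v_i \in Q_j$, as both cuts are empty by construction).

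I would focus on $U_i^j$; the bound for $W_i^j$ is fully symmetric, replacing outgoing cuts by incoming cuts. By \Cref{lem:important_separators}, the container cut $(U_i^j, V - U_i^j)$ satisfies $|\delta^+_G(U_i^j)| \le 2^{k-1} \le 2^k$. Since $U$ is $(q, 2^k)$-unbreakable in $G$, the unbreakability dichotomy yields either $|U_i^j \cap U| \le q$ or $|U - U_i^j| \le q$. To rule out the second alternative, observe that $(U_i^j, V - U_i^j)$ is by construction a $(v_i, Q_j)$-cut, so $Q_j \subseteq V - U_i^j$; combined with $Q_j \subseteq U$, this gives $Q_j \subseteq U - U_i^j$ and hence $|U - U_i^j| \ge |Q_j|$. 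As soon as $|Q_j|$ strictly exceeds $q$, the second alternative is impossible, so $|U_i^j \cap U| \le q$ holds. Summing the $2\lambda$ bounds yields $|U_i| \le 2\lambda q$.

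The main obstacle is the borderline case $|U - U_i^j| = q$ (which can a priori arise since $|Q_j| = q$), where the unbreakability alternative $|U - U_i^j| \le q$ is satisfied and does not by itself constrain $|U_i^j \cap U|$ --- in principle $|U_i^j \cap U|$ could be as large as $|U| - q$. I would resolve this in one of two essentially equivalent ways: either by sampling $Q_j$ of size $q+1$ rather than $q$, a trivial modification that propagates through Algorithm \ref{alg:cap} and its downstream analysis without loss, so that $|U - U_i^j| \ge q+1 > q$ deterministically forces $|U_i^j \cap U| \le q$; or by exploiting the randomness of $Q_j$ and arguing that the bad event $Q_j = U - U_i^j$ is a fixed-point condition that pins $Q_j$ to one of only a few specific $q$-subsets of $U$, so its probability is at most $\mathrm{poly}(n)^{-1}$ and a union bound over the $\lambda = 50\log n$ trials keeps every $U_i^j$ and $W_i^j$ in the good case with high probability.
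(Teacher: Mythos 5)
Your approach is the paper's: decompose $U_i$ as $\bigcup_{j}(U_i^j \cup W_i^j) \cap U$, bound each piece by $q$ via the unbreakability dichotomy (\Cref{def:unbreakable}) applied to the small cuts from \Cref{lem:important_separators}, and use $Q_j \subseteq U - U_i^j$ to rule out the alternative $|U - U_i^j| \leq q$. Your borderline worry is genuine, and in fact the paper's own proof shares it: the paper asserts that $|U_i^j \cap U| \geq q$ together with $|U - U_i^j| \geq q$ would contradict $(q, 2^k)$-unbreakability, but \Cref{def:unbreakable} uses non-strict inequalities, so both quantities being exactly $q$ is consistent with unbreakability; and if $Q_j = U - U_i^j$ then $|U_i^j \cap U| = |U| - q$ can vastly exceed $q$ whenever $|U| > 2q$, which is the relevant regime. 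Your first fix (take $|Q_j| = q+1$) cleanly closes the gap, forcing $|U - U_i^j| > q$ and hence $|U_i^j \cap U| \leq q$ deterministically, and it propagates through the probability estimate in \Cref{clm:fpt_whp} with only constant changes. Your second fix is shakier than ``essentially equivalent'' suggests: $U_i^j$ is a deterministic function of $Q_j$, so the bad event is the fixed-point condition $Q_j = U - \is(G, v_i, Q_j, k, +)$, and absent a structural argument there is no a priori bound on how many of the $\binom{|U|}{q}$ candidate sets are fixed points, so the union bound you would need is not available.
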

\begin{proof}
    We claim that for each $j \in [1, \lambda]$, $|U_i^j \cap U| < q$ and $|W_i^j \cap U| < q$. Note that by Lemma \ref{lem:important_separators}, $|\delta^+_{G}(U_i^j)| \leq 2^k$ and $|\delta^-_{G}(W_i^j)| \leq 2^k$. Additionally, $Q_j \subseteq V - U_i^j$ and $Q_j \subseteq V - W_i^j$, so $|(V-U_i^j) \cap U| \geq q$ and $|(V-W_i^j) \cap U| \geq q$. Then if $|U_i^j \cap U| \geq q$ or $|W_i^j \cap U| \geq q$, it would imply that $U$ is $(q, 2^k)$-breakable, contradicting our assumption that $U$ is $(q, 2^k)$-unbreakable in $G$. 
    \end{proof}

Using Claim \ref{clm:small_U_i}, we can now bound the time complexity of our construction of $H$. 

\begin{claim}
    Algorithm \ref{alg:cap} runs in time $O(q^22^kmn + qn\log(n) \cdot (k2^km+k^24^k))$.
\end{claim}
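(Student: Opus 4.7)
The plan is to bound the running time of each major computational step in Algorithm \ref{alg:cap} separately and sum the contributions. The three ingredients that drive the cost are: (i) the construction of the $k$-FT single-source connectivity preservers making up $J$, (ii) the important-cut computations in the outer double loop over $i \in [1,n]$ and $j \in [1,\lambda]$, and (iii) the important-cut computations in the inner loop over $u \in U_i$.

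First, for $J = \cup_{u \in U'} \ssc(G, u, k)$, the construction of \cite{baswana2018fault} produces each $\ssc(G, u, k)$ in $O(2^k mn)$ time. Since $|U'| = \min(5q^2, |U|) \leq 5q^2$, the total cost of building $J$ is $O(q^2 2^k mn)$, which accounts for the first term in the claimed bound.

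Second, every invocation of $\is(G, \cdot, \cdot, k, \pm)$ or $\is(G, \cdot, \cdot, k+1, \pm)$ runs in $O(k 2^k m + k^2 4^k)$ time by \Cref{lem:important_separators}. In the outer double loop, we perform $O(n \lambda) = O(n \log n)$ such calls (two per pair $(i,j)$) to compute $U_i^j$ and $W_i^j$, contributing $O(n \log n \cdot (k 2^k m + k^2 4^k))$. For the inner loop, \Cref{clm:small_U_i} gives $|U_i| \leq 2\lambda q = O(q \log n)$, so summing over $i$ produces at most $O(qn \log n)$ calls (two per $u \in U_i$) to compute $S_i^u$ and $T_i^u$, contributing $O(qn \log n \cdot (k 2^k m + k^2 4^k))$.

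Combining these bounds yields a total of
\[
O(q^2 2^k mn) + O(n \log n \cdot (k 2^k m + k^2 4^k)) + O(qn \log n \cdot (k 2^k m + k^2 4^k)),
\]
and since $q \geq 1$ the middle term is absorbed by the last, giving the stated bound $O(q^2 2^k mn + qn \log(n) \cdot (k 2^k m + k^2 4^k))$. The only non-trivial step is ensuring the inner-loop work is properly bounded, which is immediate from \Cref{clm:small_U_i}; other than that, the argument is a direct accounting of the cost of each subroutine call against the loop counts in Algorithm \ref{alg:cap}.
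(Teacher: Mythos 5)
Your proof is correct and matches the paper's argument: charge $O(q^2 2^k mn)$ to the construction of $J$, then bound the number of calls to $\is$ by $O(\lambda + |U_i|) = O(\lambda q)$ per vertex $v_i$ using \Cref{clm:small_U_i}, with each call costing $O(k2^km + k^24^k)$ by \Cref{lem:important_separators}. The only cosmetic difference is that you split the $\lambda$-call outer loop from the $|U_i|$-call inner loop and absorb the former into the latter, whereas the paper sums them to $O(\lambda q)$ per vertex directly; these are the same accounting.
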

\begin{proof}
    We can compute a $k$-FT single-source (symmetric) connectivity preserver in time $O(2^kmn)$ by Theorem 2 of \cite{baswana2018fault}. Consequently, subgraph $J$ can be computed in $O(q^22^kmn)$ time. 
    Sets $Q_1, \dots, Q_{\lambda}$ can be computed in $O(\lambda qn)$ time. 

    For each vertex $v_i \in V(G)$, we make $O(\lambda + |U_i|)$ calls to $\is(G, \cdot, \cdot, k+1, \cdot)$. Each call can be completed in time $O(k2^km + k^24^k)$ by Lemma \ref{lem:important_separators}. By Claim \ref{clm:small_U_i}, $|U_i| \leq 2\lambda q$. 
    Then for each vertex $v_i \in V(G)$, we spend time $O((\lambda + |U_i|) \cdot (k2^km + k^24^k)) = O(\lambda q \cdot (k2^km + k^24^k))$. Then the total time complexity contributed by the for loop in our construction is $O(\lambda qn(k2^km + k^24^k))$. 

    We conclude that our construction of $E'$ runs in time
    $
     O(q^22^kmn + q n \log(n) \cdot (k2^km + k^24^k)).
    $
\end{proof}

\paragraph{Correctness.}
We now verify that our edge set $E'$ contains all edges in $E(G)$ that cannot be safely removed from $G$. This property of $E'$ is summarized in the following lemma. 

\begin{restatable}{lemma}{fptcorrectness} \label{lem:fpt_correctness}
    For all $e \in E(G)$, if $e$ is $k$-fault critical with respect to $U \times V(G)$, then $e \in E'$.  
\end{restatable}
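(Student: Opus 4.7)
The plan is to show that every $k$-fault critical edge $e=(s,t)$ lies in $E'$ with probability $1-n^{-\omega(1)}$ over the random samples $\{Q_j\}_{j\in[1,\lambda]}$, and then to union-bound over the $O(n^2)$ candidate edges. Fix $e$ with some witness $(u,v,F)\in U\times V\times 2^{E(G)}$. Without loss of generality I assume $u$ cannot reach $v$ in $G-F-e$; the symmetric case where $v$ cannot reach $u$ is handled identically by replacing $v_i=t$ and $S_i^u$ with $v_i=s$ and $T_i^u$ throughout. I define $P\subseteq V$ to be the vertices reachable from $u$ in $G-F-e$. A short simple-path argument shows that $(P,V-P)$ is an out-reachable $(u,v)$-cut with $u,s\in P$, $v,t\notin P$, $|\delta_G^+(P)|\le k+1$, and $e\in\delta_G^+(P)$. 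Let $C$ be the SCC of $G-F$ containing $\{u,v,s,t\}$ and let $C^*$ be the giant SCC from \Cref{clm:giant_component}.

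I then split into three cases. If $|U|\le 5q^2$ then $U'=U$, so $\ssc(G,u,k)\subseteq E(J)\subseteq E'$ already captures $e$ via the witness $(u,v,F)$. Otherwise $|U|>5q^2$ and $|U'|=5q^2$. If $C=C^*$, I apply $(q,2^k)$-unbreakability to the cut $(P,V-P)$ of size $\le k+1\le 2^k$ to force one of $|P\cap U|$, $|(V-P)\cap U|$ to be at most $q$; the other side intersects $U'\cap C$ (which has size at least $5q^2-2q$) in at least $5q^2-3q>0$ vertices. Picking any $w$ in this intersection and using that $\delta^+_{G-F-e}(P)=\emptyset$, either $(w,u,F)$ or $(w,v,F)$ witnesses $e$'s criticality, and thus $e\in \ssc(G,w,k)\subseteq E(J)\subseteq E'$.

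The main case is $|U|>5q^2$ with $C\ne C^*$, where $|C\cap U|\le 2q$. Since $C,C^*$ are distinct SCCs of $G-F$, at least one direction of reachability between them fails in $G-F$; I treat the ``out-separated'' case in which $C$ cannot reach $C^*$, with the dual ``in-separated'' case handled analogously using $W_i^j$ in place of $U_i^j$. Let $X$ be the set of vertices reachable from $C$ in $G-F$: then $(X,V-X)$ is an out-reachable cut with $\delta_G^+(X)\subseteq F$, so $|\delta_G^+(X)|\le k$, and unbreakability combined with $|(V-X)\cap U|\ge |U|-2q>q$ yields $|X\cap U|\le q$. Setting $v_i=t\in X$, an elementary calculation shows that when $|U|>5q^2$, a uniform random $q$-subset $Q_j$ of $U$ avoids $X$ with constant probability; over $\lambda=50\log n$ independent trials, some $Q_j$ avoids $X$ with probability $1-n^{-\omega(1)}$. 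For such a $j$, $(X,V-X)$ is an out-reachable $(v_i,Q_j)$-cut of size $\le k$, so \Cref{clm:impcut farthest} together with the defining property of $\is$ forces $X\subseteq U_i^j$, giving $u\in U_i$.

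To close, the construction now computes $S_i^u=\is(G,u,t,k+1,+)$. Because $(P,V-P)$ is an out-reachable $(u,t)$-cut of size $\le k+1$, the same containment principle yields $P\subseteq S_i^u$, so $s\in P\subseteq S_i^u$ while $t\notin S_i^u$ (as $S_i^u$ is a $(u,t)$-cut); therefore $e=(s,t)\in \delta_G^+(S_i^u)\subseteq E_i\subseteq E'$. I anticipate the principal difficulty to lie not in any single step but in the bookkeeping of the case analysis: the direction of broken connectivity (Case I vs.\ II), whether $C$ is giant, and (when not) the direction of separation between $C$ and $C^*$ give several combinations, each requiring the correct choice of $v_i\in\{s,t\}$ and the correct ``$\pm$'' variant of the important-cut machinery.
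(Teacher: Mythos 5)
Your proof is correct and follows essentially the same strategy as the paper: the giant-SCC and small-$U$ cases are absorbed into the single-source preservers forming $J$, and the remaining case uses the random samples $Q_j$ plus the $\is$ container to place $u$ in $U_i$ for the right index, so that the out-reachable (or in-reachable) cut of size $\le k+1$ separating $u$ from $t$ (or $s$) is contained in $S_i^u$ (or $T_i^u$) and catches $e$. The only cosmetic differences are that you define the auxiliary sets $P$ and $X$ explicitly as reachability closures (where the paper abstracts them via Claim~\ref{clm:fpt_crit_struct} and the $(X_e,Z_e)$-cut $Y_e$), and you establish only $u\in U_i$ for the relevant endpoint of $e$ rather than the paper's stronger $u_e\in U_i\cap U_j$, which is all that is needed.
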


Fix an edge $e = (v_i, v_j) \in E(G)$. We will show that if $e$ is $k$-fault critical with respect to $U \times V(G)$, then $e \in E'$. We start with the following structural claim.

\begin{claim} \label{clm:fpt_crit_struct}
    If $e = (v_i, v_j) \in E(G)$ is $k$-fault critical with respect to $U \times V(G)$, then there exists a vertex $u_e \in U$ and a fault set $F_e \subseteq E(G)$ of size $|F_e| \leq k$ such that vertices $v_i$, $v_j$, and $u_e$ are strongly connected in $G-F_e$, and $G-F_e$ contains either 
    \begin{itemize}
        \item an out-reachable $(u_e, v_j)$-cut $(S, V-S)$ such that $\delta^+_{G-F_e}(S) = \{e\}$, or
        \item an in-reachable $(u_e, v_i)$-cut $(S, V-S)$ such that $\delta^-_{G-F_e}(S) = \{e\}$.
    \end{itemize}
\end{claim}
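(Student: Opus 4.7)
The plan is to mirror the proof of \Cref{claim:decremental_struct} almost verbatim, substituting the hypothesis of $k$-fault criticality for the edge-minimality hypothesis used there. The whole argument amounts to unpacking the definition of $k$-fault critical with respect to $U\times V(G)$ and then invoking \Cref{clm: in out wlog} once to replace an arbitrary unit cut by a reachable one, so I do not expect any substantive obstacle; the only thing requiring care is keeping track of the in/out directions of the cut so that the endpoints of $e=(v_i,v_j)$ land on the correct side.

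First I would invoke the definition of $k$-fault critical with respect to $U\times V(G)$: there exist $u_e\in U$, $v\in V(G)$, and $F_e\subseteq E(G)$ with $|F_e|\le k$ such that $u_e$ and $v$ are strongly connected in $G-F_e$ but not in $(G-e)-F_e$. Since removing $e$ breaks the strong connectivity of $\{u_e,v\}$ in $G-F_e$, edge $e$ must sit inside the SCC of $G-F_e$ containing $u_e$ and $v$; hence $v_i$, $v_j$, and $u_e$ all lie in a common SCC of $G-F_e$, which discharges the first condition of the claim. For the second condition, I would case split on which direction of the strong connectivity is destroyed. In the first case, $(G-e)-F_e$ contains no $u_e\to v$ path, so $G-F_e$ has a $(u_e,v)$-cut $(S,V-S)$ whose outgoing edges are exactly $\{e\}$; in particular $v_i\in S$ and $v_j\in V-S$, so the same set is simultaneously a $(u_e,v_j)$-cut. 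Applying \Cref{clm: in out wlog} inside $G-F_e$ gives an out-reachable $(u_e,v_j)$-cut $(S',V-S')$ with $\delta^+_{G-F_e}(S')\subseteq\{e\}$; this inclusion cannot be empty, since otherwise $u_e$ could not reach $v_j$ in $G-F_e$, contradicting the common-SCC conclusion, so $\delta^+_{G-F_e}(S')=\{e\}$, matching the first alternative of the claim.

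The case where $(G-e)-F_e$ has no $v\to u_e$ path is completely symmetric. Here $G-F_e$ contains a set $T$ with $v\in T$, $u_e\notin T$, and $\delta^+_{G-F_e}(T)=\{e\}$, which forces $v_i\in T$ and $v_j\notin T$; taking $S:=V-T$ turns this into a $(u_e,v_i)$-cut with $\delta^-_{G-F_e}(S)=\{e\}$ (so $v_j\in S$ and $v_i\in V-S$, as the in-cut version of the claim requires). Applying the in-reachable version of \Cref{clm: in out wlog} then yields an in-reachable $(u_e,v_i)$-cut $(S',V-S')$ with $\delta^-_{G-F_e}(S')=\{e\}$, which is the second alternative of the claim. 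The only thing to watch in writing this up is lining up the endpoints of $e$ with the direction of the cut in each case; beyond that bookkeeping, the argument is essentially a restatement of \Cref{claim:decremental_struct}.
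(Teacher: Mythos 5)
Your proposal is correct and follows essentially the same route as the paper, which simply defers to the proof of Claim~\ref{claim:decremental_struct}: unpack the definition of $k$-fault criticality, note that $e$ lies inside the SCC of $G-F_e$ containing $u_e$ and $v$, extract the unit cut from whichever direction of strong connectivity is broken, and then apply \Cref{clm: in out wlog} (with the easy non-emptiness observation) to make it out- or in-reachable. The direction bookkeeping you flag is handled correctly, so nothing is missing.
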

\begin{proof}
    Claim \ref{clm:fpt_crit_struct} follows from an argument identical to the proof of Claim \ref{claim:decremental_struct}.
\end{proof}

 Let $X_e$ be the SCC in $G-F_e$ containing vertices $v_i, v_j, u_e$. 

    \begin{claim} \label{clm:fpt_wlog}
     In the proof of Lemma \ref{lem:fpt_correctness}, we may assume  that $|X_e \cap U| \leq 2q$ and $|U| > 5q^2$, without loss of generality.
    \end{claim}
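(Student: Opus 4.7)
The plan is to handle the two assumptions separately: in each case I will show that in the complementary regime the edge $e$ already sits in $E(J) \subseteq E'$, so Lemma~\ref{lem:fpt_correctness} becomes trivial for $e$ and nothing is lost by imposing the assumption.

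For the second assumption ($|U| > 5q^2$), I would argue that if instead $|U| \leq 5q^2$, then $U' = U$ in Algorithm~\ref{alg:cap}, and $J = \bigcup_{u \in U} \ssc(G, u, k)$ is itself a $k$-FT sourcewise connectivity preserver of $G$ with respect to $U$. Indeed, for any $(s, t) \in U \times V$ and fault set $F$ with $|F| \leq k$, if $s$ and $t$ are strongly connected in $G - F$, then $\ssc(G, s, k) \subseteq J$ preserves this connectivity in $J - F$. Hence any $k$-fault critical edge $e$ with respect to $U \times V$ must lie in $J$: otherwise $J - e = J$ would still certify the strong connectivity that $(G-e) - F$ fails to preserve.

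For the first assumption ($|X_e \cap U| \leq 2q$), I would assume $|U| > 5q^2$ and show that if $|X_e \cap U| > 2q$ then again $e \in E(J)$. Applying Claim~\ref{clm:giant_component} to $U$ in $G - F_e$ (which is legitimate since $|F_e| \leq k \leq 2^k$ and $U$ is $(q, 2^k)$-unbreakable) yields an SCC $C$ of $G - F_e$ with $|C \cap U| \geq |U| - 2q$. Since distinct SCCs of $G - F_e$ are vertex-disjoint, if $X_e \neq C$ then $|X_e \cap U| \leq |U| - |C \cap U| \leq 2q$, contradicting our assumption. So $X_e = C$ and $|U \setminus X_e| \leq 2q$. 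Since $|U'| = 5q^2$, pigeonhole gives $|U' \cap X_e| \geq 5q^2 - 2q \geq 1$, and I may pick any $u \in U' \cap X_e$. Unwinding the definition of $k$-fault criticality (as in Claim~\ref{clm:fpt_crit_struct}), let $(s, t, F_e)$ be the witness with $s \in U$; then $s, t, v_i, v_j, u$ all lie in $X_e$. By the $k$-FT single-source guarantee for $\ssc(G, u, k) \subseteq J$, the vertex $u$ is strongly connected to both $s$ and $t$ in $\ssc(G, u, k) - F_e \subseteq J - F_e$, so $s$ and $t$ are strongly connected in $J - F_e$ by transitivity through $u$. If $e \notin E(J)$, then $J - e = J$, so $s$ and $t$ would remain strongly connected in $(J - e) - F_e \subseteq (G - e) - F_e$, contradicting the criticality of $e$. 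Hence $e \in E(J) \subseteq E'$.

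The main obstacle is the bookkeeping around identifying $X_e$ with the giant component $C$ of Claim~\ref{clm:giant_component} and then threading a witness of strong connectivity through an intermediate vertex $u \in U' \cap X_e$ whose single-source preserver already lies in $J$. Once these coincide, the gap between $|U'| = 5q^2$ and $|U \setminus X_e| \leq 2q$, combined with transitivity of strong connectivity in $J - F_e$, closes the argument without needing any machinery beyond the giant component claim and the defining guarantee of $\ssc(G, u, k)$.
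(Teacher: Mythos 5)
Your proof is correct and takes essentially the same route as the paper: show that if either assumption fails, the edge $e$ already lies in $E(J)\subseteq E'$. The paper handles the $|X_e\cap U|>2q$ case by citing Claim~\ref{clm:large_scc}; you instead re-derive that conclusion from Claim~\ref{clm:giant_component} together with a pigeonhole on $|U'|=5q^2$ versus $|U\setminus X_e|\leq 2q$, which is actually a bit more careful since Claim~\ref{clm:large_scc} as stated refers to the existential construction's $J$ (built from $\ssc(H,u,k)$ with $|U'|=\min(2q+1,|U|)$) rather than the FPT construction's $J$ (built from $\ssc(G,u,k)$ with $|U'|=\min(5q^2,|U|)$).
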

    \begin{proof}
            By Claim \ref{clm:large_scc}, if $|X_e \cap U| > 2q$, then $v_i$, $v_j$, and $u_e$ are strongly connected in $J-F_e$. However, $v_i$, $v_j$, and $u_e$ are not all strongly connected in $(J-e)-F_e$, so $e \in E(J) \subseteq E(H)$. Then in the proof of Lemma \ref{lem:fpt_correctness}, we may assume without loss of generality that $|X_e \cap U| \leq 2q$. Additionally, if $|U| \leq 5q^2$, then $u_e \in U'$, so $v_i$, $v_j$, and $u_e$ are strongly connected in $J-F_e$. Again this implies that $e \in E(J) \subseteq E(H)$. Then we may assume without loss of generality that $|U| > 5q^2$. 
    \end{proof}

  We may assume without loss of generality that $|X_e \cap U| \leq 2q$ and $|U| > 5q^2$ by Claim \ref{clm:fpt_wlog}. Then by Claim \ref{clm:giant_component}, there is an SCC $Z_e \neq X_e$ in $G-F_e$ such that $|Z_e \cap U| \geq |U| - 2q$. Then there is an $(X_e, Z_e)$-cut $(Y_e, V-Y_e)$ in $G-F_e$ such that  either $\delta^+_{G-F_e}(Y_e) = \emptyset$ or $\delta^-_{G-F_e}(Y_e) = \emptyset$, so $|\delta^+_G(Y_e)| \leq k$ or $|\delta^-_G(Y_e)| \leq k$.  Moreover, $|Y_e \cap U| \leq 2q$, since $|Z_e \cap U| \geq |U| - 2q$.



    \begin{claim}
        With high probability, at least one of the sets $Q_j \subseteq U$, $j \in [1, \lambda]$, sampled during the construction of $H$ satisfies $Q_j \cap Y_e = \emptyset$. 
        \label{clm:fpt_whp}
    \end{claim}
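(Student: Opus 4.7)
The plan is to show that for a single fixed set $Y_e$ with $|Y_e \cap U| \le 2q$ and $|U| > 5q^2$, the probability that a single random $Q_j \in \binom{U}{q}$ avoids $Y_e$ is bounded below by a constant, and then use independence across the $\lambda = 50 \log n$ samples.

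First I would compute the probability exactly as a ratio of binomial coefficients. Writing $a = |Y_e \cap U|$ and $N = |U|$, a single uniformly random $q$-subset $Q_j$ of $U$ satisfies
\[
\Pr[Q_j \cap Y_e = \emptyset] \;=\; \frac{\binom{N-a}{q}}{\binom{N}{q}} \;=\; \prod_{i=0}^{q-1}\frac{N-a-i}{N-i} \;\ge\; \left(1 - \frac{a+q}{N}\right)^{q}.
\]
Plugging in $a \le 2q$ and $N > 5q^2$ gives $(a+q)/N < 3/(5q)$, so
\[
\Pr[Q_j \cap Y_e = \emptyset] \;\ge\; \left(1 - \tfrac{3}{5q}\right)^{q} \;\ge\; e^{-6/5} \;=:\; c,
\]
where the last inequality uses the elementary bound $1 - x \ge e^{-2x}$ valid for $x \in [0, 1/2]$ (which covers $q \ge 2$; the case $q = 1$ is immediate since then $Y_e \cap U$ has at most $2$ elements of a universe of size $>5$).

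Second, since $Q_1, \ldots, Q_\lambda$ are sampled independently, the probability that \emph{every} $Q_j$ intersects $Y_e$ is at most $(1-c)^{\lambda}$. Substituting $\lambda = 50 \log n$ and the constant $c = e^{-6/5}$ yields
\[
\Pr\bigl[\, \forall\, j \in [1,\lambda] : Q_j \cap Y_e \ne \emptyset \,\bigr] \;\le\; (1-c)^{50 \log n} \;\le\; n^{-\Omega(1)},
\]
which is the desired high-probability bound.

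I do not anticipate any real obstacle here; the only things to be careful about are (i) verifying the elementary inequality $\Pr[Q_j \cap Y_e = \emptyset] \ge (1 - (a+q)/N)^q$ using the product form of the hypergeometric probability, and (ii) ensuring the constants in $\lambda = 50 \log n$ are large enough to absorb $\log(1/(1-c))$ and still give a polynomially small failure probability (so that later we can union-bound over the at most $m \le n^2$ edges $e$ when applying the claim within the proof of \Cref{lem:fpt_correctness}).
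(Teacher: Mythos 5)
Your proof is correct and takes essentially the same approach as the paper: bound the probability that a single random $q$-subset $Q_j$ avoids $Y_e$ by a constant (using $|Y_e \cap U| \le 2q$ and $|U| > 5q^2$), then amplify over the $\lambda = 50\log n$ independent samples. The only cosmetic difference is that the paper upper-bounds $\Pr[Q_j \cap Y_e \neq \emptyset] \le 4/5$ via a union bound over $u \in Y_e \cap U$, whereas you lower-bound $\Pr[Q_j \cap Y_e = \emptyset]$ directly via the hypergeometric product; both yield a constant and the identical conclusion.
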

    \begin{proof}
 We first observe that for a fixed index $j \in [1, \lambda]$, $Q_j \cap Y_e = \emptyset$ with constant probability:
   $$
   \Pr[Q_j \cap Y_e \neq \emptyset] \leq \sum_{u \in Y_e \cap U} \Pr[u \in Q_j] \leq |Y_e \cap U| \cdot \frac{\binom{|U|}{q-1}}{\binom{|U|}{q}} \leq 2q \cdot \frac{q}{|U|-q+1} \leq \frac{4q^2}{|U|} \leq \frac{4}{5}.
   $$
   Then the probability that none of our sampled sets $Q_j \subseteq U$, $j \in [1, \lambda]$ are disjoint from $Y$ is at most
   $$ 
   \left(\frac{4}{5}\right)^{\lambda}  \leq e^{- \lambda / 5} = e^{-10 \log n} \leq n^{-10}. 
   $$
   Then with high probability, there exists a set $Q_j$, $j \in [1, \lambda]$, such that $Q_j \cap Y_e = \emptyset$. 
    \end{proof}

    With the above discussion in mind, we can now prove the following claim about edges in $E(G)$ that are $k$-fault critical with respect to $U \times V(G)$.

   \begin{claim} \label{clm:u_e_in_U_i}
    If edge $e = (v_i, v_j) \in E(G)$ is $k$-fault critical with respect to vertex pair $(u_e, v) \in U \times V(G)$, then $u_e \in U_i \cap U_j$.
   \end{claim}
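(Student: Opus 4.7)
The plan is to show that $u_e$, which by \Cref{clm:fpt_crit_struct} lies in the same SCC $X_e$ of $G - F_e$ as $v_i$ and $v_j$, is captured by the important-cut constructions in Algorithm \ref{alg:cap} centered at both $v_i$ and $v_j$. By \Cref{clm:fpt_wlog} and \Cref{clm:giant_component}, $X_e$ is separated from a giant SCC $Z_e$ of $G - F_e$ by a cut $(Y_e, V - Y_e)$ with $X_e \subseteq Y_e$, $Z_e \subseteq V - Y_e$, and either $\delta^+_G(Y_e) \subseteq F_e$ or $\delta^-_G(Y_e) \subseteq F_e$ (so of size at most $k$ in the respective direction). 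By \Cref{clm:fpt_whp}, with high probability one of the sampled sets $Q_{j^*}$ satisfies $Q_{j^*} \cap Y_e = \emptyset$, and since $v_i, v_j \in X_e \subseteq Y_e$, this also forces $v_i, v_j \notin Q_{j^*}$, so $U_i^{j^*}$, $W_i^{j^*}$, $U_j^{j^*}$, $W_j^{j^*}$ are all nontrivial cuts that were actually computed by the algorithm.

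Next I would split into two cases according to which direction of $Y_e$ is small. In the out-case $|\delta^+_G(Y_e)| \leq k$, let $R_i$ denote the set of vertices reachable from $v_i$ in $G - \delta^+_G(Y_e)$. Using the inclusion $\delta^+_G(Y_e) \subseteq F_e$ together with the fact that $X_e$ is strongly connected in $G - F_e$, any $(v_i, w)$-path in $G - F_e$ for $w \in X_e$ avoids $\delta^+_G(Y_e)$ as well, so $X_e \subseteq R_i$; in particular $u_e \in R_i$. By the proof of \Cref{clm: in out wlog}, $(R_i, V - R_i)$ is an out-reachable $(v_i, Q_{j^*})$-cut with $\delta^+_G(R_i) \subseteq \delta^+_G(Y_e)$, hence of out-size at most $k$. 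The defining property of $\is(G, v_i, Q_{j^*}, k, +) = (U_i^{j^*}, V - U_i^{j^*})$ given by \Cref{lem:important_separators} then yields $R_i \subseteq U_i^{j^*}$, so $u_e \in U_i^{j^*} \subseteq S_i$, and since $u_e \in U$ this gives $u_e \in U_i$. Swapping the roles of $v_i$ and $v_j$ throughout and defining $R_j$ analogously (still $X_e \subseteq R_j$ as $v_j \in X_e$) produces the symmetric conclusion $u_e \in U_j$. The in-case $|\delta^-_G(Y_e)| \leq k$ is completely symmetric: one defines $R_i', R_j'$ to be the vertices that can reach $v_i, v_j$ respectively in $G - \delta^-_G(Y_e)$ and applies the in-reachable version of \Cref{lem:important_separators} to conclude $u_e \in W_i^{j^*} \cap W_j^{j^*} \subseteq S_i \cap S_j$, again giving $u_e \in U_i \cap U_j$.

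The key subtlety — and the main step deserving care — is ensuring that the important-cut side $U_i^{j^*}$ (or $W_i^{j^*}$), which is defined purely from $v_i$ and the randomly sampled set $Q_{j^*}$, actually contains the vertex $u_e$ coming from the $k$-fault-critical configuration for $e$. This is where the inclusion $\delta^+_G(Y_e) \subseteq F_e$ (respectively $\delta^-_G(Y_e) \subseteq F_e$) is indispensable: it lifts reachability within $G - F_e$ to reachability within $G - \delta^+_G(Y_e)$, which is precisely the relation defining $R_i$. Once this lift is established, the containments $R_i \subseteq U_i^{j^*}$ and $R_j \subseteq U_j^{j^*}$ follow routinely from \Cref{clm: in out wlog} and \Cref{lem:important_separators}, and the claim follows by taking the intersection of the two membership conclusions.
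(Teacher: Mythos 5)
Your proposal is correct and follows essentially the same approach as the paper's proof: you identify the small cut $(Y_e, V - Y_e)$ separating $X_e$ from the giant SCC, apply Claim \ref{clm:fpt_whp} to find a sampled set $Q_{j^*}$ disjoint from $Y_e$, shrink $Y_e$ to a reachable $(v_i, Q_{j^*})$-cut while verifying $u_e$ stays inside, and invoke Lemma \ref{lem:important_separators} to conclude $u_e \in U_i$ (and symmetrically $u_e \in U_j$). The only difference is that you carry out the shrinking step via Claim \ref{clm: in out wlog} more explicitly (defining $R_i$ and arguing $X_e \subseteq R_i$ from the inclusion $\delta^+_G(Y_e) \subseteq F_e$), whereas the paper states this in a more compressed form; both are the same argument.
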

   \begin{proof}
        By the earlier discussion, there exists an out-minimal $(X_e, Z_e)$-cut $(Y_e, V(G)-Y_e)$ in $G-F_e$ such that $|Y_e \cap U| \leq 2q$, and either $\delta^+_{G-F_e}(Y_e) = \emptyset$ or $\delta^-_{G-F_e}(Y_e) = \emptyset$. Moreover, by Claim \ref{clm:fpt_whp}, there exists a $\ell \in [1, \lambda]$ such that $Q_{\ell} \subseteq V(G) - Y_e$, with high probability.  

        We now split our analysis into two cases based on whether $\delta^+_{G-F_e}(Y_e) = \emptyset$ or $\delta^-_{G-F_e}(Y_e) = \emptyset$. 
        \begin{itemize}
            \item If  $\delta^+_{G-F_e}(Y_e) = \emptyset$, then  $\delta^+_{G}(Y_e) \subseteq F_e$, so $|\delta^+_G(Y_e)| \leq k$. We observe that $(Y_e, V-Y_e)$ is a $(v_i, Q_{\ell})$-cut as well.   Since  $v_i$ and $u_e$ lie in the same SCC $X_e \subseteq Y_e$ in $G-F_e$, we conclude  that there exists an \textit{out-reachable} $(v_i, Q_{\ell})$-cut $(Y_e^*, V-Y_e^*)$ in $G$ of size $|\delta^+_G(Y_e^*)| \leq k$ such that $u_e \in X_e \subseteq Y_e^*$. 
            Then by Lemma \ref{lem:important_separators}, $Y_e^* \subseteq U_i^{\ell}$, so $u_e \in U_i^{\ell} \cap U \subseteq  U_i$. 
            By an identical argument, $u_e \in U_j$.
            \item  If  $\delta^-_{G-F_e}(Y_e) = \emptyset$, then  $\delta^-_{G}(Y_e) \subseteq F_e$, so $|\delta^-_G(Y_e)| \leq k$. We observe that $(Y_e, V-Y_e)$ is a $(v_i, Q_{\ell})$-cut as well.   Since  $v_i$ and $u_e$ lie in the same SCC $X_e \subseteq Y_e$ in $G-F_e$, we conclude that there exists an \textit{in-reachable} $(v_i, Q_{\ell})$-cut $(Y_e^*, V-Y_e^*)$ in $G$ of size $|\delta^-_G(Y_e^*)| \leq k$ such that $u_e \in X_e \subseteq Y_e^*$. Then by Lemma \ref{lem:important_separators}, $Y_e^* \subseteq W_i^{\ell}$, so $u_e \in W_i^{\ell} \cap U \subseteq  U_i$. 
            By an identical argument, $u_e \in U_j$.
        \end{itemize}
        We conclude that $u_e \in U_i \cap U_j$, as claimed.
   \end{proof}

    We are ready to finish the proof of  Lemma \ref{lem:fpt_correctness}.

   \fptcorrectness*
   \begin{proof}[Proof of Lemma \ref{lem:fpt_correctness}]
       If $e = (v_i, v_j)$ is $k$-fault critical with respect to $U \times V(G)$, then by Claim \ref{clm:fpt_crit_struct},  there exists a vertex $u_e \in U$ and a fault set $F_e \subseteq E(G)$ of size $|F_e| \leq k$ such that vertices $v_i$, $v_j$, and $u_e$ are strongly connected in $G-F_e$, and $G-F_e$ contains either 
    \begin{itemize}
        \item an out-reachable $(u_e, v_j)$-cut $(S, V-S)$ such that $\delta^+_{G-F_e}(S) = \{e\}$, or
        \item an in-reachable $(u_e, v_i)$-cut $(S, V-S)$ such that $\delta^-_{G-F_e}(S) = \{e\}$.
    \end{itemize}
    Moreover, by Claim \ref{clm:u_e_in_U_i}, $u_e \in U_i \cap U_j$. Now we split our analysis into two cases.
    \begin{itemize}
        \item \textbf{Case 1:} $G - F_e$ contains an out-reachable $(u_e, v_j)$-cut $(S, V-S)$ such that $\delta^+_{G-F_e}(S) = \{e\}$. Then  $(S, V-S)$ is an   out-reachable $(u_e, v_j)$-cut in $G$ of size $|\delta^+_G(S)| \leq k+1$. 
        Recall that in our construction of set $E_j$ in Algorithm \ref{alg:cap}, $(S_j^{u_e}, V-S_j^{u_e}) = \is(G, u_e, v_j, k+1, +)$.         
        We conclude that $S \subseteq S_j^{u_e}$ by Lemma \ref{lem:important_separators}, so $e \in \delta^+_G(S_j^{u_e}) \subseteq E_j \subseteq E'$.
        \item \textbf{Case 2:} $G - F_e$ contains an in-reachable $(u_e, v_i)$-cut $(S, V-S)$ such that $\delta^-_{G-F_e}(S) = \{e\}$. Then $(S, V-S)$ is an in-reachable $(u_e, v_i)$-cut in $G$ of size $|\delta^-_G(S)| \leq k+1$.
        Recall that in our construction of set $E_i$ in Algorithm \ref{alg:cap},  $(T_i^{u_e}, V-T_i^{u_e}) = \is(G, u_e, v_i, k+1, -)$. 
        We conclude that $S \subseteq T_i^{u_e}$ by Lemma \ref{lem:important_separators}, so $e \in \delta^-_G(T^{u_e}_i) \subseteq E_i \subseteq E'$.   
    \end{itemize}
    We conclude that $e \in E'$. 
   \end{proof}

\paragraph{Size Analysis.}
We now  conclude the proof of Lemma \ref{lem:fpt_swise} by bounding $|E'|$.

\begin{claim}
    $|E'| = O(2^kq^2n+2^kqn\log n)$.
\end{claim}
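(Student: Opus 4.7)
The plan is to bound $|E'|$ by accounting for each piece in the construction separately: the subgraph $E(J)$ built from single-source preservers, and the union $\bigcup_{i \in [1,n]} E_i$ of important-cut edge sets.

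First I would bound $|E(J)|$. By construction $J = \bigcup_{u \in U'} \ssc(G, u, k)$ with $|U'| \le 5q^2$, and each $\ssc(G, u, k)$ has $O(2^k n)$ edges by \cite{baswana2018fault}. Hence $|E(J)| = O(2^k q^2 n)$.

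Next I would bound $\sum_{i=1}^n |E_i|$. Each $E_i = \bigcup_{u \in U_i} \delta_G^+(S_i^u) \cup \delta_G^-(T_i^u)$, where $(S_i^u, V - S_i^u) = \is(G, u, v_i, k+1, +)$ and $(T_i^u, V - T_i^u) = \is(G, u, v_i, k+1, -)$. By \Cref{lem:important_separators}, we have $|\delta_G^+(S_i^u)| \le 2^k$ and $|\delta_G^-(T_i^u)| \le 2^k$, so each term in the union contributes at most $2 \cdot 2^k$ edges. Combining this with the already-established bound $|U_i| \le 2\lambda q = O(q \log n)$ from \Cref{clm:small_U_i}, I get $|E_i| = O(2^k q \log n)$. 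Summing over all $n$ choices of $i$ yields $\sum_i |E_i| = O(2^k q n \log n)$.

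Adding the two contributions gives $|E'| \le |E(J)| + \sum_i |E_i| = O(2^k q^2 n + 2^k q n \log n)$, which is exactly the claimed bound. No step here is particularly subtle: all the key inequalities ($|\delta^+_G(S_i^u)|, |\delta^-_G(T_i^u)| \le 2^k$ from the important cut container, the $|U_i| = O(q\log n)$ bound, and the $|E(J)| = O(2^k q^2 n)$ bound from the single-source preserver of \cite{baswana2018fault}) have already been established earlier in the section, so the proof is essentially a direct summation.
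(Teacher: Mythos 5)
Your proposal is correct and follows essentially the same route as the paper: bound $|E(J)| = O(2^k q^2 n)$ via the single-source preserver size of \cite{baswana2018fault} applied to the $|U'| \le 5q^2$ roots, bound each $|E_i|$ via the $2^k$-size bound of Lemma~\ref{lem:important_separators} (applied with parameter $k+1$) together with $|U_i| \le 2\lambda q = O(q \log n)$ from Claim~\ref{clm:small_U_i}, and sum. The paper writes the same computation as a single chain of inequalities; yours breaks it into two sentences, but the content and cited lemmas are identical.
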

\begin{proof}
    Since $J$ is the union of $|U'| = O(q^2)$ many $k$-FT single-source connectivity preservers, $|E(J)| = O(2^kq^2n)$ by Theorem 2 of \cite{baswana2018fault}. 

    We conclude that
    \begin{align*}
        |E'| & = |E(J)| + \sum_{i \in [1, n]} |E_i| \\
        & \leq O(2^kq^2n) + \sum_{i \in [1, n]} \sum_{u \in U_i}\left(|\delta^+_G(S_i^u)| + |\delta^-_G(T_i^u)|\right) \\
        & \leq O(2^kq^2n) + \sum_{i \in [1, n]} \sum_{u \in U_i} 2^{k+1} & \text{by Lemma \ref{lem:important_separators}} \\
          & \leq O(2^kq^2n) +  2^{k+1} \cdot 2\lambda q \cdot n & \text{by Claim \ref{clm:small_U_i}} \\
          & \leq O(2^kq^2n+2^kqn\log n),
    \end{align*}
    as claimed.
\end{proof}

\subsubsection{Applying the directed expander hierarchy}

We will combine  Lemma \ref{lem:fpt_swise} with the polynomial directed expander hierarchy construction of Lemma \ref{lem:expander_hierarchy} to complete the proof of Lemma \ref{lem:dec_fpt}.

\edgeremoval*
\begin{proof}
 Let $G$ be an $n$-vertex directed graph, and let $k$ be a positive integer. Let $\{V_1, \dots, V_{\ell}\}$ be the directed expander hierarchy for $(\Theta(2^k \sqrt{\log n}), 2^k)$-unbreakable sets specified in Lemma \ref{lem:expander_hierarchy}. Since $q = \Omega(2^k \sqrt{\log n})$, this expander hierarchy can be computed in polynomial time. For each $i \in [1, \ell]$, let $V_{\leq i} = V_1 \cup \dots \cup V_i$, and let $S_i^1, \dots, S_i^{j_i} \subseteq V_{\leq i}$ be the SCCs of $G[V_{\leq i}]$, for some $j_i \in [1, n]$. Additionally, for each $i \in [1, \ell]$ and $j \in [1, j_i]$, we let $U_i^j = S_i^j \cap V_i$. 

    For each $i \in [1, \ell]$ and $j \in [1, j_i]$ we apply Lemma \ref{lem:fpt_swise} on graph $G[S_i^j]$ and $(\Theta(2^k \sqrt{\log n}), 2^k)$-unbreakable set $U_i^j$. Then in $2^{O(k)} \cdot \text{poly}(n)$ time, with high probability, we can compute an edge set $E_i^j \subseteq G[S_i^j]$ that contains every edge in $G[S_i^j]$ that is  $k$-fault critical in $G[S_i^j]$ with respect to vertex pairs $U_i^j \times S_i^j$.

    We let $E_i = \cup_{j \in [1, j_i]}E_i^j$, and let $E' = \cup_{i \in [1, \ell]} E_i$. Set $E'$ can be computed in $2^{O(k)} \cdot \text{poly}(n)$ time.     
    We claim that, with high probability, edge set $E'$ contains every edge in $G$ that is $k$-fault critical, and $|E'| = O(8^k n \log^{5/2} n)$. 

    First, we verify that $|E'| = O(8^k n \log^{5/2} n)$. By Lemma \ref{lem:expander_hierarchy}, set $U_i^j$ is $(\Theta(2^k \sqrt{\log n}), 2^k)$-unbreakable in $G[S_i^j]$ for all $i \in [1, \ell]$ and $j \in [1, j_i]$. Then  by Lemma \ref{lem:fpt_swise}, $|E_i^j| = O(8^k \log^{3/2} n |S_i^j|)$. Then we can upper bound the number of edges in $H$ as follows:
    $$
    |E_i| \leq \sum_{j \in [1, j_i]} |E_i^j| \leq O(8^k \log^{3/2} n) \cdot \sum_{[1, j_i]} |S_i^j| = O(8^k n \log^{3/2}n).
    $$
Since there are at most $\ell = O(\log n)$ levels in our directed expander hierarchy, we conclude that 
$$
|E'| \leq \sum_{i \in [1, \ell]} |E_i| \leq O(8^k n \log^{5/2} n). 
$$

We now verify that $E'$ contains every edge in $G$ that is $k$-fault critical, with high probability. Fix an edge $e \in E(G)$ that is $k$-fault critical in $G$. Then there exists a pair of vertices $s, t \in V$ and a fault set $F \subseteq E(G)$ of size $|F| \leq k$ such that  $s$ and $t$ are strongly connected in $G-F$, but not strongly connected in $(G-e)-F$. Let $C \subseteq V(G)$ be the SCC in $G-F$ containing $s$ and $t$. Let $i \in [1, \ell]$ be the largest index such that $V_i \cap C \neq \emptyset$, and let $x \in V_i \cap C$.  Since $C \subseteq V_{\leq i}$, set $C$ is contained in some SCC $S_i^j$ in $G[V_{\leq i}]$, where $j \in [1, j_i]$. Then  $x$ is strongly connected to vertices $s$ and $t$ in $G[S_i^j] - F$. However, $x$ cannot be strongly connected to both $s$ and $t$ in $(G[S_i^j] - e) - F$, since $s$ and $t$ are not strongly connected in $(G-e)-F$. We conclude that edge $e$ is $k$-fault critical in $G[S^j_i]$ with respect to vertex pairs $\{x\} \times \{s, t\}$. Recall that edge set $E_i^j$ contains every edge that is $k$-fault critical in $G[S^j_i]$ with respect to vertex pairs $U_i^j \times S_i^j$, with high probability. Since $x \in C \cap  V_i \subseteq S_i^j \cap V_i = U_i^j$ and  $s, t \in C \subseteq S_i^j$, we conclude that edge $e \in E_i^j \subseteq E'$ with high probability. 

Then by the union bound, $E'$ contains every edge in $G$ that is $k$-fault critical, with high probability. 
\end{proof}

\section{$k$-Connectivity Preservers}

\label{sec:kconn}

In this section, we present our upper bound on $k$-connectivity preservers in \Cref{thm:lambda_conn}.

\lambdaconn*

Recall that a subgraph $H \subseteq G$ is a $k$-connectivity preserver of $G$ if for all $s, t \in V(G)$, $\lambda_H^k(s, t) = \lambda_G^k(s,t)$, where $\lambda_G^k(s,t) = \min(\lambda_G(s,t),k)$ is the $k$-bounded connectivity between $s,t$ in $G$. 

\begin{lemma}[transitivity]
    For three graphs $H' \subseteq H \subseteq G$, if $H'$ is a $k$-connectivity preserver of $H$, and $H$ is a $k$-connectivity preserver of $G$, then $H'$ is a $k$-connectivity preserver of $G$.
    \label{lem:transitivity}
\end{lemma}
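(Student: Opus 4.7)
The plan is to prove \Cref{lem:transitivity} by directly chaining the two preservation equalities using the definition of a $k$-connectivity preserver in terms of the $k$-bounded symmetric connectivity function $\lambda^k$.

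First I would fix an arbitrary pair of vertices $s, t \in V(G)$ and unfold the definitions. Since $H$ is a $k$-connectivity preserver of $G$, we have $\lambda_H^k(s,t) = \lambda_G^k(s,t)$. Since $H'$ is a $k$-connectivity preserver of $H$, we have $\lambda_{H'}^k(s,t) = \lambda_H^k(s,t)$. Combining the two equalities gives $\lambda_{H'}^k(s,t) = \lambda_G^k(s,t)$, which, since $s$ and $t$ were arbitrary, establishes that $H'$ is a $k$-connectivity preserver of $G$.

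The only minor subtlety worth pointing out is that the statement uses subgraphs $H' \subseteq H \subseteq G$, but the proof does not actually require this containment to chain the equalities; it only requires that $V(H') = V(H) = V(G)$, which is implicit in the paper's convention that preservers are spanning subgraphs. This makes the lemma essentially tautological: transitivity of equality of $\lambda^k(\cdot,\cdot)$ does all the work. There is no obstacle.

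\begin{proof}
Fix any $s, t \in V(G)$. Because $H$ is a $k$-connectivity preserver of $G$, $\lambda_H^k(s,t) = \lambda_G^k(s,t)$. Because $H'$ is a $k$-connectivity preserver of $H$, $\lambda_{H'}^k(s,t) = \lambda_H^k(s,t)$. Chaining these,
\[
\lambda_{H'}^k(s,t) \;=\; \lambda_H^k(s,t) \;=\; \lambda_G^k(s,t).
\]
Since $s, t$ were arbitrary, $H'$ is a $k$-connectivity preserver of $G$.
\end{proof}
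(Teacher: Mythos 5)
Your proof is correct and matches the paper's proof essentially verbatim: both chain the equalities $\lambda_{H'}^k(s,t) = \lambda_H^k(s,t) = \lambda_G^k(s,t)$ directly from the definition. The brief remark you add about containment being unnecessary (only equality of vertex sets matters) is accurate but inconsequential.
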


\begin{proof}
    For each pair of vertices $s, t \in V(G)$, by the definition of $k$-connectivity preservers we have $\lambda_{H'}^k(s, t) = \lambda_H^k(s, t) = \lambda_G^k(s, t)$. Therefore, $H'$ is a $k$-connectivity preserver of $G$.
\end{proof}

Our proof of \Cref{thm:lambda_conn} requires the following definition. 

\begin{definition}[$k$-critical graphs] \label{def:k_critical}
    We say that a graph $H$ is \textit{$k$-critical} if no strict edge subgraph $H' \subset H$ of $H$ is a $k$-connectivity preserver of $H$.
\end{definition}

We prove Theorem \ref{thm:lambda_conn} by establishing the following upper bound on the number of edges in an $n$-vertex, $k$-critical directed graph.
\begin{theorem} \label{thm:k_critical}
    Let $H$ be an $n$-vertex, $k$-critical directed graph. Then $|E(H)| = O(k^{1/2}n^{3/2})$.  
\end{theorem}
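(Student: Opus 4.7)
The plan is to follow the two-level decomposition sketched in Section~\ref{sec:techoverview} and turn it into a quantitative edge count. Set $q := \sqrt{nk}$. First, I iteratively refine a partition of $V(H)$: start with $\mathcal{S} = \{V(H)\}$ and $\mathcal{C} = \emptyset$, and while some $S \in \mathcal{S}$ admits a cut $(L, V\setminus L)$ in $H$ with $|\delta^+_H(L)| \leq k$ or $|\delta^-_H(L)| \leq k$ and $\min(|S \cap L|, |S \setminus L|) \geq q$, replace $S$ by $S \cap L$ and $S \setminus L$ in $\mathcal{S}$ and add $(L, V \setminus L)$ to $\mathcal{C}$. Every split creates two parts of size at least $q$, so the final partition satisfies $|\mathcal{S}| \leq n/q$ and $|\mathcal{C}| \leq n/q$, and every final $S \in \mathcal{S}$ is $(q, k)$-unbreakable in $H$. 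The edges of $H$ split naturally into intra-component edges $\bigcup_{S \in \mathcal{S}} E(H[S])$ and inter-component edges, which lie in $\bigcup_{(L, V\setminus L) \in \mathcal{C}} (\delta^+_H(L) \cup \delta^-_H(L))$; I bound each piece separately.

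For intra-component edges I use $k$-criticality: if $e = (u, v) \in E(H)$ were deleted, some $\lambda^k_H(s, t)$ would strictly drop, and hence by max-flow/min-cut duality there exists a cut $(X, V \setminus X)$ in $H$ of size $|\delta^+_H(X)| \leq k$ with $u \in X$ and $v \in V \setminus X$. For $e \in E(H[S])$, the unbreakability of $S$ gives $|S \cap X| \leq q$ or $|S \setminus X| \leq q$. In the first case $u$ has at most $(q - 1)$ out-neighbors in $S \cap X$ and at most $|\delta^+_H(X)| \leq k$ out-neighbors in $S \setminus X$, so the out-degree of $u$ in $H[S]$ is at most $q + k$; I charge $e$ to $u$. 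Symmetrically, in the second case the in-degree of $v$ in $H[S]$ is at most $q + k$ and I charge $e$ to $v$. Every vertex $w \in S$ then receives at most $q + k$ outgoing and $q + k$ incoming charges, giving $|E(H[S])| \leq 2(q + k)|S|$, and summing yields $\sum_S |E(H[S])| = O((q + k)n) = O(n\sqrt{nk})$.

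For inter-component edges I invoke the Gomory--Hu-like tree $T$ on $V(H)$ for $\lambda_H$ promised by \Cref{lemma: demand-pairs}, which satisfies $\lambda_H(s, t) = \min_{e \in T\text{-path}(s, t)} \lambda_H(e)$ and therefore provides a set $\mathcal{D}$ of at most $n - 1$ demand pairs with the same bottleneck property for $\lambda^k_H$. For each $(u, v) \in \mathcal{D}$ let $H^*$ include at most $k$ edge-disjoint $(u, v)$-paths and at most $k$ edge-disjoint $(v, u)$-paths realizing $\lambda^k_H(u, v)$ in $H$. Using the triangle-min inequality $\lambda_G(a, c) \geq \min(\lambda_G(a, b), \lambda_G(b, c))$ (true in every directed $G$ by placing $b$ on whichever side of a min $(a, c)$-cut it belongs to), I get $\lambda^k_{H^*}(s, t) \geq \min_e \lambda^k_{H^*}(e) = \min_e \lambda^k_H(e) = \lambda^k_H(s, t)$ for every pair $(s, t)$, so $H^*$ is a $k$-connectivity preserver of $H$ and by $k$-criticality $H = H^*$. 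For any cut $(L, V \setminus L) \in \mathcal{C}$, say with $|\delta^+_H(L)| \leq k$, the at most $k$ edge-disjoint paths for a single demand pair's $(u, v)$-flow cross $\delta_H^+(L)$ at most $k$ times (edge-disjointness), and therefore cross $\delta_H^-(L)$ at most $2k$ times by the $L$-$R$ interleaving structure; summing over the two flow directions gives $O(k)$ used edges per demand pair per cut. Over $|\mathcal{D}| = O(n)$ pairs and $|\mathcal{C}| = O(n/q)$ cuts, the inter-component edges total $O(n/q) \cdot O(kn) = O(n^{3/2} k^{1/2})$, matching the intra-component count and yielding $|E(H)| = O(n^{3/2} k^{1/2})$.

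The main obstacle I anticipate is the Gomory--Hu-like step: since $T$ offers only connectivity equivalence and not cut equivalence, extending preservation from demand pairs to all pairs hinges entirely on the triangle-min inequality for directed $\lambda$, which I must verify carefully and then propagate through the $\min(\cdot, k)$ truncation. Once that is in place, the $|\mathcal{C}| \leq n/q$ count and the $O(k)$-per-pair crossing bound are straightforward, and combining the two bounds with $q = \sqrt{nk}$ gives exactly $O(k^{1/2} n^{3/2})$.
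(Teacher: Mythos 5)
Your proposal is correct and follows essentially the same approach as the paper's proof: the same two-level decomposition into $(q,k)$-unbreakable parts with $q=\sqrt{nk}$, the same min-cut/charging argument bounding intra-component edges by $O((q+k)|S|)$, and the same Gomory--Hu-style demand-pair reduction combined with the interleaving argument showing $|\delta^-(L)|=O(kn)$ for each cut $(L,R)\in\mathcal{C}$. The triangle-min inequality for $\lambda$ that you flagged as a possible obstacle does hold (including after the $\min(\cdot,k)$ truncation, since $\min$ distributes), and is exactly what the paper uses in \Cref{lemma: demand-pairs}, so there is no gap.
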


We quickly verify that Theorem \ref{thm:k_critical} implies Theorem \ref{thm:lambda_conn}.

\begin{proof}[Proof of Theorem \ref{thm:lambda_conn}]
Let $G = (V, E)$ be an $n$-vertex directed graph, and let $H \subseteq G$ be an edge-minimal $k$-connectivity preserver of $G$. By \Cref{lem:transitivity} and the minimality of $H$, there does not exist a subgraph $H' \subsetneq H$ that is a $k$-connectivity preserver of $H$. Then $H$ is an $n$-vertex, $k$-critical graph. By \Cref{thm:k_critical}, $|E(H)| = O(k^{1/2}n^{3/2})$. 
\end{proof}

We will prove \Cref{thm:k_critical} using a two-level decomposition with regard to $(q,k)$-unbreakability.  Let $H$ be an $n$-vertex, $k$-critical directed graph. Let $q \in [1, n]$ be a parameter of the decomposition that we will optimize later. We maintain a partition $\mathcal{S}$ of vertex set $V$ and a collection of cuts $\mathcal{C}$ as follows. Initially, we let $\mathcal{S} = \{V\}$ and $\mathcal{C} = \emptyset$. While there exists a set $S \in \mathcal{S}$ and a cut $(L, R)$ in $H$ of size at most $\min(|\delta^+_H(L)|, |\delta^-_H(L)|) \leq k$ such that $|S \cap L| \geq q$ and $|S \cap R| \geq q$, we remove set $S$ from $\mathcal{S}$ and we add sets $S \cap L$ and $S \cap R$ to $\mathcal{S}$. Additionally, we add cut $(L, R)$ to collection $\mathcal{C}$. Note that $\mathcal{S}$ remains a partition of $V$ after this operation. We repeat this process until no such set $S \in \mathcal{S}$ and cut $(L, R)$ exists.

\begin{claim} \label{clm:termcondition}
 This process must terminate. After the above process terminates, every set $U \in \mathcal{S}$ is $(q, k)$-unbreakable in $H$. Moreover, $|\mathcal{C}| \leq n/q$. 
\end{claim}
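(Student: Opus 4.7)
\medskip

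\noindent\textbf{Proof proposal for Claim \ref{clm:termcondition}.}

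The plan is to prove all three statements together via a straightforward potential argument, since each split of a set $S \in \mathcal{S}$ is "balanced enough" to give a direct size bound on the number of pieces. The key invariant to track is: every set in $\mathcal{S}$ that was ever produced by a split has size at least $q$. This is immediate from the splitting rule, which requires $|S \cap L| \geq q$ and $|S \cap R| \geq q$ before it fires; thus the two replacement pieces $S \cap L$ and $S \cap R$ each satisfy this lower bound, and subsequent splits can only subdivide them into pieces that again have size at least $q$.

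For \emph{termination}: each iteration strictly increases $|\mathcal{S}|$ by exactly one, while $\mathcal{S}$ remains a partition of a set of size $n$. If at least one split has occurred, then every element of $\mathcal{S}$ arose as one of the two pieces of some split, so $|\mathcal{S}| \cdot q \leq n$, giving $|\mathcal{S}| \leq n/q$. Hence the process performs at most $n/q$ iterations and must halt.

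For the \emph{$(q,k)$-unbreakability} of every final $U \in \mathcal{S}$: I would argue by contradiction. If $U$ were not $(q,k)$-unbreakable in $H$, then by \Cref{def:unbreakable} there exists a cut $(L, V - L)$ with $|\delta^+_H(L)| \leq k$ or $|\delta^-_H(L)| \leq k$ such that both $|U \cap L| > q$ and $|U - L| > q$. But then the pair $(U, (L, V-L))$ satisfies all the conditions that trigger a split in the loop, contradicting the fact that the loop has terminated. So every surviving $U \in \mathcal{S}$ is $(q,k)$-unbreakable in $H$.

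For the \emph{bound} $|\mathcal{C}| \leq n/q$: since exactly one cut is added to $\mathcal{C}$ per split and exactly one net set is added to $\mathcal{S}$ per split (one removed, two inserted), we have $|\mathcal{C}| = |\mathcal{S}| - 1$ at all times. If no splits ever occur, $|\mathcal{C}| = 0 \leq n/q$ trivially. Otherwise, by the invariant above, every set in the final $\mathcal{S}$ has size at least $q$, so $|\mathcal{S}| \leq n/q$ and therefore $|\mathcal{C}| \leq n/q - 1 \leq n/q$. There is no real obstacle here; the entire argument is a one-line potential/partition-counting observation, and the only thing to be careful about is the degenerate case in which no splits happen (e.g., when $n < 2q$), which is handled by the $|\mathcal{C}| = 0$ base case and the fact that any cut of $V$ forces $\min(|L|, |R|) < q$, making $V$ trivially unbreakable.
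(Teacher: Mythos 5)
Your proof is correct and takes essentially the same approach as the paper: termination because $|\mathcal{S}|$ strictly grows each iteration while remaining a partition of $V$, unbreakability directly from the terminating condition, and $|\mathcal{C}|\le n/q$ from the invariant that every post-split piece has size at least $q$. The small refinements you add (the identity $|\mathcal{C}| = |\mathcal{S}| - 1$ and the explicit handling of the no-split degenerate case) are harmless and, if anything, tighten the paper's "$|\mathcal{C}| \le |\mathcal{S}| \le n/q$" slightly.
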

\begin{proof}
     This process must terminate, since  $|\mathcal{S}|$ increases by one after each iteration. By the terminating condition of this procedure,  every set $U \in \mathcal{S}$ is $(q,k)$-unbreakable in $H$. Additionally, every set $U \in \mathcal{S}$ has size at least $|U| \geq q$. We conclude that  $|\mathcal{S}| \leq n/q$, and therefore $|\mathcal{C}| \leq |\mathcal{S}| \leq n/q$. 
\end{proof}

We will bound the number of edges inside each set in $\mathcal{S}$ in \Cref{lemma: k-conn-intra}, and  we will bound the number of edges crossing sets in $\mathcal{S}$ using \Cref{lemma: k-conn-inter}.

\begin{lemma}
    \label{lemma: k-conn-intra}
    Consider a $k$-critical directed graph $H$. For every subset $U \subseteq V(H)$, if $U$ is $(q,k)$-unbreakable in $H$, then the induced subgraph $H[U]$ has at most $O((q+k)|U|)$ edges.
\end{lemma}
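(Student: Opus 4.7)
My plan is to combine two ingredients: $k$-criticality of $H$, which forces every edge to cross a small directed cut, and $(q,k)$-unbreakability of $U$, which pins one side of any such cut to few vertices of $U$. A local-degree charging then yields the bound.

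\textbf{Step 1 (every edge is in a small cut).} I aim to show that for each $e=(u,v) \in E(H)$ there exists $X \subseteq V(H)$ with $u \in X$, $v \notin X$, and $|\delta_H^+(X)| \le k$. Since $H$ is $k$-critical (\Cref{def:k_critical}), subgraph $H-e$ is not a $k$-connectivity preserver of $H$, so there are $s,t \in V(H)$ with $\lambda_{H-e}^k(s,t) < \lambda_H^k(s,t)$. Removing one edge drops each of $\flow(s,t),\flow(t,s)$ by at most one, so if $\lambda_H(s,t) > k$ both flows would remain $\ge k$ and the capped values would be unchanged; hence $\lambda_H(s,t) \le k$ and the actual symmetric connectivity itself decreases, forcing $\flow_{H-e}(s,t) < \flow_H(s,t)$ or $\flow_{H-e}(t,s) < \flow_H(t,s)$. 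A minimum cut in $H-e$ witnessing whichever flow dropped, together with the restored edge $e$, gives an $(X,V-X)$ of outgoing size at most $k$ in $H$ with $e \in \delta_H^+(X)$ in both subcases (once by taking an $(s,t)$-cut and once by taking a $(t,s)$-cut).

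\textbf{Step 2 (unbreakability gives a local degree bound).} Fix $e=(u,v) \in E(H[U])$ and the cut $(X,V-X)$ from Step~1. Since $|\delta_H^+(X)| \le k$ and $U$ is $(q,k)$-unbreakable in $H$, either $|U \cap X| \le q$ or $|U \setminus X| \le q$. In the first subcase, the out-neighbors of $u$ in $H[U]$ are either other vertices of $U \cap X$ (at most $|U \cap X| - 1 \le q-1$) or lie in $(V-X) \cap U$ and are incident to edges of $\delta_H^+(X)$ (at most $k$), giving $|\delta_{H[U]}^+(u)| \le q+k$. The second subcase is symmetric, bounding $v$'s in-neighbors in $H[U]$: they lie in $(V-X) \cap U$ or are incident to $\delta_H^+(X)$, so $|\delta_{H[U]}^-(v)| \le q+k$.

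\textbf{Step 3 (charging).} Let $A = \{w \in U : |\delta_{H[U]}^+(w)| \le q+k\}$ and $B = \{w \in U : |\delta_{H[U]}^-(w)| \le q+k\}$. By Step~2, every edge of $H[U]$ has its tail in $A$ or its head in $B$, so
\[
|E(H[U])| \le \sum_{w \in A} |\delta_{H[U]}^+(w)| + \sum_{w \in B} |\delta_{H[U]}^-(w)| \le 2|U|(q+k) = O((q+k)|U|).
\]
The main technical care is in Step~1, where the drop in $\lambda$ can come from either direction of the flow, and one must check that in the $(t,s)$-cut subcase the recovered cut still has $u$ on the $X$ side and $v$ on the $V-X$ side so that $e \in \delta_H^+(X)$; Steps~2 and~3 are straightforward applications of unbreakability and degree summation.
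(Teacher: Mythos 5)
Your proof is correct and follows essentially the same route as the paper's: Step~1 (every edge of a $k$-critical graph lies in a directed cut of size at most $k$), Step~2 (unbreakability of $U$ forces $\min(\mathrm{outdeg}_{H[U]}(u),\mathrm{indeg}_{H[U]}(v))\le q+k$ — you state it as a case split on which side of the cut has few $U$-vertices, the paper states the contrapositive, but these are the same argument), and Step~3 (charge each edge to its low-degree endpoint, giving $2(q+k)|U|$). Your Step~1 is actually slightly more careful than the paper's, spelling out that the drop in the capped symmetric connectivity can come from either $\flow(s,t)$ or $\flow(t,s)$ and that in either case the restored cut has $e$ in $\delta_H^+$, which the paper glosses over.
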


\begin{proof}
Since $H$ is $k$-critical, for each edge $e \in E(H)$, graph $H-e$ is not a $k$-connectivity preserver of $H$. Therefore, there exists a pair of nodes $s, t \in V(H)$ such that $\lambda_{H-e}^k(s,t) < \lambda_H^k(s,t)$. Thus, for each edge $e \in E(H)$, there exists a minimum $(s, t)$-cut $(L_e, R_e)$ of size  $|\delta^+_H(L_e)| \leq k$ such that  $e \in \delta^+_H(L_e)$.

Now we will show that for each edge $e = (u, v) \in H[U]$, $$\min(\text{\normalfont outdeg}_{H[U]}(u), \text{\normalfont indeg}_{H[U]}(v)) \leq q+k.$$
Suppose towards contradiction that  $\min(\text{\normalfont outdeg}_{H[U]}(u), \text{\normalfont indeg}_{H[U]}(v)) > q+k.$ Then since $(L_e, R_e)$ is a cut of size at most $|\delta^+_H(L_e)| \leq k$ and $e \in \delta^+_H(L_e)$, this implies that $$|L_e \cap S| \geq |N^+_{H[U]}(u) \cap L_e| \geq |N^+_{H[U]}(u)| - k > q$$ and  $$|R_e \cap U| \geq |N^-_{H[U]}(v) \cap R_e| \geq |N^-_{H[U]}(v)| - k > q.$$ However, this contradicts our assumption that set $U$ is $(q, k)$-unbreakable in $H$. We conclude that for each edge $(u, v) \in H[U]$,   $\min(\text{\normalfont outdeg}_{H[U]}(u), \text{\normalfont indeg}_{H[U]}(v)) \leq q+k,$ as claimed.

Using this inequality, it becomes straightforward to prove that the total number of edges in $H[U]$ is at most $|E(H[U])| \leq 2(q+k)|U|$. For each edge $(u, v) \in E(H[U])$, if $\text{\normalfont outdeg}_{H[U]}(u) \leq q+k$, then we charge edge $(u, v)$ to vertex $u \in U$. Otherwise, if  $\text{\normalfont indeg}_{H[U]}(v) \leq q+k$, then we charge edge $(u, v)$ to vertex $v \in U$. Under this charging scheme  each vertex in $U$ is charged at most $2(q+k)$ times, so the total number of edges in $H[U]$ is at most  $|E(H[U])| \leq 2(q+k)|U|$.
\end{proof}

We also need the next lemma, which reduces the task of constructing an all-pairs $k$-connectivity preserver to preserving $k$-connectivity between just $O(n)$ demand pairs. This lemma follows from well-known results about Gomory-Hu trees \cite{gomory1961multi,cheng1991ancestor}.

\begin{lemma}
    For any $n$-vertex directed graph $H = (V, E)$, there exists  a collection of vertex demand pairs $P \subseteq V \times V$ of size $|P| = O(n)$ such that for any edge subgraph $H' \subseteq H$, $H'$ is a $k$-connectivity preserver of $H$ if and only if for every demand pair $(s, t) \in P$, $\conn_{H'}^k(s, t) = \conn_H^k(s, t)$.
    \label{lemma: demand-pairs}
\end{lemma}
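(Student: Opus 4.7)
\medskip

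\noindent\textbf{Proof proposal.} The plan is to invoke a Gomory--Hu-type tree for symmetric edge connectivity in directed graphs. Specifically, I would appeal to the result of Cheng and Hu~\cite{cheng1991ancestor} (for directed symmetric connectivity), which yields a weighted tree $T$ on vertex set $V$ such that for every $s, t \in V$,
\[
\lambda_H(s, t) \;=\; \min_{(u,v) \in T[s,t]} w(u,v),
\]
where $T[s,t]$ denotes the (unique) path between $s$ and $t$ in $T$, and the edge weights $w(u,v)$ coincide with $\lambda_H(u,v)$ on the tree edges. I would then define the demand set $P$ to consist of the (vertex pairs corresponding to the) $n-1$ edges of $T$, giving $|P| = O(n)$ as required.

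The ``only if'' direction of the lemma is immediate from the definition of a $k$-connectivity preserver, so the substance lies in the ``if'' direction. Suppose $H' \subseteq H$ satisfies $\lambda_{H'}^k(u,v) = \lambda_H^k(u,v)$ for every $(u,v) \in P$. Fix an arbitrary pair $s, t \in V$ and let $s = v_0, v_1, \dots, v_\ell = t$ be the path from $s$ to $t$ in $T$. The first ingredient I would use is the triangle-type inequality
\[
\lambda_{G}(s,t) \;\geq\; \min\bigl(\lambda_{G}(s,u),\,\lambda_{G}(u,t)\bigr)
\]
which holds in any directed graph $G$ and follows from an easy cut argument (any minimum $(s,t)$-cut $(S,\bar S)$ separates $u$ from either $s$ or $t$, both in the forward and reverse direction). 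Iterating this over the path and min'ing with $k$ gives
\[
\lambda_{H'}^k(s,t) \;\geq\; \min_{i \in [1,\ell]} \lambda_{H'}^k(v_{i-1},v_i).
\]

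Combining with the tree property (applied to the pairs on the path, which all lie in $P$),
\[
\lambda_{H'}^k(s,t) \;\geq\; \min_{i \in [1,\ell]} \lambda_{H'}^k(v_{i-1},v_i) \;=\; \min_{i \in [1,\ell]} \lambda_H^k(v_{i-1},v_i) \;=\; \lambda_H^k(s,t),
\]
where the last equality is exactly the min-on-path characterization supplied by the Cheng--Hu tree (capped at $k$). The reverse inequality $\lambda_{H'}^k(s,t) \leq \lambda_H^k(s,t)$ is trivial since $H' \subseteq H$. This establishes the required equality for every pair $(s,t) \in V \times V$.

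The main (and essentially only) obstacle is citing the correct directed Gomory--Hu-style result: classical Gomory--Hu trees do not exist for directed graphs in general, but the symmetric connectivity $\lambda(s,t) = \min(f(s,t), f(t,s))$ is exactly the setting treated by Cheng--Hu ancestor trees, which provide the min-on-path property we need (the tree need not be ``cut-equivalent'' in the stronger sense, as the paper remarks in its footnote). Once this is in hand, the argument is a routine combination of the tree's min-on-path property with the connectivity triangle inequality, and the $k$-capping commutes with both steps since $\min(a, k) \geq \min(\min(b,c),k) = \min(\min(b,k),\min(c,k))$.
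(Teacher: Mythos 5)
Your proof is correct and takes essentially the same approach as the paper's: both rely on the triangle inequality for symmetric connectivity together with a tree whose edges satisfy the min-on-path property. The only cosmetic difference is that the paper explicitly constructs the tree as a maximum spanning tree of the complete graph weighted by $\conn_H$ and verifies the min-on-path property from scratch, whereas you import it as a black box from Cheng--Hu; the paper itself cites the same references, so this is the same argument.
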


\begin{proof}
    The ``only if" direction is trivial, so  we focus on proving the ``if" direction below.   Let $C_H$ be a complete undirected weighted graph with vertex set $V$. We assign each edge $(u, v) \in E(C_H)$ weight $\conn_H(u,v)$. Let $T$ be a maximum spanning tree of $C_H$.

    For any edge $(u,v) \in E(C_H)$, consider the $(u, v)$-path $Q = (u_0, u_1, \dots, u_\ell)$ in $T$, where $u = u_0$ and $v = u_\ell$. We claim that $\conn_H(u,v) = \min_{(s, t) \in Q} \conn_H(s, t)$. On one side, $\conn_H(u,v) \geq \min_{(s, t) \in Q} \conn_H(s, t)$, by the triangle inequality of symmetric connectivity. On the other side, $\conn_H(u,v) \leq \min_{(s, t) \in Q} \conn_H(s, t)$, since $T$ is a maximum spanning tree in $C_H$.

    We define our set of vertex demand pairs $P \subseteq V \times V$ to be the edges of tree $T$, so that $P = E(T)$. Now suppose that $H' \subseteq H$ is an edge subgraph of $H$ that satisfies $ \lambda_{H'}^k(s, t) = \lambda_H^k(s,t)$ for each demand pair $(s, t) \in P$. Then by the  triangle inequality of symmetric connectivity, we have \[\conn_{H'}^k(u,v) \geq \min_{(s, t) \in Q} \conn_{H'}^k(s, t) = \min_{(s, t) \in Q} \conn_{H}^k(s, t) = \conn_{H}^k(u,v).\]

    Therefore, $H'$ preserves the $k$-bounded connectivity for all vertex pairs $(u,v)$ in $H$.
\end{proof}

We will need the following lemma in order to eventually bound the number of edges between components in $\mathcal{S}$. 

\begin{lemma}
    \label{lemma: k-conn-cutsize} 
    Let $H$ be an $n$-vertex,  $k$-critical directed graph. For every cut $(L,R)$ with size $|\delta^+(L)| \leq k$, we have $|\delta^-(L)| = O(kn)$.
\end{lemma}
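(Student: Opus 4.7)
The plan is to combine the demand-pair reduction from \Cref{lemma: demand-pairs} with a short flow-crossing argument that exploits the assumption $|\delta^+(L)| \le k$. By \Cref{lemma: demand-pairs}, I will fix a set $P \subseteq V \times V$ of size $O(n)$ such that a subgraph of $H$ is a $k$-connectivity preserver of $H$ iff it preserves $\lambda_H^k(s,t)$ on every $(s,t) \in P$. Using the $k$-criticality of $H$, for every $e \in \delta^-(L)$ the subgraph $H-e$ fails to be a $k$-connectivity preserver, so by the demand-pair property there is some $(s,t) \in P$ with $\lambda_{H-e}^k(s,t) < \lambda_H^k(s,t)$; assign $e$ to such a pair. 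The target bound $|\delta^-(L)| = O(kn)$ then reduces to showing that at most $O(k)$ edges of $\delta^-(L)$ can be assigned to any fixed pair.

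To prove that per-pair bound, I will fix $(s,t) \in P$, set $\lambda := \lambda_H^k(s,t) \le k$, and case-split on which direction is affected. If $e$ is assigned to $(s,t)$, then $\flow(H-e, s, t) < \lambda$ or $\flow(H-e, t, s) < \lambda$. By max-flow duality, in the first subcase every $s \to t$ flow of value $\lambda$ in $H$ must contain $e$ (otherwise the same flow survives in $H-e$ and witnesses $\flow(H-e,s,t) \ge \lambda$). Fix any one such flow $F^+$ and decompose it into $\lambda \le k$ edge-disjoint $s$-$t$ paths. Along a single $s$-$t$ path, edges of $\delta^+(L)$ (exits from $L$) and of $\delta^-(L)$ (entries into $L$) alternate, so the path uses at most one more $\delta^-(L)$ edge than $\delta^+(L)$ edge. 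Summing over the $\lambda$ paths of $F^+$, and using that the total $\delta^+(L)$-usage of $F^+$ is at most $|\delta^+(L)| \le k$, the flow $F^+$ contains at most $|\delta^+(L)| + \lambda \le 2k$ edges from $\delta^-(L)$. This caps the number of $\delta^-(L)$ edges critical for the $s \to t$ direction at $2k$; a symmetric argument with a flow $F^-$ of value $\lambda$ from $t$ to $s$ handles the other direction, giving $4k$ in total per pair.

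Combining the two steps yields $|\delta^-(L)| \le 4k \cdot |P| = O(kn)$, as required. The only point that needs care is the max-flow-duality reduction that turns ``$e$ is critical for a direction at level $\lambda$'' into ``$e$ lies in every flow of value $\lambda$ in that direction''; this is completely standard, so I do not anticipate any real obstacle. All remaining counting inequalities are elementary once the alternation structure along an $s$-$t$ path is observed.
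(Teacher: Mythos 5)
Your proof is correct and follows essentially the same route as the paper: invoke \Cref{lemma: demand-pairs} to reduce to $O(n)$ demand pairs, decompose a max flow into edge-disjoint paths, and use the alternation of $\delta^+(L)$ and $\delta^-(L)$ along each path to bound the $\delta^-(L)$ usage of the flow by $2k$. The paper organizes this as "the union of all these flow paths is a $k$-connectivity preserver, hence by $k$-criticality equals $H$," whereas you charge each edge of $\delta^-(L)$ to a demand pair it is critical for and observe every such edge must lie in the fixed max flow for that pair and direction; these are two phrasings of the same counting.
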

\begin{proof}
    Let $P \subseteq V(H) \times V(H)$ be the set of $|P| = O(n)$ demand pairs specified in \Cref{lemma: demand-pairs}. For each demand pair $(s, t) \in P$, there exists a collection $\mathcal{Q}_{s, t}$ of $\conn_H^k(s,t)$ pairwise edge-disjoint paths from $s$ to $t$ in $H$, and there exists a collection $\mathcal{Q}_{t, s}$ of $\conn_H^k(s, t)$ pairwise edge-disjoint paths  from $t$ to $s$ in $H$. For each path $Q \in \mathcal{Q}_{s,t}\cup \mathcal{Q}_{t,s}$, it must pass through edges in $\delta_H^+(L)$ and edges in $\delta_H^-(L)$ interleavingly, which means the number of edges in $\delta_H^-(L)$ on the path is at most the number of edges in $\delta_H^+(L)$ on the path, plus one. Since the paths in $\mathcal{Q}_{s, t}$ are pairwise edge-disjoint, the paths in $\mathcal{Q}_{s, t}$ contain at most $|\delta_H^+(L)|+ \lambda^k_H(s, t) \leq 2k$ edges in $\delta_H^-(L)$. An identical argument implies that the paths in $\mathcal{Q}_{t, s}$ contain at most $2k$ edges in $\delta_H^-(L)$ as well.  




    Let $H' \subseteq H$ be the subgraph of $H$ obtained by unioning all the paths in all collections $\mathcal{Q}_{s, t} \cup \mathcal{Q}_{t, s}$ for all $(s, t) \in P$, so that $H' =  
    \cup_{\{Q \in \mathcal{Q}_{s, t} \cup \mathcal{Q}_{t, s} \mid (s, t) \in P\}} Q$.  We observe that $\conn_{H'}^k(s, t) = \conn_H^k(s, t)$. 
    Graph $H'$ is the union of $2|P|$ collections of paths $\mathcal{Q}_{s, t}$, and the paths in $\mathcal{Q}_{s, t}$ contain at most $2k$ edges in $\delta_H^-(L)$, by the earlier discussion. We conclude that there are at most  $|\delta_{H'}^-(L)| \le 2k \cdot 2|P|=O(kn)$ edges in $\delta_{H'}^-(L)$. Now observe that for any demand pair $(s, t) \in P$, $\conn_{H'}^k(s, t) = \conn_H^k(s, t)$,  so $H'$ is a $k$-connectivity preserver of $H$ by \Cref{lemma: demand-pairs}. Finally, since $H$ is $k$-critical, we know $H'=H$ and thus $|\delta_H^-(L)| = |\delta_{H'}^-(L)| = O(kn)$.
\end{proof}

We can now bound the number of cross-component edges in $H[S]$.

\begin{lemma}
    \label{lemma: k-conn-inter}
    Let $E'=\{(u,v)\in E(H[S]) \mid (u, v) \in S_1 \times S_2, S_1 \ne S_2 \in \mathcal{S}\}$ be the set of edges in $H[S]$ between different components in $\mathcal{S}$. Then $|E'| = O(kn^2/q)$.
\end{lemma}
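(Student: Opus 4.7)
The plan is to show that every edge counted in $E'$ must cross one of the cuts in the collection $\mathcal{C}$ built during the decomposition, and then to bound the total crossing capacity of each such cut via \Cref{lemma: k-conn-cutsize}. Combined with the bound $|\mathcal{C}| \le n/q$ from \Cref{clm:termcondition}, this yields the claimed count.

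First, I would establish the containment $E' \subseteq \bigcup_{(L,R) \in \mathcal{C}} \bigl(\delta^+_H(L) \cup \delta^-_H(L)\bigr)$. This is essentially by induction on the decomposition process: initially $\mathcal{S} = \{V\}$ and no cross-component edges exist; each time a set $S \in \mathcal{S}$ is split by a cut $(L,R)$ into $S \cap L$ and $S \cap R$, the only newly-created cross-component edges are those between $S \cap L$ and $S \cap R$, and every such edge lies in $\delta^+_H(L) \cup \delta^-_H(L)$ (since one endpoint is in $L$ and the other in $R$). So each cross-component edge in the final partition is ``witnessed'' by some cut that was added to $\mathcal{C}$.

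Next, I would bound $|\delta^+_H(L) \cup \delta^-_H(L)|$ for each $(L, R) \in \mathcal{C}$. By the selection rule of the decomposition, $\min(|\delta^+_H(L)|, |\delta^-_H(L)|) \le k$. Suppose WLOG that $|\delta^+_H(L)| \le k$; then \Cref{lemma: k-conn-cutsize} gives $|\delta^-_H(L)| = O(kn)$, so the total edge capacity of the cut is $O(kn)$. In the symmetric case when $|\delta^-_H(L)| \le k$, the same conclusion follows by applying \Cref{lemma: k-conn-cutsize} to the reverse graph $H^R$, which is itself $k$-critical since the symmetric connectivity $\lambda$ is preserved under reversal.

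Finally, I would combine these bounds. By \Cref{clm:termcondition}, $|\mathcal{C}| \le n/q$, so
\[
|E'| \;\le\; \sum_{(L,R) \in \mathcal{C}} \bigl(|\delta^+_H(L)| + |\delta^-_H(L)|\bigr) \;\le\; \frac{n}{q} \cdot O(kn) \;=\; O(kn^2/q),
\]
as claimed. I expect the only subtle point to be the reduction to the case $|\delta^+_H(L)| \le k$ via the reverse graph; everything else is bookkeeping on the decomposition and a direct invocation of the previous two lemmas.
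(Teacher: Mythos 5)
Your proof follows essentially the same approach as the paper: show $E' \subseteq \bigcup_{(L,R)\in\mathcal{C}}\bigl(\delta^+_H(L)\cup\delta^-_H(L)\bigr)$, bound each crossing set by $O(kn)$ via \Cref{lemma: k-conn-cutsize}, and multiply by $|\mathcal{C}|\le n/q$ from \Cref{clm:termcondition}. You are right to flag that \Cref{lemma: k-conn-cutsize} is stated only for $|\delta^+_H(L)|\le k$ while the cuts in $\mathcal{C}$ merely satisfy $\min(|\delta^+_H(L)|,|\delta^-_H(L)|)\le k$ — the paper elides this — and your reverse-graph fix is valid (reversal preserves symmetric connectivity and hence $k$-criticality); a lighter-weight fix is to apply \Cref{lemma: k-conn-cutsize} to the cut $(R,L)$ in $H$ itself, since $|\delta^+_H(R)|=|\delta^-_H(L)|\le k$ immediately gives $|\delta^+_H(L)|=|\delta^-_H(R)|=O(kn)$ without reversing the graph.
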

\begin{proof}
Recall that $\mathcal{C}$ is the family of all cuts $(L, R)$ that we considered in our construction of partition $\mathcal{S}$. By our construction of $\mathcal{S}$, every edge $e \in E'$ crosses a cut $(L, R)$ in cut family $\mathcal{C}$ (i.e., $e \in \delta^+(L) \cup \delta^-(L)$ for some cut $(L, R) \in \mathcal{C}$). 

By \Cref{clm:termcondition}, $|\mathcal{C}| \leq n/q$, and by \Cref{lemma: k-conn-cutsize}, we have that $|\delta^+(L) \cup \delta^-(L)| = O(kn)$ for each cut $(L, R) \in \mathcal{C}$. 
    Therefore,\[|E'| \leq \sum_{(L,R)\in \mathcal{C}} |\delta^+(L) \cup \delta^-(L)| \leq O((n/q )\cdot kn) = O(kn^2/q). \]
\end{proof}

\begin{proof}[Proof of \Cref{thm:k_critical}]
     By \Cref{clm:termcondition} and \Cref{lemma: k-conn-intra}, we know that the total number of edges in $\cup_{S\in \mathcal{S}} H[S]$ is at most $$\left|\bigcup_{S\in \mathcal{S}} H[S]\right| \le \sum_{S\in \mathcal{S}} O((q+k)|S|) = O((q+k)n).$$     
     Additionally, by \Cref{lemma: k-conn-inter}, $|E'| = O(kn^2/q)$.
     We conclude that
     $$
     |E(H)| \leq |\cup_{S\in \mathcal{S}} H[S]| + |E'| = O((q+k)n + kn^2/q) = O(k^{1/2}n^{3/2}), 
     $$
     where the final inequality follows by choosing $q=\sqrt{nk}$ and observing that $k \leq n$. 
\end{proof}

\section{A Small Cut Containing All Important Cuts}
\label{sec:cut_theorem}

In this section, we develop a new theorem regarding important cuts that is useful for our fault-tolerant symmetric connectivity preservers. Roughly, we show that for any two disjoint sets of vertices $X, Y$, there exists a (small) $(X, Y)$-cut containing every (small) important $(X, Y)$-cut. We state our cut theorem below, in full generality. 

\begin{theorem}
    Let $G$ be an $m$-edge directed multigraph, and let $X, Y \subseteq V(G)$ be two disjoint sets of vertices with $\flow(G, X, Y) = \lambda$. For every integer $k \geq 0$, there exists an $(X, Y)$-cut $(S, V-S)$  of size $|\delta^+(S)| \leq \lambda 2^k$ such that every important  $(X, Y)$-cut $(S', V - S')$ of size $|\delta^+(S')| \leq \lambda + k$ satisfies $S' \subseteq S$. Moreover, cut $(S, V-S)$ can be computed in time 
    $
    O(\lambda k2^km+\lambda^2k^24^k).
    $
    \label{thm:ecut}
\end{theorem}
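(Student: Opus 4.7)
The plan is to prove the theorem by induction on $k$, letting $S_0 = \fmc(G, X, Y)$ throughout. The base case $k = 0$ is immediate: take $S = S_0$, whose boundary has size $\lambda$; by the maximality of the farthest minimum cut, every important $(X, Y)$-cut of size at most $\lambda$ (which must itself be a minimum cut) is contained in $S_0$.

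For the inductive step, I would first establish two structural facts, each via submodularity combined with \Cref{clm: in out wlog}. First, every important $(X, Y)$-cut $S'$ of size at most $\lambda + k$ satisfies $S_0 \subseteq S'$, since otherwise the out-reachable version of $S_0 \cup S'$ would be an out-reachable $(X, Y)$-cut of size at most $\lambda + k$ strictly larger than $S_0$, contradicting either the importance of $S'$ or the maximality of $S_0$ as the farthest out-reachable minimum cut. Second, for every head vertex $v$ of a boundary edge $(u, v) \in \delta^+(S_0)$ with $v \notin Y$, we have $\flow(G, X \cup \{v\}, Y) \geq \lambda + 1$: otherwise, taking the out-reachable version of $S_v \cup S_0$ for a putative minimum $(X \cup \{v\}, Y)$-cut $S_v$ of size $\lambda$ would again yield an out-reachable $(X, Y)$-minimum cut strictly containing $S_0$. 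Using the second fact, for each such $v$ I would apply the inductive hypothesis to $(G, X \cup \{v\}, Y)$ with budget $k_v = \lambda + k - \flow(G, X \cup \{v\}, Y) \leq k - 1$, obtaining a cut $T_v$ of size at most $\flow(G, X \cup \{v\}, Y) \cdot 2^{k_v}$ that contains every important $(X \cup \{v\}, Y)$-cut of size at most $\lambda + k$. Return $S = S_0 \cup \bigcup_v T_v$. Correctness follows from the first fact: any important $(X, Y)$-cut $S' \supsetneq S_0$ must contain at least one such $v$ (trace a path in $G[S']$ from $X$ to any vertex of $S' \setminus S_0$, which must cross $\delta^+(S_0)$), and $S'$ is also important in the $(G, X \cup \{v\}, Y)$ instance since any larger out-reachable contradictor there would also contradict importance in $(X, Y)$, so $S' \subseteq T_v \subseteq S$.

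The main obstacle I anticipate is the size bound. Iterated submodularity, using $S_0 \subseteq T_v$ and $|\delta^+(S_0)| = \lambda$, yields $|\delta^+(S)| \leq \sum_{v \in R} |\delta^+(T_v)| - (|R| - 1)\lambda$ where $R$ is the set of relevant head vertices, but substituting the inductive bounds $(\lambda + j_v) \cdot 2^{k - j_v}$ (with $j_v = \flow(G, X \cup \{v\}, Y) - \lambda \geq 1$) does not telescope cleanly to $\lambda \cdot 2^k$ when $|R|$ is close to $\lambda$ and many $j_v$ equal $1$. To close this gap, I would refine the construction by branching on only a single representative vertex $v \in R$ at each recursion level, complemented by a Marx-style second branch on $(G, X, Y \cup \{v\})$ that captures important cuts with $v \notin S'$; because the min-cut value strictly increases in the first branch while the farthest minimum cut strictly grows in the second, a Marx-style potential decreases in both branches, giving at most $2^k$ recursion leaves each contributing $O(\lambda)$ boundary edges via submodularity, for a total of at most $\lambda \cdot 2^k$. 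The algorithm performs one farthest-minimum-cut computation per recursive call in $O(km)$ time (via $k$ augmenting paths beyond the current min-cut value), and amortizing over the recursion tree (whose total leaf count is bounded by the standard $O(4^k)$ Marx bound on important cuts) yields the claimed $O(\lambda k 2^k m + \lambda^2 k^2 4^k)$ running time.
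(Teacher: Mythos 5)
Your structural preliminaries are correct and standard: the fact that every important $(X,Y)$-cut contains $\fmc(G,X,Y)$, and the fact that augmenting $X$ by a boundary head strictly increases the max-flow, are both right and are indeed used implicitly in the paper's proof as well (via the nestedness lemma, \Cref{clm:fmc_further}). You are also right to flag that the naive ``recurse on all heads and take the union'' does not telescope to $\lambda 2^k$. The problem is that the proposed fix does not close this gap, and several of its claims are false or unsubstantiated.

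Concretely: in your second branch on $(G, X, Y \cup \{v\})$ you assert that ``the farthest minimum cut strictly grows.'' It does not. Since $v$ is a head of a boundary edge of $S_0$, we have $v \notin S_0$, so $(S_0, V - S_0)$ is already a valid min $(X, Y \cup\{v\})$-cut and $\fmc(G, X, Y\cup\{v\}) = S_0$ exactly; adding $v$ to $Y$ alone makes no progress. The standard Marx branching makes progress in this branch by \emph{deleting} the edge $(u,v)$ (decreasing the budget $k$), which you do not do, and once you do, the min-cut value can drop rather than the fmc grow. Second, Marx's measure $2k - \lambda$ yields $4^k$ recursion leaves, not $2^k$; you in fact concede $4^k$ yourself in the running-time sentence, which contradicts the ``$2^k$ leaves'' claim used for the size bound. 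Third, ``each leaf contributing $O(\lambda)$ boundary edges via submodularity'' is not established: the fmc at a leaf has size equal to the residual min-cut value, which can be as large as $\lambda + k$, and the iterated submodular union over up to $4^k$ leaves does not obviously yield $\lambda 2^k$. So as written, the size bound in the theorem is not proved.

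The paper avoids all of this by a completely different, non-recursive construction in the style of Baswana--Choudhary--Roditty: starting from $G_0 = G$, it repeatedly takes $S_i = \fmc(G_i, x, y)$ and forms $G_{i+1}$ by adding a new edge $(x,v)$ for every boundary edge $(u,v) \in \delta^+_{G_i}(S_i)$. Because at most $|\delta^+_{G_i}(S_i)|$ new edges are added, the flow at most doubles each step, giving $|\delta^+_{G_k}(S_k)| \le \lambda 2^k$ directly; nestedness $S_0 \subseteq \cdots \subseteq S_k$ follows from monotonicity of the fmc under adding source-outgoing edges; and containment of all important cuts of size $\le \lambda + k$ is proved by tracing a witness path and reinstating the crossed heads, driving the residual flow above $\lambda+k$, a contradiction. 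This single forward pass is what makes the tight $\lambda 2^k$ bound fall out cleanly, whereas the recursive branching approach leaves a real gap between the ``union of leaf cuts'' and $\lambda 2^k$ that you would need a genuinely new argument to close.
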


We use \Cref{thm:ecut} frequently in \Cref{sec:conn_pres} in the following (less general) form.

\impcutlem*

We quickly verify that \Cref{lem:important_separators} follows directly from \Cref{thm:ecut}. 

\begin{proof}[Proof of \Cref{lem:important_separators}]
    Let $\flow(G, X, Y) = \lambda$. If $\lambda > k$, then there are no important $(X, Y)$-cut $(S, V-S)$ in $G$ of size at most $|\delta^+(S)| \leq k$, and \Cref{lem:important_separators} is vacuously true. 

    Otherwise, $\lambda \leq k$. Let $k^* = k - \lambda$. We apply \Cref{thm:ecut} with respect to vertex sets $X, Y$ in graph $G$ and integer parameter $k^*$. This implies that there exists an $(X, Y)$-cut $(S, V-S)$ of size $|\delta^+(S)| \leq \lambda 2^{k^*}$ such that  every important  $(X, Y)$-cut $(S', V - S')$ of size $|\delta^+(S')| \leq \lambda + k^*$ satisfies $S' \subseteq S$. Moreover, $(S, V-S)$ can be computed in the time $O(\lambda k^* 2^{k^*} m + \lambda^2 k^{*2}4^{k^*})$. 

    Now we can use the fact that $k = \lambda + k^*$ to prove \Cref{lem:important_separators}. Since $\lambda \leq 2^{\lambda-1}$, we observe that $|\delta^+(S)| \leq \lambda 2^{k^*} \leq 2^{k-1}$. Likewise, $|\delta^+(S')| \le \lambda + k^* = k$. Then the $(X, Y)$-cut $(S, V-S)$ has size at most $|\delta^+(S)| \leq 2^{k-1}$ and has the property that every important $(X, Y)$-cut $(S', V-S')$ of size at most $|\delta^+(S')| \leq k$ satisfies $S' \subseteq S$. Finally, using the fact that $\lambda \leq 2^{\lambda}$, cut $(S, V-S)$ can be computed in $O(k2^km+k^24^k)$ time.
\end{proof}






For the remainder of this section, let $G$ be an $n$-vertex, $m$-edge directed multigraph, and let $X, Y \subseteq V(G)$ be two disjoint sets of vertices with $\flow(G, X, Y) = \lambda$. We may assume without loss of generality that $X$ and $Y$ are singleton sets, so that $X = \{x\}$ and $Y = \{y\}$ (e.g., by creating an artificial source $x$ with edges to $X$ and an artificial sink $y$ with edges from $Y$).

We first describe how to efficiently compute the cut claimed in Theorem \ref{thm:ecut}, and then we prove that it has the desired properties.  Our proof of \Cref{thm:ecut} will follow \cite{baswana2018fault,bansal2024faulttolerantboundedflowpreservers} and our results can be seen as a generalization of them.

\paragraph{Constructing the cut.}

We iteratively define a sequence of graphs $G_0, \dots, G_k$ over vertex set $V(G)$. Let graph $G_0 := G$. Given a graph $G_i$, where $i \in [0, k]$, let $(S_i, V - S_i) = \fmc(G_i, x, y)$ be the farthest minimum $(x, y)$-cut in $G_i$. Given set $S_{i}$, where $i \in [0, k-1]$, we define graph $G_{i+1}$ as follows:
\begin{itemize}
    \item Initially, let $G_{i+1} := G_{i}$.
    \item For each edge $(u, v) \in \delta_{G_i}^+(S_i)$, add an edge $(x, v)$ to $G_{i+1}$. (If edge $(x, v)$ already exists in $G_i$, then we create an additional parallel $(x, v)$ edge.) This completes the construction of $G_{i+1}$. 
\end{itemize}
Observe that $(S_i, V - S_i)$ is an $(x, y)$-cut   in $G$ for all $i \in [0, k]$. We return $(S_k, V - S_k)$ as the $(x, y)$-cut claimed in Theorem \ref{thm:ecut}. 

\paragraph{Time complexity.} We now prove that cut $(S_k, V-S_k)$ can be computed in the time claimed in Theorem \ref{thm:ecut}. 

\begin{lemma} \label{lem:time_comp}
Cut $(S_k, V-S_k)$ can be computed in time $O(\lambda k2^km+\lambda^2k^24^k)$.
\end{lemma}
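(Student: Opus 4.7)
The plan is to analyze the iterative construction by tracking two quantities across iterations: $\lambda_i := |\delta^+_{G_i}(S_i)|$, the size of the minimum $(x,y)$-cut in $G_i$, and $|E(G_i)|$, the number of edges in $G_i$. I would then bound the cost of computing each $\fmc(G_i, x, y)$ via Ford--Fulkerson augmenting paths, and sum the geometric series across $i = 0, \dots, k$.

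First, I would prove by induction that $\lambda_i \le \lambda 2^i$. The base case $\lambda_0 = \lambda$ holds since $G_0 = G$ and $\flow(G, x, y) = \lambda$. For the inductive step, observe that the construction of $G_{i+1}$ from $G_i$ adds exactly $|\delta^+_{G_i}(S_i)| = \lambda_i$ new edges, all of which emanate from $x$. Adding $\lambda_i$ edges out of the source increases the maximum $(x, y)$-flow by at most $\lambda_i$, so $\lambda_{i+1} \le 2\lambda_i \le \lambda 2^{i+1}$. An immediate corollary is that $|E(G_i)| \le m + \sum_{j < i} \lambda_j \le m + O(\lambda 2^i)$.

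Next, I would bound the cost of computing $(S_i, V - S_i) = \fmc(G_i, x, y)$. Using the max flow already computed in $G_{i-1}$ as a warm start, at most $\lambda_i - \lambda_{i-1} \le \lambda_i$ additional augmenting paths are needed to obtain a max flow in $G_i$; each augmenting path in $G_i$ costs $O(|E(G_i)|) = O(m + \lambda 2^i)$ time via BFS in the residual graph. Once the max flow is found, the farthest minimum $(x,y)$-cut is recovered in an additional $O(|E(G_i)|)$ time (by a reverse BFS from $y$ in the residual and taking the complement, cf.\ \Cref{lemma:FMC}). Therefore iteration $i$ takes $O\!\bigl(\lambda_i \cdot (m + \lambda 2^i)\bigr) = O(\lambda 2^i m + \lambda^2 4^i)$ time. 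Summing over $i = 0, \dots, k$ yields a total of
\[
\sum_{i=0}^{k} O(\lambda 2^i m + \lambda^2 4^i) \;=\; O(\lambda 2^k m + \lambda^2 4^k),
\]
which is well within the claimed bound $O(\lambda k 2^k m + \lambda^2 k^2 4^k)$.

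The main obstacle (though a mild one) is the inductive bound $\lambda_{i+1} \le 2\lambda_i$: one must argue correctly that adding parallel $(x, v)$ edges can increase $\flow(G_i, x, y)$ by at most the number of added edges, which I would justify by a direct cut argument (any $(x,y)$-cut in $G_i$ remains an $(x,y)$-cut in $G_{i+1}$ with its size increased by at most the number of new edges crossing it, which is at most $\lambda_i$). The remaining work is the geometric-sum bookkeeping and verifying that the max-flow computation can be done incrementally; both are standard.
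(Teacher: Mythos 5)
Your proof is correct and follows essentially the same structure as the paper's: the same induction that $\flow(G_i,x,y)\le \lambda 2^i$, the same bound on $|E(G_i)|$, and then a cost analysis of running Ford--Fulkerson on $G_0,\dots,G_k$. The one refinement you add is the warm-start argument (reusing the max flow in $G_{i-1}$ when augmenting in $G_i$, so iteration $i$ costs only $O((\lambda_i - \lambda_{i-1})\cdot|E(G_i)|)$ rather than a from-scratch $O(\lambda 2^k\cdot|E(G_k)|)$), which makes the telescoping/geometric sums save a factor of $k$ and yields the slightly tighter total $O(\lambda 2^k m + \lambda^2 4^k)$; the paper simply bounds each of the $k+1$ max-flow computations by the worst case and gets $O(\lambda k 2^k m + \lambda^2 k^2 4^k)$. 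Both are correct and your tighter bound is, of course, within the claimed one.
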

We will require the following claim bounding the size of cut $(S_k, V-S_k)$ in graph $G_k$. 

\begin{claim}
In graph $G_k$, the $(x, y)$-cut $(S_k, V-S_k)$ has size at most $|\delta^+_{G_k}(S_k)| \leq \lambda 2^k$. 
\label{clm:vcut}
\end{claim}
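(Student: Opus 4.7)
The plan is to track how the min-cut value between $x$ and $y$ grows as we move from $G_i$ to $G_{i+1}$, and show by induction that it at most doubles at each step. Define $\lambda_i := \flow(G_i, x, y)$. Since $(S_i, V - S_i) = \fmc(G_i, x, y)$ is by definition a minimum $(x,y)$-cut in $G_i$, we have $|\delta^+_{G_i}(S_i)| = \lambda_i$. So it suffices to prove that $\lambda_i \leq 2^i \lambda$ for every $i \in [0, k]$, and in particular for $i = k$ this gives $|\delta^+_{G_k}(S_k)| = \lambda_k \leq \lambda 2^k$, as claimed.

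The base case $\lambda_0 = \lambda$ is immediate from $G_0 = G$. For the inductive step, assume $\lambda_i \leq 2^i \lambda$. The key step is to exhibit a specific $(x, y)$-cut in $G_{i+1}$ whose size is at most $2\lambda_i$, which will force $\lambda_{i+1} \leq 2 \lambda_i \leq 2^{i+1} \lambda$. The candidate cut is $(S_i, V - S_i)$, reused from the previous iteration. To count its outgoing edges in $G_{i+1}$, recall that $G_{i+1}$ was obtained from $G_i$ by adding, for each edge $(u, v) \in \delta^+_{G_i}(S_i)$, a new edge $(x, v)$. Every such newly-added edge originates at $x \in S_i$ and terminates at $v \in V - S_i$ (since $(u, v) \in \delta^+_{G_i}(S_i)$), so it also contributes to $\delta^+_{G_{i+1}}(S_i)$. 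No edge inside $G_i$ is removed. Therefore
\[
|\delta^+_{G_{i+1}}(S_i)| \;=\; |\delta^+_{G_i}(S_i)| \;+\; |\delta^+_{G_i}(S_i)| \;=\; 2\lambda_i,
\]
completing the induction.

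I do not anticipate a serious obstacle; the only subtlety is being careful that the added edges $(x, v)$ are genuinely parallel edges (hence multi-edges matter, which is fine because $G$ is a directed multigraph) so each of them is counted separately in $|\delta^+_{G_{i+1}}(S_i)|$. Everything else is bookkeeping.
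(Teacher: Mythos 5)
Your proof is correct and follows essentially the same route as the paper: induction on $i$ showing $\flow(G_i,x,y)\le 2^i\lambda$, using the reused cut $(S_i, V-S_i)$ as a witness in $G_{i+1}$ and noting that each added edge $(x,v)$ also crosses this cut. The only cosmetic difference is that you assert exact equality $|\delta^+_{G_{i+1}}(S_i)| = 2\lambda_i$ whereas the paper uses $\le$; both are fine, and your observation that the new $(x,v)$ edges necessarily cross $(S_i,V-S_i)$ because $x\in S_i$ and $v\in V-S_i$ is exactly the right justification.
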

\begin{proof}
We will prove by induction that for every $i \in [0, k]$, $\flow(G_i, x, y) \leq \lambda 2^i$. When $i=0$, the claim is satisfied since we assumed $\flow(G, x, y) = \lambda$. Next, we show that for all $i \in [0, k-1]$, $$\flow(G_{i+1}, x, y) \leq 2\flow(G_{i}, x, y).$$ We will need several observations to prove this:
\begin{itemize}
    \item For each $i \in [0, k]$, $\flow(G_i, x, y) = |\delta^+_{G_i}(S_i)|$, since $(S_i, V-S_i)$ is the farthest minimum $(x, y)$-cut in $G_i$. 
    \item For each $i \in [0, k-1]$,  $\flow(G_{i+1}, x, y) \leq |\delta_{G_{i+1}}^+(S_{i})|$. This holds because $(S_{i}, V - S_{i})$ is an $(x, y)$-cut in $G_{i+1}$. 
    \item For each $i \in [0, k-1]$, $|\delta^+_{G_{i+1}}(S_{i})| \leq 2|\delta^+_{G_{i}}(S_{i})|$. We construct graph $G_{i+1}$ from graph $G_i$ by adding at most $|\delta^+_{G_i}(S_i)|$ edges to $G_{i}$. Then there can be at most $|\delta^+_{G_i}(S_i)|$ additional edges crossing cut $(S_i, V - S_i)$ in $G_{i+1}$. We conclude that $|\delta^+_{G_{i+1}}(S_{i})| \leq 2|\delta^+_{G_{i}}(S_{i})|$, as claimed. 
\end{itemize}
Putting these observations together, we find that
\begin{align*}
\flow(G_{i+1}, x, y) 
& \leq  |\delta^+_{G_{i+1}}(S_{i})|  \leq 2 |\delta^+_{G_i}(S_i)| =  2 \cdot \flow(G_{i}, x, y).
\end{align*}
Then by induction, we conclude that $\flow(G_k, x, y) \leq \lambda 2^k.$  In particular, since $(S_k, V-S_k)$ is a minimum $(x, y)$-cut in $G_k$, $|\delta^+_{G_k}(S_k)| \leq \lambda 2^k$. 
\end{proof}

We are now ready to prove Lemma \ref{lem:time_comp}. 

\begin{proof}[Proof of Lemma \ref{lem:time_comp}]
Cut $(S_k, V-S_k)$ can be obtained by computing an $(x, y)$ maximum flow on the $k+1$ graphs $G_0, \dots, G_k$. By Claim \ref{clm:vcut}, for each $i \in [0, k-1]$,  $$|E(G_{i+1})| \leq |E(G_i)| + \lambda 2^i \leq m + k \lambda 2^k.$$ 
Likewise, by Claim \ref{clm:vcut}, $\flow(G_i, x, y) \leq  \lambda 2^k$ for all $i \in [0, k]$. Then we can compute an $(x, y)$ maximum flow on graph $G_i$ in time $O(\lambda 2^k \cdot (m+k \lambda 2^k))$ using Ford-Fulkerson \cite{ford2015flows}. Then the overall time complexity becomes $O(\lambda k 2^k m + \lambda^2 k^2 4^k)$, as claimed.  
\end{proof}

\paragraph{Completing the proof of Theorem \ref{thm:ecut}.} We have identified an $(x, y)$-cut $(S_k, V-S_k)$ in $G$ that can be computed in time $O(\lambda k 2^k m + \lambda^2 k^2 4^k)$ time by Lemma \ref{lem:time_comp}. Moreover, $|\delta^+_{G}(S_k)| \leq \lambda 2^k$ by Claim \ref{clm:vcut}. 
All that remains to complete the proof of Theorem \ref{thm:ecut} is to verify that cut $(S_k, V-S_k)$ contains every important $(x, y)$-cut $(S, V-S)$ of size at most $|\delta^+_G(S)| \leq \lambda + k$. We will need a technical lemma about farthest minimum cuts that was proved in \cite{baswana2018fault}. 



\begin{lemma}[cf. Lemma 10 of \cite{baswana2018fault}]
    Let $G$ be a directed multigraph. Let  $s, t \in V(G)$ be vertices in $G$, and let $(S, V-S) = \fmc(G, s, t)$. For each vertex $v\in V -S$, let $G_v$ be the graph obtained by adding edge $(s, v)$ to $G$, i.e., $G_v = G + (s,v )$. Then
    $$
    \flow(G_v, s, t) = \flow(G, s, t) + 1,
    $$
    and $(S, V-S)$ is a minimum $(s, t)$-cut in $G_v$. 
    \label{lem:fmc}
\end{lemma}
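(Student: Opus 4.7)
My plan is a standard max-flow / min-cut argument leveraging the residual-graph characterization of the farthest minimum $(s,t)$-cut. Let $\lambda = \flow(G, s, t)$, fix a maximum $(s,t)$-flow $f$ in $G$, and let $G_f$ denote its residual graph. The key identification I would establish is $V - S = B$, where $B = \{u \in V : u \text{ can reach } t \text{ in } G_f\}$.

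The easy inclusion $B \subseteq V - S$ (equivalently $S \cap B = \emptyset$) uses standard complementary slackness. Since $(S, V-S)$ is a minimum $(s,t)$-cut, every arc in $\delta^+_G(S)$ is saturated by $f$ and every arc in $\delta^-_G(S)$ carries zero flow, so $G_f$ contains no arc from $S$ to $V - S$. Hence no vertex of $S$ can reach $t \in V - S$ inside $G_f$, giving $S \cap B = \emptyset$. For the reverse direction $V - S \subseteq B$, I would argue that $(V - B, B)$ is itself an out-reachable minimum $(s,t)$-cut; the maximality clause in the definition of $\fmc$ then forces $V - B \subseteq S$, i.e., $V - S \subseteq B$. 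The min-cut property of $(V-B, B)$ follows again from complementary slackness (its outgoing arcs are exactly the arcs of $G$ saturated by $f$), and out-reachability can be established via flow decomposition: because arcs in $\delta^-(V - B)$ carry zero flow, every $s$-to-$t$ flow path crosses $(V - B, B)$ exactly once, so its prefix up to any vertex $u \in V - B$ lies entirely inside $V - B$, yielding the required $s$-to-$u$ path in $G - \delta^+(V - B)$.

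Once $V - S = B$ is in hand, both conclusions follow directly. For any $v \in V - S = B$, concatenating the new arc $(s, v)$ with a $(v, t)$-path in $G_f$ produces an augmenting path for $f$ in $G_v$, so $\flow(G_v, s, t) \geq \lambda + 1$. Conversely, $(S, V-S)$ remains an $(s,t)$-cut in $G_v$, and its capacity there is $|\delta^+_{G_v}(S)| = |\delta^+_G(S)| + 1 = \lambda + 1$ (the new arc $(s,v)$ crosses the cut since $s \in S$ and $v \in V - S$); hence $\flow(G_v, s, t) \leq \lambda + 1$, and equality shows that $(S, V-S)$ is a minimum $(s, t)$-cut in $G_v$.

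The main obstacle is verifying out-reachability of $(V - B, B)$, since the out-reachability condition in the definition of $\fmc$ lives in $G$ while the natural characterization $V - B$ lives in $G_f$; bridging the two is exactly what the flow-decomposition step handles. Vertices of $V - B$ not carrying any flow through them require a short additional case analysis (e.g., one may restrict attention to the subgraph reachable from $s$), or one can simply invoke Lemma~10 of \cite{baswana2018fault} verbatim.
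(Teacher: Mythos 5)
The paper does not actually prove this lemma; it is cited as Lemma~10 of \cite{baswana2018fault}, so there is no in-paper proof to compare against. Your approach via the residual-graph characterization is the natural one and almost certainly mirrors the cited proof, but the gap you flag at the end is more serious than a ``short additional case analysis,'' and it is entangled with this paper's specific definition of $\fmc$.

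The easy inclusion $B\subseteq V-S$ is fine. The hard direction $V-S\subseteq B$ is where the trouble lies: it requires $(V-B,B)$ to be an \emph{out-reachable} min cut so that the maximality clause in the paper's definition of $\fmc$ forces $S=V-B$. Your flow-decomposition argument certifies out-reachability only for vertices of $V-B$ lying on some flow path. A vertex $u\in V-B$ carrying no flow need not be reachable from $s$ in $G-\delta^{+}(V-B)$, and your proposed fix (restrict to the subgraph reachable from $s$) does not repair this. Concretely, take $V=\{s,a,u,t\}$ with edges $(s,a),(a,t),(t,u)$. Here $\flow(G,s,t)=1$, $B=\{t\}$, $V-B=\{s,a,u\}$, and $u$ is reachable from $s$ in $G$ but \emph{only} through the cut edge $(a,t)$, so $(V-B,B)$ is not out-reachable and the paper's $\fmc$ has source side $S=\{s,a\}$. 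For $v=u\in V-S$, adding $(s,u)$ does not increase the max flow since $u$ cannot reach $t$ in the residual graph. So under the paper's out-reachability-based definition of $\fmc$, the identity $V-S=B$ can fail, and with it the lemma's conclusion for some $v\in V-S$. Your argument (and the lemma as stated) does go through cleanly if one instead defines the farthest min cut directly as $S:=V-B$ via the residual graph, which is the usual convention and presumably the one used in \cite{baswana2018fault}; alternatively one can restrict the claim to $v\in B$, which is all the paper actually uses (the vertices $v_i$ it feeds into this lemma are heads of edges in $\delta^+(S_i)$, and one can check $\delta^+(S)=\delta^+(V-B)$, so such heads always lie in $B$). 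In short: right technique, correct easy half, but the second half as written does not close under this paper's definition of $\fmc$, and the fix you suggest is insufficient.
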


Lemma \ref{lem:fmc} has the following general implication about farthest minimum cuts in directed multigraphs. 

\begin{claim} \label{clm:fmc_further}
    Let $G = (V, E)$ be a directed multigraph. Let $s, t \in V$ be vertices in $G$, and let $(S, V-S) = \fmc(G, s, t)$. Let $G' = (V, E \cup E')$, where $E' \subseteq \{s\} \times V$, and let $(S', V-S') = \fmc(G', s, t)$. Then $S \subseteq S'$.  
\end{claim}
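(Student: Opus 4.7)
The plan is to induct on $|E'|$, peeling off one edge at a time and leveraging \Cref{lem:fmc} as the main tool. The base case $|E'|=0$ is immediate since $G'=G$ and so $S'=S$. For the inductive step, pick any $(s,v)\in E'$, set $E''=E'\setminus\{(s,v)\}$, $G''=G+E''$, and let $(\tilde S, V-\tilde S)=\fmc(G'',s,t)$. The inductive hypothesis applied to $G$ and $E''$ yields $S\subseteq \tilde S$, so it suffices to prove $\tilde S\subseteq S'$ after inserting the final edge $(s,v)$ into $G''$.

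To do so, I first argue that $(\tilde S, V-\tilde S)$ remains a minimum $(s,t)$-cut in $G'=G''+(s,v)$, by a case split on $v$. If $v\in V-\tilde S$, then \Cref{lem:fmc} applied to $G''$ and the edge $(s,v)$ immediately gives this: both $|\delta^+(\tilde S)|$ and the max flow grow by exactly one. If instead $v\in \tilde S$, the new edge is internal to $\tilde S$, so $|\delta^+_{G'}(\tilde S)|=|\delta^+_{G''}(\tilde S)|=\flow(G'',s,t)$; since max flow is monotone under edge insertion yet bounded by any $(s,t)$-cut of the same size, we get $\flow(G',s,t)=\flow(G'',s,t)$, and hence $(\tilde S, V-\tilde S)$ is still a minimum cut in $G'$.

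Having shown that $(\tilde S, V-\tilde S)$ is a minimum $(s,t)$-cut in $G'$, the containment $\tilde S\subseteq S'$ follows from the classical fact that the farthest minimum $(s,t)$-cut is the unique maximum element, under set inclusion of the source side, in the lattice of all minimum $(s,t)$-cuts of $G'$. Concretely, submodularity of the cut function forces $\tilde S\cup S'$ to again be a minimum cut, and then maximality of $S'$ (as asserted by \Cref{lemma:FMC}) forces $\tilde S\cup S'=S'$, i.e., $\tilde S\subseteq S'$. Chaining with $S\subseteq \tilde S$ completes the proof. The only mildly delicate point, which I expect to be the main obstacle, is that the paper's ``farthest'' is defined as maximal among \emph{out-reachable} minimum cuts rather than over all minimum cuts, so a short auxiliary check that $\tilde S\cup S'$ is out-reachable in $G'$ is needed; the tight submodularity identity $|\delta^+(\tilde S\cup S')|+|\delta^+(\tilde S\cap S')|=|\delta^+(\tilde S)|+|\delta^+(S')|$ pins down $\delta^+_{G'}(\tilde S\cup S')\subseteq \delta^+_{G'}(S')$, from which out-reachability of $\tilde S\cup S'$ follows by combining the out-reachability witnesses of $S'$ (for vertices in $S'$) and of $\tilde S$ in $G''$ (for vertices in $\tilde S\setminus S'$).
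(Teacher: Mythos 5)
Your proof takes the same route as the paper's: peel off the edges of $E'$ one at a time and use \Cref{lem:fmc} as the engine, concluding $S\subseteq S'$ via the lattice structure of minimum $(s,t)$-cuts. The paper compresses this to a single sentence (``By \Cref{lem:fmc}, $T_i\subseteq T_{i+1}$''); you unroll the induction and make explicit the two cases ($v\in\tilde S$ versus $v\notin\tilde S$) and the submodularity/maximality step, which is a faithful expansion of what the paper leaves implicit.

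One assertion, though, is false: tightness of submodularity does \emph{not} imply $\delta^+_{G'}(\tilde S\cup S')\subseteq\delta^+_{G'}(S')$. Tightness only says there are no edges between $\tilde S\setminus S'$ and $S'\setminus\tilde S$; the edges from $\tilde S\setminus S'$ to $V-(\tilde S\cup S')$ belong to $\delta^+(\tilde S\cup S')$ and need not belong to $\delta^+(S')$. A quick counterexample with generic min cuts: take $s\to a$, $s\to b$, $a\to c$, $b\to c$, and two parallel copies of $c\to t$, with $A=\{s,a\}$ and $B=\{s,b\}$; both are min cuts, $A\cup B=\{s,a,b\}$ is a min cut, submodularity is tight, yet $(a,c)\in\delta^+(A\cup B)\setminus\delta^+(B)$. (Once you know $\tilde S\subseteq S'$ the containment is of course trivial, since then $\tilde S\cup S'=S'$, but using it \emph{en route} to proving $\tilde S\subseteq S'$ is circular.)

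Fortunately, out-reachability of $\tilde S\cup S'$ holds for a more elementary reason that makes the tightness detour unnecessary: if $(A,V-A)$ is an out-reachable $(s,t)$-cut, then any $s$-to-$w$ path in $G-\delta^+(A)$ with $w\in A$ stays entirely inside $A$, since stepping out of $A$ would require an edge of $\delta^+(A)$. Such a path therefore stays inside any superset of $A$ and in particular avoids $\delta^+(A\cup B)$. Apply this with $A=S'$ for $w\in S'$ and with $A=\tilde S$ (out-reachable in $G''$, hence in $G'\supseteq G''$) for $w\in\tilde S\setminus S'$, and out-reachability of $\tilde S\cup S'$ follows. With that repair your argument is complete.
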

\begin{proof}
We can prove \Cref{clm:fmc_further} by repeatedly applying \Cref{lem:fmc}. Let $E' = \{e_1, \dots, e_p\}$, Let $G_0 = G$, and let $G_i = G + e_1 + \dots + e_i$ for $i \in [1, p]$. Let $(T_i, V-T_i) = \fmc(G_i, s, t)$ for $i \in [0, p]$. By \Cref{lem:fmc}, $T_i \subseteq T_{i+1}$ for $i \in [0, p-1]$. Then in particular,  $S = T_0 \subseteq T_p = S'$ as claimed.  
\end{proof}

In particular, \Cref{clm:fmc_further} implies that the sets $S_0, \dots, S_k$ are nested. 

\begin{claim}
   For each $i \in [0, k-1]$,  $S_i \subseteq S_{i+1}$. 
   \label{clm:nested}
\end{claim}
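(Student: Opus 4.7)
The plan is to apply \Cref{clm:fmc_further} directly to the construction of $G_{i+1}$ from $G_i$. Recall that $G_{i+1}$ is obtained from $G_i$ by adding, for each edge $(u,v) \in \delta^+_{G_i}(S_i)$, a new (possibly parallel) edge $(x, v)$. In particular, every edge in $E(G_{i+1}) \setminus E(G_i)$ has its tail at $x$, so if we set
\[
E' = \{(x, v) : (u, v) \in \delta^+_{G_i}(S_i)\},
\]
then $E' \subseteq \{x\} \times V$ and $E(G_{i+1}) = E(G_i) \cup E'$. By definition, $(S_i, V - S_i) = \fmc(G_i, x, y)$ and $(S_{i+1}, V - S_{i+1}) = \fmc(G_{i+1}, x, y)$. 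Invoking \Cref{clm:fmc_further} with $G := G_i$, $G' := G_{i+1}$, $s := x$, $t := y$, and the edge set $E'$ above immediately yields $S_i \subseteq S_{i+1}$, which is the desired conclusion.

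There is no real obstacle here: the whole construction of the sequence $G_0, G_1, \dots, G_k$ was engineered precisely so that the hypothesis of \Cref{clm:fmc_further} is met at each step. The only minor sanity check is that each added edge $(x, v)$ indeed has $v \in V - S_i$ (since $(u, v) \in \delta^+_{G_i}(S_i)$ forces $u \in S_i$ and $v \in V - S_i$), so the new edges are genuine crossing edges of the current farthest min-cut; but this is not actually needed for the claim, because \Cref{clm:fmc_further} allows $E'$ to be any subset of $\{x\} \times V$.
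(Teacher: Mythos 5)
Your proof is correct and matches the paper's argument exactly: both observe that $G_{i+1}$ differs from $G_i$ only by edges with tail at $x$, and then invoke \Cref{clm:fmc_further} to conclude $S_i \subseteq S_{i+1}$. The extra sanity check about $v \in V - S_i$ is harmless but, as you note, unnecessary.
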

\begin{proof}
Notice that for each $i \in [0, k-1]$, $G_{i+1} = (V, E(G_i) \cup E_{i+1})$, where edges $E_{i+1}$ satisfy $E_{i+1} \subseteq \{x\} \times V$. Then by \Cref{clm:fmc_further}, $S_i \subseteq S_{i+1}$ as claimed. 
\end{proof}

    

We  now finish our proof of Theorem \ref{thm:ecut} with the following lemma. 

\begin{lemma}
If $(S, V-S)$ is an important $(x, y)$-cut  with size $|\delta^+_G(S)| \leq \lambda + k,$ then $S \subseteq S_k$. 
\label{lem:vcut}
\end{lemma}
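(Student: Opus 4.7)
The plan is a proof by contradiction. Assume $S \not\subseteq S_k$ and set $T := S \cup S_k$, so $T \supsetneq S_k$. I will show that $(T, V - T)$ is an out-reachable minimum $(x, y)$-cut in $G_k$, which contradicts the defining maximality of $S_k = \fmc(G_k, x, y)$ as the largest out-reachable minimum $(x,y)$-cut in $G_k$.

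First I would establish that $T$ is out-reachable in $G_k$. Every $v \in S$ is reachable from $x$ in $G - \delta^+_G(S)$ by the out-reachability of the important cut $S$, and such a path necessarily stays inside $S$ (any attempt to leave $S$ would use an edge of $\delta^+_G(S)$). Viewed in $G_k \supseteq G$, that same path stays in $S \subseteq T$ and hence avoids every edge of $\delta^+_{G_k}(T)$, including the added edges in $G_k \setminus G$ (which all have tail $x \in T$ and so cannot be traversed by a path already sitting inside $S$). The analogous argument using out-reachability of $S_k$ in $G_k$ handles $v \in S_k$. Closing under unions gives out-reachability of $T$ in $G_k$.

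The heart of the proof is to show $|\delta^+_{G_k}(T)| \leq \flow(G_k, x, y)$; combined with the reverse inequality valid for every $(x,y)$-cut, this makes $T$ a minimum cut in $G_k$ and contradicts the farthest property of $S_k$ via Step A. By submodularity of $\delta^+_{G_k}$ applied to the pair $(S, S_k)$, together with $|\delta^+_{G_k}(S \cap S_k)| \geq \flow(G_k, x, y) = |\delta^+_{G_k}(S_k)|$ (since $S \cap S_k$ is still an $(x,y)$-cut), one obtains $|\delta^+_{G_k}(T)| \leq |\delta^+_{G_k}(S)|$, so it suffices to prove $|\delta^+_{G_k}(S)| \leq \flow(G_k, x, y)$. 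I would do this by induction on $i \in [0, k]$ with the invariant
\[
|\delta^+_{G_i}(S)| - \flow(G_i, x, y) \;\leq\; k - i.
\]
The base case $i = 0$ holds by hypothesis ($|\delta^+_G(S)| \leq \lambda + k$ and $\flow(G_0, x, y) = \lambda$). In the inductive step, going from $G_i$ to $G_{i+1}$ the left-hand side changes by $\mathrm{bad}_i - \Delta_i$, where $\mathrm{bad}_i$ is the number of added edges $(x, v) \in E(G_{i+1}) \setminus E(G_i)$ with $v \notin S$ (equivalently, those $(u, v) \in \delta^+_{G_i}(S_i)$ with $v \notin S$) and $\Delta_i = \flow(G_{i+1}, x, y) - \flow(G_i, x, y)$. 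I would prove $\Delta_i \geq \mathrm{bad}_i + 1$ by iteratively applying \Cref{lem:fmc} together with \Cref{clm:fmc_further}: adding one edge $(x, v)$ with $v \notin$ current FMC grows the farthest minimum cut to a set containing $v$, each bad head (which by construction lies in $V - S_i$, hence outside the running FMC when added in the right order) contributes a fresh unit of flow, and the extra $+1$ is supplied by one further application of \Cref{lem:fmc} to any head remaining outside. Once the invariant yields $|\delta^+_{G_k}(S)| \leq \flow(G_k, x, y)$, Step B is complete.

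The main obstacle will be the inductive bound $\Delta_i \geq \mathrm{bad}_i + 1$: since the $|\delta^+_{G_i}(S_i)|$ new edges are added in one batch rather than one at a time, verifying that each bad edge genuinely contributes a new unit of flow requires a careful inner induction that tracks the evolving farthest minimum cut across the addition of each individual edge and uses the importance of $S$ to rule out the degenerate case where a bad head is absorbed into the new FMC before its edge is counted. If this exchange step goes through, the three steps combine: $T = S \cup S_k$ is an out-reachable minimum cut in $G_k$ strictly larger than $S_k$, violating the maximality of $S_k$ as an out-reachable minimum $(x,y)$-cut in $G_k$, so $S \subseteq S_k$.
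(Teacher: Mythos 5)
Your overall strategy---show $T := S \cup S_k$ is an out-reachable minimum $(x,y)$-cut in $G_k$ and invoke the farthest property of $S_k$---is genuinely different from the paper's. The paper instead picks a witness $v \in S \setminus S_k$ and the $(x,v)$-path $P$ inside $S$ guaranteed by importance; for each $i$ it selects a \emph{single} edge $(u_i,v_i)$ of $P$ crossing $\delta^+_G(S_i)$, defines $H_{i+1} = G + (x,v_0) + \cdots + (x,v_i)$, and uses \Cref{lem:fmc} and \Cref{clm:fmc_further} to show $\flow(H_{k+1},x,y) \geq \lambda + k + 1 > |\delta^+_G(S)|$. Since each head $v_i$ lies on $P$ and hence in $S$, the set $\delta^+(S)$ is unchanged in $H_{k+1}$, so the flow exceeding the cut produces an $(x,y)$-path avoiding $\delta^+_G(S)$, which is rerouted through a prefix of $P$ to live entirely in $G - \delta^+_G(S)$, contradicting that $(S,V-S)$ is an $(x,y)$-cut.

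Your Step B has a genuine gap: the inductive bound $\Delta_i \geq \mathrm{bad}_i + 1$ is false. Because $G_{i+1}$ is obtained by adding exactly $|\delta^+_{G_i}(S_i)| = \flow(G_i,x,y)$ edges, one always has $\Delta_i \leq \flow(G_i,x,y)$ (this is precisely \Cref{clm:vcut}'s bound $\flow(G_{i+1}) \leq 2\flow(G_i)$). But $\mathrm{bad}_i$ can equal $\flow(G_i,x,y)$ too: for example, if all cut edges of $S_i$ point directly to $y$, every added edge is $(x,y)$ and $y \notin S$ for every $(x,y)$-cut $S$. In that situation $\Delta_i \geq \mathrm{bad}_i + 1$ would force $\flow(G_{i+1}) \geq 2\flow(G_i) + 1$, which is impossible. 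A concrete instance is the path $x \to a \to y$: $S_0 = \{x,a\}$, $\delta^+(S_0) = \{(a,y)\}$, $\mathrm{bad}_0 = 1$, but $\flow(G_1) = 2$ so $\Delta_0 = 1 < 2$. Your justification that each bad head ``lies in $V - S_i$, hence outside the running FMC when added in the right order'' is also unsound: \Cref{lem:fmc} only asserts that $S_i$ \emph{remains a min cut}, not that it remains farthest, and the running FMC strictly grows past $S_i$ after each insertion, so a head outside $S_i$ can be absorbed before its edge is processed (and the ``extra $+1$'' has no guaranteed source once all remaining heads are absorbed). The paper circumvents this entirely by never adding an edge whose head leaves $S$---that is exactly what choosing the heads from a path $P \subseteq S$ buys---so no accounting of ``bad'' insertions is needed.
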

\begin{proof}
Suppose towards contradiction that there exists an important $(x, y)$-cut $(S, V-S)$  with size $|\delta^+_G(S)| \le \lambda + k$ such that $S \not \subseteq S_k$, and let $v \in S \cap (V - S_k)$. Since $(S, V-S)$ is an important $(x, y)$-cut, there exists an $(x, v)$-path $P$ in $G - \delta^+_G(S)$. We will use the existence of $P$ to obtain a contradiction. 
Since $v \not \in S_k$ and $x \in S_i \subseteq S_k$ for each $i \in [0, k]$,  path $P$ contains an edge in $\delta^+_G(S_i)$ for each $i \in [0, k]$. We let $(u_i, v_i) \in E(P)$ denote an arbitrary edge in  $E(P) \cap \delta^+_G(S_i)$ for each $i \in [0, k]$. 

For analysis purposes, we  define a sequence of multigraphs $H_0, H_1 \dots, H_{k+1}$ over vertex set $V$. 
Let graph $H_{0} := G$. For each $i \in [0, k]$, we let $H_{i+1} = H_0 + (x, v_0) + \dots + (x, v_i)$. 
Notice that in particular, $H_{i+1} = H_{i} + (x, v_i)$.

Let $(U_i, V-U_i) = \fmc(H_i, x, y)$ for each $i \in [0, k]$. We observe that for each $i \in [0, k]$, $H_i \subseteq G_i$ and  $E(G_i) - E(H_i) \subseteq \{x\} \times V$. Then by \Cref{clm:fmc_further}, $U_i \subseteq S_i$ for each $i \in [0, k]$. Then since $H_{i+1} = H_i + (x, v_i)$, where $v_i \in V - S_i \subseteq V - U_i$, it follows that $$\flow(H_{i+1}, x, y) \geq \flow(H_i, x, y) + 1$$ for each $i \in [0, k]$ by \Cref{lem:fmc}. In particular, since $\flow(H_0, x, y) = \lambda$, we conclude that $\flow(H_{k+1}, x, y) \geq \lambda + k + 1$.

We are ready to obtain our contradiction. Since $\flow(H_{k+1}, x, y) \geq \lambda + k + 1$ and $|\delta^+_G(S)| \leq \lambda + k$, there exists an $(x, y)$-path $P^*$ in graph $H_{k+1} - \delta^+_G(S)$. If path  $P^*$ is contained in  graph $G -  \delta^+_G(S)$, then this immediately implies a contradiction because $(S, V-S)$ is an $(x, y)$-cut. Otherwise, path $P^*$ is of the form $P^* = (x, v_i) \circ P^*[v_i, y]$ for some $i \in [0, k]$, where edge $(x, v_i)$ is in  graph $H_{k+1}$ and path $P^*[v_i, y]$ is in graph $G$. However, in this case, path $P[x, v_i]$ is contained in graph $G - \delta^+_G(S)$, so we conclude that the $(x, y)$-path $P[x, v_i] \circ P^*[v_i, y]$ is contained in graph $G - \delta^+_G(S)$, contradicting our assumption that $(S, V-S)$ is an $(x, y)$-cut. 
\end{proof}

\subsection{Anti-isolation lemma} 

\label{subsec:isolation}

In this section, we will use \Cref{lem:important_separators} to show our improved anti-isolation lemma (\Cref{lem:anti isolation}).

\isolation*
\begin{proof}
For each edge set $F_i$,  where $i \in [1, r]$, there exists a cut $(S_i, V-S_i)$ such that
\begin{enumerate}
    \item $\delta^+_G(S_i) \subseteq F_i$,
        \item $t_i \in S_i$, and
    \item $(S_i, V-S_i)$ is an out-reachable $(s, t_j)$-cut for each $j \neq i \in [1, r]$. 
\end{enumerate}

Now to obtain an upper bound on $r$ using \Cref{lem:important_separators}, we define a new graph $H$  by adding an artificial sink vertex $t^*$ to $G$ and adding the edge $(t_i, t^*)$ to $H$ for all $i \in [1, r]$. Then graph $H$ is defined as $H = G + t^* + (t_1, t^*) + \dots + (t_r, t^*)$.

We observe that $(S_i, V \cup \{t^*\} - S_i)$ is an out-reachable $(s, t^*)$-cut in $H$ of size $|\delta^+_H(S_i)| \leq k +1$ for each $i \in [1, r]$.  By  \Cref{lem:important_separators} and \Cref{clm:impcut farthest}, there exists an $(s, t^*)$-cut $(S^*, V-S^*)$ in $H$ of size $|\delta^+_H(S^*)| \leq 2^k$ such that $S_i \subseteq S^*$ for each $i \in [1, r]$. This implies that $(t_i, t^*) \in \delta^+_H(S^*)$ for each $i \in [1, r]$. We conclude that $r \leq |\delta^+_H(S^*)| \leq 2^k$. 
\end{proof}

\section{Conclusion and Open Problems}

\label{sec:open}

\Cref{thm:FT} and \Cref{thm:kconn} present the first $k$-FT connectivity preservers and $k$-connectivity preservers in directed graphs with $O(n\log n)$ edges for every constant $k$, while \Cref{thm:polysize} shows a $k$-connectivity preserver with a non-trivial size for any $k = o(n)$. Given the significance and wide-ranging applications of connectivity preservers in undirected graphs \cite{nagamochi1992linear,benczur1996approximating,benczur2015randomized,dynamic1,thorup2007fully,abraham2016fully,thurimella1995sub,daga2019distributed,parter2019small,ahn2012graph,guha2015vertex,assadi2023tight}, we are optimistic that our generalization to directed graphs in \Cref{thm:FT} will also yield further applications. Below, we outline some intriguing open problems motivated by our results.

\paragraph{Optimal Size Bounds.}

Our approach for proving \Cref{thm:FT} incurs a factor of $O(\log n)$ due to the depth of the expander hierarchy. Eliminating this $O(\log n)$ factor from the $k$-FT connectivity preservers might require a completely different approach, making it an exciting challenge.

Additionally, it is  important to determine the optimal constant in the factor $2^{\Theta(k)}$ across all variants of $k$-FT connectivity preservers presented in \Cref{tab:bounds}. Lastly, we believe that our upper bound of $O(n\sqrt{nk})$ for $k$-connectivity preservers can be improved to match the lower bound:

\begin{conjecture}

For every positive integer $k$, every $n$-vertex directed graph admits a $k$-connectivity preserver with $O(kn)$ edges.

\end{conjecture}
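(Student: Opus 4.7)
The plan is to sharpen the two-level decomposition used in \Cref{thm:k_critical} by choosing $q = \Theta(k)$ instead of $q = \sqrt{nk}$. With this choice, \Cref{lemma: k-conn-intra} immediately yields $\sum_{S \in \mathcal{S}} |E(H[S])| = O((q+k)n) = O(kn)$, so the entire problem reduces to bounding the cross-component edges $E'$ of \Cref{lemma: k-conn-inter} by $O(kn)$ rather than the current $O(kn^{2}/q) = O(n^{2})$.

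The first step is to exploit the laminar structure of $\mathcal{C}$: the recursive split procedure builds $\mathcal{C}$ as a binary decomposition tree whose leaves are the components in $\mathcal{S}$, so the cuts in $\mathcal{C}$ form a laminar family. Consequently, each cross-component edge $e = (u,v) \in E'$ can be charged to the \emph{unique} cut $(L,R) \in \mathcal{C}$ that first separates $u$ from $v$ — not to every cut whose $\delta^{+}$ or $\delta^{-}$ it happens to lie in. So $|E'|$ equals $\sum_{(L,R)\in \mathcal{C}} e(L,R)$, where $e(L,R)$ counts only the edges inside the component $S$ being split by $(L,R)$ that cross from $S\cap L$ to $S\cap R$ or vice versa.

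Next, for a single split cut $(L,R)$ with sides $s_{1} = |S \cap L|$ and $s_{2} = |S \cap R|$ (both $\geq q = \Theta(k)$), the plan is to strengthen \Cref{lemma: k-conn-cutsize} \emph{locally}: show that the subset of $\delta^{-}(L)$ actually lying inside $H[S]$ has size $O(k\min(s_{1},s_{2}))$. The intended argument is a restricted Gomory--Hu-style one: apply \Cref{lemma: demand-pairs} to $H[S]$ to obtain $O(s_{1}+s_{2})$ local demand pairs; observe that only the demand pairs straddling the split $(S\cap L, S\cap R)$ can route flow through $\delta^{-}(L) \cap E(H[S])$, and they number at most $O(\min(s_{1},s_{2}))$ (they are leaves of a subtree on the smaller side). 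Since each such demand pair uses $O(k)$ edges in $\delta^{-}(L)\cap E(H[S])$ by the interleaving argument, we obtain $e(L,R) = O(k\min(s_{1},s_{2}))$. Combined with the laminar charge-to-smaller-side bookkeeping, this telescopes to $\sum_{(L,R)\in\mathcal{C}} e(L,R) = O(kn)$, matching the lower bound.

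The main obstacle is the local cut bound in the previous step: the paper's $O(kn)$ bound for $|\delta^{-}(L)|$ is tight globally, so any improvement must come from restricting attention to edges inside $H[S]$ and from a global laminar amortization that the current one-cut analysis cannot see. A backup route, in case this refinement fails, is to bypass the decomposition entirely via a directed Nagamochi--Ibaraki analog: construct $k$ edge-disjoint sparse "skeletons" $F_{1},\dots,F_{k} \subseteq E(H)$, each of size $O(n)$, whose union preserves $k$-bounded connectivity for all pairs, presumably by iteratively packing branchings rooted along a directed Gomory--Hu tree produced from \Cref{lemma: demand-pairs}. Either route ultimately requires a structural insight about edge-minimal $k$-connectivity preservers of directed graphs that goes beyond the unbreakable-decomposition toolkit developed so far, which is why the $\Omega(nk)$ barrier has resisted the techniques of \Cref{thm:lambda_conn} and \Cref{thm:kconn}.
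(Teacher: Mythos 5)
This statement is an open conjecture in the paper, so there is no proof of it in the paper to compare against; the question is whether your plan actually closes the gap. You are candid that it is a plan rather than a proof, and I agree that it does not yet work, for reasons beyond the ones you flag. The crux is the proposed local strengthening of \Cref{lemma: k-conn-cutsize} to $|\delta^-(L)\cap E(H[S])| = O(k\min(s_1,s_2))$, and this step has two concrete problems. First, applying \Cref{lemma: demand-pairs} to $H[S]$ produces demand pairs certifying connectivity \emph{within} $H[S]$, but the edges of $H[S]$ are retained in the $k$-critical graph $H$ because deleting them hurts $\lambda^k_H(s,t)$ for some globally chosen $s,t$; the relevant flow paths are not confined to $S$ and may enter and leave $S$ repeatedly, so an edge of $\delta^-(L)\cap E(H[S])$ can be charged to demand pairs that have nothing to do with the split $(S\cap L, S\cap R)$ or even with $S$. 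There is no reason $H[S]$ is $k$-critical, and the "only straddling pairs route through these edges" claim fails precisely because directed flows wander. Second, even in the best case, the claim that straddling demand pairs number $O(\min(s_1,s_2))$ is not a property of the spanning-tree demand set from \Cref{lemma: demand-pairs}: a spanning tree on $S$ can have $\Theta(|S|)$ edges crossing any balanced bipartition (e.g.\ an alternating path), so the bound you want does not follow from "they are leaves of a subtree on the smaller side."

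Even granting the local cut bound, the "charge to the smaller side" amortization gives $\sum_{(L,R)\in\mathcal{C}}\min(s_1,s_2) = O(n\log n)$, so the final count would be $O(kn\log n)$, not the $O(kn)$ the conjecture asks for; a further idea is needed to remove the logarithm. (That said, a clean $O(kn\log n)$ bound for all $k$ would already improve the paper's $O(n\sqrt{nk})$ whenever $k = o(n/\log^2 n)$ and would be worth pursuing on its own.) The laminarity observation you open with is correct and useful for the charging — the decomposition tree is laminar even though the cuts $(L,R)$ themselves, being cuts of all of $V$, need not be — but it does not by itself rescue the local cut bound. The backup route via a directed Nagamochi--Ibaraki-style packing runs into the usual obstruction for directed graphs: there is no known analog of the scan-first-search forest decomposition, and packing arborescences along a Gomory--Hu-type tree from \Cref{lemma: demand-pairs} does not obviously certify symmetric $k$-connectivity for all pairs. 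So, as you say, both routes require a structural insight that is not yet present.
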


Resolving this conjecture would be exciting.

\paragraph{Near-Linear Construction Time.}

For many applications including fast algorithms for computing minimum cuts, it is crucial that the preservers are constructed very efficiently. We consider it an important open problem whether one can construct $k$-fault-tolerant connectivity preservers or $k$-connectivity preservers in directed graphs in near-linear time, even for constant $k$. However, it is worth noting that this remains unresolved even in the single-source version; the algorithm in \cite{baswana2018fault} requires $O(2^{k}mn)$ time, or $O(kmn^{1+o(1)})$ when using the almost-linear time max flow algorithm by \cite{algs1}.


\paragraph{Acknowledgment}{We thank an anonymous reviewer for many constructive suggestions that helped us improve the writing.}

\bibliographystyle{alpha}
\bibliography{ref}

\appendix

\section{A hierarchy of connectivity preserver problems}

\label{sec:reduc}

In this section, we aim to survey the landscape of 
connectivity preserver problems in directed graphs and develop an improved understanding of it. We begin by defining a small family of related connectivity preserver problems. We will need the following definition.



\begin{definition}[$k$-Bounded Connectivity] For a graph $G$, we define $\lambda_{G}^{k}(s,t)=\min(\lambda_{G}(s,t),k)$ as the $k$-bounded connectivity between $s$ and $t$ in $G$. (Recall that we use $\lambda_{G}(s,t)$ to denote the symmetric edge-connectivity between $s$ and $t$ in $G$.) We also use $\lambda^{k}(G)=\min(\lambda(G),k)$ to denote the $k$-bounded connectivity of a graph. \end{definition}

With the definition of $k$-bounded connectivity, we can define two families of directed connectivity preservers as follows.

\begin{definition}[$k$-Connectivity Family]\label{def:k-preservers} Let $G=(V,E)$ be a directed graph. For a subgraph $H$ of $G$, we say $H$ is a
\begin{itemize}
\item[\textendash{}] \textbf{$\textbf{\textit{s}}$-$\textbf{\textit{t}}$} $k$-connectivity preserver if $\lambda_{H}^{k}(s,t)=\lambda_{G}^{k}(s,t)$ for a given source-sink pair $(s,t)\in V\times V$, 
\item[\textendash{}] \textbf{global} $k$-connectivity preserver if $\lambda^{k}(H)=\lambda^{k}(G)$, 
\item[\textendash{}] \textbf{single-source} $k$-connectivity preserver if $\lambda_{H}^{k}(s,t)=\lambda_{G}^{k}(s,t)$ for a given $s\in V$ and any $t\in V$, 
\item[\textendash{}] \textbf{all-pairs} $k$-connectivity preserver if $\lambda_{H}^{k}(s,t)=\lambda_{G}^{k}(s,t)$ for any $s,t\in V$. 
\end{itemize}
\end{definition}

Note that an all-pairs $k$-connectivity preserver is exactly the same as the $k$-connectivity preserver defined in \Cref{subsec:intro kconn}. We can define an analogous family of connectivity preservers for the fault-tolerant setting.

\begin{definition}[$k$-FT Connectivity Family]\label{def:k-fault preservers} Let $G=(V,E)$ be a directed graph. For an subgraph $H$ of $G$, we say $H$ is a
\begin{itemize}
\item[\textendash{}] $\textbf{\textit{s}}$\textbf{-}$\textbf{\textit{t}}$ $k$-FT connectivity preserver if $\lambda_{H-F}^{1}(s,t)=\lambda_{G-F}^{1}(s,t)$ for a given source-sink pair $(s,t)\in V\times V$ and any set of edges $F\subseteq E$ with $|F|\leq k$, 
\item[\textendash{}] \textbf{global} $k$-FT connectivity preserver if $\lambda_{}^{1}(H-F)=\lambda_{}^{1}(G-F)$ for any set of edges $F\subseteq E$ with $|F|\leq k$, 
\item[\textendash{}] \textbf{single-source} $k$-FT connectivity preserver if $\lambda_{H-F}^{1}(s,t)=\lambda_{G-F}^{1}(s,t)$ for a given source $s\in V$, any $t\in V$, and any set of edges $F\subseteq E$ with $|F|\leq k$, 
\item[\textendash{}] \textbf{all-pairs} $k$-FT connectivity preserver if $\lambda_{H-F}^{1}(s,t)=\lambda_{G-F}^{1}(s,t)$ for any $s,t\in V$ and any set of edges $F\subseteq E$ with $|F|\leq k$. 
\end{itemize}
\end{definition} We again note that an all-pairs $k$-FT connectivity preserver is exactly the same as the $k$-FT connectivity preserver defined in the introduction.

As it turns out, the preservers in our connectivity preserver families can be ordered in a simple hierarchy according to how many edges they require. We formalize this hierarchy with the following theorem, which we prove in \Cref{app:reduction}.


\begin{theorem} \label{thm:reductions} The sizes of the preservers in the $k$-FT preserver family can be related as follows: 
\begin{enumerate}
\item $\textbf{\textit{s}}$\textbf{-}$\textbf{\textit{t}}$ $\Rightarrow$ \textbf{global}. If there exists a global $k$-FT preserver with size $f(n,k)$, then there exists an $s$-$t$ $k$-FT preserver with size $O(f(n,k))$. 
\item \textbf{global} $\Rightarrow$ \textbf{single-source}. If there exists a single-source $k$-FT preserver with size $f(n,k)$, then there exists a global $k$-FT preserver with size $O(f(n,k))$. 
\item \textbf{single-source} $\Rightarrow$ \textbf{all-pairs}. If there exists an all-pairs $k$-FT preserver with size $f(n,k)$, then there exists a single-source $k$-FT preserver with size $O(f(n,k))$. 
\end{enumerate}
Analogous reductions also hold for the $k$-connectivity preserver family. \Cref{thm:reductions} can be visualized as below. \end{theorem}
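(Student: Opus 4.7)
The plan is to handle each of the three reductions separately. Reductions (3) and (2) are essentially definitional. For (3), any all-pairs preserver immediately satisfies the single-source preservation condition for an arbitrary designated source, so the same subgraph works. For (2), a single-source preserver $H$ of $G$ rooted at $s$ is automatically a global preserver: in the $k$-FT case, $G-F$ is strongly connected iff every vertex is strongly connected to $s$ in $G-F$, which the single-source preserver preserves; in the $k$-connectivity case, the symmetric triangle inequality $\lambda_G(u,v) \ge \min(\lambda_G(u,s),\lambda_G(s,v))$ yields $\lambda(G) = \min_v \lambda_G(s,v)$, so preserving each $\lambda^k_G(s,v)$ suffices to preserve $\lambda^k(G)$.

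The main work is reduction (1), $s$-$t$ $\Leftarrow$ global. Given $G$ and a designated pair $(s,t)$, I construct two auxiliary graphs on $V(G)$: let $G' := G + \{(v,s),(t,v) : v \in V-\{s,t\}\}$ with $k+1$ parallel copies of each added edge, and let $G'' := G + \{(s,v),(v,t) : v \in V-\{s,t\}\}$ symmetrically. The crucial invariant is that every added edge of $G'$ is incoming to $s$ or outgoing from $t$, so no simple $s \to t$ path in $G'$ can traverse an added edge; symmetrically for $G''$ and $t \to s$ paths. Using this invariant together with the surviving parallel copies under any $|F|\le k$ deletions, I would establish the following two key properties. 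For the $k$-FT version: for any $F \subseteq E(G)$ with $|F| \le k$, $G'-F$ is strongly connected iff $s$ reaches $t$ in $G-F$ (the $\Leftarrow$ direction routes $v\to s\to t\to v'$ through surviving parallel edges; the $\Rightarrow$ direction follows from the simple-path invariant), and symmetrically for $G''$ and $t \to s$. For the $k$-connectivity version: $\lambda^k(G') = \lambda^k_{G'}(s,t) = \min(\text{maxflow}_G(s\to t),k)$, proved by checking pair-by-pair that every $\lambda^k_{G'}(u,v) \ge \min(\text{maxflow}_G(s\to t),k)$ via routes of the form $u \to s \to t \to v$ through the added parallel edges, so that $(s,t)$ achieves the global bottleneck.

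Given a global preserver $H_1 \subseteq G'$ of $G'$ and $H_2 \subseteq G''$ of $G''$, I define the output as $H := (H_1 \cap E(G)) \cup (H_2 \cap E(G)) \subseteq G$. For the $k$-FT case, if $s,t$ are strongly connected in $G-F$, then $G'-F$ is strongly connected, so $H_1-F$ is strongly connected, and the resulting simple $s \to t$ path in $H_1-F$ uses only $G$-edges by the invariant, hence lies in $H-F$; symmetrically $H_2$ provides the $t \to s$ path. For the $k$-connectivity case, the global preserver property gives $\lambda^k_{H_1}(s,t) = \lambda^k(H_1) = \lambda^k(G') = \min(\text{maxflow}_G(s \to t),k)$; since $s \to t$ flow in $H_1$ is realized by simple paths in $E(G)$, this max-flow transfers to $H_1 \cap E(G)$, and combining with the symmetric $t \to s$ bound from $H_2$ gives $\lambda^k_H(s,t) = \lambda^k_G(s,t)$. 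The size bound $|E(H)| \le |E(H_1)| + |E(H_2)| \le 2f(n,k) = O(f(n,k))$ is immediate. The hardest step will be the global bottleneck verification $\lambda^k(G') = \lambda^k_{G'}(s,t)$ for the $k$-connectivity case, which requires the case analysis over pair types $(v_1,v_2), (s,v), (t,v)$ to be tight; it is precisely the directional asymmetry of the simple-path invariant (an edge cannot simultaneously fail to be on any simple $s \to t$ path and any simple $t \to s$ path) that forces the two-graph construction rather than a single auxiliary graph.
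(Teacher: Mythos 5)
Your proposal is correct and, for the fault-tolerant family, follows essentially the same route as the paper: reductions (2) and (3) are treated as definitional, and reduction (1) uses the same two auxiliary graphs (one for the $s\to t$ direction, one for $t\to s$) with added hub edges incident to $s$ and $t$, with correctness resting on the observation that no simple $s\to t$ path can traverse an edge entering $s$ or leaving $t$. The one substantive difference is that you add $k+1$ parallel copies of each hub edge, whereas the paper adds a single copy. For the $k$-FT family this is superfluous: the fault set $F$ is constrained to be a subset of the original $E(G)$, so the hub edges automatically survive and a single copy suffices, which is precisely the argument the paper makes. For the $k$-connectivity family, however, your parallel copies are genuinely needed and are not in the paper. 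The paper merely asserts ``analogous reductions also hold for the $k$-connectivity preserver family'' without spelling out the argument, and with single copies the analogous reduction would fail: a vertex $v$ isolated in $G$ has out-degree one in $G_1$ (its single hub edge to $s$), so $\lambda(G_1)\le 1$ regardless of $\flow(G,s,t)$, and a global preserver of $G_1$ would then carry no usable information about $\lambda^k_G(s,t)$. Your computation showing $\lambda^k(G')=\lambda^k_{G'}(s,t)=\min(\flow(G,s,t),k)$ when each hub edge has multiplicity $k+1$ is the right fix and the pair-by-pair bottleneck check you sketch goes through. One small caveat worth recording: the parallel-copy construction forces $G'$ to be a multigraph, so one needs the hypothesized global-preserver bound $f(n,k)$ to hold for multigraphs (the paper's single-copy construction can also produce parallel edges when $G$ already contains hub-type edges, so this is not a new concern, but you should state it).
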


\definecolor{34f7703b-c631-5d19-a2aa-8444d08d9619}{RGB}{255, 255, 255}
\definecolor{f3551e38-74df-57e2-b793-83d7fe876c85}{RGB}{0, 0, 0}
\definecolor{0b71a967-1f15-55a5-9bb9-70efa7b4fc58}{RGB}{51, 51, 51}
\definecolor{5856d031-3da1-575c-834e-c77e9e438c62}{RGB}{162, 177, 195}

\tikzstyle{064d39ef-8229-5b34-82aa-8cec5ded83a2} = [rectangle, minimum width=3cm, minimum height=1cm, text centered, font=\normalsize, color=0b71a967-1f15-55a5-9bb9-70efa7b4fc58, draw=f3551e38-74df-57e2-b793-83d7fe876c85, line width=1, fill=34f7703b-c631-5d19-a2aa-8444d08d9619]
\tikzstyle{7be24b85-97d0-5b76-ba9e-d94005dca8f2} = [thick, draw=5856d031-3da1-575c-834e-c77e9e438c62, line width=2, ->, >=stealth]

\begin{tikzpicture}[node distance=2cm]
\centering
\label{fig:reductions}
\node (990574b8-119d-487e-b924-02c7dd7078c0) [064d39ef-8229-5b34-82aa-8cec5ded83a2] {$s$-$t$};
\node (950d45e1-5d4e-450b-b18d-3719cd91c914) [064d39ef-8229-5b34-82aa-8cec5ded83a2, right of=990574b8-119d-487e-b924-02c7dd7078c0, xshift=2cm] {global};
\node (95f17a38-7cf6-443e-9183-ea639c27e9f0) [064d39ef-8229-5b34-82aa-8cec5ded83a2, right of=950d45e1-5d4e-450b-b18d-3719cd91c914, xshift=2cm] {single-source};
\node (99c5bb58-a488-440f-a093-36194836e6cd) [064d39ef-8229-5b34-82aa-8cec5ded83a2, right of=95f17a38-7cf6-443e-9183-ea639c27e9f0, xshift=2cm] {all-pairs};
\draw [7be24b85-97d0-5b76-ba9e-d94005dca8f2] (990574b8-119d-487e-b924-02c7dd7078c0) --  (950d45e1-5d4e-450b-b18d-3719cd91c914);
\draw [7be24b85-97d0-5b76-ba9e-d94005dca8f2] (950d45e1-5d4e-450b-b18d-3719cd91c914) --  (95f17a38-7cf6-443e-9183-ea639c27e9f0);
\draw [7be24b85-97d0-5b76-ba9e-d94005dca8f2] (95f17a38-7cf6-443e-9183-ea639c27e9f0) --  (99c5bb58-a488-440f-a093-36194836e6cd);
\end{tikzpicture}

By \Cref{thm:FT} and our reductions in \Cref{thm:reductions}, all preservers in the $k$-FT connectivity preserver family have size at most $O(k4^{k}n\log n)$. However, do all preservers in this family require size that depends exponentially on $k$? We confirm that this exponential dependence is necessary with the following theorem, which we prove in \Cref{app:lowerbound}.

\begin{theorem} \label{thm:stlower} For any positive integers $n,k$ with $n\geq2^{k/2}$, there exists an $n$-vertex directed graph $G$ and a pair of vertices $s,t\in V(G)$, such that any $s$-$t$ $k$-FT connectivity preserver of $G$ must have $\Omega(n2^{k/2})$ edges. \end{theorem}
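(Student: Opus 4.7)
The plan is to exhibit an explicit construction that forces $\Omega(n 2^{k/2})$ distinct edges to be "$k$-critical" for $s$-$t$ strong connectivity — meaning that for each such edge $e$ there is a fault set $F_e$ of size $\leq k$ with $s,t$ strongly connected in $G - F_e$ but not in $(G-F_e)-e$. Since any $s$-$t$ $k$-FT connectivity preserver must retain every $k$-critical edge, this immediately yields the stated bound.

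Set $K := 2^{\lfloor k/2 \rfloor}$ and let $m = \Theta(n/K)$. I will build $G$ as a chain of $m$ identical "bipartite gadgets" $H_1,\dots,H_m$ glued end-to-end: the input of $H_1$ is $s$, the output of $H_m$ is $t$, the output of each $H_i$ is identified with the input of $H_{i+1}$, and a single back-edge $t\to s$ is added so that $s,t$ are initially strongly connected. Each gadget $H_i$ consists of (i) a complete binary out-tree $T^s_i$ of depth $\lfloor k/2 \rfloor$ rooted at the input vertex $s_i$ with $K$ leaves $A_i=\{a^i_1,\dots,a^i_K\}$ (edges directed root-to-leaf); (ii) a symmetric binary in-tree $T^t_i$ of depth $\lfloor k/2 \rfloor$ rooted at the output vertex $t_i$ with $K$ leaves $B_i=\{b^i_1,\dots,b^i_K\}$ (edges directed leaf-to-root); and (iii) the complete bipartite set of $K^2$ edges from $A_i$ to $B_i$. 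Each gadget uses $\Theta(K)$ vertices and contributes $K^2$ bipartite edges, so $G$ has $n=\Theta(mK)$ vertices and $m K^2 = \Theta(nK) = \Theta(n\, 2^{k/2})$ bipartite edges in total.

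The key step is to prove every bipartite edge is $k$-critical. Fix $e = (a^{i^*}_j,\, b^{i^*}_{j'})$ in gadget $i^*$ and define $F_e$ by: at each of the $\lfloor k/2 \rfloor$ levels of $T^s_{i^*}$, cutting the one sibling-subtree edge lying off the root-to-$a^{i^*}_j$ path (so that $a^{i^*}_j$ becomes the unique leaf of $T^s_{i^*}$ reachable from $s_{i^*}$); and symmetrically cutting $\lfloor k/2 \rfloor$ edges in $T^t_{i^*}$ so that $b^{i^*}_{j'}$ is the unique leaf of $T^t_{i^*}$ that can still reach $t_{i^*}$. Then $|F_e|\leq 2\lfloor k/2 \rfloor \leq k$. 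In $G-F_e$, any forward $s\to t$ path must traverse each gadget sequentially, and inside $H_{i^*}$ it is funneled through $a^{i^*}_j$ on entry and $b^{i^*}_{j'}$ on exit, so it must use precisely the bipartite edge $e$; since $t\to s$ is untouched, $s,t$ remain strongly connected. Deleting $e$ additionally removes the only valid forward crossing of $H_{i^*}$, breaking $s\to t$ reachability and hence strong connectivity.

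The main subtlety is the "no bypass" claim inside $H_{i^*}$: the strictly serial chain (one shared vertex between consecutive gadgets) is essential, since it forces any forward path into and out of $H_{i^*}$ through its unique input/output, and then the two tree prunings pin the transit pair to exactly $(a^{i^*}_j,\,b^{i^*}_{j'})$; in particular a path cannot re-enter or exit $H_{i^*}$ via any other gadget's vertices. The hypothesis $n\ge 2^{k/2}$ just provides enough room for at least one gadget, and the constant needed to fit the full chain with $\Theta(n\,2^{k/2})$ critical bipartite edges is absorbed into the $\Omega$.
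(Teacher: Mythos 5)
Your proposal is correct and is essentially the same construction and argument as the paper's proof: a serial chain of $\Theta(n/2^{k/2})$ gadgets, each consisting of a depth-$\lfloor k/2\rfloor$ out-tree, a depth-$\lfloor k/2\rfloor$ in-tree, the complete bipartite graph between their leaf sets, and a global $t\to s$ back-edge, with the fault set pruning both trees to pin each bipartite edge as the unique forward crossing. The only differences are notational.
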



\subsection{Reductions between Preservers: Proof of \Cref{thm:reductions}}

\label{app:reduction}


\begin{proof}[Proof of \Cref{thm:reductions}] We prove the three reductions in \Cref{thm:reductions} as follows:

\begin{enumerate}
        \item $\textbf{\textit{s}}$\textbf{-}$\textbf{\textit{t}}$ $\Rightarrow$  \textbf{global}. Suppose there exists a global $k$-FT connectivity preserver with size $f(n,k)$ for every $n$-vertex graph. We claim that for any $n$-vertex graph $G=(V,E)$ and every $s,t \in V$, we can build an $s$-$t$ $k$-FT connectivity preserver with size $O(f(n,k))$
        
        We let  $G_1=G + \{(v,s)|v \in V\} + \{(t,v)|v \in V\}$ be the graph obtained by adding $(v,s)$ for any $v \in V$ and $(t,v)$ for any $v \in V$. Let $H_1$ be a global $k$-FT preserver of $G_1$. Suppose $t$ is \emph{reachable} from $s$ in $G-F$ for some fault set $F \subseteq G$ with $|F| \leq k$, then $t$ is \emph{reachable} from $s$ in $G_1-F$. For all $v \in V$, both $(v,s)$ and $(t,v)$ are in $G_1-F$. Given that $t$ is \emph{reachable} from $s$ in $G_1-F$, we know all vertices $v$ will be in the same strong component with $s,t$. 
        
        Thus, $G_1-F$ is strongly connected. Since $H_1$ is a global $k$-FT preserver of $G_1$, $H_1-F$ is also strongly connected, and $t$ is reachable from $s$ in $H_1-F$. Finally, since any simple directed path from $s$ to $t$ cannot pass any edge $(v,s)$ or $(t,v)$, we deduce that $t$ is \emph{reachable} from $s$ in $(H_1 \cap G) - F$. 

        Let $G_2=G \cup \{(v,t)|v \in V\} \cup \{(s,v)|v \in V\}$, and let $H_2$ be a global $k$-FT preserver of $G_2$. By the same argument we know if $s$ is \emph{reachable} from $t$ in $G-F$, then $s$ is \emph{reachable} from $t$ in $(H_2\cap G)-F$.
        
        Finally, we define $H=(H_1 \cup H_2) \cap G$, and observe that $|E(H)| \leq 2f(n,k) + 4n = O(f(n, k))$. Now suppose $s$ and $t$ are \emph{strongly connected} in $G-F$ for some fault set $F \subseteq G$ with $|F| \leq k$, then from above we get $t$ is \emph{reachable} from $s$ in $(H_1 \cap G)-F$ and $s$ is \emph{reachable} from $t$ in $(H_2\cap G)-F$. Thus, they are still strongly connected in $H-F$. We conclude that $H$ is an $s$-$t$ $k$-FT preserver with size $O(f(n,k))$.

        \item  \textbf{global} $\Rightarrow$ \textbf{single-source}. Suppose there exists a single-source $k$-FT  preserver with size $f(n,k)$ for every $n$-vertex graph. Now for any $n$-vertex graph $G$, we can pick any source $s\in V(G)$ and build a single-source $k$-FT preserver $H$ with source $s$.

        $H$ will have size $f(n,k)$, and we prove that $H$ is a global $k$-FT preserver of $G$. Suppose $G-F$ is strongly connected for some fault set $F$ with $|F| \leq k$, then in particular, every vertex $v$ will be strongly connected to the source $s$ in $G-F$. Therefore, since $H$ is a single-source $k$-FT preserver, we know every vertex $v$ will also be strongly connected to the source $s$ in $H-F$, and thus $H-F$ is also connected.
        
        \item  \textbf{single-source} $\Rightarrow$ \textbf{all-pairs}. Since any all-pairs $k$-FT connectivity preserver is also a single-source $k$-FT connectivity preserver by definition, we immediately get this result.
    \end{enumerate}
\end{proof}

\subsection{Lower Bound for $s$-$t$ $k$-FT Connectivity Preservers: Proof of \Cref{thm:stlower}}

\label{app:lowerbound}
\begin{figure}[ht]
    \centering
    \includegraphics[width=\textwidth]{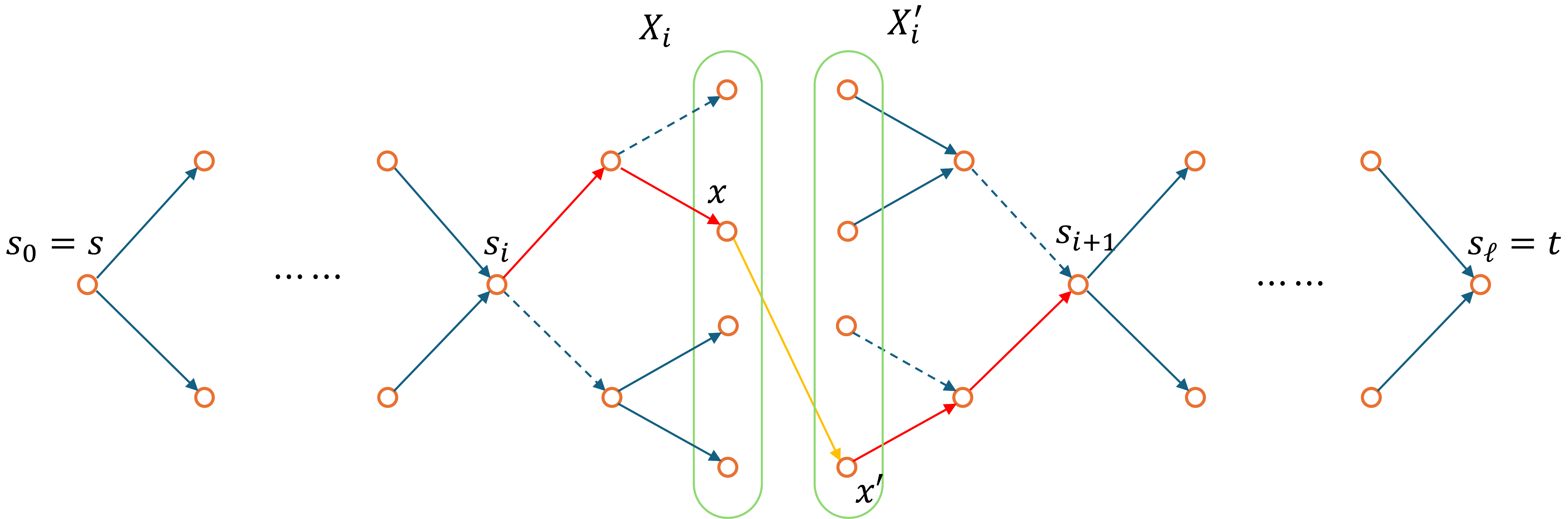}
    \caption{An $\Omega(n2^{k/2})$ Lower Bound for $s$-$t$ $k$-FT Preserver}
    \label{fig:stlower}
\end{figure}
\begin{proof}[Proof of \Cref{thm:stlower}]
    We consider the following construction, as in figure \Cref{fig:stlower}. First, we define a set of $\ell = \Theta(n/2^{k/2})$ vertices $s_0, s_1,\dots, s_\ell$. We will let $s = s_0$ and $t = s_{\ell}$. 
    Between each pair of vertices $s_i$ and $s_{i+1}$, $i \in [0, \ell - 1]$, we construct our lower bound graph as follows:

    \begin{itemize}
        \item Construct a full binary tree with depth $k/2$ rooted as $s_i$, with edges pointing away from root $s_i$. Denote the tree by $T_i$ and the set of leaves by $X_i$.
        \item Construct a full binary tree with depth $k/2$ rooted as $s_{i+1}$, with edges pointing towards the root $s_{i+1}$. Denote the the tree by $T'_i$ and set of leaves by $X'_i$.
        \item For every $x \in X_i$ and $x' \in X'_i$, add an edge from $x$ to $x'$.
    \end{itemize}
    Additionally, we add an edge from $t$ to $s$ (not shown in \Cref{fig:stlower}). 
    Consider the graph $G$ we constructed. We will prove that any $s$-$t$ $k$-FT connectivity preserver $H \subseteq G$ must have $\Omega(n2^{k/2})$ edges. 
    
    Consider any edge $(x,x')$ such that $x \in X_i$ and $x' \in X'_i$ for some $i\in[0, \ell-1]$. Let $P$ be an $(s_i, x)$-path in $T_i$. 
    Let $F_1 \subseteq E(T_i) - E(P)$ be the set of $k/2$ edges in tree $T_i$ that are incident to path $P$ but not contained in it. 
    We observe that $x$ is the only leaf in $X_i$ that is reachable from $s_i$ in $G-F_1$. Similarly, there is a set $F_2 \subseteq E(T'_i)$ of $k/2$ edges in $T'_i$ such that $x'$ is the only leaf in $X'_i$ that can reach $s_{i+1}$ in $G-F_2$. 
    
    Finally, we let $F=F_1 \cup F_2$. Now we can observe that any path from $s$ to $t$ in $G-F$ must use edge $(x,x')$. Since $s,t$ are still strongly connected in $G-F$, it must be true that $(x,x')\in H$. Finally, since there are $\ell=\Omega(n/2^{k/2})$ layers, and each layer has $2^{k/2} \times 2^{k/2}=2^k$ edges, we know that $H$ must contain at least $\ell2^k=\Omega(n2^{k/2})$ edges. Therefore, the statement holds.
\end{proof}

\subsection{Cut Characterizations}

In this section, we state the cut characterizations of (all-pairs) $k$-connectivity preservers and $k$-FT connectivity preservers. We prove these characterizations are equivalent to the original flow characterizations of these connectivity preservers in \Cref{def:k-preservers} and \Cref{def:k-fault preservers}. Our cut characterizations highlight the difference and relation between $k$-connectivity preservers and $k$-FT connectivity preservers. 



\begin{definition}[Minimal symmetric cut]
    We say that an $(s, t)$-cut $(S, V-S)$ is a minimal symmetric $(s, t)$-cut if there does not exist a cut $(S', V-S')$ such that $\delta^+(S') \subsetneq \delta^+(S)$ and $(S', V-S')$ is an $(s, t)$-cut or a $(t, s)$-cut. 
\end{definition}





We now prove our cut characterization of $k$-FT connectivity preservers.

\begin{theorem} \label{thm:ft-conn-pres}
Given a directed graph $G$, a subgraph $H$ is an  (all-pairs) $k$-FT connectivity preserver of  $G$ if and only if the following flow and cut characterizations hold:
\begin{itemize}
\item (flow): for every $s,t \in V(G)$ and $F\subseteq E(G)$ where $|F|\le k$, we have $\lambda_{H-F}^{1}(s,t)=\lambda_{G-F}^{1}(s,t)$.
\item (cut): for every $s,t \in V(G)$ and every minimal symmetric $(s,t)$-cut $(S, V-S)$ in $G$, we have
$\min(|\delta^{+}_H(S)|, k+1) = \min(|\delta^{+}_G(S)|, k+1)$.
\end{itemize}
\end{theorem}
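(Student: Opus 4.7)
Since Definition~\ref{def:k-fault preservers} literally defines $H$ as an all-pairs $k$-FT connectivity preserver by the flow equality, the equivalence between the first and the second bullet is immediate, and the work is to show flow $\Leftrightarrow$ cut. I plan to prove each direction by contraposition, anchored by the following structural lemma: for any minimal symmetric $(s,t)$-cut $(S,V-S)$ in $G$, (a)~for every $(u,v)\in\delta^+_G(S)$, $s$ reaches $u$ using only edges of $G[S]$ and $v$ reaches $t$ using only edges of $G[V-S]$; and (b)~for every proper subset $F\subsetneq\delta^+_G(S)$, $t$ still reaches $s$ in $G-F$. Part~(a) is a peeling argument against the $(s,t)$-cut clause of minimality: if $s$ cannot reach $u$ inside $G[S]$, then $S':=\{x\in S: s\text{ reaches }x\text{ in }G[S]\}$ is an $(s,t)$-cut with $\delta^+_G(S')\subsetneq\delta^+_G(S)$, violating minimality (the $v$-statement is symmetric, taking the set of vertices inside $V-S$ that can reach $t$). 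Part~(b) is the dual peeling argument against the $(t,s)$-cut clause: a $t\to s$ disconnection in $G-F$ would yield a $(t,s)$-cut $(V\setminus S'',S'')$ with $\delta^+_G(V\setminus S'')\subseteq F\subsetneq\delta^+_G(S)$.

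For flow $\Rightarrow$ cut, I fix $s,t$ and a minimal symmetric $(s,t)$-cut $(S,V-S)$ with $c=|\delta^+_G(S)|$, and split on $c$. If $c\leq k$ and some $e^*=(u^*,v^*)\in\delta^+_G(S)\setminus\delta^+_H(S)$ exists, I take $F=\delta^+_G(S)\setminus\{e^*\}$, which has size $\leq k$; part~(a) threads an $(s,t)$-path through $e^*$ in $G-F$ (concatenating an $(s,u^*)$-path in $G[S]$, the edge $e^*$, and a $(v^*,t)$-path in $G[V-S]$) and part~(b) supplies a $(t,s)$-path, so $s,t$ are strongly connected in $G-F$ while $\delta^+_H(S)\subseteq F$ disconnects them in $H-F$, contradicting flow. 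If $c\geq k+1$ and $|\delta^+_H(S)|\leq k$, the same argument works with $F=\delta^+_H(S)$ and any $e^*\in\delta^+_G(S)\setminus F$, which exists since $|F|\leq k<c$. These two cases together give $\min(|\delta^+_H(S)|,k+1)=\min(c,k+1)$.

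For cut $\Rightarrow$ flow, I suppose flow fails: there exist $s,t,F$ with $|F|\leq k$ such that $s,t$ are strongly connected in $G-F$ but (without loss of generality) $s$ does not reach $t$ in $H-F$. Let $S$ be the set of vertices reachable from $s$ in $H-F$; then $\delta^+_H(S)\subseteq F$, so $|\delta^+_H(S)|\leq k$, and $|\delta^+_G(S)\setminus F|\geq 1$. I then iteratively shrink $(S,V-S)$, each time replacing it by a witnessing $(s,t)$- or $(t,s)$-cut with $\delta^+_G$ strictly contained in the current one, until reaching a minimal symmetric cut $(S^*,V-S^*)$ for the pair $(s,t)$ or $(t,s)$ with $\delta^+_G(S^*)\subseteq\delta^+_G(S)$. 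The identity $\delta^+_H(\cdot)=\delta^+_G(\cdot)\cap E(H)$ forces $\delta^+_H(S^*)\subseteq\delta^+_H(S)\subseteq F$, so $|\delta^+_H(S^*)|\leq k$; and strong connectivity of $s,t$ in $G-F$ forces $|\delta^+_G(S^*)\setminus F|\geq 1$ whichever direction the cut separates, so $|\delta^+_H(S^*)|<|\delta^+_G(S^*)|$. Applying the cut condition to $(S^*,V-S^*)$ then contradicts this: the LHS $\min(|\delta^+_H(S^*)|,k+1)$ equals $|\delta^+_H(S^*)|$, whereas the RHS $\min(|\delta^+_G(S^*)|,k+1)$ is at least $|\delta^+_H(S^*)|+1$.

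The step I expect to be the most delicate is the structural lemma, since both the $(s,t)$- and the $(t,s)$-cut clauses of ``minimal symmetric'' are genuinely used: part~(a) invokes only the first clause and part~(b) only the second, and without both the strong-connectivity reconstruction inside $G-F$ would break on either the $s\to t$ or the $t\to s$ side. The degenerate case where $s,t$ are not strongly connected in $G$ should be peeled off separately, as then the minimal symmetric $(s,t)$-cut has $\delta^+_G(S)=\emptyset$ and both conditions hold trivially. Everything else amounts to bookkeeping with $\min(\cdot,k+1)$ and the monotonicity $\delta^+_H(X')\subseteq\delta^+_H(X)$ whenever $\delta^+_G(X')\subseteq\delta^+_G(X)$.
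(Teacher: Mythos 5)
Your proof is correct and follows essentially the same strategy as the paper: in each direction, the minimality of the symmetric $(s,t)$-cut is used to turn a violation of one characterization into a fault set $F$ witnessing a violation of the other, and the cut$\Rightarrow$flow direction in both proofs shrinks the zero-out-degree set $\{x : s\text{ reaches }x\text{ in }H-F\}$ down to a minimal symmetric cut. The only real difference is that your flow$\Rightarrow$cut direction introduces a structural lemma and a case split on $c = |\delta^+_G(S)|$, whereas the paper handles both cases at once by setting $F = \delta^+_H(S)$ (which satisfies $|F|\le k$ and $F\subsetneq\delta^+_G(S)$ directly from the failed min-equality) and invoking minimality to rule out any $(s,t)$- or $(t,s)$-cut whose out-edges lie in a proper subset of $\delta^+_G(S)$; your scaffolding is correct but not required.
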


\begin{proof}
The flow characterization of $k$-FT connectivity preservers is identical to the definition of $k$-FT connectivity preservers in \Cref{def:k-fault preservers}.  What remains is to prove that $H$ is a $k$-FT connectivity preserver of $G$ if and only if the cut characterization holds.

\textbf{(Only if direction.)} Let $H$ be a $k$-FT connectivity preserver of $G$. Suppose towards contradiction that $H$ does not satisfy the cut characterization, and let $(S, V-S)$ be a  minimal symmetric $(s,t)$-cut such that $\min(|\delta^+_H(S)|, k+1) < \min(|\delta^+_G(S)|, k+1)$. Let $F$ be the edge fault set $F = \delta^+_H(S)$, and note that $|F| \leq k$. We  observe that $s, t$ are strongly connected in $G-F$ (by symmetric minimality and the fact that $\delta_H^+(S) \subsetneq \delta_G^+(S)$), but not in $H-F$, contradicting our assumption that $H$ is a $k$-FT connectivity preserver. We conclude that $H$ satisfies the cut characterization.


\textbf{(If direction.)} Let $H$ be a subgraph of $G$ that satisfies the cut characterization, and suppose towards contradiction that $H$ is not a $k$-FT connectivity preserver. Then there exist a pair of vertices $s, t \in V(G)$ and a set of edge faults $F \subseteq E(G)$ with $|F|\leq k$ such that $s$ and $t$ are strongly connected in $G-F$, but not in $H-F$. Without loss of generality, we may assume that $s$ cannot reach $t$ in $H-F$, and we can let $(S, V-S)$ be an $(s, t)$-cut in $H-F$ of size $|\delta^+_{H-F}(S)| = 0$. 

Let $(S', V-S')$ be a cut in $G$ such that $\delta^+_G(S') \subseteq \delta^+_G(S)$, and $(S', V-S')$ is either a minimal symmetric $(s, t)$-cut or a minimal symmetric $(t, s)$-cut. 
We observe that  $\delta^{+}_H(S') \subseteq F$, whereas $\delta^{+}_G(S') \not \subseteq F$  because $s$ and $t$ are strongly connected in $G-F$. This implies that $\delta^+_H(S') \subsetneq \delta^+_G(S')$. Since $|\delta^+_H(S')| \leq |F| \leq k$, we conclude that  $$\min(|\delta^{+}_H(S')|, k+1) < \min(|\delta^{+}_G(S')|, k+1),$$ contradicting our assumption that $H$ satisfies the cut characterization. 
\end{proof}

Then, we prove our cut characterization for $k$-connectivity preservers.

\begin{theorem} \label{thm:k-conn-press}
Given a directed graph $G$, a subgraph $H$ is an (all-pairs) $k$-connectivity preserver of $G$ if and only if the following flow and cut characterizations hold:
\begin{itemize}
\item (flow): for every $s,t \in V(G)$, we have $\lambda_{H}^{k}(s,t)=\lambda_{G}^{k}(s,t)$.
\item (cut): for every $s,t \in V(G)$ and every minimal symmetric $(s,t)$-cut $(S, V-S)$ in $G$, we have
$|\delta^{+}_H(S)| \geq \lambda_G^k(s, t)$.
\end{itemize}
\end{theorem}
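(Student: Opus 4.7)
The flow characterization is literally the definition of an all-pairs $k$-connectivity preserver given in \Cref{def:k-preservers}, so there is nothing to prove there. The plan is to establish the equivalence between being a $k$-connectivity preserver and satisfying the cut characterization, by proving the two implications separately.

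For the forward direction, suppose $H$ is a $k$-connectivity preserver, fix $s,t \in V(G)$, and let $(S, V-S)$ be a minimal symmetric $(s,t)$-cut in $G$. Since $(S, V-S)$ is in particular an $(s,t)$-cut in $H$, max-flow min-cut gives $|\delta^+_H(S)| \geq \flow(H, s, t) \geq \lambda_H(s,t) \geq \lambda_H^k(s,t)$. By hypothesis $\lambda_H^k(s,t) = \lambda_G^k(s,t)$, and the desired inequality follows.

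For the backward direction, assume the cut condition holds and suppose for contradiction that there exist $s,t$ with $\lambda_H^k(s,t) < \lambda_G^k(s,t)$. Since $H \subseteq G$ implies $\lambda_H^k \leq \lambda_G^k$, the gap forces $\lambda_H(s,t) < \lambda_G^k(s,t) \leq k$; by swapping $s$ and $t$ if necessary (using $\lambda_G^k(s,t) = \lambda_G^k(t,s)$), we may assume $\flow(H, s, t) < \lambda_G^k(s,t)$. By max-flow min-cut in $H$, there is an $(s,t)$-cut $(S, V-S)$ with $|\delta^+_H(S)| = \flow(H,s,t) < \lambda_G^k(s,t)$. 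Now among all cuts $(S'', V-S'')$ of $G$ that are either $(s,t)$-cuts or $(t,s)$-cuts and that satisfy $\delta^+_G(S'') \subseteq \delta^+_G(S)$, pick one, call it $(S', V-S')$, with $\delta^+_G(S')$ inclusion-minimal. By construction $(S', V-S')$ is either a minimal symmetric $(s,t)$-cut or a minimal symmetric $(t,s)$-cut, and $|\delta^+_H(S')| \leq |\delta^+_H(S)| < \lambda_G^k(s,t) = \lambda_G^k(t,s)$. Applying the cut condition to the appropriate ordered pair yields $|\delta^+_H(S')| \geq \lambda_G^k(s,t)$, a contradiction.

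The main subtlety, and essentially the only nontrivial step, is verifying that one can always descend to a \emph{minimal symmetric} cut covered by the hypothesis. The care is in allowing the descent to land on either an $(s,t)$-cut or a $(t,s)$-cut, which is handled cleanly by using the symmetry $\lambda_G^k(s,t) = \lambda_G^k(t,s)$ and by quantifying the hypothesis over all ordered pairs $s,t \in V(G)$. Everything else is a direct consequence of max-flow min-cut and the definition of $\lambda^k$.
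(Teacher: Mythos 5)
Your proposal is correct and follows essentially the same route as the paper's proof: the forward direction is the one-line weak-duality observation $\lambda_H^k(s,t)\le|\delta^+_H(S)|$, and the backward direction descends from a small $(s,t)$-cut in $H$ to a minimal symmetric cut contradicting the hypothesis. You simply spell out two details the paper leaves implicit (the WLOG swap of $s,t$ and the choice of an inclusion-minimal cut), which is a fine level of rigor but not a different argument.
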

\begin{proof}

The flow characterization of $k$-connectivity preservers is identical to the definition of $k$-connectivity preservers in \Cref{def:k-preservers}.  What remains is to prove that $H$ is a $k$-connectivity preserver of $G$ if and only if the cut characterization holds.

\textbf{(Only if direction.)} Let $H$ be a $k$-connectivity preserver of $G$. Suppose towards contradiction that $H$ does not satisfy the cut characterization, and let $(S, V-S)$ be a  minimal symmetric $(s,t)$-cut such that $|\delta^{+}_H(S)| < \lambda_G^k(s, t)$, then we immediately have the contradiction that $\lambda_H^k(s, t) \leq |\delta^{+}_H(S)| < \lambda_G^k(s, t)$. We conclude that $H$ satisfies the cut characterization. 

\textbf{(If direction.)} Let $H$ be a subgraph of $G$ that satisfies the cut characterization, and suppose towards contradiction that $H$ is not a $k$-connectivity preserver. Then there exist a pair of vertices $s, t \in V(G)$ that $\lambda_{H}^{k}(s,t) < \lambda_{G}^{k}(s,t)$. Without loss of generality, we may assume there is a cut $(S, V-S)$ be an $(s, t)$-cut in $H$ of size $|\delta^+_{H}(S)| < \lambda_{G}^{k}(s,t)$. 

Let $(S', V-S')$ be a cut in $G$ such that $\delta^+_G(S') \subseteq \delta^+_G(S)$, and $(S', V-S')$ is either a minimal symmetric $(s, t)$-cut or a minimal symmetric $(t, s)$-cut. We observe that  $|\delta^{+}_H(S')| \leq |\delta^+_{H}(S)| < \lambda_{G}^{k}(s,t)$, contradicting our assumption that $H$ satisfies the cut characterization. 

\end{proof}

\begin{remark}
\Cref{thm:ft-conn-pres} and \Cref{thm:k-conn-press} highlight the similarities and differences between $k$-FT connectivity preservers and $k$-connectivity preservers. In particular, they both characterize $k$-connectivity preservers as preserving the minimum symmetric cuts: In $k$-FT connectivity preservers, we preserve the minimal symmetric cuts of size at most $k+1$, and in $k$-connectivity preservers, we preserve the minimal symmetric cuts of size at most $\lambda_G^k(s,t) \leq k$. These also imply any $k-1$-FT connectivity preserver is a $k$-connectivity preserver.
\end{remark}

\section{Strong Connectivity Preservers in Other Fault Models}


As mentioned in \Cref{subsec:mainresult}, we will prove that both the \textbf{bounded-degree fault model} (introduced in \cite{boundeddegree0}) and the \textbf{color fault model} (introduced in \cite{color0}) do not admit sparse fault-tolerant connectivity preservers in directed graphs, even in the easiest settings of 1-faulty-degree and 1-color fault, respectively. In this appendix, we will formally define these fault models and prove these lower bounds.


\label{app:other fault models}

\subsection{Bounded-Degree Faults}

As in \cite{boundeddegree0}, we  define the bounded-degree fault model as follows:

\begin{definition}[Bounded-Degree Fault-Tolerant Connectivity Preservers]
    Given a directed graph $G$, a subgraph $H$ is called a $k$-faulty-degree-tolerant connectivity preserver if for any pair of vertices $(u, v)$ and any set of failed edges $F$ such that $|F \cap (\delta^+(x) \cup \delta^-(x))| \leq k$\footnote{As a generalization to degree in undirected graphs, one may define this as $|F \cap \delta^+(x)|\leq k$ or $|F \cap \delta^-(x)|\leq k$, but these variants are even stronger than what we define here so we cannot hope for anything better.} for each vertex $x \in V(G)$, we have the guarantee that $u,v$ are strongly connected in $G-F$ if and only if $u,v$ are strongly connected in $H-F$.
\end{definition}

We prove the following lower bound in the bounded-degree fault setting:

\begin{theorem} \label{thm:faultydegree}
    There exists a directed graph $G$ with $n$ vertices such that any $1$-faulty-degree-tolerant connectivity  preserver $H \subseteq G$ of $G$ requires $\Omega(n^2)$ edges.
\end{theorem}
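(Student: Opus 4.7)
First, I observe that a $1$-bounded-degree fault set is precisely a set of edges $F$ that forms a matching in the underlying undirected graph: since no vertex may have more than one incident faulty edge, the faults must be pairwise vertex-disjoint. So the task is to exhibit an $n$-vertex digraph $G$ in which $\Omega(n^2)$ edges are each ``critical'' under some matching of edge failures.

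The plan is to use the following construction. Set $m = \lfloor n/2\rfloor$ and let $V = A \cup B$ with $A = \{a_1,\ldots,a_m\}$ and $B = \{b_1,\ldots,b_m\}$. The edge set has two parts:
\begin{align*}
E_1 &= \{(a_i, b_j) : i,j \in [m]\}, \quad \text{(a complete bipartite fan from $A$ to $B$)}\\
E_2 &= \{(b_k, a_{k+1 \bmod m}) : k \in [m]\}, \quad \text{(a single matching from $B$ back to $A$)}.
\end{align*}
Then $|E_1| = m^2 = \Theta(n^2)$, and it is easy to check that $G$ is strongly connected (every $a_i$ reaches every $b_j$ directly, and the $E_2$-edges traverse the cycle $a_1 \to b_* \to a_2 \to b_* \to \cdots \to a_1$).

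The heart of the argument is to show that every edge of $E_1$ is critical, which forces any $1$-faulty-degree-tolerant connectivity preserver $H$ to contain all of $E_1$, giving $|E(H)| \ge m^2 = \Omega(n^2)$. Fix $e = (a_i, b_j) \in E_1$. The key observation is that $E_2$ itself already forms a matching (each vertex appears in exactly one $E_2$-edge), so any subset of $E_2$ is a valid $1$-bounded-degree fault set. Define
\[
F_e = \{(b_k, a_{k+1 \bmod m}) : k \in [m],\ k \notin \{j,\ i-1 \bmod m\}\}.
\]
In $G - F_e$, the only surviving $E_2$-edges are $(b_j, a_{j+1 \bmod m})$ and $(b_{i-1 \bmod m}, a_i)$, and the $E_1$-edges are all still present. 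I will then verify two things:
\begin{itemize}
\item[(i)] $a_i$ and $b_j$ are strongly connected in $G - F_e$. The forward direction is the direct edge $(a_i,b_j)$; the reverse direction is the $3$-step path $b_j \to a_{j+1 \bmod m} \to b_{i-1 \bmod m} \to a_i$, whose middle edge is in $E_1$ and whose outer edges are exactly the two preserved $E_2$-edges.
\item[(ii)] In $(G-e) - F_e$, the vertex $a_i$ cannot reach $b_j$. Indeed, any path from $a_i$ must first leave via some $(a_i, b_k)$ with $k \ne j$, and then the only surviving outgoing edges from $B$ go through $b_j \to a_{j+1 \bmod m}$ (unreachable without first arriving at $b_j$) or $b_{i-1 \bmod m} \to a_i$ (a dead-end loop back to $a_i$). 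So every walk from $a_i$ is confined to $\{a_i\} \cup (B \setminus \{b_j\})$.
\end{itemize}
Together (i) and (ii) show $e$ is critical.

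I expect the technical obstacle to be only the boundary cases $i = j$, $i = j+1 \bmod m$, and $i - 1 \equiv j \bmod m$ in the verification of (i) and (ii), where the two surviving $E_2$-edges coincide or interact with $e$ itself; these should be checked separately but they each work out by essentially the same dead-end argument. Once criticality holds for every $(a_i,b_j) \in E_1$, the preserver must include all $m^2$ such edges, completing the proof.
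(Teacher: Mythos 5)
Your proof is correct, and it takes a genuinely different route from the paper's. The paper reuses the binary-tree gadget of Baswana--Choudhary--Roditty: a depth-$\Theta(\log n)$ binary out-tree rooted at $s$ with leaf set $X$, a sink set $Y$, all $X\times Y$ edges, and $Y\to s$ return edges. For each $(x,y)\in X\times Y$ the fault set consists of the off-path sibling edges along the unique root-to-$x$ path, and one checks this set touches each vertex at most once. Your construction replaces the tree entirely with a complete bipartite fan $A\to B$ plus a cyclic perfect matching $E_2$ from $B$ back to $A$; the fault set is a subset of $E_2$, which is trivially a matching, so the $1$-bounded-degree condition is immediate rather than something that needs verifying on a tree. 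This is a real simplification: it avoids the log-depth bookkeeping and makes the ``valid fault set'' step a one-liner. One modest advantage of the paper's construction is that it is the same gadget used for the $\Omega(n 2^k)$ lower bound against ordinary $k$-fault preservers (and the color-fault bound in the next subsection), so the paper gets to reuse it; your gadget is tailored to the $k=1$ bounded-degree case and does not obviously generalize, but for the theorem as stated that is irrelevant. The boundary cases you flagged ($j=i$, $j\equiv i-1$, $j+1\equiv i$) all reduce to walks that possibly backtrack or to a single surviving $E_2$-edge, and the dead-end argument in (ii) still goes through because every $b_k$ with $k\neq j$ either has no outgoing edge or points back to $a_i$; so the argument is sound once you fix a concrete convention for the cyclic index arithmetic and pad with isolated vertices to reach exactly $n$ vertices.
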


\begin{figure}[ht]
    \centering
    \includegraphics[width=0.7\textwidth]{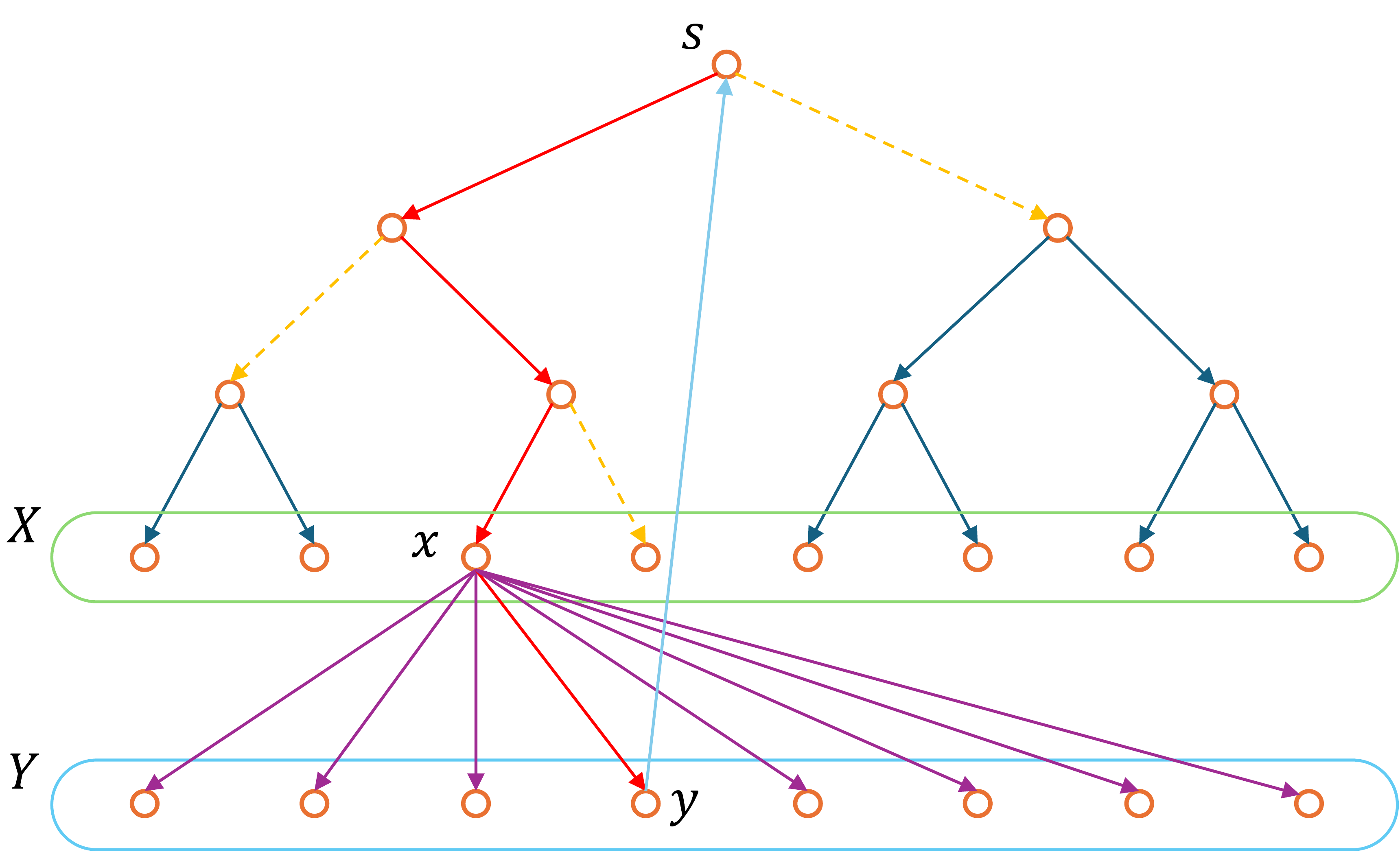}
    \caption{An $\Omega(n^2)$ lower bound for  $1$-faulty-degree preservers.}
    \label{fig:bounded-deg}
\end{figure}

\begin{proof}
    We show that a construction similar to the one in \cite{baswana2018fault} gives the desired lower bound. As in \Cref{fig:bounded-deg}, we construct our instance $G$:
    \begin{itemize}
        \item Construct a full binary tree $T$ with root $s$ and leaf set $X$ of size $|X|=\Theta(n)$. The directed edges in $T$ are oriented away from root $s$.
        \item Construct another vertex set $Y$ with $|Y|=\Theta(n)$. Add a directed edge  from every $x \in X$ to every $y \in Y$, and add a directed edge from every $y \in Y$ to $s$.
    \end{itemize}
    This final graph $G$ has $\Theta(n)$ vertices and $\Theta(n^2)$ edges.

    We prove that any $1$-faulty-degree preserver $H \subseteq G$ must contain all edges in $X \times Y$. Consider any $x \in X$ and $y \in Y$. 
    Let $P$ be the unique $(s, x)$-path in tree $T$. We let the fault set $F$  be all edges in $T - E(P)$ that are incident to a vertex in  $(s, x)$-path $P$. This corresponds to all the dashed edges in \Cref{fig:bounded-deg}. We observe that fault set $F$ is a $1$-bounded-degree fault. Additionally, we observe that $x$ is the only leaf in $X$ that is reachable from $s$ in $G-F$. Therefore, $s,y$ are strongly connected in $G-F$ but not strongly connected in $G-(x,y)-F$. Therefore, to preserve the connectivity between $(s,y)$, the edge $(x,y)$ must be present in the preserver $H$. Finally, since there are $\Theta(n^2)$ edges in $X \times Y$, we conclude that there must be $\Theta(n^2)$ edges in $H$.
\end{proof}

As a brief aside, we also quickly point out that the graph constructed in the proof of \Cref{thm:faultydegree} implies a large separation between $k$-FT connectivity preservers and $k$-connectivity preservers.
\begin{theorem} \label{thm:conn ft gap}
    There exists a directed graph $G$ with $n$ vertices that admits a $k$-connectivity preserver of size $O(n)$, but any $k$-FT connectivity preserver of $G$ requires $\Omega(n2^k)$ edges.
\end{theorem}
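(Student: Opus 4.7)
The plan is to reuse essentially the same construction that appears in the proof of \Cref{thm:faultydegree}, but tuned so that the tree has depth exactly $k$, and then show that it already witnesses the desired separation. Concretely, I would build the graph $G$ as follows: let $T$ be a complete binary tree of depth $k$ with root $s$ and leaf set $X$ (so $|X|=2^{k}$), with all tree edges oriented away from $s$; let $Y$ be a vertex set of size $|Y|=\Theta(n)$ disjoint from $V(T)$; add the edge $(x,y)$ for every $(x,y)\in X\times Y$, and the edge $(y,s)$ for every $y\in Y$. The total number of vertices is $\Theta(n+2^{k})$, which is $\Theta(n)$ whenever $2^{k}\le n$, and the graph is strongly connected since $s$ reaches every $y\in Y$ via $T$ followed by any $X\times Y$ edge, and $Y$ reaches $s$ directly.

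The upper bound on the $k$-connectivity preserver is immediate once one observes that $\lambda_{G}(u,v)\le 1$ for every ordered pair $u\ne v$ in $V(G)$. Indeed, every non-root vertex in $T$ has exactly one in-edge (from its parent), every leaf $x\in X$ still has exactly one in-edge, and every $y\in Y$ has exactly one out-edge (the edge $(y,s)$). So in any pair $(u,v)$, either $u$ or $v$ is a bottleneck with a single incoming or single outgoing edge, forcing $\flow(G,u,v)\le 1$ or $\flow(G,v,u)\le 1$, and hence $\lambda_{G}(u,v)\le 1$. Consequently, a $k$-connectivity preserver only has to preserve $\lambda^{k}_{G}(u,v)=\lambda_{G}(u,v)\in\{0,1\}$ for every pair, which is the same as preserving the strongly connected components of $G$. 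Since $G$ itself is strongly connected, it suffices to take any strongly connected spanning subgraph $H\subseteq G$; for instance, the union of $E(T)$, the edges $(x^{*},y)$ for a fixed leaf $x^{*}\in X$ and every $y\in Y$, and all edges $(y,s)$ for $y\in Y$. This $H$ has $O(n)$ edges and is a $k$-connectivity preserver of $G$.

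For the lower bound on any $k$-FT connectivity preserver $H'\subseteq G$, I would show that every edge in $X\times Y$ must be in $H'$. Fix any $(x,y)\in X\times Y$ and let $P$ be the unique root-to-leaf path from $s$ to $x$ in $T$. Define $F$ as the set of $k$ sibling edges of $P$: for each of the $k$ internal nodes $v\in P\setminus\{x\}$, include the unique tree edge leaving $v$ that is not on $P$. Then $|F|=k$, and in $G-F$ the only leaf of $T$ reachable from $s$ is $x$. Therefore the only $(s,y)$-paths in $G-F$ must go through the edge $(x,y)$, while the edge $(y,s)$ ensures $y$ still reaches $s$. Hence $s$ and $y$ are strongly connected in $G-F$, but $s$ cannot reach $y$ in $G-F-(x,y)$, so $(x,y)$ is $k$-fault critical and must belong to $H'$. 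As there are $|X|\cdot|Y|=2^{k}\cdot\Theta(n)=\Omega(n2^{k})$ such edges, we obtain $|E(H')|=\Omega(n2^{k})$, completing the proof.

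I do not expect any real obstacle: the construction is a standard binary-tree gadget already used in \cite{baswana2018fault} and in \Cref{thm:faultydegree}, and the only new ingredient is the observation that the same graph has $\lambda_{G}(u,v)\le 1$ everywhere, which collapses the $k$-connectivity preserver problem to the trivial strong-connectivity one. The mild bookkeeping point to be careful about is that we want $n\ge 2^{k}$ so that the construction actually fits on $n$ vertices; this can be stated as a side hypothesis, exactly as in \Cref{thm:stlower}.
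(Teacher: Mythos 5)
Your approach matches the paper's proof: take the binary-tree gadget from \Cref{thm:faultydegree} with $|X|=2^{k}$, observe that $\lambda_{G}(u,v)\le 1$ everywhere (so a strongly connected spanning subgraph gives an $O(n)$-edge $k$-connectivity preserver), and reuse the sibling-edge fault sets to force every $X\times Y$ edge into any $k$-FT preserver. The justification that $\lambda_{G}(u,v)\le 1$ for all pairs is clean and correct, and the lower-bound half is correct.

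There is one slip in the concrete upper-bound witness. The subgraph $H$ you define, namely $E(T)\cup\{(x^{*},y):y\in Y\}\cup\{(y,s):y\in Y\}$, is \emph{not} strongly connected for $k\ge 1$: every leaf $x'\ne x^{*}$ has out-degree zero in $H$, since its only outgoing edges in $G$ are to $Y$ and you only kept those leaving $x^{*}$. Consequently $x'$ cannot reach $s$, and $H$ fails to preserve even $\lambda^{1}$. The fix is trivial and stays within your budget: also include, for each leaf $x\in X$, a single edge $(x,y_{x})$ to some $y_{x}\in Y$; this adds only $|X|=2^{k}=O(n)$ edges, and the resulting subgraph is strongly connected ($s$ reaches $Y$ via $x^{*}$, every tree vertex reaches some leaf which reaches $Y$ which reaches $s$). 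With that repair your argument is complete and is the same as the paper's.
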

\begin{proof}
    The construction comes from \cite{baswana2018fault}. Consider the graph $G$ in the proof of \Cref{thm:faultydegree} but instead of letting $|X|=\Theta(n)$, we let $|X|=2^k$. Each pair of vertices in $G$ is only $1$-strongly-connected, so we can preserve the $k$-connectivity of $G$ for any $k \in [1, n]$ using a strongly connected spanning subgraph of size $O(n)$. On the other hand, by cutting the same set of edges as in the proof of \Cref{thm:faultydegree} (which is with size at most $k$ since the tree has $k$ layers), we can show that every $k$-FT connectivity preserver of $G$, where $k \in [1, \log_2 n]$, requires $\Omega(2^kn)$ edges.
\end{proof}

\subsection{Color Faults}

As in \cite{color0}, we will define the color fault model as follows:

\begin{definition}[Color-Fault-Tolerant Connectivity Preservers]
    Given an edge-colored directed graph $G$ with an associated partition $\mathcal{C}$ of $E(G)$  into $|\mathcal{C}|$ colors, a subgraph $H$ is called a $k$-color-fault-tolerant connectivity preserver if for any pair of vertices $(u, v)$ and any family of failed colors $\mathcal{F} \subseteq \mathcal{C}$ such that $|\mathcal{F}| \leq k$, we have the guarantee that $u,v$ are strongly connected in $G-\cup\mathcal{F}$ if and only if $u,v$ are strongly connected in $H-\cup\mathcal{F}$, where $\cup\mathcal{F}$ denotes the union of all edge sets contained in $\mathcal{F} \subseteq \mathcal{C}$.
\end{definition}

We prove the following lower bound in the color fault setting:

\begin{theorem}
    There exists a directed graph $G$ with $n$ vertices  such that any $1$-color-fault-tolerant connectivity  preserver $H \subseteq G$ of $G$ requires $\Tilde{\Omega}(n^2)$ edges.
\end{theorem}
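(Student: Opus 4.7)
The proof would closely follow the template of \Cref{thm:faultydegree}, constructing a graph whose bipartite core forces $\tilde\Omega(n^2)$ edges into every $1$-color-fault-tolerant preserver, but the challenge is that in the color model each edge is assigned to exactly one color, so the ``isolating fault set'' used in the bounded-degree proof cannot be directly copied over.

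\emph{Plan.} Fix $p = \Theta(n / \polylog n)$ and build a graph $G$ consisting of: a source $s$ and a routing gadget $R_X$ leading to input terminals $X = \{x_1, \ldots, x_p\}$; a symmetric gadget $R_Y$ leading from output terminals $Y = \{y_1, \ldots, y_p\}$ to a sink $t$; all bipartite edges $X \times Y$; and a return edge $(t,s)$. Total vertices $\Theta(n)$; the bipartite part contributes $p^2 = \tilde\Theta(n^2)$ edges. I will then exhibit an edge-coloring (a partition of $E(G)$) such that for each $x_i$ there is a color $c_i^X$ whose failure leaves $x_i$ as the unique vertex of $X$ reachable from $s$ in $R_X$, and every bipartite edge receives its own unique color. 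With this coloring, the criticality argument of \Cref{thm:faultydegree} transfers verbatim: failing $c_i^X$ forces the only surviving $s \to y_j$ path to use $(x_i, y_j)$ (the return $y_j \to t \to s$ is preserved by placing the return path in colors disjoint from $c_i^X$), so $(x_i, y_j)$ is critical and must belong to every $1$-color-fault-tolerant preserver. Summed over all $i, j$, this yields the desired $\tilde\Omega(n^2)$ lower bound.

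\emph{Main obstacle.} The real work is Step~2: constructing $R_X$ on $O(n)$ vertices and edges together with a partition of its edges into colors, one per leaf, such that the isolation of each leaf is realized by a single color. A plain binary tree rooted at $s$ fails, because a branching edge at depth $d$ is a ``wrong-side'' edge for all $2^{k-d-1}$ leaves in the opposite subtree, so it would have to lie in that many different isolation colors, violating the partition constraint. My plan to overcome this is to replace every tree edge by a bundle of short parallel paths routed through fresh intermediate vertices, with each parallel copy dedicated to exactly one of the leaves that would use it as a branching edge. Each copy then lies in a single isolation color. The replication inflates the gadget by an $O(\polylog n)$ factor, so choosing $p = n/\polylog n$ keeps the gadget within $O(n)$ vertices while retaining $p^2 = \tilde\Theta(n^2)$ bipartite edges.

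\emph{Verification step.} The subtle point is that once branching edges are split into bundles, a leaf $x_j$ in the ``wrong'' subtree might try to reroute through a surviving replica after its isolation color is failed. I would handle this by arranging the bundles in a cascaded two-level fashion: the bundles are chained so that each leaf $x_j$ is forced to traverse \emph{all} replicas of a particular branching edge in series (rather than in parallel), so cutting any single replica inside the leaf's dedicated color genuinely disconnects it. A careful local analysis at each internal node of the underlying binary tree then yields the claimed isolation property, and the $\polylog n$ overhead in the gadget is absorbed by the $\tilde\Omega$ in the theorem.
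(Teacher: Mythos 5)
Your proposal arrives at essentially the same construction as the paper's proof: a binary tree rooted at $s$ with $\tilde\Theta(n)$ leaves, an ``isolation color'' per leaf, and the crucial repair step of replacing any tree edge that would need to carry several isolation colors by a \emph{serial} chain of edges, one per color, so that failing a single color breaks the entire chain. Your initial ``bundle of parallel paths'' idea would indeed fail for exactly the reason you then diagnose (a wrong-side leaf reroutes through a surviving replica), and the serial-traversal fix you sketch under the name ``cascaded two-level fashion'' is precisely the paper's cleaner subdivision step: replace each over-colored edge $e$ by a path of length $|C_e|$, one edge per color in $C_e$, costing $O(\log n)$ extra vertices per leaf and $O(n)$ overall. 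Two pieces of your plan are unnecessary and can be dropped: the symmetric gadget $R_Y$ and sink $t$ (the paper uses direct $y \to s$ return edges in a single shared color), and giving every bipartite edge its own color (one common color for all of $X \times Y$ and the return edges suffices, since only a single isolation color is failed at a time).
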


\begin{figure}[ht]
    \centering
    \includegraphics[width=0.7\textwidth]{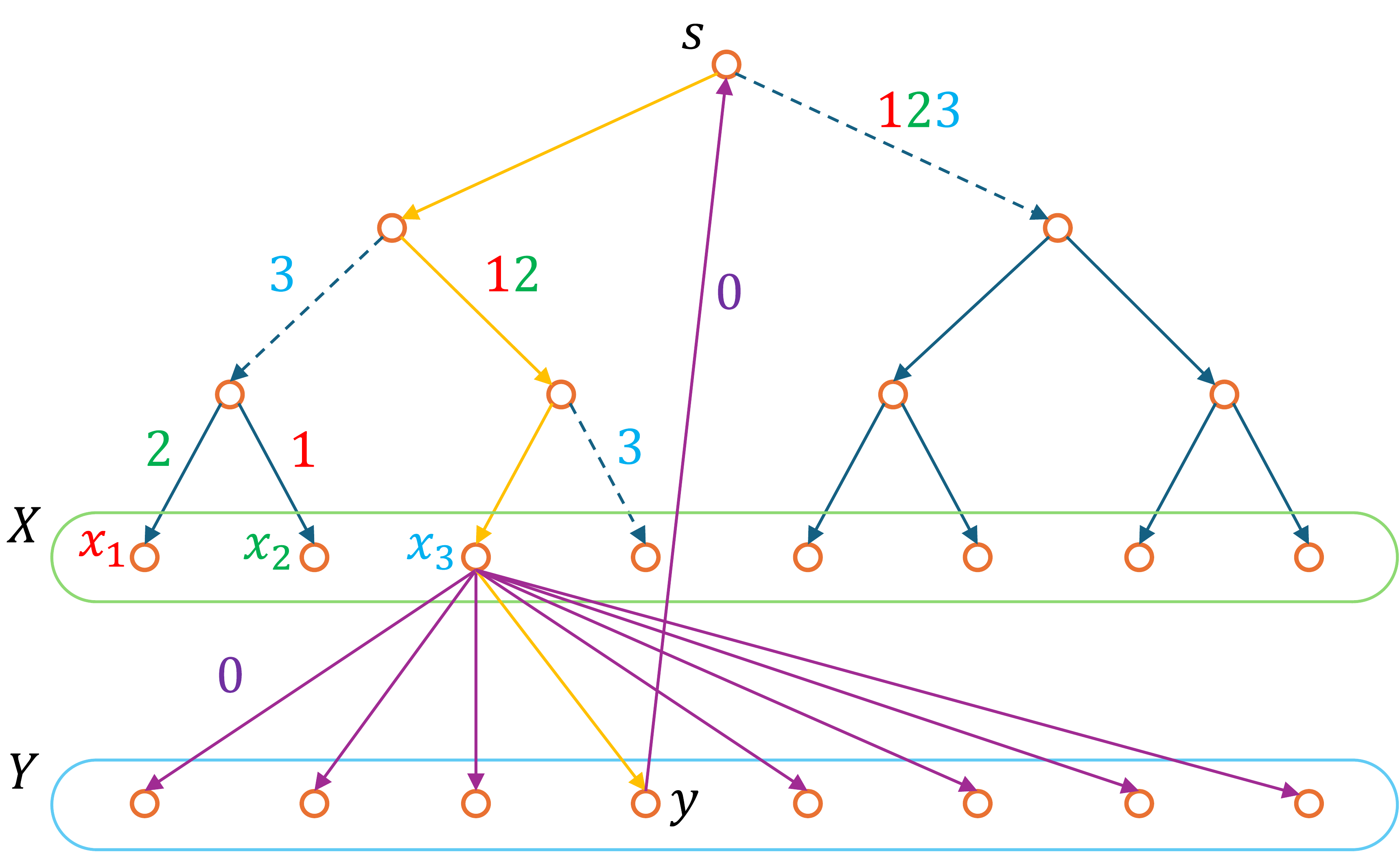}
    \caption{An $\Tilde{\Omega}(n^2)$ lower bound for $1$-color fault preservers. When the edges in color class $3$ fail, every $(s, y)$-path must pass through vertex $x_3$.}
    \label{fig:color-fault}
\end{figure}

\begin{proof}
    As in \Cref{fig:color-fault} (where a line with multiple colors denotes a path with an edge of each color), we construct our instance $G$:
    \begin{itemize}
        \item Construct a full binary tree $T$ with root $s$ and leaf set $X$ of size $|X|=\Theta(n/\log n)$. The directed edges in $T$ are oriented away from source $s$.
        \item For each vertex $x_i \in X$, where $i \in [1, |X|]$, we 
        let $P_i$ be the unique $(s, x_i)$-path in tree $T$. We color all edges in $T - E(P_i)$ that are incident to a vertex in path $P_i$ using color $i$. Note that under this coloring, one edge may have been assigned multiple colors. 
        \item For an edge $e$ in tree $T$ that is assigned a set $C_e \subseteq [1, |X|]$ of $|C_e| > 1$ distinct colors by the above coloring, 
        we replace edge $e$ with a path of length $|C_e|$, where each edge is colored with a different color in $C_e$. After applying this operation to all edges $e$ in tree $T$, the resulting graph will have a valid edge coloring. 

        Since each leaf $x_i \in X$ colors only $|P_i| = O(\log n)$ edges in $T$ with color $i$, and $|X| = \Theta(n / \log n)$, the previous operation increases the number of vertices in $G$ by at most $|X| \cdot O(\log n) = O(n)$. 
        \item Construct another vertex set $Y$ with $|Y|=\Theta(n)$. Add a directed edge from every $x \in X$ to every $y \in Y$, and add a directed edge from every $y \in Y$ to $s$. Color all these edges with a new color $0$.
    \end{itemize}
    This final graph $G$ has $\Theta(n)$ vertices and $\Theta(n^2/ \log n)$ edges. 

    Now we prove that any $1$-color-fault-tolerant connectivity preserver $H \subseteq G$ must contain all edges in $X \times Y$. Consider any $x_i \in X$ and $y \in Y$. We set the color fault set $F$ to be the set of all edges in $E(G)$ with color $i$. We observe that $x_i$ is the only leaf in $X$ that is reachable from $s$ in $G-F$. Therefore, $s,y$ are strongly connected in $G-F$ but not strongly connected in $G-(x_i,y)-F$. Therefore, to preserve the connectivity between $(s,y)$, the edge $(x_i,y)$ must be present in the preserver $H$. Finally, since there are $\Theta(n^2/ \log n)$ edges in $X \times Y$, we conclude that there must be $\Theta(n^2 / \log n)$ edges in $H$.
\end{proof}

\section{Omitted Proofs from Preliminaries}

\subsection{A Giant Component from Unbreakable Sets: Proof of \Cref{clm:giant_component}}
\label{sec:omit_giant}


\begin{proof}[Proof of \Cref{clm:giant_component}]
Let $C_1, \dots, C_r \subseteq V(G)$ be the strongly connected components in $G-F$. We may assume without loss of generality that for each edge $e \in E(G-F)$, there exist strongly connected components $C_i, C_j$ of $G-F$  such that $e \in C_i \times C_j$, where $i \leq j$. This follows from the fact that the condensation graph of $G-F$ is acyclic. 

Let $C_{\leq i} = C_1 \cup \dots \cup C_i$ for each $i \in [1, r]$. Suppose towards contradiction that for each $i \in [1, r]$, SCC $C_i$ in $G-F$ satisfies $|C_i \cap U| < |U| - 2q$. Let $i \in [1, r]$ be the largest index such that $|C_{\leq i} \cap U| \leq q$.  Then $|C_{\leq i+1} \cap U| > q$. Moreover, since $|C_{i+1} \cap U| < |U| - 2q$, it follows that
$$
|(V - C_{\leq i+1}) \cap U| = |U| - |C_{\leq i} \cap U| - |C_{i+1} \cap U| > |U| - q -   (|U| - 2q)  = q.  
$$

Then the cut $(C_{\leq i+1}, V - C_{\leq i+1})$ will satisfy $|C_{\leq i+1} \cap U| > q$ and $|(V - C_{\leq i +1}) \cap U| > q$. Additionally, since $\delta^-_{G-F}(C_{\leq i+1}) = \emptyset$ by our earlier discussion, it follows that $|\delta^-_{G}(C_{\leq i+1})| \leq |F| \leq k$. However, this implies that set $U$ is not $(q, k)$-unbreakable in $G$, a contradiction. 

We conclude that there exists a strongly connected component $C$ in $G-F$ such that $|C \cap U| \geq |U| - 2q$, as claimed. 
\end{proof}

\subsection{Directed Expander Hierarchy: Proof of \Cref{lem:expander_hierarchy}}
\label{sec:expander_hierarchy}


In this appendix, we give a proof of the hierarchy in \Cref{lem:expander_hierarchy} based on approaches in \cite{long2024connectivity}. Although in \Cref{lem:expander_hierarchy} we state the unbreakability in each layer, in this appendix we will construct the hierarchy using \emph{expansion}.

\begin{definition}[Expansion]

    For any directed graph $G=(V,E)$, and any terminal set $U \subseteq V$, we say $U$ is \emph{$\phi$-expanding} in $G$, if
    \[|\delta^{+}(S)| \geq \phi \cdot {\min(|S \cap U|,|U-S|)}\] 
    for any cut $(S,V-S)$ in $G$ that separates $U$. We say a graph $G$ is $\phi$-expanding if the whole vertex set $V$ is $\phi$-expanding.
    
\end{definition}

By replacing $S$ using $V-S$, we know $|\delta^{-}(S)| \geq \phi \cdot {\min(|S \cap U|,|U-S|)}$ for any cut $(S,V-S)$ in $G$ that separates a $\phi$-expanding set $U$ as well.

\begin{remark}\label{remark:expansion}
    If a terminal set $U$ is $\phi$-expanding, then $U$ is $(k/\phi,k)$-unbreakable for any $k$. 
\end{remark}
\begin{proof}
    Since $U$ is $\phi$-expanding, we know $|\delta^{+}(S)|/\phi \geq {\min(|S \cap U|,|U-S|)}$. Therefore, for any cut $(S,V-S)$ with $|\delta^+(S)| \leq k$ that separates $U$, we know either $k/\phi \geq |S \cap U|$ or $k/\phi \geq |U-S|$. The same result holds when $|\delta^-(S)|\leq k$. Thus, by \Cref{def:unbreakable}, the above argument holds.
\end{proof}

Generally, to determine expansion is NP-hard and enumerating cuts requires exponential time. However, \cite{agarwal2005log} shows that we can $O(\sqrt{\log n})$ approximate the expansion of any vertex set $U$ in polynomial time, which we conclude in the following theorem:

\begin{theorem}[\cite{agarwal2005log}] 

Given any graph $G$ and a terminal set $U$, there is a polynomial time algorithm that outputs a 
cut $(S,V-S)$ that approximates $\phi(U)$ within a factor of $O(\sqrt{\log n})$. i.e. $\phi_U(S)=O(\phi(U) \cdot \sqrt{\log n})$.
    
\end{theorem}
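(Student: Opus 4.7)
The plan is to invoke the Agarwal-Charikar-Makarychev-Makarychev $O(\sqrt{\log n})$ algorithm for directed sparsest cut, which is itself a directed extension of the Arora-Rao-Vazirani framework for undirected expansion. At a high level, the approach has three stages: (i) relax the combinatorial problem ``find a cut separating $U$ minimizing $|\delta^+(S)|/\min(|S \cap U|,|U \setminus S|)$'' to a semidefinite program; (ii) solve the SDP in polynomial time; and (iii) round the vector solution to a genuine cut while losing only an $O(\sqrt{\log n})$ factor.

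For stage (i), introduce vectors $\{v_x\}_{x \in V}$ constrained to form an $\ell_2^2$ pseudometric (i.e., the squared distances satisfy the triangle inequality), minimize $\sum_{(u,v) \in E} \|v_u - v_v\|^2$, and normalize by requiring $\sum_{x,y \in U} \|v_x - v_y\|^2$ to be a fixed constant on the order of $|U|^2$. Any integral cut $S$ yields a feasible solution (place $v_x$ at one of two antipodal unit vectors depending on side of the cut), so the SDP optimum is a valid lower bound on $\phi(U)$ up to the normalization. Stage (ii) is routine using standard SDP solvers.

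The heart of the argument is stage (iii), which uses the ARV structure theorem: in any $\ell_2^2$ metric on $n$ points whose average pairwise distance is $\Omega(1)$, one can find two subsets $L,R$ of size $\Omega(n)$ that are pairwise separated at Euclidean distance $\Omega(1/\sqrt{\log n})$. Given such $L, R$, define a threshold cut $S_t = \{x : d(v_x, L) \leq t\}$ with $t$ drawn uniformly from $[0, \Omega(1/\sqrt{\log n})]$. A standard edge-charging argument shows $\mathbb{E}[|\delta^+(S_t)|]$ is at most $O(\sqrt{\log n})$ times the SDP optimum, while the balance of $L,R$ forces $\min(|S_t \cap U|, |U \setminus S_t|) = \Omega(|U|)$. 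Dividing yields the claimed approximation. For the directed setting, the Agarwal et al.\ work adapts both the SDP objective (to reflect that each directed edge $(u,v)$ only penalizes the out-cut in one direction) and the rounding rule accordingly.

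The principal obstacle is the ARV structure theorem itself, whose proof requires a delicate random Gaussian projection plus a chaining argument; adapting it to the directed setting is the bulk of the Agarwal-Charikar-Makarychev-Makarychev paper. In a proof proposal for this excerpt I would therefore invoke their directed structure theorem as a black box, and only verify that the SDP relaxation above correctly captures the directed terminal expansion $\phi(U)$ and that the threshold rounding certifies the $O(\sqrt{\log n})$ bound on $\phi_U(S)$.
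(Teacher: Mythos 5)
The paper does not prove this theorem; it is stated with a citation to \cite{agarwal2005log} and used strictly as a black box. There is therefore no ``paper's own proof'' to compare against. Your sketch is a reasonable account of the Agarwal--Charikar--Makarychev--Makarychev argument it references: SDP relaxation with the $\ell_2^2$-triangle-inequality constraints and a terminal-normalized spreading constraint, followed by a directed adaptation of the ARV structure theorem and threshold rounding. You correctly flag the two places where the directed setting departs from ARV (the one-sided edge objective, and the directed structure/rounding lemma), and that these are the technical content of the cited paper, which is appropriate for a statement that is being imported rather than proved. One small thing worth being explicit about if this were to be spelled out: the quantity $\phi_U(S)$ in the paper uses only $|\delta^+(S)|$ in the numerator (the in-cut bound is obtained by symmetry, swapping $S$ and $V\setminus S$), so the SDP objective should be the directed one (e.g., $\sum_{(u,v)\in E}\max(0, f(u)-f(v))$ in the semimetric formulation) rather than the symmetric $\sum_{(u,v)\in E}\|v_u-v_v\|^2$; you note this adaptation but your stage (i) writes the symmetric objective.
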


Therefore, to build the hierarchy as in \Cref{lem:expander_hierarchy}, we can consider an \emph{expander hierarchy} with regard to expansion $\phi$ as follows:

\begin{definition}[Directed Expander Hierarchy]
Let $G$ be an $n$-node directed graph, and let $\phi$ be any positive expansion. We define a $\phi$-expander hierarchy to be a partition $\{V_1, V_2, \dots, V_{\ell}\}$ of $V(G)$ into $\ell$ layers, with the following properties:
\begin{enumerate}
    \item Fix an $i \in [1, \ell]$. Let $V_{\leq i} = V_1 \cup \dots \cup V_i$, and let $C \subseteq V(G)$ be a SCC in $G[V_{\leq i}]$. Then the terminal set $V_i \cap C$ is $\phi$-expanding in $G[C]$.
    \item The number of layers $\ell$ in the partition is at most $\ell = O(\log n)$.
\end{enumerate}

    \label{def:expander_hierarchy}
    
\end{definition}

From \Cref{remark:expansion}, in the above expander hierarchy, the set $V_i \cap C$ is $(k/\phi,k)$-unbreakable in $G[C]$. Thus, it will satisfy \Cref{lem:expander_hierarchy}.Below we show that such a hierarchy in \Cref{def:expander_hierarchy} can be built by the following theorem: 

\begin{theorem}
    There exists an algorithm that, given a graph G, computes a directed expander hierarchy with $\phi = 1/2$ in exponential time\footnote{Same as \cite{long2024connectivity}, the factor $1/2$ here is artificial and can be $1$ via the same proof.}, or $\phi = O(1/\sqrt{\log n})$ in polynomial time.
\end{theorem}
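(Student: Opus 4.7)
The plan is to adapt the Long--Saranurak~\cite{long2024connectivity} recursive construction of expander hierarchies from undirected graphs to directed graphs. Given a directed graph $G$, the recursive procedure $\mathrm{Hier}(G)$ first reduces to the strongly connected case by computing the SCCs $C_1, \ldots, C_p$ of $G$, recursively building a hierarchy for each $G[C_j]$, and combining these hierarchies level by level (padding shorter hierarchies with empty sets at the top). Since an SCC of $G[V_{\le i}]$ is always contained in a single SCC of $G$, this combination preserves the per-SCC expansion condition without interaction across the $C_j$.

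For a strongly connected $G$, the recursive step invokes a sparsest-cut subroutine on $G$ separating $V$: exact sparsest-cut by brute-force enumeration for the exponential-time $\phi = 1/2$ bound, and the $O(\sqrt{\log n})$-approximation of~\cite{agarwal2005log} quoted in the excerpt for the polynomial-time $\phi = O(1/\sqrt{\log n})$ bound. If the subroutine certifies that $V(G)$ is $\phi$-expanding in $G$, return the single-level hierarchy $\{V(G)\}$. Otherwise, it returns a cut $(S, V \setminus S)$ with $|S| \le |V \setminus S|$ whose sparsity is strictly below $\phi$; recursively compute $\mathrm{Hier}(G[S]) = \{V^\star_1, \ldots, V^\star_{\ell'}\}$ and output $\{V^\star_1, \ldots, V^\star_{\ell'}, V \setminus S\}$, placing $V \setminus S$ as the new topmost level.

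Correctness is proved by induction on $|V(G)|$. For each internal level $i < \ell$, we have $V_{\le i} \subseteq S$, so $G[V_{\le i}]$ coincides with $G[S][V_{\le i}]$ as induced subgraphs, and hence the required condition that $V_i \cap C$ be $\phi$-expanding in $G[C]$ is inherited directly from the inductive hypothesis applied to $G[S]$. At the top level, the only SCC of $G$ is $V$ itself, so the condition reduces to showing that $V_\ell = V \setminus S$ is $\phi$-expanding in $G$. The depth bound $\ell = O(\log n)$ is immediate because $|S| \le n/2$ implies that the recursion on $G[S]$ has logarithmic depth (the trivial SCC-decomposition step at the top of the recursion does not increase depth).

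The main obstacle is verifying the top-level expansion condition: that $V \setminus S$ is $\phi$-expanding in $G$ whenever the sparsest-cut subroutine returns $(S, V \setminus S)$ with near-optimal sparsity. The plan is to argue by contradiction: given a candidate cut $(S', V \setminus S')$ of $G$ that separates $V \setminus S$ and satisfies $|\delta_G^+(S')| < \phi \cdot \min(|S' \cap (V \setminus S)|,\, |(V \setminus S) \setminus S'|)$, combine it with $(S, V \setminus S)$ by replacing $S'$ with $S' \cup S$ or $S' \setminus S$ (choosing according to which side of $S$ most of $S'$ lies on) to obtain a new cut of $G$ separating $V$ whose sparsity is strictly smaller than that of $(S, V \setminus S)$, contradicting the approximate optimality of the returned cut. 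The directed setting introduces the subtlety that a violating cut may be witnessed either by a small out-boundary or by a small in-boundary, and both orientations must be handled; tracking this asymmetry carefully when performing the set-combination argument is the principal new technical ingredient relative to the undirected construction of~\cite{long2024connectivity}.
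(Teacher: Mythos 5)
Your proposal has a genuine gap at exactly the step you flag as the ``main obstacle,'' and the obstacle is not surmountable as stated: $V \setminus S$ need \emph{not} be $\phi$-expanding in $G$, even when $(S, V\setminus S)$ is the exact sparsest cut. Consider a directed cycle $C_n$. Every contiguous arc $S$ of length $n/2$ attains the optimal sparsity $|\delta^+(S)|/\min(|S|,|V\setminus S|) = 2/n$, so the subroutine returns such an $S$, and $V\setminus S$ is the opposite arc. But $V\setminus S$ is certainly not $\phi$-expanding in $C_n$ for $\phi = 1/2$: any contiguous arc $S'$ that splits $V\setminus S$ roughly in half has $|\delta^+_G(S')| = 1$ while $\min(|S'\cap(V\setminus S)|,|(V\setminus S)\setminus S'|) = \Theta(n)$. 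Your combine-and-contradict argument also fails on this example: $S'\cup S$ and $S'\setminus S$ are again contiguous arcs of size $\Theta(n)$, so their sparsity with respect to $V$ is $\Theta(1/n)$ --- not smaller than the $2/n$ already achieved by $(S,V\setminus S)$, hence no contradiction. The deeper reason is that the violating cut for $V\setminus S$ can have most of its mass inside $S$, which makes its sparsity with respect to $V$ incomparable to its expansion with respect to the terminal set $V\setminus S$; a single application of uncrossing cannot close this gap.

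The paper's construction works around exactly this issue by \emph{iterating}. It maintains a shrinking terminal set $U$ (initially $U = V$), and each round finds a sparse cut $(S,V\setminus S)$ with respect to $U$, removes one side of the cut from $U$, but crucially \emph{retains the boundary vertices} $L = \{s \in S : \exists t\in V\setminus S,\,(s,t)\in\delta^+(S)\}$ inside $U$. The loop continues until $U$ itself is $\phi$-expanding, and the retained boundary $L$ is what allows the paper to prove (by a separate argument) that the invariant ``every SCC of $G[V\setminus U]$ has $\le |V|/2$ vertices'' is preserved across iterations, which in turn gives the $O(\log n)$ depth. Your approach skips both ingredients: it commits to the complement of a single cut as the top layer without iterating, and it discards boundary information when recursing on $G[S]$. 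To repair the proof you would need to adopt essentially the paper's iterative refinement of $U$, at which point the argument is a different one from what you propose.
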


\begin{proof}    

WLOG we suppose $G$ is strongly connected, otherwise we consider every strong component of $G$. As in \cite{long2024connectivity}, we construct the whole hierarchy by a top-down approach, from $V_\ell$ to $V_1$.  To construct the top level $U=V_\ell$, we hope that $U$ satisfies the following two criteria:

\begin{enumerate}
    \item $U$ is $\phi$-expanding in $G$.
    \item Any strong component in $G[V-U]$ has at most $|V|/2$ vertices.
\end{enumerate}


Initially, we let $U=V$ and finally we want to make $U$ $\phi$-expanding in $G$ while satisfying the second criterion. Consider the following procedure. While $U$ is not $\phi$-expanding in $G$, there must exist a sparse cut $\delta^+(S)$ with $\phi_{U}(S) \leq 1/2$. Moreover, such cut $\delta^+(S)$ could be found in exponential time when $\phi(U) \leq 1/2$, or in polynomial time when $\phi(U) \leq O(1/\sqrt{\log n})$. Suppose $L=\{s \in S | \exists t \in V-S \text{ s.t. } (s,t)\in \delta^+(S)\}$, then:

\begin{itemize}
    \item If $|S| \leq |V|/2$, then we update $U$ as $U' = (U-S) \cup L$.
    \item Else, $|V-S| \leq |V|/2$, and we update as $U' = (U-(V-S)) \cup L$.
\end{itemize}

First, we show the procedure will terminate (and thus we have a final set $U$ that is $\phi$-expanding in $G$). In every update, we add at most $|L| \leq |\delta^{+}(S)|$ vertices to $U$, but we also delete at least $|U \cap S| \geq |\delta^{+}(S)|/(1/2)$ or $|U \cap (V-S)| \geq |\delta^{+}(S)|/(1/2)$ vertices from $U$. Thus, we know $|U|$ is strictly decreasing in every update and the procedure must terminate.

Then, we show that the second criterion will be invariant while we update $U$ to $U'$. WLOG, we consider below the case when $|S| \leq |V|/2$, and the other case can be proved symmetrically. In one update, we let $U' = (U-S) \cup L$. Let's consider any strong component $C$ in $G[V-U']$. We consider the following two cases:

\begin{enumerate}
    \item If $C \cap S \neq \emptyset$, since $C$ contains no vertex in $L \subseteq U'$, we know $C$ must be fully contained in $S$, and with size $|C| \leq |S| \leq |V|/2$. 
    \item Otherwise, $C \cap S = \emptyset$. Therefore, $C$ is a strong component in $G[V-U'-S]$. From the update, we know $(V-S) - U' = (V-S) - U$, thus, $G[V-U'-S] = G[V-U-S]$. We conclude that $C$ must be also strongly connected in $G[V-U]$, which implies $|C| \leq |V|/2$.
\end{enumerate}

Therefore, when the procedure terminates, all strong component $C$ in $G[V-U]$ has at most $|V|/2$ vertices. We let $V_\ell=U$, so $V_{\leq \ell-1}=V-U$ then recurse in every strong component $C$ in $G[V_{\leq \ell-1}]$ to obtain $V_{\ell-1}$ through $V_1$. From the above construction, we know the size of any component $C$ in $G[V_{\leq i}]$ will be halved compared to $G[V_{\leq (i+1)}]$ for any $i$. Since $G$ contains one component with $n$ vertices, there can be at most $\log n$ layers.
\end{proof} 

Finally, we have constructed a hierarchy that satisfies \Cref{lem:expander_hierarchy}.

\end{document}